\begin{document}

\title{Formalizing Box Inference for Capture Calculus}


\author{Yichen Xu}
\affiliation{
  \institution{EPFL}            
  \country{Switzerland}                    
}
\email{yichen.xu@epfl.ch}          

\author{Martin Odersky}
\affiliation{
  \institution{EPFL}           
  \country{Switzerland}                   
}
\email{martin.odersky@epfl.ch}         

\begin{abstract}
  Capture calculus has recently been proposed as a solution to effect checking,
  achieved by tracking the captured references of terms in the types.
  Boxes, along with the \emph{box} and \emph{unbox} operations,
  are a crucial construct in capture calculus,
  which maintains the hygiene of types and improves the expressiveness of polymorphism over capturing types.
  Despite their usefulness in the formalism,
  boxes would soon become a heavy notational burden for users when the program grows.
  It thus necessitates the inference of boxes
  when integrating capture checking into a mainstream programming language.
  In this paper, we develop a formalisation of box inference for capture calculus.
  We begin by introducing a semi-algorithmic variant of the capture calculus,
  from which we derive an inference system where typed transformations are applied
  to complete missing box operations in programs.
  Then, we propose a type-level system that performs provably equivalent inference on the type level,
  without actually transforming the program.
  In the metatheory, we establish the relationships between these systems and capture calculus,
  thereby proving both the soundness and the completeness of box inference.
\end{abstract}

\begin{CCSXML}
<ccs2012>
<concept>
<concept_id>10011007.10011006.10011008</concept_id>
<concept_desc>Software and its engineering~General programming languages</concept_desc>
<concept_significance>500</concept_significance>
</concept>
<concept>
<concept_id>10003456.10003457.10003521.10003525</concept_id>
<concept_desc>Social and professional topics~History of programming languages</concept_desc>
<concept_significance>300</concept_significance>
</concept>
</ccs2012>
\end{CCSXML}

\ccsdesc[500]{Software and its engineering~General programming languages}
\ccsdesc[300]{Social and professional topics~History of programming languages}


\maketitle



%
%

\section{Introduction}
\label{sec:introduction}



Capture calculus (\system{}) \cite{Odersky2022ScopedCF} is recently proposed as
a solution to \emph{effect checking} \cite{effekt1,effekt2}.
Effect checking goes beyond tracking the \emph{shapes} of values (e.g. whether a value is an integer, a list, or a function),
aiming at statically predicting the \emph{effects} of a program.
Throwing an exception, performing I/O operations, or waiting for events, are examples of effects.
Capture calculus takes the object-capability-based approach to modelling effects,
wherein effects are performed via object capabilities \cite{capability1,capability2,effekt1,effekt2}.
Object capabilities are regular program variables in the calculus,
and effect checking is achieved by tracking \emph{captured} variables in types.
\begin{minted}{scala}
  val file: {*} File = open("test.txt")
  def prog(): Unit = {
    console.write("Hello!")
    file.write("Hello, again!")
  }
\end{minted}
For instance, in this example,
the \mintinline{scala}{prog} function is typed as \mintinline{scala}{{console, file} () -> Unit}.
This is a \emph{capturing type},
which has a \emph{capture set} (\mintinline{scala}{{console, file}})
in addition to the shape type (\mintinline{scala}{() -> Unit}).
The shape type describes the shape of the function,
whereas the capture set indicates that the function \emph{captures} the variables \texttt{console} and \texttt{file}.
By viewing both \texttt{file} and \texttt{console} as {capabilities},
the capture set \mintinline{scala}{{console, file}} essentially characterizes effects of this function,
i.e.
accessing the \texttt{file}, and performing \texttt{console} I/O.
A type without or with an empty capture set is considered to be \emph{pure}.
In \system{}, every capability obtains its authority
from some other capabilities that it captures,
and \texttt{*} is the \emph{root capability},
from which all other capabilities are ultimated derived.

\system{} has \emph{boxes} in both terms and types,
which serve as a crucial formal device.
We explain the role of boxes by
ecucidating the \emph{capture tunneling} mechanism in \system{},
which heavily involve boxes.
Capture tunneling facilitates the interaction between capture checking and type polymorphism.
We start by considering the following example \cite{Odersky2022ScopedCF},
which defines a pair class.
\begin{minted}{scala}
  class Pair[+A, +B](x: A, y: B):
    def fst: A = x
    def snd: B = y
\end{minted}
Problems arise when we attempt to construct a pair with impure values:
\begin{minted}{scala}
  def x: {ct} Int -> String
  def y: {fs} Logger
  def p = Pair(x, y)
\end{minted}
The problem is: what should the type of \mintinline{scala}{p} capture?
\mintinline{scala}{{ct, fs}} is a sound option,
but as the program grows,
the capture sets will quickly accumulate
and become increasingly verbose and imprecise.
Moreover, it undermines the expressiveness of type polymorphism \cite{Odersky2022ScopedCF},
which is illustrated in the following example:
\begin{minted}{scala}
  def mapFirst[A, B, C](p: Pair[A, B], f: A => C): Pair[C, B] =
    Pair(f(p.fst), p.snd)
\end{minted}
Since pairs can be impure, we have to annotate the argument \texttt{p} to capture \mintinline{scala}{{*}},
as well as the return type,
This results in imprecise types.
To address these issues,
capture tunneling proposes to prevent the propagation of captured capabilities at the instantiation of type variables,
and only pop out these capabilities when the value is used.
Specifically in this example, the type of \texttt{p} will have an empty capture set,
while the following function:
\begin{minted}{scala}
  () => p.fst     : {ct} () -> {ct} Int -> String
\end{minted}
which \emph{accesses} the member,
will capture \mintinline{scala}{{ct}}.

\system{} achieves capture tunneling
by retricting type variable instances to shape types
and utilizing boxes.
The box hides the capture set and stops the propagation of the captures,
and the unbox reveals the boxed capture set.
Specifically,
a capturing type \mintinline{scala}{{x1, ..., xn} S}
can be encapsulated in a boxed type $\Box$ \mintinline{scala}{{x1, ..., xn} S},
which conceals the capture set and turns the capturing type into a shape/pure type.
A boxed term introduces a box in the type.
which can then be removed by the unbox expression when the term under the box is to be accessed.
In our example, the expression for creating \mintinline{scala}{p} will then look like:
\begin{minted}[escapeinside=||]{scala}
  def p = Pair[|$\new\Box$| {ct} Int -> String, |$\new\Box$| {fs} Logger](|$\new{\Box}$ x|, |$\new{\Box}$| y)
\end{minted}
Accessing the term under the box requires the use of an \emph{unbox} expression which unveils the hidden capture set,
like the following:
\begin{minted}[escapeinside=||]{scala}
  () => {ct} |$\multimapinv$| p.fst     : {ct} () -> {ct} Int -> String
\end{minted}
Here, \mintinline[escapeinside=||]{scala}{{ct} |$\multimapinv$| p.fst} is the syntax for unboxing.

Moreover, knowing that type variable instances are pure,
the \texttt{mapFirst} function can be written as it is in the previous example.
It is concise and at the same time expressive enough to work polymorphically on pairs.

Boxes are an essential formal device within the \system{} framework.
Yet,
being a theoretical tool which exist exclusively on the conceptual level,
boxes do not exist at runtime
and have no effects in program semantics.
When putting capture checking into practice,
mandating the annotation of the program with appropriate boxes
is unnatural,
posing an unnecessarily heavy notational burden.
Hence,
the introduction of box inference becomes a pressing need.
Scala, which has a practical implementation of capture checking,
does not have boxes in the surficial language.
Instead,
the capture checker infer the box operations
wherever they are necessary to make the input program well-typed.
Such an approach enables complete concealment of boxes as a technical detail of the underlying theory
and liberates users from the knowledge of their existence.
This aspect is crucial to integrating capture calculus into a mainstream language.


Despite that the capture checker in Scala has already implemented box inference,
we still lack the theoretical foundation of box inference.
In fact, box inference is unexpectedly non-trivial to be implemented correctly.
To demonstrate this point,
let us inspect how box inference should work in the following example,
where we want to execute a list of IO operations:
\begin{minted}{scala}
  class List[+A]:
    ...
    def map[B](op: A => B): List[B] = ...
	
  def ops: List[{io} () -> Unit] = ...
  def runOp(op: {io} () -> Unit): Unit = op()

  def runOps: {io} () -> Unit = () => ops.map[Unit](runOp)
\end{minted}
Here, we have \texttt{ops} which is a list of IO operations,
an function \texttt{runOp} which takes an IO operation and execute it,
and the function \texttt{runOps} which execute all the IO operations in the list.
In \system{}, \mintinline{scala}{A => B} is an alias of \mintinline{scala}{{*} A -> B},
which essentially means an impure function
that could possibly perform all kinds of effects.
This behavior is achieved by the subcapturing mechanism of \system{},
which we are going to explain later in the backgrounds.
We expect to infer a box in the type of \texttt{ops},
making it \mintinline[escapeinside=||]{scala}{List[|$\Box$| {io} () -> Unit]}.
Otherwise, the program will be syntactically incorrect,
as the type variable should be instantiated to a shape type.
The other place involving box inference,
which is also where the nontriviality arises,
is \mintinline{scala}{ops.map(runOp)}.
In this application,
\texttt{map} is expecting an argument of type
\mintinline[escapeinside=||]{scala}{(|$\Box$| {io} () -> Unit) => Unit},
whereas the type of \texttt{runOp} is
\mintinline{scala}{({io} () -> Unit) => Unit}.
To correctly infer the boxes,
we suprisingly have to perform eta-expansions.
The box inference should transform \texttt{runOp}
into \mintinline[escapeinside=||]{scala}{(op: |$\Box$| {io} () -> Unit) => runOp({io} |$\multimapinv$| runOp)}.
It is also noteworthy that
although the body of \texttt{runOps} in the surfacial does not capture \texttt{io} in any way,
as both \texttt{ops} and \texttt{runOp} has an empty capture set,
\texttt{runOps} in the elaborated program does capture the \texttt{io} capability
due to the insertion of the unbox operation during box inference.
The necessity of eta-expansion
and its effect on the capture sets
both illustrate the non-triviality of box inference.

To this end, in this paper we propose a formal foundation of box inference,
which can lead to a deeper and principled understanding of it,
and facilitate the development of a reliable and sound box inference algorithm in the compiler.
Actually, our formalization of box inference eventually fostered two pull-requests in the Scala 3 compiler
that make fixes and improvements to the box inference algorithm.
Our contributions can be summarized as follows:
\begin{itemize}
  \item We develop \ccAlgo{}, A syntax-directed variant of \cc{}, which straightforwardly gives rise to a semi-algorithm for typechecking.

  \item Based on the semi-algorithmic variant, we propose a box inference system \ccAdp{} that completes the missing box operations in the program while typechecking it.

  \item Since box operations has zero runtime effects, transforming the program during box inference is a waste.
    We take one step further and propose a type-level box inference system \ccAdpt{} that
    performs provably the same reasoning but only operating on types
    without referring to terms.

  \item We develop a metatheory for these systems, proving the equivalence between them and capture calculus.

  \item
    To provides insights into the implementation of  capture checking and box inference in other languages,
    we discuss the implementation of box inference in Scala 3,
    which has capture checking available as an experimental language feature.
\end{itemize}

\section{Background}

In the background, we will briefly introduce the key ideas of capture calculus,
which is the base framework of our work.
Then, we will discuss the problem of box inference in more details.

\subsection{Introduction to Capture Calculus}

In Section \ref{sec:introduction}
we have made a brief overview of capture calculus,
Now we will delve deeper into the concepts of this system,
providing additional explanations to help user gain a better understanding
and touching on other important aspects.

\paragraph{Capture sets and capturing types.}
One fundamental concept of \system{} is to track captured variables in types.
\system{} stratifies types into
\begin{itemize}
\item \emph{Shape types} which describes the shape of values,
  e.g. functions, type variables, and boxed types.

\item \emph{Capturing types}
  which, in addition to the shape type, have a \emph{capture set}
  specifying the captured variables of the associated term.
\end{itemize}
A capture set is a finite set of variables,
e.g. \mintinline{scala}{{file, console}} in previous examples.
Apart from the program variables in scope,
it can include the special variable \texttt{*},
which is the root capability.
The capture set in the type predicts the captured variables of a value.
In the previous example,
the function that accesses \texttt{file} and \texttt{console}
will have both variables in the capture set of its type.
Note that only the reference to capabilities in the environment
will be considered as captures, similar to the idea of free variables.
The following example from \citeN{Odersky2022ScopedCF} illustrates the idea:
\begin{minted}{scala}
def test(fs: FileSystem): {fs} String -> Unit =
  (x: String) => Logger(fs).log(x)
\end{minted}
\mintinline{scala}{test} will be typed as
\mintinline{scala}{{} (fs: FileSystem) -> {fs} String -> Unit}.
The capability \texttt{fs} is only captured by the returned closure.

\paragraph{Subcapturing}
\system{} employs subcapturing to compare capture sets.
Given capture sets $C_1$ and $C_2$,
we say $C_1$ is a subcapture of $C_2$ (written $C_1 <: C_2$)
if every element $x$ in $C_1$ has been accounted by $C_2$.
$x$ is considered to be accounted by $C_2$
either if $x$ is an element in $C_2$,
or the capture set $C$ of $x$ is a subcapture of $C_2$ ($C <: C_2$).
Let us inspect a concrete example.
In the following environment:
\begin{minted}{scala}
file    : {*} File
console : {*} Console
op      : {file, console} () -> Unit
l       : {console} Logger
\end{minted}
the following subcapture relations hold:
\begin{minted}{scala}
  {l}    <: {file}          <: {*}
  {op}   <: {file, console} <: {*}
  {file} <: {file, console} <: {*}
  {l}    <: {file, console} <: {*}
\end{minted}
Since every capability is ultimately derived from \texttt{*},
it is a supercapture of all capture sets.
As mentioned in Section \ref{sec:introduction},
a function that capture \texttt{*}
(e.g. \mintinline{scala}{{*} () -> Unit} or \mintinline{scala}{() => Unit})
can perform any sort of effects.

\paragraph{Escape checking}
Besides capture tunneling introduced in Section \ref{sec:introduction},
the other crucial mechanism in \system{} is escape checking \cite{Odersky2022ScopedCF}.
In the following code, we create a file handle and pass it to a function, and close it at the end.
\begin{minted}{scala}
  def usingLogFile[T](op: (file: {*} File) => T): T = {
    val file = new File("output.log")
    try op(file)
    finally
    file.close()
  }

  def good = () => usingLogFile(f => f.write(0))

  def bad = () => {
    val later = usingLogFile(f => () => f.write(0))
    later()
  }
\end{minted}
Here, \mintinline{scala}{good} should be allowed
whereas \mintinline{scala}{bad} should be rejected.
\mintinline{scala}{bad} returns the scoped capability out of its valid scope
and try to use the escaped capability.
We inspect the typing of the \mintinline{scala}{usingLogFile} application step-by-step
to illustrate how it is rejected.
\begin{itemize}
\item The type of \mintinline{scala}{f => () => f.write(0)}
  is \mintinline{scala}{(f : {*} File) -> {f} () -> Unit}.
\item When instantiating the type argument of \mintinline{scala}{usingLogFile},
  we can only instantiate it to \mintinline[escapeinside=||]{scala}{|$\Box$| {*} () -> Unit}.
\item The type of \mintinline{scala}{later} is \mintinline[escapeinside=||]{scala}{|$\Box$| {*} () -> Unit}.
\item Unboxing a term whose capture set under the box contains \texttt{*} is disallowed.
\item The escaped capability cannot be used, and therefore \mintinline{scala}{later()} will be rejected.
\end{itemize}

\subsection{The Box Inference Problem}

\newcommand{\redtexttt}[1]{{\color{red}\texttt{#1}}}
\newcommand{\bluetexttt}[1]{{\color{blue}\texttt{#1}}}
\newcommand{\redbox}{{\color{red}{\ensuremath{\Box}}}}
\newcommand{\bluebox}{{\color{blue}{\ensuremath{\Box}}}}

Now we take a close look at the box inference problem.
Given a program with possibly missing box constructs,
box inference aims to find a way to complete the program with box operations,
such that the completed one is well-typed in the base calculus.
For instance, for the aforementioned \mintinline{scala}{Pair} example,
the user is allowed to write the program \mintinline{scala}{new Pair[{fs} String, {console} Int](x, y)}.
The missed boxes should be inferenced by the compiler, producing the well-typed expression
\mintinline[escapeinside=||]{scala}{new Pair[|\redbox{}| {fs} String, |\redbox{}| {console} Int](|\bluebox{}| x, |\bluebox{}| y)}.
There are two kinds of boxes getting inferenced: the \redtexttt{box}es in types, and \bluetexttt{box}es in terms.

Boxes in types can be trivially inferenced based on the syntactic class of a type.
In other words, when we are instantiating a type variable with a capturing type,
it should get boxed.
In this example, we find out that two impure types are supplied in the type application,
and infer the two \redtexttt{box}es for them.
By contrast, inferencing the boxes in terms is non-trivial.
Box inference over terms involves eta-expansion and induces complex effects on the capture sets.
In this paper, we focus on \emph{box inference over terms}.
and presents three systems
that in together solve this problem.
Specifically, given a program $t$, our systems attempts to complete the missing box constructs \emph{in terms} so that the completed program $t^\prime$ becomes well-typed as $T$.

\section{Key Ideas of Box Inference}

In this section, we discuss the box inference systems informally
and present the key ideas of them.

\subsection{Semi-Algorithmic Typing}

Our systems aim to formalize the box inference procedure.
Therefore, we expect the systems to be semi-algorithmic (or syntax-directed), so that they straightforwardly give rise to procedures for box inference.
As the original capture calculus is declarative, the first step of our work is to derive a syntax-directed variant of the calculus.

The three major hindrance of syntax-directness is
the transitivity rule in subtyping,
the subsumption rule in typing,
and the let rule where the choice of the type that avoids the local variable is ambiguous.
The idea of the syntax-directed variant is to inline the of first two rules in subtyping and typing respectively,
and derive a avoidance procedure for the let rule.
The resulted system is equivalent to the declarative one, as proven in the metatheory.
The semi-algorithmic calculus serves as the basis of the two box inference systems we are going to develop.

\subsection{Box Adaptation}

Before explaining \emph{how} we do box inference, let's first talk about \emph{when} to do it.
In general, box infernece's input include (1) a term $t$ of type $T$ to be adapted,
and (2) the expected type $U$.
We observe that this is exactly the same situation as subtyping checks.
In the semi-algorithmic variant of \system{}, the subtyping check is only when typechecking applications.

Specifically, given the term \mintinline{scala}{f(x)}, if \texttt{f} is a function that receives an arguments of type \texttt{U}, and \texttt{x} is typed as \texttt{T}, we invoke subtype check to answer the question
\emph{whether $T$ is a subtype of $U$},
or \emph{whether a value of type $T$ can be used as $U$}.
On the other hand, in box inference the question is
\emph{whether a value of type $T$ can be adapted to $U$ by completing missing box operations}.
To this end, we employ \textbf{box adaptation} as a plug-in replacement for subtype checks, which inserts box operations in addition to performing regular subtyping.
It answers exactly the above question,
and returns the adapted term if answer is yes.
Box adaptation is where box inference actually happens in our system.

The basic idea of box adaptation is to compare the types recursively as subtyping does, but transform the term when it sees a mismatch in boxes.
For example, when adapting the term \texttt{x} of type \mintinline{scala}{{fs} String} against the expected type \mintinline[escapeinside=||]{scala}{|$\Box$| {fs} String}, box adaptation discovers a mismatch in boxes, so it inserts a box, and returns the adapted term \mintinline[escapeinside=||]{scala}{|$\Box$| x}.
The design of the box adaptation judgement will be explained in details in Section \ref{sec:box-adaptation}.

As the term is transformed during box adaptation, the typing judgment of the box inference system not only derives a type $T$ of a input term $t$, but also returns the adapted term $t^\prime$ which is well-typed as $T$ in \system{}.



\subsection{Adapting Functions with Eta-Expansion}

Box-adapting functions requires eta-expansion.
In Section \ref{sec:introduction} we have already seen an example:
given \texttt{f} of type \mintinline{scala}{({io} () -> Unit) -> Unit},
and the expected \mintinline[escapeinside=||]{scala}{(|$\Box$| {io} () -> Unit) -> Unit},
we have to eta-expand the function \texttt{f} so that we can box-adapt its argument.
After eta-expansion \texttt{f} into \texttt{x => f(x)},
we find that the type of the argument \texttt{x} is \mintinline[escapeinside=||]{scala}{|$\Box$| {io} () -> Unit},
where as to apply the function the expected type is \mintinline{scala}{{io} () -> Unit}.
We therefore insert an unbox and returns the adapted term \mintinline[escapeinside=||]{scala}{x => f({io} |$\multimapinv$| x)}.

Eta-expansion and box adaptation not only changes the boxes in types, but also changes the capture sets.
In the aforementioned example, the type of the adapted form becomes \mintinline{scala}{{io} ({io} Unit -> Unit) -> Unit}.
Note that box adaptation charges the outermost capture sets with additional captured reference \texttt{io}.


\subsection{Type-Level Box Inference}

The transformation performed by box inference (including the box/unbox operations and the eta-expansion) does not have any runtime effects.
The only point that matters is whether or not the value of one type could be adapted to another type based on box inference.
Therefore, actually computing the transformed term could be a waste.
It is natural to ask: can we check the possibility of such adaptation without actually keeping track of the result term?
The answer is yes.

The type-level box inference system does this by predicting the type-level effects of box adaptation, without actually transforming the terms.
For example, when we infer a box for the variable \mintinline{scala}{x} of type \mintinline{scala}{{fs} String}, the type-level system predicts that the type of the adapted term becomes \mintinline[escapeinside=||]{scala}{|$\Box$| {fs} String}, without computing the term \mintinline[escapeinside=||]{scala}{|$\Box$| x}.

To properly predict the type-level effects of box inference,
the system has to track not only the change of boxes in types, but also the capture sets.
As we have seen before, eta-expansion will charge the capture sets with additional references.
Considering the example of executing wrapped executions.
\begin{minted}[escapeinside=||]{scala}
  type Wrapper[T] = [X] -> (f: {*} T -> X) -> X

  def run(wrapped: Wrapper[|$\Box$| {io} Unit -> Unit]) =
    def f(op: {io} Unit -> Unit): Unit = op(unit)
    wrapped(f)
\end{minted}
When typing the application \mintinline{scala}{wrapped(f)},
we adapt \texttt{f} of type \mintinline{scala}{(op: {io} Unit -> Unit) -> Unit} to the expected type
\mintinline[escapeinside=||]{scala}{{*} (op: |\tbox| {io} Unit -> Unit) -> Unit}.
The term-level system adapts the term into \mintinline[escapeinside=||]{scala}{x => f({io} |\tunbox| x)}, as discussed before,
whose type is \mintinline[escapeinside=||]{scala}{{io} (op: |\tbox{}| {io} Unit -> Unit) -> Unit}.
If we define \mintinline{scala}{Wrapper[T]} as \mintinline{scala}{type Wrapper[T] = [X] -> (f: {} T -> X) -> X},
which requires the function \texttt{f} to be pure, the program should be rejected,
since after box adaptation the function \texttt{f} is impure (and it should be, since it runs a wrapped IO operation).
The type-level box inference has to keep track of the fact that the capture set of the function type is charged with \texttt{io}, in addition to predicting the box in the result type.
Furthermore, the function \texttt{run} captures \texttt{io} as well after box inference due to the box adaptation that happened in its closure.
The type-level system has to take care of these effects on captured variables when typing the closures.

\section{Formal Presentation of Box Inference}

\subsection{Semi-Algorithmic Capture Calculus \ccAlgo{}}

We first present the semi-algorithmic variant \ccAlgo{},
based on which our two box inference systems are developped.
\ccAlgo{} is derived from \system{} by making the rules syntax-directed.
\system{} is extended from System F$_{<:}$
with the added captrue checking constructs including capture sets, capturing types and boxes \cite{Odersky2022ScopedCF}.
It is in monadic normal form and dependently typed, allowing for capture sets in types to reference program variables.

\textbf{Syntax.}
\systemAlgo{} shares the same syntax with \system{}.
The syntax is presented in Figure \ref{fig:syntax}.
The main differences between the syntax of \system{} and System F$_{<:}$ are highlighted in \tnew{grey boxes}.
There are the box $\tBox{x}$, unbox $\tUnbox{C}{x}$ expressions,
and boxed types $\tBox{T}$.
Also, the types are now annotated with a capture set $C$, which is a set of variables.
Shape types are regular System F$_{<:}$ types without capturing information.
Shape types can be used as types directly by assuming the equivalence $S \equiv \tCap{\set{}}{S}$.
The readers may find more details about the syntax of capture calculus in \cite{Odersky2022ScopedCF}.

\textbf{Captured variables.}
The function $\cv{\cdot}$ (in the \algorn{abs}, and the \algorn{tabs} rules)
computes the set of captured variables of a term \cite{Odersky2022ScopedCF}.
$\cv{t}$ is mostly identical to the free variables of $t$,
except that the references inside boxes are ignored.
The definition of $\cv{\cdot}$ \cite{Odersky2022ScopedCF} is given in Definition \ref{def:captured-variables}.
$\tBox{x}$ and $\tUnbox{C}{x}$ hides and reveals the captured variables respectively.
$\cv{v}$ is counted in $\cv{\tLet{x}{v}{t}}$ only when $t$ captures $x$.
\begin{definition}[Captured variables]  \label{def:captured-variables}
  $\cv{t}$ computes the set of captured variables of a term $t$,
  and is defined as follows:
  \begin{alignat*}{4}
      &\cv{\tLambda{x}{T}{t}} &\quad=\quad &\cv{t} / \set{x} \\
      &\cv{\tTLambda{X}{S}{t}} &\quad=\quad &\cv{t} \\
      &\cv{x} &\quad=\quad &\set{x} \\
      &\cv{\tLet{x}{v}{t}} &\quad=\quad &\set{\cv{t}}  \quad \text{if $x \notin \cv{t}$} \\
      &\cv{\tLet{x}{t}{u}} &\quad=\quad &\cv{t} \cup \cv{u} / \set{x} \\
      &\cv{x\,y} &\quad=\quad &\set{x, y} \\
      &\cv{x[S]} &\quad=\quad &\set{x} \\
      &\cv{\tBox{x}} &\quad=\quad &\set{} \\
      &\cv{\tUnbox{C}{x}} &\quad=\quad &\set{x} \cup C \\
  \end{alignat*}
\end{definition}

\newcommand{\bbox}[1]{\mbox{\textbf{\textsf{#1}}}}
\newcommand{\OR}{\mathop{\ \ \ |\ \ \ }}

\begin{figure}[t]
  \vspace{0.3em}
  \begin{center}\rule{1.0\textwidth}{0.4pt}\end{center}
  \vspace{0.3em}

  $\begin{array}[t]{llll@{\hspace{8mm}}l}
    \bbox{Variable} & \multicolumn{2}{l}{x, y, z} \\
    \bbox{Type Variable} & \multicolumn{2}{l}{X, Y, Z} \\[1em]
    \bbox{Value} & v, w & ::= & \lambda(x : T)t
        \OR \lambda[X <: S]t
        \OR \new{\tBox{x}} \\

    \bbox{Answer} & a & ::= & v \OR x \\

    \bbox{Term} &s, t& ::= & a
          \OR  x\,y
          \OR  x\,[S]
          \OR  \tLet x s t
          \OR  \new{\tUnbox{C}{x}}\\

    \bbox{Shape Type} & S & ::= & X
          \OR  \top
          \OR  \forall(x : U) T
          \OR  \forall[X <: S] T
          \OR  \new{\tBox{T}} \\

    \bbox{Type} & T, U & ::= & S \OR \new{\tCap{C}{S}} \\

    \bbox{Capture Set} & \new{C} & ::= & \new{\{x_1, \ldots, x_n\}}
    \end{array}$

\caption{\label{fig:syntax} Syntax of \system{} and the box inference systems \cite{Odersky2022ScopedCF}}
\vspace{0.3em}
\begin{center}\rule{1.0\textwidth}{0.4pt}\end{center}
\vspace{0.3em}
\end{figure}

\begin{wide-rules}

\textbf{Algorithmic typing \quad $\typAlgDft{t}{T}$}

\begin{multicols}{2}

\infrule[\ruledef{alg-var}]
  {\G(x) = \tCap{C}{S}}
  {\typAlgDft{x}{\tCapSet{x}{S}}}

\infrule[\ruledef{alg-abs}]
{\typAlg{\extendG{x}{U}}{t}{T} \\ \wfTyp{\G}{U}}
{\typAlgDft{\tLambda{x}{U}{t}}{\tCap{\exclude{\cv{t}}{x}}{\tForall{x}{U}{T}}}}

\infrule[\ruledef{alg-tabs}]
{\typAlg{\extendGT{X}{S}}{t}{T} \\ \wfTyp{\G}{S}}
{\typAlgDft{\tTLambda{X}{S}{t}}{\tCap{\cv{t}}{\tTForall{X}{S}{T}}}}

\infrule[\ruledef{alg-app}]
{\typUpDft{x}{\tCap{C}{\tForall{z}{T}{U}}} \\
  \typAlgDft{y}{T^\prime}\\
  \new{\subAlgoDft{T^\prime}{T}}}
{\typAlgDft{x\ y}{\tSubst{z}{y}{U}}}

\infrule[\ruledef{alg-tapp}]
{\typUpDft{x}{\tCap{C}{\tTForall{X}{S}{T}}} \\
  \new{\subAlgoDft{S^\prime}{S}}}
{\typAlgDft{x\ S^\prime}{\tSubst{X}{S^\prime}{U}}}

\infrule[\ruledef{alg-box}]
{\typAlgDft{x}{\tCap C S} \\ C \subseteq \dom{\G}}
{\typAlgDft{\tBox{x}}{\tBox{\tCap C S}}}

\infrule[\ruledef{alg-unbox}]
{\typUpDft{x}{C_x\,\tBox{\tCap C S}} \\ C^\prime \subseteq \dom{\G}\\
  \subAlgoDft{C}{C^\prime}}
{\typAlgDft{\tUnbox{C^\prime}{x}}{\tCap{C}{S}}}

\infrule[\ruledef{alg-let}]
{\typAlgDft{s}{T} \\ \typAlg{\extendG{x}{T}}{t}{U}\\
  \new{U^\prime = \avoidOp{x}{\cv{T}}{U}}}
{\typAlgDft{\tLet{x}{s}{t}}{U^\prime}}

\end{multicols}


\textbf{Algorithmic subtyping \quad $\subAlgoDft{T}{U}$}

\begin{multicols}{2}

\infax[\ruledef{alg-refl}]
{\subAlgoDft{\new{X}}{\new{X}}}

\infrule[\ruledef{alg-tvar}]
{X <: S \in \G \\ \new{\subAlgoDft{S}{S^\prime}}}
{\subAlgoDft{X}{S^\prime}}

\infax[\ruledef{alg-top}]
{\subAlgoDft{S}{\top}}

\infrule[\ruledef{alg-fun}]
{\subAlgoDft{U_2}{U_1} \\
  \subAlgo{\extendG{x}{U_2}}{T_1}{T_2}}
{\subAlgoDft{\tForall{x}{U_1}{T_1}}{\tForall{x}{U_2}{T_2}}}

\infrule[\ruledef{alg-tfun}]
{\subAlgoDft{S_2}{S_1} \\
  \subAlgo{\extendGT{X}{S_2}}{T_1}{T_2}}
{\subAlgoDft{\tTForall{X}{S_1}{T_1}}{\tTForall{X}{S_2}{T_2}}}

\infrule[\ruledef{alg-capt}]
{\subAlgoDft{C_1}{C_2} \\
  \subAlgoDft{S_1}{S_2}}
{\subAlgoDft{\tCap{C_1}{S_1}}{\tCap{C_2}{S_2}}}

\infrule[\ruledef{alg-boxed}]
{\subAlgoDft{T_1}{T_2}}
{\subAlgoDft{\tBox{T_1}}{\tBox{T_2}}}

\end{multicols}

\textbf{Subcapturing \quad $\subDft{C_1}{C_2}$}

\begin{multicols}{3}

\infrule[\ruledef{sc-var}]
{\G(x) = \tCap{C}{S}\\ \subDft{C}{C_2}}
{\subDft{\set{x}}{C_2}}

\infrule[\ruledef{sc-set}]
{\subDft{\set{x_1}}{C_2}\\
\cdots\\
\subDft{\set{x_n}}{C_2}}
{\subDft{\set{x_1,\cdots,x_n}}{C_2}}

\infrule[\ruledef{sc-elem}]
{x \in C}
{\subDft{\set{x}}{C}}

\end{multicols}

\caption{Algorithmic Type System (\ccAlgo{})}
  \label{fig:algo-typing}

\end{wide-rules}

\begin{wide-rules}






\textbf{Type variable widening \quad $\typUpDft{X}{S}$}

\begin{multicols}{2}

\infrule[\ruledef{widen-shape}]
{X <: S \in \G \\ S \notin \mathcal{X}}
{\typUpDft{X}{S}}

\infrule[\ruledef{widen-tvar}]
{X <: Y \in \G \\ \typUpDft{Y}{S}}
{\typUpDft{X}{S}}

\end{multicols}

\textbf{Widen Variable Typing \quad $\typUpDft{x}{\tCap C S}$}

\begin{multicols}{2}

\infrule[\ruledef{var-widen}]
{\typAlgDft{x}{\tCap{C}{X}}\\
  \typUpDft{X}{S}}
{\typUpDft{x}{\tCap{C}{S}}}

\infrule[\ruledef{var-lookup}]
{\typAlgDft{x}{\tCap{C}{S}} \\
  S \notin \mathcal{X}}
{\typUpDft{x}{\tCap C S}}

\end{multicols}

\caption{Variable Typing of System \ccAlgo{}}
  \label{fig:var-typing}

\end{wide-rules}

\textbf{Syntax-directed typing and subtyping.}
The type system of \ccAlgo{} is shown in Figure \ref{fig:algo-typing},
where the grey boxes highlight the main differences from \system{}.
The rules are made syntax-directed by inlining \emph{subsumption} in typing, and \emph{transitivity} in subtyping.
Also, we make use of the \textsf{avoid} function in \ruleref{alg-let}, that computes the least supertype of a type $U$ that does not mention the local variable $x$.
We will discuss avoidance in more details later.
The \emph{subsumption} rule in typing is inlined to two application rules \algorn{app} and \algorn{tapp}.
They check whether the argument type conforms to the parameter type, or the type variable bounds.
In subtyping, the transitivity is inlined in the \algorn{tvar} rule.
Also, the \algorn{refl} rule only applies on type variables.
We prove the equivalence between \systemAlgo{} and \system{} in metatheory.
We show the subcapturing rules as well for the completeness of the presentation,
though they are unchanged from \system{}.

\textbf{Avoidance.}
In \algorn{let} we invoke the \textsf{avoid} function to
algorithmically construct the least super type of the body type such that the locally-bound variable is not mentioned.
The \textsc{let} rule of the original \system{} is shown below,
which requires that the result type avoids $x$,
while the choice of $U$ is ambiguous and thus undermines the semi-algorithmics.
\infrule[let]
{\typDft{s}{T} \\ \typ{\extendG{x}{T}}{t}{U}\\
  \new{x \notin \fv{U}}}
{\typDft{\tLet{x}{s}{t}}{U}}
$\avoidOp{x}{\cv{T}}{U}$ computes the aforementioned smallest $x$-avoiding supertype $U^\prime$ for $U$.
The avoidance function is a traversal of the input type,
where we approximate $x$ to $C_x$ at covariant occurrences and to $\set{}$ at contravariant occurrences.



\textbf{Type variable widening.}
In \algorn{app}, \algorn{tapp} and \algorn{unbox}
the $\typUpDft{x}{T}$ judgement looks up variable $x$ in the environment,
and widen its type to a concrete type if it is bound to a type variable.
For instance, given an application $x\,y$
where $x$ is typed as a type variable $X$
whose bound is a function.
We expect the type of $x$ to be a function,
but in lack of a separate subsumption rule,
we can not widen the type of $x$ to the function via subtyping.
To solve this problem, we employ the $\typUpDft{x}{T}$ judgment here
to widen the type variable to a concrete shape type,
which reveals the underlying function type
and enable the application to be typechecked.

\subsection{\ccAdp{}: Term-Level Box Inference}

\begin{wide-rules}

\textbf{Typing \quad $\typAlgoDft{t}{t^\prime}{T}$}


  \infrule[\ruledef{bi-var}]
  {x : \tCap{C}{S} \in \G}
  {\typAlgoDft{x}{x}{\tCap{C}{S}}}

  \infrule[\ruledef{bi-tabs}]
  {\typAlgo{(\G, X <: S)}{t}{t^\prime}{T} \\ \wfTyp{\G}{S}}
  {\typAlgoDft{\tTLambda{X}{S}{t}}{\tTLambda{X}{S}{t^\prime}}{\cv{t^\prime}\,\tTForall{X}{S}{T}}}

  \infrule[\ruledef{bi-abs}]
  {\typAlgo{\extendG{x}{U}}{t}{t^\prime}{T} \\ \wfTyp{\G}{U}}
  {\typAlgoDft{\tLambda{x}{U}{t}}{\tLambda{x}{U}{t^\prime}}{\cv{t^\prime}/x \ \tForall{x}{U}{T}}}


  \infrule[\ruledef{bi-app}]
  {\typUbDft{x}{t_x}{C\,\tForall{z}{U}{T}}\\
    \new{\adpTypDft{y}{t_y}{U}}\\
    t = \fEmbed{ \tLet{x^\prime}{t_x}{\tLet{y^\prime}{t_y}{x^\prime\,y^\prime}} }\\
    T^\prime = \begin{cases}
      [z:=y]T, &\text{if $t_y$ is a variable} \\
      \avoidOp{y^\prime}{\cv{U}}{[z:=y^\prime]T}, &\text{otherwise} \\
    \end{cases}
    }
  {\typAlgoDft{x\,y}{t}{T^\prime}}

  \infrule[\ruledef{bi-tapp}]
  {\typUbDft{x}{t_x}{\tCap{C}{\tTForall X S T}} \\
    \subDft{S^\prime}{S}}
  {\typAlgoDft{x[S^\prime]}{\tLet{x^\prime}{t_x}{x^\prime [S^\prime]}}{[X := S']T}}







\begin{multicols}{2}

  \infrule[\ruledef{bi-box}]
  {\typUpDft{x}{\tCap{C}{S}} \andalso C \subseteq \dom{\G}}
  {\typAlgoDft{\tBox x}{\tBox x}{\tBox{\tCap C S}}}

  \infrule[\ruledef{bi-unbox}]
  {\typUpDft{x}{\tBox{\tCap C S}} \andalso C \subseteq \dom{\G} \\
    \subDft{C}{C^\prime}}
  {\typAlgoDft{\tUnbox{C^\prime}{x}}{\tUnbox{C^\prime}{x}}{\tCap C S}}

\end{multicols}

\infrule[\ruledef{bi-let}]
{\typAlgoDft{s}{s^\prime}{U} \\
  \typAlgo{\G, x: U}{t}{t^\prime}{T} \\
  T^\prime = \avoidOp{x}{\cv{U}}{T}}
{\typAlgoDft{\tLet{x}{s}{t}}{\tLet{x}{s^\prime}{t^\prime}}{T^\prime}}


\textbf{Variable Typing with Unboxing \quad $\typUbDft{x}{t}{T}$}

\infrule[\ruledef{var-return}]
  {\typUpDft{x}{T}}
  {\typUbDft{x}{x}{T}}

\infrule[\ruledef{var-unbox}]
  {\typUpDft{x}{\tBox{\tCap{C}{S}}} \andalso C \subseteq \dom{\G}}
  {\typUbDft{x}{\tUnbox{C}{x}}{\tCap{C}{S}}}

\caption{Typing rules of System \ccAdp{}.}
  \label{fig:adp-typ}

\end{wide-rules}

\begin{wide-rules}

\textbf{Box Adaptation \quad $\adpTypDft{x}{T}{U}$}

\infrule[\ruledef{adapt}]
{\typAlgDft{x}{T_0} \\
  \adpDft{x}{T_0}{t}{T} \\
  \wfTyp{\G}{T}
}
{\adpTypDft{x}{t}{T}}

\textbf{Adaptation subtyping \quad $\adpDft{x}{T}{t_x}{U}$}

\begin{multicols}{2}

\infrule[\ruledef{ba-refl}]
{\subDft{C}{C^\prime}}
{\adpDft{x}{\tCap{C}{S}}{x}{\tCap{C^\prime}{S}}}

\infrule[\ruledef{ba-tvar}]
{X <: S \in \G \\
  \adpDft{x}{\tCap{C}{S}}{t_x}{\tCap{C^\prime}{S^\prime}}}
{\adpDft{x}{\tCap{C}{X}}{t_x}{\tCap{C^\prime}{S^\prime}}}

\infrule[\ruledef{ba-top}]
{\subDft{C}{C^\prime}}
{\adpDft{x}{\tCap{C}{S}}{x}{\tCap{C^\prime}{\top}}}

\infrule[\ruledef{ba-boxed}]
{
  \adp{\G}{y}{\tCap{C_1}{S_1}}{t_y}{\tCap{C_2}{S_2}}\\
  t = \tLet{y}{\tUnbox{C_1}{x}}{\tLet{z}{t_y}{\tBox{z}}}}
{\adpDft{x}{C_x\,\tBox{\tCap{C_1}{S_1}}}{\fEmbed t}{\tCap{C_x^{\prime}}{\tBox{\tCap{C_2}{S_2}}}}}





\end{multicols}

\infrule[\ruledef{ba-fun}]
{\adp{\G}{x}{U_2}{t_x}{U_1} \\
  \adp{(\G, x: U_2, x' : U_1)}{z}{T'_1}{t_z}{T_2}\\
  T'_1 =
  \begin{cases}
    T_1, &\text{if $t_x = x$,} \\
    \fsubst {x} {x'} T_1, &\text{otherwise,} \\
  \end{cases} \\
  t_f = {\tLambda{x}{U_2}{\tLet{x^\prime}{t_x}{\tLet{z}{x_f\ x^\prime}{t_z}}}} \\
  \subDft{\substIn{x_f}{C}{\cv{\fEmbed{t_f}}}}{C^\prime}
}
{\adpDft{x_f}{C\,\tForall{x}{U_1}{T_1}}{\fEmbed{t_f}}{C^\prime\,\tForall{x}{U_2}{T_2}}}

\infrule[\ruledef{ba-tfun}]
{
  \subDft{S_2}{S_1} \\
  \adp{(\G, X <: S_2)}{z}{T_1}{t_z}{T_2} \\
  t_f = \tTLambda{X}{S_2}{\tLet{z}{x_f[X]}{t_z}} \\
  \subDft{\substIn{x_f}{C}{\cv{\fEmbed{t_f}}}}{C^\prime}
}
{\adpDft{x_f}{C\,\tTForall{X}{S_1}{T_1}}{\fEmbed{t_f}}{C^\prime\,\tTForall{X}{S_2}{T_2}}}


\infrule[\ruledef{ba-box}]
{\adpDft{x}{\tCap{C}{S}}{t_x}{\tCap{C^\prime}{S^\prime}} \andalso
  {C^\prime} \subseteq \dom{\G}}
{\adpDft{x}{\tCap{C}{S}}{\tLet{y}{t_x}{\tBox{y}}}{\tCap{C^{\prime\prime}}{\tBox{\tCap{C^\prime}{S^\prime}}}}}

\infrule[\ruledef{ba-unbox}]
{
  \adp{\G}{y}{\tCap C S}{t_y}{\tCap{C^\prime}{S^\prime}} \andalso
  C \subseteq \dom{\G}
}
{\adpDft{x}{\tBox{\tCap C S}}{\tLet{y}{\tUnbox{C}{x}}{t_y}}{\tCap{C^\prime}{S^\prime}}}


\caption{Box adaptation rules of System \ccAdp{}.}
  \label{fig:adp-sub}

\end{wide-rules}

Based on the semi-algorithmic type system \systemAlgo{},
we develop the term-level box inference system \ccAdp{}.
Its typing and subtyping rules are defined in Figure \ref{fig:adp-typ} and \ref{fig:adp-sub} respectively.
The term syntax is unchanged.
The typing and subtyping rules are based on the semi-algorithmic system,
but equipped with the funcationality to inference the missing boxes by transforming the input term.

\subsubsection{Typing}

The typing judgement in \ccAdp{} is now in the form of $\typAlgoDft{t}{t^\prime}{T}$.
It could be interpreted as the statement that the input term $t$,
which is possibly ill-typed with missing boxes,
can be transformed to $t^\prime$ by box inference.
The result term $t^\prime$ has the missed box operations completed,
and is well-typed as $T$ in \cc{},
which we will show formally in metatheory.
The typing rules in \ccAdp{} looks very much like the ones in \ccAlgo{}.
The main differences are (1) \ccAdp{} uses box adaptation instead of subtype check in \textsc{app} rule,
and (2) \ccAdp{}'s typing rules keep track of the transformed term $t^\prime$ in the judgement.

Box adaptation is a replacement for subtyping, which inserts box operations in addition to regular subtype checks to make the adapted term conforming to the expected type.
The details of box adaptation will be given in Section \ref{sec:box-adaptation}.
Another place where box inference is involved is the variable typing judgement $\typUbDft xTt$.
In addition to widening the type variables as in $\typUpDft xT$,
variable typing with unboxing inserts an unbox operation, if the widened type is boxed.
We use this rule to lookup variables in \adprn{app} and \adprn{tapp},
because this two rules expect the variable $x$ to be functions,
and $x$'s type may be a boxed function after widening the type variables.
In such cases, we perform box inference to unbox the variable, so that the system is more expressive.

In \adprn{app}, we return the normalized term.
Term normalization $\fEmbed{t}$ applies simplifications on the term.
It will be defined in Section \ref{sec:box-adaptation},
and among the simplifications what matters here is that
the let-bindings in the form of $\tLet{x}{y}{t}$ will be reduced into $t[x \mapsto y]$.
This is necessary for the completeness of \ccAdp{} (i.e. any term typeable in \ccAlgo{} should be typeable in \ccAdp{}).
Now we demonstrate its necessity.
In \adprn{app} we have to bind the adapted functions and arguments before applying them,
due to the monadic normal form the calculus is in.
We will re-bind the variables $x$ and $y$ even if box adaptation keeps them untouched,
which implies that $x$ and $y$ are well-typed in the original system without box inference involved.
In this case,
the transformed term before normalization will be $\tLet{x^\prime}{x}{\tLet{y^\prime}{y}{x^\prime\,y^\prime}}$.
Instead of being typed as $[z := y]T$ (which is derivable in \ccAlgo{}),
it will be typed as $\avoidOp{y^\prime}{\set{y}}{[z := y^\prime]T}$,
which is larger than $[z := y]T$ since it avoids $y^\prime$ to the empty set at contravariant places.
To ensure completeness, we have to simplify $\tLet{x^\prime}{x}{\tLet{y^\prime}{y}{x^\prime\,y^\prime}}$ to
$x\,y$ so that the most precise type can be derived.

The \adprn{var} rule looks up the variable without transforming it.
The \adprn{abs} and \adprn{tabs} rule types their body with box inference,
transforming the lambda to a new one with the adapted body.
The capture set of the closure are computed based on the adapted body, too.
The \adprn{tapp} rule unboxes the function variable when possible,
and bind it to a variable in order to apply it.
Regular subtype check is used to compare the argument type and the bound.
The \adprn{box}, \adprn{unbox} and \adprn{let} rule types the term
as in the semi-algorithmic system, returning the input term unchanged.

\subsubsection{Box Adaptation}
\label{sec:box-adaptation}

The box adaptation judgement $\adpTypDft{x}{t_x}{T}$ (defined in Figure \ref{fig:adp-sub})
states that the variable $x$ can be box-adapted to term $t_x$ so that $t_x$ conforms to the expected type $T$.
It first looks up the type of the variable $x$ in the environment,
and invokes the \emph{adaptation subtyping} to perform box inference.

The adaptation subtyping $\adpDft{x}{T}{t_x}{U}$ could be understood as:
given a variable $x$ of type $T$, it can be box-adapted into term $t_x$ so that it conforms to the expected type $U$.
The adaptation subtyping is developed on the basis of semi-algorithmic subtyping defined in \ccAlgo{}.
Compared to semi-algorithmic subtyping, adaptation subtyping
(1) transforms the input variable $x$ and returns the adapted term $t_x$ which can be typed as $U$ given that $x$ is bound to $T$ in the typing context,
and (2) has the \adprn{box} and \adprn{unbox} rules for inserting box- and unbox-operations
when there is a mismatch of boxes between the actual type and the expected type.

\textbf{Inlining the \algorn{capt} rule.}
In System \ccAdp{} the \algorn{capt} rule has been inlined to each of the adaptation subtyping rules.
This is because box adaptation could change the captured variables of the term, which could impose additional constraints on the capture sets of closures' types.
For instance, in the \adprn{unbox} rule,
the set of captured variables changes from $\cv{x} = \set{x}$ to $\cv{\tLet{y}{\tUnbox{C}{x}}{t_y}} = \set{x} \cup C \cup \cv{t_y}$.
In the \adprn{fun} rule, since $x_f$ is adapted into $t_f^\prime$,
which possibly captures more variables (e.g. due to an unbox operation inserted by \adprn{unbox} when adapting the function body),
we have to make sure that $C^\prime$ subcaptures the captured variables of the adapted term in the premise.

\textbf{Eta-expansion.}
In \ruleref{ba-fun} and \ruleref{ba-tfun} rules, we eta-expand the input variable, so that we can box-adapt the argument and the result.
For instance, in the \ruleref{ba-fun} rule, the general idea can be illustrated in the following diagram.
\begin{equation*}
  \begin{split}
    x_f \xrightarrow{\text{eta-expansion}} &\quad\tLambda{x}{U_2}{x_f\,x} \\
    \xrightarrow[\text{argument}]{\text{adapt}} &\quad\tLambda{x}{U_2}{\tLet{x^\prime}{t_x}{x_f\,x^\prime}} \\
    \xrightarrow[\text{result}]{\text{adapt}} &\quad\tLambda{x}{U_2}{\tLet{x^\prime}{t_x}{\tLet{z}{x_f\,x^\prime}{t_z}}} \\
  \end{split}
\end{equation*}
Since box adaptation assumes that the input is a variable,
we let-bind the adapted argument, and also the result of the application.
Note that when adapting the result of the application in the \ruleref{ba-fun} rule,
we substitute the dependent reference to $x$ in the result type to the adapted argument $x'$,
which is in line with the fact that the function is applied to $x'$.
However, the substitution is omitted when the adapted argument is a variable.
This is the outcome of term normalisation, which we will delve into soon.

\textbf{Adaptation inside boxes.}
The \ruleref{boxed} rule in \cc{} derives subtyping relations between boxed types by comparing the types inside the box.
In the box inference system, the \adprn{boxed} rule adapts the type inside the box.
To do this, we first unbox the term, let-bind it, box-adapt the unboxed variable, and finally box it.

\textbf{Term normalisation.}
In the rules (\ruleref{ba-fun}, \ruleref{ba-tfun} and \ruleref{ba-boxed}),
$\fEmbed{t}$ denotes the normalisation of a term $t$.
The normalisation inlines let-bindings, beta-reduces functions, and simplifies box operations,
which is defined formally as follows.
\begin{definition}[Term normalisation]
  The normalisation of a term $t$, written $\fEmbed{t}$, is defined by the following equations.
  \begin{align*}
    &\fEmbed{\tLet{x}{s}{t}} \quad &= \quad &\compactLet{\tLet{x}{\fEmbed{s}}{\fEmbed{t}}}, & \label{norm-let}\tag{\textsc{norm-let}} \\
    &\fEmbed{\tLambda{x}{T}{u}} \quad &= \quad &\compactLam{\tLambda{x}{T}{\fEmbed{u}}}, & \label{norm-fun}\tag{\textsc{norm-fun}} \\
    &\fEmbed{\tTLambda{X}{S}{u}} \quad &= \quad &\compactLam{\tTLambda{X}{S}{\fEmbed{u}}}, &&\label{norm-tfun}\tag{\textsc{norm-tfun}} \\
    &\fEmbed{t} \quad &= \quad &t,  \quad&\textrm{otherwise}, \label{norm-return}\tag{\textsc{norm-return}} \\
  \end{align*}
  where $\compactLet{\cdot}$ and $\compactLam{\cdot}$ are helper functions that
  normalise the let-bindings and performs the beta-reductions respectively.
  They are given as follows.
  \begin{align*}
    &\compactLet{\tLet{y}{\tUnbox{C}{x}}{\tBox{y}}} \quad &= \quad &x, &\label{nl-box}\tag{\textsc{nl-box}} \\
    &\compactLet{\tLet{y}{t}{y}} \quad &= \quad &t, &\label{nl-deref}\tag{\textsc{nl-deref}} \\
    &\compactLet{\tLet{y}{x}{t}} \quad &= \quad &t[y \mapsto x]. &\label{nl-rename}\tag{\textsc{nl-rename}} \\
    & \\
    &\compactLam{\tLambda{z}{T}{x\,z}} \quad &= \quad &x, & \label{nl-beta}\tag{\textsc{nl-beta}} \\
    &\compactLam{\tTLambda{X}{S}{x[X]}} \quad &= \quad &x. & \label{nl-tbeta}\tag{\textsc{nl-tbeta}} \\
  \end{align*}
\end{definition}
Term normalisation is essential for the completeness of box adaptation system.
In metatheory, to show the completeness of the box inference system,
we have to prove that if $\subAlgoDft{T}{U}$ is derivable in \ccAlgo{},
then we have $\adpDft{x}{T}{x}{U}$ in \ccAdp{}.
In other words, if $T$ is a subtype of $U$,
which means that there is no box adaptation needed to turn a variable $x$ of type $T$ into $U$,
adaptation subtyping should return the input variable as it is.
However, due to the MNF nature of capture calculus systems,
and the fact that box adaptation works only on variables as input,
we insert let bindings everywhere during the adaptation.
Plus, the function gets eta-expanded in function adaptation rules.
Therefore, without term normalisation, the box adaptation system transforms the term even if there is a subtyping relation between the actual and the expected type,
which will undermine the completeness of box inference,
as we will see in the metatheory.

\subsection{\ccAdpt{}: Type-Level Box Inference}

\newcommand{\thisx}{\ensuremath{\lozenge}}

\begin{wide-rules}



\textbf{Box Adaptation \quad $\adptTypDft{x}{U}{\cCat}{C}$}

  \infrule[\ruledef{t-adapt}]
  {\typAlgDft{x}{U} \\
  \adptcDft{U}{T}{\cCat}{{C}} \andalso \wfTyp{\G}{T}
  }
  {\adptTypDft{x}{T}{\cCat}{C[x]}}



\textbf{Adaptation subtyping \quad $\adptcDft{T}{U}{\cCat}{C}$}

\begin{multicols}{3}

\infrule[\ruledef{t-ba-refl}]
  {\subDft{C}{C^\prime}}
  {\adptcDft{\tCap{C}{S}}{\tCap{C^\prime}{S}}{\cVar}{\set\thisx}}

\infrule[\ruledef{t-ba-tvar}]
  {X <: S \in \G \\
  \adptcDft{\tCap{C}{S}}{\tCap{C^\prime}{S^\prime}}{\cCat}{C}}
  {\adptcDft{\tCap{C}{X}}{\tCap{C^\prime}{S^\prime}}{\cCat}{C}}

\infrule[\ruledef{t-ba-top}]
  {\subDft{C}{C^\prime}}
  {\adptcDft{\tCap{C}{S}}{\tCap{C^\prime}{\top}}{\cVar}{\set{\thisx}}}

\end{multicols}

\infrule[\ruledef{t-ba-boxed}]
  {\adptcDft{\tCap{C_1}{S_1}}{\tCap{C_2}{S_2}}{\cCat}{C}\\
   (\cCat^\prime, C^\prime) =
   \begin{cases}
     (\cVar, \set\thisx), &\text{if $\cCat$ is $\cVar$} \\
     (\cTrm, C_1 \cup \set\thisx), &\text{if $\cCat$ is $\cVal$} \\
     (\cTrm, C \cup C_1 \cup \set\thisx), &\text{otherwise} \\
   \end{cases}
   }
  {\adptcDft{C_x\,\tBox{\tCap{C_1}{S_1}}}{\tCap{C_y}{\tBox{\tCap{C_2}{S_2}}}}{\cCat^\prime}{C^\prime}}





\infrule[\ruledef{t-ba-fun}]
{\adptcDft{U_2}{U_1}{\cCat_1}{C_{1}} \\
  \adptc{\G, x: U_2, x': U_1}{T'_1}{T_2}{\cCat_2}{C_{2}}\\
  C_f = C_1 \cup C_2 \setminus \set{x, x'} \cup \set{\thisx} \andalso
  \subDft{C_f[C]}{C^\prime} \\
  T'_1 =
  \begin{cases}
    T_1, &\text{if $\cCat = \cVar$} \\
    \fsubst{x}{x'} T_1, &\text{otherwise} \\
  \end{cases}\andalso
  \cCat^\prime = \begin{cases}
    \cVar, &\text{if both $\cCat_1$ and $\cCat_2$ are $\cVar$} \\
    \cVal, &\text{otherwise} \\
  \end{cases}
}
{\adptcDft{C\,\tForall{x}{U_1}{T_1}}{C^\prime\,\tForall{x}{U_2}{T_2}}{\cCat^\prime}{C_f}}

\infrule[\ruledef{t-ba-tfun}]
{
  \subAlgoDft{S_2}{S_1} \\
  \adptc{\G, X <: S_2}{T_1}{T_2}{\cCat}{C_2} \\
  C_f = C_2\setminus\set{\thisx}\andalso
  \subDft{C_f[C]}{C^\prime} \\
  \cCat^\prime = \begin{cases}
    \cVar, &\text{if $\cCat$ is $\cVar$} \\
    \cVal, &\text{otherwise} \\
  \end{cases}
}
{\adptcDft{C\,\tTForall{X}{S_1}{T_1}}{C^\prime\,\tTForall{X}{S_2}{T_2}}{\cCat^\prime}{C_f}}

\begin{multicols}{2}

\infrule[\ruledef{t-ba-box}]
  {\adptcDft{\tCap{C_1}{S_1}}{\tCap{C_2}{S_2}}{\cCat}{C_0} \\
    {C_2} \subseteq \dom{\G} \\
  C^\prime = \begin{cases}
    \eset, &\text{if $\cCat \in \set{\cVar, \cVal}$} \\
    C_0,   &\text{otherwise}
  \end{cases}
  }
  {\adptcDft{\tCap{C_1}{S_1}}{\tCap{D}{\tBox{\tCap{C_2}{S_2}}}}{\cTrm}{{C^\prime}}}

\infrule[\ruledef{t-ba-unbox}]
{
  \adptcDft{\tCap {C_1} {S_1}}{\tCap{C_2}{S_2}}{\cCat}{C_0} \andalso
  {C_1} \subseteq \dom{\G}
}
{\adptcDft{C_x\,\tBox{\tCap{C_1}{C_2}}}{\tCap{C_2}{S_2}}{\cTrm}{C_1 \cup C_0 \cup \set\thisx}}

\end{multicols}

\caption{Box adaptation rules of System \ccAdpt{}.}
  \label{fig:adpt-sub}

\end{wide-rules}

\begin{wide-rules}

\textbf{Typing \quad $\typAdptDft{t}{T}{C}$}

\begin{multicols}{2}

  \infrule[\ruledef{t-bi-var}]
  {x : \tCap C S \in \G}
  {\typAdptDft{x}{\tCap{\set{x}}{S}}{\set{x}}}

  \infrule[\ruledef{t-bi-tabs}]
  {\typAdpt{(G, X <: S)}{t}{T}{C} \andalso \wfTyp{\G}{S}}
  {\typAdptDft{\tTLambda{X}{S}{t}}{C\,\tTForall{X}{S}{T}}{C}}

  \infrule[\ruledef{t-bi-abs}]
  {\typAdpt{\extendG{x}{U}}{t}{T}{C} \andalso \wfTyp{\G}{U}}
  {\typAdptDft{\tLambda{x}{U}{t}}{C/x \ \tForall{x}{U}{T}}{C/x}}



  \infrule[\ruledef{t-bi-app}]
  {\typUbtDft{x}{C\,\tForall{z}{U}{T}}{C_{1}}\\
  {\adptTypDft{y}{U}{\cCat}{C_{2}}} \\
  T^\prime = \begin{cases}
    [z:=y]T, &\text{if $\cCat = \cVar$}\\
    \avoidOp{z}{\cv{U}}{T}, &\text{otherwise} \\
  \end{cases}
  }
  {\typAdptDft{x\,y}{T^\prime}{C_{1} \cup C_{2}}}




  \infrule[\ruledef{t-bi-tapp}]
  {\typUbtDft{x}{\tCap{C}{\tTForall X S T}}{C} \\
    \subDft{S^\prime}{S}}
  {\typAdptDft{x[S^\prime]}{[X := S']T}{C}}


\end{multicols}

\begin{multicols}{2}

  \infrule[\ruledef{t-bi-box}]
  {\typUpDft{x}{\tCap{C}{S}} \andalso C \subseteq \dom{\G}}
  {\typAdptDft{\tBox x}{\tBox{\tCap C S}}{\eset{}}}

  \infrule[\ruledef{t-bi-unbox}]
  {\typUpDft{x}{\tBox{\tCap C S}} \andalso C \subseteq \dom{\G} \\
    \subDft{C}{C^\prime}}
  {\typAdptDft{\tUnbox{C^\prime}{x}}{\tCap C S}{C^\prime \cup \set{x}}}

\end{multicols}

\infrule[\ruledef{t-bi-let}]
{\typAdptDft{s}{U}{C_{1}} \\
  \typAdpt{(\G, x: U)}{t}{T}{C_{2}} \\
  T^\prime = \avoidOp{x}{\cv{U}}{T} \\
  C^\prime = \begin{cases}
    C_{2}, &\text{if $s$ is a value and $x \notin C_{2}$} \\
    C_{1} \cup C_{2} / \set{x}, &\text{otherwise} \\
  \end{cases}
  }
{\typAdptDft{\tLet{x}{s}{t}}{T^\prime}{C^\prime}}


\textbf{Variable Typing with Unboxing \quad $\typUbtDft{x}{T}{C}$}

\infrule[\ruledef{t-var-return}]
  {\typUpDft{x}{T}}
  {\typUbtDft{x}{T}{\set{x}}}
\infrule[\ruledef{t-var-unbox}]
  {\typUpDft{x}{\tBox{\tCap{C}{S}}} \andalso C \subseteq \dom{\G}}
  {\typUbtDft{x}{\tCap{C}{S}}{C \cup \set{x}}}

%
%

\caption{Typing rules of System \ccAdpt{}.}
  \label{fig:adpt-typ}

\end{wide-rules}

In term-level box inference system, the transformation on the terms (i.e.
inserting box-operations, let-bindings and eta-expansions)
does not change the semantics of the program.
What matters is \emph{whether} it is {possible} to transform the program to be well-typed with box inference,
but not \emph{what} is the result term of box inference.
To eliminate unnecessary computational and memory burden
to compute and store the result terms,
in this section we investigate the possibility of \emph{predicting} the effect of box inference solely on the type level.

In this section, on the basis of the term-level box inference system,
we further develop the type-level system \ccAdpt{}, which does the equivalent reasoning without computing the transformed terms.
Figure \ref{fig:adpt-typ} and \ref{fig:adpt-sub} shows
the typing and adaptation rules of \ccAdpt{}.

The key problem of modeling box inference on type level is how to predict the effects of term transformations.
First of all, the transformation inserts boxes and unboxes, thus adding or dropping boxes in types.
More importantly, we have to predict how the term transformations changes the capture sets.
It turns out that to do this we will need to keep track of the captured variable sets of transformed terms.

\subsubsection{Type-Level Adaptation Subtyping}

The type-level adaptation subtyping rules has the form $\adptcDft{T}{U}{\cCat}{C}$.
In the metatheory, we prove that
if $\adptcDft{T}{U}{\cCat}{C}$ is derivable in \ccAdpt{},
we have $\adpDft{x}{T}{t_x}{U}$ in the term-level system \ccAdp{},
such that $C$ is the captured variables of $t_x$ (i.e. $\cv{t_x}$).
In other words, the type-level system models the equivalent transformation,
and predict the captured variables of the adapted term,
without explicitly computing the term.
It turns out that to predict the captured variable set correctly,
we have to keep track of the \emph{kind} of terms too, which is the $\cCat$ in the judgement,
and we have $\fCat{t_x} = \cCat$, where $\fCat{t_x}$ returns $t_x$'s kind.

\textbf{Term kinds.}
We categorize the terms into three kinds, which are
\begin{itemize}
\item \textbf{Variables $\cVar$};
\item \textbf{Values $\cVal$}: lambda $\tLambda{x}{T}{t}$, type lambda $\tTLambda{X}{S}{t}$, and box $\tBox{x}$;
\item \textbf{Terms $\cTrm$}: application $x\,y$, type application $x[S]$, let binding $\tLet{x}{t}{u}$, and unbox $\tUnbox{C}{x}$.
\end{itemize}
We observe that it is necessary to keep track of the kind of the adapted terms
so that we can correctly predict the captured variables of them.
For example, in the \adptrn{box} rule,
given the input variable $x$, its actual type $\tCap{C}{S}$ and the expected type $\tBox{\tCap{C^\prime}{S^\prime}}$,
the \adprn{box} rule in \ccAdp{} first adapts the variable to $\tCap{C^\prime}{S^\prime}$ by transforming it into $t_x$,
then boxes it, resulting in the term $\tLet{y}{t_x}{\tBox{y}}$.
Based on the deinition of $\cv{\cdot}$, if $t_x$ is either a variable or a value (i.e. $\fCat{t_x} \in {\cVar, \cVal}$),
$\cv{\tLet{y}{t_x}{\tBox{y}}} = \set{}$, otherwise the captured variable set will be $\cv{t_x}$.
In the \adptrn{box} rule in System \ccAdpt{}, we make the correct prediction of the captured variables based on the kind of the adapted term.
This illustrates the necessity of keeping track of the \emph{kind} during type-level adaptation.


\textbf{Holes in Adaptation Subtyping.}
In the adaptation subtyping rules we have a special construct, hole, written $\thisx$.
We need this special \emph{hole} construct, because
the captured variable sets predicted by the system
could possibly contains the input variable $x$,
but $x$ is unknown in the adaptation subtyping derivations.
We therefore use $\thisx$ as a placeholder for this input variable,
and fill the hole with the actual $x$ when it becomes available, as in \ruleref{t-adapt}.
$C[x]$ fills the hold in $C$ with $x$,
i.e. $C[x] = C \setminus \set{\thisx} \cup \set{x}$.


\textbf{Adaptation rules.}
Now we inspect these adaptation subtyping rules one-by-one.
These rules are formed in a way that each of them predicts the captured variable sets and the term kinds in the corresponding rule of the term-level system.
The \ruleref{t-ba-refl} and \ruleref{t-ba-top} rule is straightforward: on the term level this two rules returns the input variable as it is.
Therefore we predict that the kind of the adapted term is $\cVar$ and the captured variables is just $\set{\thisx}$.
The \ruleref{t-ba-tvar} rule widens the type variable to its bound and calls the rule recursively, which reflects what happens on the term level.
The \ruleref{t-ba-boxed} rule performs a case analysis on $\cCat$ (which corresponds to the kind of $t_y$ in \ruleref{t-ba-boxed}).
If $\cCat$ is the variables, on term-level the $\compact{\cdot}$ simplifies the result term to the input variable,
so the result kind is $\cVar$ and the captured variable set is just $\set{\thisx}$.
If $\cCat$ is $\cVal$ or $\cTrm$, the let-bindings and box/unbox operators will not get simplified.
So the result kind is $\cTrm$.
Both $C_1$ and $\set{\thisx}$ are in the captured variable set, because of the unbox operation $\tUnbox{C}{x}$ in the result term.
$C$ is not included in the captured variable set if $t_y$ is a value.
In the \ruleref{t-ba-fun} rule,
we adapt the argument and the result recursively in the premise.
The result kind is $\cVar$ only if both $\cCat_1$ and $\cCat_2$ are $\cVar$
(which means the resulted term is simplified to the input variable $x_f$ by $\compact{\cdot}$),
and it is otherwise always $\cVal$ since the adapted term will be a lambda.
We predict the captured variables set with $C_1$ and $C_2$.
The \ruleref{t-ba-tfun} is really similar.
The resulted kind is $\cVar$ when $\cCat$ is $\cVar$, in which case the result is normalised to the variable.
The captured variable set could again be computed from $C_2$.
Note that regular subtyping check $\subAlgoDft{S_2}{S_1}$ is used to compare the bounds.
For the \adptrn{box} and \adptrn{unbox} rule, the resulted kind is always $\cTrm$,
since \adprn{box} or \adprn{unbox} always return a let expression.
The captured variable set can be computed using $C_0$.
Notably, in the \adprn{box} rule the result term captures the empty set
when $\cCat$ is $\cVar$ or $\cVal$.
In this case the related references, including the input variable, are hidden by the box operation.


\textbf{Boxes on the top.}
After inspecting the adaptation subtyping rules in System \ccAdpt{},
we observe that the hole $\thisx$ is always an element in the captrued variable set,
unless the \adptrn{box} is in the root of the derivation tree,
in which case a box is inferred on the top of the term.
In other words,
box adaptation only \emph{hides} references in this specific case;
other than this,
box adaptation always makes the term to capture \emph{more} references.
This observation facilitates the implementation of box inference.


\subsubsection{Typing}

The typing judgement $\typAdptDft{t}{T}{C}$ in System \ccAdpt{}
now returns the captured variable set of the adapted term,
without actually computes the term.
In metatheory, it is proven that if $\typAdptDft{t}{T}{C}$,
in the term-level system, the typing judgement $\typAlgoDft{t}{t^\prime}{T}$
is derivable for some term $t^\prime$,
and $\cv{t^\prime} = C$.
In other words, the type-level typing rule predicts
\emph{whether} the input program $t$ can become well-typed after box inference,
and returns its captured variable set if so.

In System \ccAdpt{},
the captured variable sets of resulted terms are predicted by the judgement,
but the actual terms resulted from box are unknown.
The \adptrn{var}, \adptrn{box}, \adptrn{tapp} and \adptrn{unbox} rules are standard
and expected, while they returns the captured variable set of the transformed term
instead of the term itself.
\ccAdpt{} uses the type-level box adaptation in \adptrn{app}.
In the \adptrn{abs} and \adptrn{tabs} we make use of these predicted captured variable sets,
instead of computing them using $\cv{\cdot}$ as in the term-level system.
In the \adptrn{let} rule,
we do a case analysis to compute the correct captured set variable,
which reflects the two cases for let-expressions in the definition of $\cv{\cdot}$.
In the same vein as the typing rules,
the variable typing rule now returns the captured set of the possibly unboxed term,
instead of returning this term directly.

\section{Metatheory}

\newcommand{\wtp}[1]{\mathsf{wtp}({#1})}

In the metatheory, we develop the proof of the relations between
System \cc{} and the family of algorithmic box inference systems.
The relationships between these systems is illustrated in the following diagram.
\begin{equation*}
  \wtp{\text{\cc{}}} \quad
  = \quad \wtp{\text{\ccAlgo{}}} \quad
  \subset \quad \wtp{\text{\ccAdp{}}} \quad
  = \quad \wtp{\text{\ccAdpt{}}}
\end{equation*}
Here $\wtp{\cdot}$ denotes the set of well-typed programs in the system.
As depicted in the diagram, an equivalence exists between \cc{} and \ccAlgo{},
as well as between \ccAdp{} and \ccAdpt{}.
However, System \ccAdp{} is \emph{more expressive} than System \ccAlgo{},
as it accepts more programs
by resolving the inconsistencies between boxes with box infernece.

\subsection{Equivalence Between \cc{} and \ccAlgo{}}

We first prove the equivalence between \cc{} and \ccAlgo{}.
It implies that a correspondance exists between every derivation in \cc{} and \ccAlgo{},
This is expressed in the the following theorem.
\begin{theorem}[Equivalence between \cc{} and \ccAlgo{}]
	$\forall \G, t, T. \typDft{t}{T} \leftrightarrow \typAlgDft{t}{T}$.
\end{theorem}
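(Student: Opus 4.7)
The plan is to prove the biconditional by splitting it into a soundness direction ($\typAlgDft{t}{T} \Rightarrow \typDft{t}{T}$) and a completeness direction ($\typDft{t}{T} \Rightarrow \typAlgDft{t}{T}$), each by induction on the given derivation. Since both the typing and subtyping judgments differ between the two systems, I would prove the parallel subtyping equivalence $\subAlgoDft{T}{U} \leftrightarrow \subDft{T}{U}$ first and invoke it in the typing proof. The type-variable widening judgment $\typUpDft{x}{T}$ also needs to be related to a chain of declarative subsumptions through type-variable bounds.

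For soundness, each algorithmic rule has a direct declarative counterpart once the right glue is supplied: the inlined subsumption in \algorn{app} and \algorn{tapp} is recovered by inserting an explicit \textsc{sub} step before the declarative application, the inlined transitivity in \algorn{tvar} is recovered by an explicit \textsc{trans} step, and \algorn{let} is recovered by a soundness lemma for \textsf{avoid} stating that $\avoidOp{x}{C_x}{U}$ is a declarative supertype of $U$ with $x \notin \fv{\avoidOp{x}{C_x}{U}}$. All of these are routine inductions once the lemmas are in hand.

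For completeness the supporting lemmas I would need are: reflexivity of $\subAlgoDft{\cdot}{\cdot}$ on arbitrary types (by structural induction, using \algorn{refl} and \algorn{tvar} at type variables); admissible transitivity of $\subAlgoDft{\cdot}{\cdot}$ (by induction on the combined derivation, eliminating the cuts introduced by \algorn{tvar}); weakening and narrowing for both judgments; and an avoidance-completeness lemma: whenever $U$ is a declarative supertype of $T$ with $x \notin \fv{U}$, then $\subAlgoDft{\avoidOp{x}{C_x}{T}}{U}$, i.e.\ $\avoidOp{x}{C_x}{T}$ is the least $x$-avoiding algorithmic supertype of $T$. With these in place, the induction closes: whenever the declarative derivation uses \textsc{sub}, the inductive hypothesis gives an algorithmic type that is algorithmically a subtype of the declaratively-derived one, which the subtyping-equivalence lemma lets me feed into the inlined subtyping premises of \algorn{app}, \algorn{tapp}, or the bound comparisons.

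The main obstacle will be the \textsc{let} case of completeness. Declaratively, the body type $U$ is any $x$-avoiding supertype, so the inductive hypothesis only yields \emph{some} algorithmic type $U_0$ for the body with $\subAlgoDft{U_0}{U}$; the algorithmic rule instead forces the result to be the specific $\avoidOp{x}{\cv{T}}{U_0}$. Closing the gap requires the avoidance-completeness lemma above, whose proof in turn depends on a careful variance analysis of \textsf{avoid} (covariant occurrences of $x$ widen to $C_x$, contravariant occurrences to $\set{}$), together with a monotonicity property of \textsf{avoid} under algorithmic subtyping. A secondary but less delicate obstacle is the application cases, where inlined subsumption must be absorbed simultaneously into argument-type comparison and into the $\typUpDft{x}{\cdot}$ widening of the function position; this is where a clean formulation of subsumption-absorption for algorithmic typing (stated as a derived rule rather than a structural rule) keeps the induction tractable.
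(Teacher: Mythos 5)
Your proposal matches the paper's proof in all essentials: it splits the equivalence into a soundness and a completeness theorem proved by induction on the respective derivations, backed by reflexivity and admissible transitivity of algorithmic subtyping, weakening and narrowing, inversion lemmas for the application cases, and a least-$x$-avoiding-supertype lemma for \textsc{let} --- which the paper likewise treats as the delicate case, and which is why its completeness theorem is stated only up to an algorithmic subtype of the declarative type. The one caveat is your plan to prove transitivity ``by induction on the combined derivation'': as in standard $F_{<:}$ metatheory, the paper instead inducts on the middle type $T_2$, because the contravariant function/type-function cases require a narrowing step whose derivation size is not controlled, so the combined-derivation measure would need to be replaced or augmented.
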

This theorem follows directly from Theorem \ref{thm:typ-algo-completeness} and \ref{thm:typ-algo-soundness}.
Detailed proof is given in Section \ref{sec:cc-algo-proof}.

The proof of this theorem replies on the admissibility of
both the reflexivity and the transitivity rule in \ccAlgo{}'s subtyping system.
The following two lemmas demonstrates the admissibility of the rules respectively.
\begin{restatable}[Reflexivity of algorithmic subtyping]{lemma}{subalgorefl}  \label{lemma:sub-algo-refl}
  For any $T$, $\subAlgoDft{T}{T}$.
\end{restatable}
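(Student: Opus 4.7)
The plan is to proceed by structural induction on the type $T$, using the shape of $T$ to pick the matching algorithmic subtyping rule at each step. Before starting, I would establish the analogue for capture sets, namely reflexivity of subcapturing: $\subDft{C}{C}$ for every well-formed $C$. This is immediate from \ruleref{sc-elem} (which gives $\subDft{\set{x}}{\set{x}}$ for any $x \in C$) combined with \ruleref{sc-set} (which lifts elementwise subcapturing to arbitrary finite sets). I would state this as a small preliminary lemma so that the \ruleref{alg-capt} case below can invoke it cleanly.

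For the main induction, I would split on the top-level form of $T$. If $T = X$, then \ruleref{alg-refl} closes the case directly. If $T = \top$, then \ruleref{alg-top} suffices. If $T = \tForall{x}{U}{T_0}$, I apply \ruleref{alg-fun} with the induction hypothesis on $U$ (giving $\subAlgoDft{U}{U}$) as the contravariant premise and the induction hypothesis on $T_0$ in the extended context $\extendG{x}{U}$ as the covariant premise; the \ruleref{alg-tfun} case is analogous. If $T = \tBox{T_0}$, then \ruleref{alg-boxed} applied to the induction hypothesis on $T_0$ finishes the case. If $T = \tCap{C}{S}$, then \ruleref{alg-capt} closes it, with subcapturing reflexivity supplying $\subDft{C}{C}$ and the induction hypothesis on $S$ supplying $\subAlgoDft{S}{S}$.

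A subtle point worth flagging is that the inductive hypothesis must be stated uniformly in the typing context $\G$, since the function and type-function cases invoke the IH in an extended environment $\extendG{x}{U}$ or $\extendGT{X}{S}$. Phrasing the statement as ``for all $\G$ such that $\wfTyp{\G}{T}$, $\subAlgoDft{T}{T}$'' (or more simply as a purely syntactic claim, since none of the rules above inspect $\G$ in a way that blocks reflexivity) sidesteps this. I do not expect any real obstacle here: no rule is missing for any syntactic form of $T$, and the induction is strictly structural, so the only bookkeeping concern is ensuring that the auxiliary subcapturing reflexivity is available when the \ruleref{alg-capt} case is reached.
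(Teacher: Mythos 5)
Your proposal is correct and follows essentially the same route as the paper: structural induction on $T$, closing the variable and $\top$ cases with \ruleref{alg-refl} and \ruleref{alg-top}, the capturing case with the IH plus reflexivity of subcapturing, and the remaining cases with the IH and the corresponding rule. The extra care you take in deriving subcapturing reflexivity from \ruleref{sc-elem} and \ruleref{sc-set} and in generalizing the IH over the context is sound bookkeeping that the paper leaves implicit.
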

This lemma's proof can be easily established through induction on the type $T$.

\begin{restatable}[Transitivity of algorithmic subtyping $\subAlgoDft{T_1}{T_2}$]{lemma}{subalgotrans} \label{lemma:algorithmic-subtyping-transitivity}
  In any environment $\G$, $\subAlgoDft{T_1}{T_2}$ and $\subAlgoDft{T_2}{T_3}$, then $\subAlgoDft{T_1}{T_3}$.
\end{restatable}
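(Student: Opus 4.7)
The plan is to strengthen the statement and prove transitivity together with the standard narrowing lemma (if $\subAlgo{\G_1, x : U, \G_2}{T_1}{T_2}$ and $\subAlgo{\G_1}{U'}{U}$, then $\subAlgo{\G_1, x : U', \G_2}{T_1}{T_2}$, and analogously for type-variable bounds) by simultaneous induction on the size of the middle type $T_2$, with case analysis on the outermost shape of $T_2$. Since transitivity and narrowing are mutually dependent in the \ruleref{alg-tvar} and \ruleref{alg-fun} cases, the two invariants must be carried together.

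For the structural cases the combination is routine: if $T_2 = \top$, inversion on the second derivation forces $T_3 = \top$ and the result is immediate; if $T_2 = \tCap{C}{S}$ or $T_2 = \tBox{T}$, both input derivations must end in \ruleref{alg-capt} or \ruleref{alg-boxed}, and we recurse componentwise, appealing to the transitivity of subcapturing for the capture-set slot (a straightforward induction over \ruleref{sc-var}, \ruleref{sc-set}, and \ruleref{sc-elem}). For $T_2 = \tForall{x}{U}{T^b}$, inversion yields $T_1 = \tForall{x}{U_1}{T_1^b}$ and $T_3 = \tForall{x}{U_3}{T_3^b}$ with $\subAlgoDft{U}{U_1}$, $\subAlgo{\extendG{x}{U}}{T_1^b}{T^b}$, $\subAlgoDft{U_3}{U}$, and $\subAlgo{\extendG{x}{U_3}}{T^b}{T_3^b}$. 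Applying the IH at $U$ gives $\subAlgoDft{U_3}{U_1}$; narrowing the first body derivation from $U$ to $U_3$ (justified by $\subAlgoDft{U_3}{U}$) and a further IH appeal at $T^b$ produces $\subAlgo{\extendG{x}{U_3}}{T_1^b}{T_3^b}$, and \ruleref{alg-fun} closes the case. The $\tTForall{X}{S}{T}$ case is analogous, using narrowing of type-variable bounds.

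The delicate case is $T_2 = X$. Only \ruleref{alg-refl} and \ruleref{alg-tvar} can conclude a judgment with $X$ on the right, so $T_1$ must itself be a type variable. The \ruleref{alg-refl} case is immediate. In the \ruleref{alg-tvar} case we have $T_1 = Y$ with $Y <: S_Y \in \G$ and a strictly smaller subderivation $\subAlgoDft{S_Y}{X}$. Here the outer induction on the size of $T_2$ does not shrink, since we would invoke transitivity at $X$ again; we therefore add an inner induction on the size of the left derivation. Combining $\subAlgoDft{S_Y}{X}$ with the given $\subAlgoDft{X}{T_3}$ via the inner IH yields $\subAlgoDft{S_Y}{T_3}$, and one application of \ruleref{alg-tvar} gives $\subAlgoDft{Y}{T_3}$.

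The main obstacle is the interleaving with narrowing: narrowing is needed in the \ruleref{alg-fun}/\ruleref{alg-tfun} cases to equalize body environments before the IH, and narrowing's own proof must invoke transitivity when reconstructing an \ruleref{alg-tvar} step against a narrowed bound. The bookkeeping amounts to ensuring that each cross-invocation happens at a proper subcomponent of $T_2$ so that the common measure strictly decreases; once the lexicographic order (size of $T_2$, then size of the left derivation for the type-variable case) is fixed, each appeal is clearly legal and the induction goes through.
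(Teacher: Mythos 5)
Your strategy is essentially the paper's own: the paper also inducts on the middle type $T_2$, handles the type-variable case by an inner induction on the left derivation (packaged as Lemma~\ref{lemma:sub-algo-X-Y-S}), and resolves the transitivity/narrowing circularity by conditioning narrowing on transitivity holding at the old bound --- its ``subtype bridge'' predicate is exactly the invariant your simultaneous induction carries, and Lemmas~\ref{lemma:sub-algo-T-fun-fun} and \ref{lemma:sub-algo-T-tfun-tfun} play the role of your function/type-function cases. So the two proofs differ only in packaging, not in substance.

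There is one concrete imprecision to repair. In the structural cases you invert too aggressively: for $T_2 = \tBox{T}$ or $T_2 = \tCap{C}{S}$ you claim both derivations must end in \ruleref{alg-boxed} or \ruleref{alg-capt}, and for $T_2 = \tForall{x}{U}{T^b}$ you claim $T_1$ is itself an arrow. This fails whenever $T_1$ is a type variable: $\subAlgoDft{Y}{\tBox{U}}$ (or $\subAlgoDft{Y}{\tForall{x}{U}{T^b}}$, etc.) is derivable by \ruleref{alg-tvar} from $Y <: S_Y \in \G$ and $\subAlgoDft{S_Y}{T_2}$, so the left derivation need not end in the structural rule. The inner induction on the size of the left derivation that you introduce only for $T_2 = X$ is in fact needed uniformly in every case: when the left derivation ends in \ruleref{alg-tvar}, apply the inner IH to $\subAlgoDft{S_Y}{T_2}$ and $\subAlgoDft{T_2}{T_3}$ (same $T_2$, strictly smaller left derivation) and close with \ruleref{alg-tvar}. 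The paper does exactly this in its boxed and capturing cases and buries the same step for arrows inside Lemma~\ref{lemma:sub-algo-T-fun-fun}, which is itself an induction on the left derivation. With the lexicographic measure (size of $T_2$, then size of the left derivation) applied in all cases rather than only the type-variable one, your argument goes through.
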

The proof for this lemma requires induction on the type in the middle (i.e. $T_2$).

In addition to proving the aforementioned two lemmas for subtyping,
it is required to establish a lemma about \emph{avoidance}
which demonstrates that
it is possible to construct the least supertype of a type that avoids a local variable.
This lemma is crucial to proving the case for let-bindings in the completeness theorem.
\begin{lemma}
	If $\typAlgDft{s}{T_1}$,
  and $\typAlg{\G, x: T_1}{t}{T_2}$,
  then $T_2^\prime = \avoidOp{x}{\cv{T_1}}{T_2}$ is the least supertype of $T_2$,
  such that $T_2^\prime$ does not mention $x$.
  Here, least super type means for every $U$ such that $\subAlgo{\G, x: T_1}{T_2}{U}$
  and $x \notin \fv{U}$,
  we have $\subAlgoDft{T_2^\prime}{U}$.
\end{lemma}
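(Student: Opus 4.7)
The plan is to split the statement into two parts: (i) \emph{soundness}, i.e.\ $T_2' = \avoidOp{x}{\cv{T_1}}{T_2}$ is a supertype of $T_2$ that does not mention $x$, and (ii) \emph{minimality}, i.e.\ every $x$-avoiding supertype $U$ of $T_2$ satisfies $\subAlgoDft{T_2'}{U}$. Both parts proceed by structural induction on $T_2$, following the polarity-directed traversal performed by the avoidance function, with covariant occurrences of $x$ approximated to $C_x := \cv{T_1}$ and contravariant occurrences approximated to $\eset$.

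For soundness, the induction on $T_2$ reduces each case to a monotonicity argument. The covariant approximation is valid because $\subDft[{\G, x: T_1}]{\set{x}}{C_x}$ holds (by \ruleref{sc-var} together with Lemma~\ref{lemma:sub-algo-refl} applied at the subcapturing level), and the contravariant approximation to $\eset$ is trivially a subcapture of any set. The inductive hypotheses are then combined using \ruleref{alg-fun}, \ruleref{alg-tfun}, \ruleref{alg-capt}, and \ruleref{alg-boxed}. The resulting type contains no free occurrence of $x$ by construction of the traversal.

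For minimality, I would again induct on $T_2$, inverting the derivation $\subAlgo{\G, x: T_1}{T_2}{U}$ at each step. In the capturing-type case $T_2 = \tCap{C}{S}$, \ruleref{alg-capt} gives $\subDft{C}{C'}$ and $\subAlgoDft{S}{S'}$ for $U = \tCap{C'}{S'}$; since $x \notin C'$, the substitution that replaces $\set{x}$ by $C_x$ inside $C$ still subcaptures $C'$. The function case uses the inductive hypothesis contravariantly on the argument (where the approximation to $\eset$ is a subcapture of anything, so the premise $x \notin \fv{U}$ transfers to $U$'s argument type) and covariantly on the result; boxed types unfold via \ruleref{alg-boxed} and appeal to the inductive hypothesis directly. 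The shape-type and top cases are immediate.

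The main obstacle is the capture-set minimality argument: establishing that $C$ with $\set{x}$ replaced by $C_x$ is indeed the tightest $x$-avoiding over-approximation of any $C'$ with $\subDft{C}{C'}$ and $x \notin C'$. This calls for a subcapturing inversion lemma stating that whenever $\subDft{\set{x}}{C'}$ and $x \notin C'$, one already has $\subDft{C_x}{C'}$; its proof is a short induction on the derivation of $\subDft{\set{x}}{C'}$ using that the only applicable first rule is \ruleref{sc-var}, which unfolds exactly to $C_x$. A secondary subtlety is the dependent-function case, where the bound variable of a $\forall$ could shadow $x$; this is handled by the standard Barendregt convention and does not affect the core argument.
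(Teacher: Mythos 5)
Your proposal matches the paper's treatment: the paper likewise splits the claim into a supertype direction proved by structural induction on the type (Lemma~\ref{lemma:avoidance-sub}) and a minimality direction proved by induction on the subtyping derivation $\subAlgo{\G, x: T_1}{T_2}{U}$, which is interchangeable with your structural induction on $T_2$ plus inversion since the subtyping rules are syntax-directed. Your identification of the subcapturing inversion step (that $\subDft{\set{x}}{C'}$ with $x \notin C'$ forces \ruleref{sc-var} and hence $\subDft{\cv{T_1}}{C'}$) is exactly the crux of the \ruleref{alg-capt} case, which the paper's one-line proof leaves implicit.
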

We can prove this lemma by induction on the subtype derivation $\subAlgo{\G, x: T_1}{T_2}{U}$.
Once we have established these lemmas,
both direction of the equivalence between the two systems can be proven by induction.

\subsection{Relation Between \ccAlgo{} and \ccAdp{}}

\ccAdp{} is more expressive than \ccAdp{}.
Specifically, every typing derivation \ccAlgo{} can be mapped to a valid derivation in \ccAdp{},
whereas there are certain programs that possess valid typing derivations in \ccAdp{} but not in \ccAlgo{}.
Such a program is ill-typed in \ccAlgo{} due to the inconsistent boxes,
but there could exist a typing derivation $\typAlgoDft{t}{t^\prime}{T}$ in \ccAdpt{}
which resolves the mismatches between boxes through box inference
and thus accept the program $t$.
Also, it is crucial for $t^\prime$ to be well-typed in \ccAlgo{},
or else \ccAdp{} will be unsound.

The following theorem, based on the above ideas,
establishes the relation between the two systems formally.
\begin{theorem}
	$\forall \G, t, T. $ we have:
  (1) if $\typAlgDft{t}{T}$, $\exists t^\prime$ such that $\typAlgoDft{t}{t^\prime}{T}$;
  and (2) if for some $t^\prime$ we have $\typAlgoDft{t}{t^\prime}{T}$, then $\typAlgDft{t^\prime}{T}$.
\end{theorem}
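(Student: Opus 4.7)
The plan is to prove both directions by induction on the respective typing derivations, using two bridging lemmas that connect algorithmic subtyping with adaptation subtyping.

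For direction (1), \emph{completeness}, I induct on the derivation of $\typAlgDft{t}{T}$. Most cases match structurally: the \algorn{var}, \algorn{abs}, \algorn{tabs}, \algorn{tapp}, \algorn{box}, \algorn{unbox}, and \algorn{let} cases each correspond to a rule in \ccAdp{} that returns essentially the same $t^\prime$, up to the let-bindings required by monadic normal form. The interesting case is \algorn{app}, where \ccAlgo{} uses a subtyping check $\subAlgoDft{T^\prime}{T}$ but \ccAdp{} uses box adaptation. I therefore need the key lemma: \emph{if $\subAlgoDft{T}{U}$ and $\G(x) = T$, then $\adpDft{x}{T}{x}{U}$}, i.e.\ every subtyping derivation lifts to an identity adaptation. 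This is shown by induction on the subtyping derivation and crucially depends on term normalization $\fEmbed{\cdot}$ to collapse the eta-expansion introduced by \ruleref{ba-fun} and \ruleref{ba-tfun} back to the input variable via \textsc{nl-beta} and \textsc{nl-tbeta}, and to erase redundant let-bindings of variables via \textsc{nl-rename} and \textsc{nl-deref}. With this lemma, the \algorn{app} case produces $t^\prime = \fEmbed{\tLet{x^\prime}{x}{\tLet{y^\prime}{y}{x^\prime\,y^\prime}}} = x\,y$, and the variable branch of the $T^\prime$ computation in \adprn{bi-app} yields exactly $[z := y]T$, matching the conclusion.

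For direction (2), \emph{soundness}, I induct on the derivation of $\typAlgoDft{t}{t^\prime}{T}$. The main obligation is the companion lemma: \emph{if $\G(x) = T$ and $\adpDft{x}{T}{t_x}{U}$, then $\typAlgDft{t_x}{U}$}. This is proven by induction on the adaptation derivation. In the \ruleref{ba-fun} case, for instance, the adapted term is a lambda containing an application of $x_f$ followed by the adaptation of its result; I assemble its \ccAlgo{} typing from \algorn{abs}, \algorn{app}, and \algorn{let}, using the inductive hypotheses for the argument and result adaptations, and discharging the capture-set obligation on the resulting closure type using the premise $\subDft{\substIn{x_f}{C}{\cv{\fEmbed{t_f}}}}{C^\prime}$. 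The \ruleref{ba-boxed} case similarly rebuilds a box around an unbox-let expression. With this lemma in place, each typing case of \ccAdp{} lifts to its counterpart in \ccAlgo{}; the only subtlety is \adprn{bi-app}, where both branches of $T^\prime$ must be justified — the variable branch by direct substitution (relying again on normalization flattening the let-bindings) and the avoidance branch by the correctness of $\mathsf{avoid}$ established in the previous subsection.

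The main obstacle will be the identity-adaptation lemma together with its interaction with term normalization. Because \ruleref{ba-fun} unconditionally eta-expands and \adprn{bi-app} unconditionally inserts let-bindings, a naive induction would produce a transformed term syntactically different from the input and carrying a strictly larger type (since avoidance is conservative at contravariant positions). Verifying that $\fEmbed{\cdot}$ collapses every trivially-adapted term back to its input requires a careful case analysis against the patterns matched by $\compactLet{\cdot}$ and $\compactLam{\cdot}$, confirming that in every rule of adaptation subtyping, when applied in the identity mode, the constructed term is recognised by exactly one of the normalization equations. A secondary difficulty is tracking the capture-set premises: in \ruleref{ba-fun} the side condition $\subDft{\substIn{x_f}{C}{\cv{\fEmbed{t_f}}}}{C^\prime}$ is stated on the normalized term, so soundness must reconcile this with the \algorn{abs} rule, which computes the closure's capture set from the actual body; this alignment is again delivered by the normalization lemmas.
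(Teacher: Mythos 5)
Your proposal matches the paper's proof: both directions are established by induction on the respective typing derivations, with completeness resting on exactly the identity-adaptation lemma you state (if $\subAlgoDft{T}{U}$ then $\adpDft{x}{T}{x}{U}$, which depends on $\fEmbed{\cdot}$ collapsing the eta-expansions and spurious let-bindings back to the input variable) and soundness on the companion lemma that every adapted term is well-typed at the target type, with the same case analysis on whether the adapted argument is a variable in the application rule. The only cosmetic difference is that the paper proves the identity-adaptation lemma by first detouring through an auxiliary subtyping relation with the \textsc{capt} rule inlined, so that its shape aligns with the adaptation-subtyping rules.
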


We prove this theorem using Theorem \ref{thm:typ-adp-completeness} and \ref{thm:typ-adp-soundness},
with detailed proof given in Section \ref{sec:cc-adp-proof}.
The completeness and soundness of box adaptation is crucial to the proof,
which is established by the following two lemmas.
\begin{restatable}[Completeness of box adaptation]{lemma}{subadpcompleteness} \label{lemma:sub-algo-to-adp}
  If $\typUpDft{x}{T}$ and $\subAlgoDft{T}{U}$ then $\adpDft{x}{T}{x}{U}$.
\end{restatable}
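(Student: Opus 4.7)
The plan is to induct on the derivation of $\subAlgoDft{T}{U}$, showing in each case that the corresponding adaptation subtyping rule can be applied so that its resulting term, after normalisation, equals the input variable $x$. Because algorithmic subtyping is stratified---the shape-type rules (\algorn{refl}, \algorn{top}, \algorn{tvar}, \algorn{fun}, \algorn{tfun}, \algorn{boxed}) are wrapped by \algorn{capt}---the natural structure of the induction is to handle \algorn{capt} first by peeling off the outer capture sets and feeding its subcapturing premise into the corresponding adp-rule, then to dispatch the shape-type cases one by one.

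For the easy cases \algorn{refl}, \algorn{top}, and \algorn{tvar}, the rules \adprn{refl}, \adprn{top}, and \adprn{tvar} already return the input variable as the adapted term, and the required subcapturing side conditions come directly from \algorn{capt}; the \adprn{tvar} case appeals to the inductive hypothesis on the recursive subtype derivation over the bound. The harder cases are \algorn{fun}, \algorn{tfun}, and \algorn{boxed}, where the adaptation rules perform eta-expansion or an unbox-rebox rather than returning the variable directly. The key observation is that when the inductive hypotheses give identity adaptations for the premises, the complex term $t_f$ built by the rule collapses to the input variable under $\fEmbed{\cdot}$. For \adprn{fun}, with $t_f = \tLambda{x}{U_2}{\tLet{x^\prime}{x}{\tLet{z}{x_f\,x^\prime}{z}}}$, normalisation applies \ruleref{nl-deref}, then \ruleref{nl-rename}, then \ruleref{nl-beta}, reducing $t_f$ to $x_f$. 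An analogous chain terminating in \ruleref{nl-tbeta} handles \adprn{tfun}, and \adprn{boxed} is dispatched by \ruleref{nl-rename} followed by \ruleref{nl-box}.

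The main obstacle is discharging the capture-set side conditions in the eta-expanded cases. In \adprn{fun} the rule requires $\subDft{\substIn{x_f}{C}{\cv{\fEmbed{t_f}}}}{C^\prime}$; once normalisation reduces $\fEmbed{t_f}$ to $x_f$, we have $\cv{\fEmbed{t_f}} = \set{x_f}$, so the substitution yields $C$ and the remaining obligation $\subDft{C}{C^\prime}$ is exactly what the outer \algorn{capt} premise supplies, and \adprn{tfun} is analogous. A secondary concern is that the inner inductive hypotheses in \adprn{fun} and \adprn{tfun} must be reapplied in extended environments (e.g.\ $\G, x : U_2, x^\prime : U_1$), requiring routine weakening of $\typUpDft{\cdot}{\cdot}$ and $\subAlgoDft{\cdot}{\cdot}$ that is presumably established elsewhere in the metatheory. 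Once these bookkeeping details are in place, assembling the final derivation is immediate.
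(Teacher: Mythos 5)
Your proof is correct and follows essentially the same route as the paper: the paper merely factors your ``peel off \algorn{capt} first'' step into an auxiliary subtyping relation with \algorn{capt} inlined (proved equivalent to $\subAlgoDft{T}{U}$), and then performs exactly the induction you describe, with the same identity-IH-plus-normalisation collapse in the \adprn{fun}, \adprn{tfun} and \adprn{boxed} cases and the same discharge of the capture-set side conditions via $\cv{\fEmbed{t_f}} = \set{x_f}$. The weakening step you flag is indeed supplied by a dedicated lemma (weakening of box adaptation) in the appendix.
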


\begin{restatable}[Soundness of box adaptation]{lemma}{subadpsoundness}  \label{thm:adp-sub-soundness}
  If $\adpTypDft{x}{t_x}{T}$ then $\typDft{t_x}{T}$.
\end{restatable}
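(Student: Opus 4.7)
The plan is to unpack the \adprn{adapt} rule and reduce the lemma to proving a sharper statement about adaptation subtyping: if $\typAlgDft{x}{T_0}$ and $\adpDft{x}{T_0}{t_x}{T}$, then $\typDft{t_x}{T}$ in \cc{}. Via the already-established equivalence between \cc{} and \ccAlgo{}, we may freely move between $\typ$ and $\typAlg$ in intermediate steps. This sharper statement is proved by induction on the derivation of the adaptation subtyping judgment, with one case per adaptation rule.

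For the rules that return $x$ unchanged (\adprn{ba-refl}, \adprn{ba-top}) the argument is trivial: type $x$ via \algorn{var} and finish with the appropriate subcapturing/subtyping step. The \adprn{ba-tvar} case widens the bound and then invokes the inductive hypothesis. The \adprn{ba-box} and \adprn{ba-unbox} cases build a let-binding whose body is a box or unbox operation; we compose the inductive hypothesis with the \cc{} rules for \textsc{box}, \textsc{unbox} and \textsc{let}. The \adprn{ba-boxed} case unboxes $x$, re-adapts the payload by induction, and re-boxes the result, again using a let-binding; the only care is checking that the intermediate capture sets are well-formed and that the outer capture set predicted by the rule indeed subcaptures $\cv{\fEmbed{t}}$.

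The main obstacle is the eta-expansion cases \adprn{ba-fun} and \adprn{ba-tfun}. In \adprn{ba-fun} the constructed witness is
\[
  t_f \;=\; \tLambda{x}{U_2}{\tLet{x^\prime}{t_x}{\tLet{z}{x_f\,x^\prime}{t_z}}},
\]
passed through $\fEmbed{\cdot}$. I would first type-check the un-normalised $t_f$ by (i) applying the inductive hypothesis in $\G, x: U_2$ to obtain $\typDft{t_x}{U_1}$; (ii) typing $x_f\,x^\prime$ in $\G, x: U_2, x^\prime: U_1$ using the adapted argument and the appropriate substitution $T'_1$; (iii) applying the inductive hypothesis on the result premise to type $t_z$ at $T_2$; (iv) stitching them together with two applications of the \textsc{let} rule and then \textsc{abs}, producing a capturing function type whose capture set is the subject of the side condition $\subDft{\substIn{x_f}{C}{\cv{\fEmbed{t_f}}}}{C^\prime}$. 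The \adprn{ba-tfun} case is analogous with a type application instead of a term application.

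To pass from $t_f$ to $\fEmbed{t_f}$ we need an auxiliary lemma: \emph{term normalisation preserves typing}, i.e. $\typDft{t}{T}$ implies $\typDft{\fEmbed{t}}{T}$ (possibly up to subsumption). This decomposes into one small lemma per clause of the normaliser. The renaming and dereferencing clauses (\textsc{nl-rename}, \textsc{nl-deref}) follow from standard substitution and inversion; the beta clauses (\textsc{nl-beta}, \textsc{nl-tbeta}) follow from eta-equivalence at the type level, which holds because the function type is unchanged. The genuinely delicate clause is \textsc{nl-box}, where $\tLet{y}{\tUnbox{C}{x}}{\tBox{y}}$ reduces to $x$: here the captured-variable set strictly shrinks (the $C$ under the unbox disappears), so one must re-establish the ambient capture constraints by subsumption, using the fact that the enclosing context already accounts for the captures of $x$. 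Once this preservation lemma is in hand, the eta-expansion cases close cleanly: the constructed $t_f$ is well-typed at the expected capturing function type, normalisation preserves this, and the side condition on $C^\prime$ discharges the final subcapturing obligation.
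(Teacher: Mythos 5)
Your proposal follows essentially the same route as the paper: the paper reduces the lemma to an inner statement about adaptation subtyping proved by induction on the derivation, and discharges the eta-expansion and boxed cases via a ``normalisation preserves typing (up to subtyping)'' corollary built from per-clause lemmas for $\compactLet{\cdot}$ and $\compactLam{\cdot}$, exactly as you propose. The only refinement the paper makes that your sharper statement omits is quantifying the inner lemma over arbitrary well-formed environment extensions $\Delta$ (``for all $\Delta$, if $\typAlg{\G,\Delta}{x}{T}$ then $\typ{\G,\Delta}{t_x}{U}$''), which is needed so the induction hypothesis applies to the fresh let-bound subjects introduced in the \adprn{ba-fun}, \adprn{ba-tfun} and \adprn{ba-boxed} cases; you would discover this strengthening as soon as you carried out those cases.
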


Notably, in the completeness lemma (Lemma \ref{lemma:sub-algo-to-adp}),
we demonstrate that
box adaptation will return the input \emph{as it is}
if $T$ is a subtype of $U$
(which means a variable of $T$ can be used as $U$ without box inference involved).
This is important for the completeness of System \ccAdp{}.
To see why, consider the application $x\,y$, where
$\typPrecDft{x}{C\tForall{z}{U}{T}}$,
$\typPrecDft{y}{U^\prime}$,
and $\subAlgoDft{U^\prime}{U}$.
In \ccAdp{}, we lookup $x$,
and box-adapt the argument $y$ into $t_y$.
The resulting term is $\compactLet{\tLet{x^\prime}{x}{\tLet{y^\prime}{t_y}{x^\prime\,y^\prime}}}$.
We have to ensure that $t_y = y$ in this case.
We have to ensure that $t_y = y$
because otherwise the type of the transformed term will be
$\avoidOp{y^\prime}{\cv{U}}{[z := y^\prime]T}$,
which would be wider than the type $[z := y]T$ in System \ccAlgo{}.

Additionally, we introduce an auxilliary subtyping relation,
which is derived from the subtyping rules of \ccAlgo{}
by inlining the \algorn{capt} rule.
It eases the development of the proof on the relation between
subtyping in \ccAlgo{} and adaptation subtyping,
as the adaptation subtyping rules also
have the \textsc{capt} rule inlined.
Figure \ref{fig:capt-sub} shows the definition of this auxilliary relation.
In the metatheory,
we first establish the equivalence between the auxilliary subtyping and the one in \ccAlgo{},
then develop the relation between the auxiliary subtyping rules and box adaptation subtyping rules.


\subsection{Equivalence Between \ccAdp{} and \ccAdpt{}}

In the next part of metatheory,
we aim to demonstrate the equivalence between System \ccAdp{} and \ccAdpt{},
through following theorem.
\begin{theorem}
	$\forall \G, t, T.$
  we have
  (1) if $\typAlgoDft{t}{t^\prime}{T}$ is derivable for some $t^\prime$,
  then $\typAdptDft{t}{T}{C}$ is derivable, where $C = \cv{t^\prime}$;
  and (2) if $\typAdptDft{t}{T}{C}$ is derivable for some $C$,
  then there exists $t^\prime$ such that
  $\typAlgoDft{t}{t^\prime}{T}$ is valid in \ccAdp{}
  and $\cv{t^\prime} = C$.
\end{theorem}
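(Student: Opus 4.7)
The plan is to prove both directions by induction on the typing derivations, supported by a bridging lemma that relates the two adaptation-subtyping judgments. The bridging lemma would read: $\adpDft{x}{T}{t_x}{U}$ is derivable in \ccAdp{} if and only if there exist $\cCat$ and $C$ with $\adptcDft{T}{U}{\cCat}{C}$, $\fCat{t_x} = \cCat$, and $\cv{t_x} = C[x]$. I would prove this by induction on the adaptation-subtyping derivations, with one case per rule. The rules \ruleref{ba-refl}, \ruleref{ba-top}, and \ruleref{ba-tvar} are immediate. The interesting cases are \ruleref{ba-boxed}, \ruleref{ba-fun}, \ruleref{ba-tfun}, \ruleref{ba-box}, and \ruleref{ba-unbox}, because the definition of $\cv{\cdot}$ from Definition \ref{def:captured-variables} reacts differently depending on whether the subterm is a variable, a value, or a general term---precisely the distinction tracked by $\cCat$. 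For each of these cases I would unfold term normalisation $\fEmbed{\cdot}$ using the $\compactLet{\cdot}$ and $\compactLam{\cdot}$ equations, case-split on $\cCat$, and verify that the captured-variable computation matches the formula in the corresponding \ccAdpt{} rule.

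With the bridging lemma in hand, direction (1) proceeds by induction on the \ccAdp{} typing derivation. The rules \ruleref{bi-var}, \ruleref{bi-box}, and \ruleref{bi-unbox} translate directly. For \ruleref{bi-abs} and \ruleref{bi-tabs}, the induction hypothesis already produces a set $C = \cv{t^\prime}$, matching what \ruleref{t-bi-abs} and \ruleref{t-bi-tabs} require. For \ruleref{bi-app}, the bridging lemma applied to the box-adaptation premise yields $\cCat$ and $C_2$, and the rule's case split on whether the adapted argument $t_y$ is a variable corresponds bijectively to the case split on $\cCat = \cVar$ in \ruleref{t-bi-app}; an analogous argument handles \ruleref{bi-tapp} through the variable-typing-with-unboxing judgment $\typUbDft{\cdot}{\cdot}{\cdot}$, which needs its own small bridging lemma against $\typUbtDft{\cdot}{\cdot}{\cdot}$. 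The \ruleref{bi-let} case lines up the two branches in the definition of $\cv{\tLet{x}{s}{t}}$ with the two branches of \ruleref{t-bi-let}. Direction (2) is the mirror-image induction on the \ccAdpt{} derivation: the witness $t^\prime$ is constructed rule-by-rule following the same structural correspondence, and the equality $\cv{t^\prime} = C$ falls out of the same matching case analyses.

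The main obstacle will be the bridging lemma on \ruleref{ba-fun} and \ruleref{ba-tfun}, where eta-expansion interacts with term normalisation in a delicate way. There, $\compactLam{\cdot}$ collapses the adapted function back to the original variable $x_f$ exactly when both the argument and the result adaptations leave their inputs untouched (i.e.\ when both $\cCat_1$ and $\cCat_2$ are $\cVar$), and otherwise the result is a lambda whose captured-variable set must be assembled from those of the adapted argument and result, with the bound variables removed and the original function variable (represented by the hole $\thisx$) added back. Showing that the hole-filling operation $C[x]$ commutes with substitution and with the induction hypothesis in these cases---and in particular that $\cv{\fEmbed{t_f}}$ in \ruleref{ba-fun} really equals $C_f[x_f]$ after the $\compactLam{\cdot}$ and $\compactLet{\cdot}$ simplifications---is where I expect the bookkeeping to be most delicate, and it is what makes the whole equivalence nontrivial rather than a purely structural transcription.
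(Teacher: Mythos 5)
Your proposal matches the paper's proof essentially exactly: the paper also proves the two directions as separate soundness/completeness theorems by induction on the respective typing derivations, supported by precisely your bridging lemma (split there into two one-directional lemmas relating $\adpDft{x}{T}{t_x}{U}$ and $\adptcDft{T}{U}{\cCat}{C}$ via $\fCat{t_x} = \cCat$ and $\cv{t_x} = C[x]$), an auxiliary lemma relating $\typUbDft{x}{t_x}{T}$ to $\typUbtDft{x}{T}{C}$, and the same case analyses on term kinds for the normalisation-sensitive rules. The only detail worth making explicit in your write-up is the small lemma that box inference preserves the term kind of the let-bound definiens, which the paper uses to align the value/non-value branches in the \textsc{let} case.
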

This theorem can be proven using Theorem \ref{thm:typ-adpt-completeness} and \ref{thm:typ-adpt-soundness}.
It indicates that
box inference can be performed with identical expressiveness on type-level,
without storing and computing the adapted terms.
We can prove both directions of this equivalence through induction on the derivation tree.
For a comprehensive proof,
please refer to Section \ref{sec:cc-adpt-proof}.

\subsection{Termintaion of Box Inference}

The type checking process of system F$_{<:}$, which is the basis of System \cc{}, is known to be undecidable.
It has been demonstrated \cite{fsub1} that
certain subtype query will cause the subtype check to loop.

As System \cc{} is based on System F$_{<:}$,
it encounters the same undecidability problem that emerges from subtyping between type functions.
Despite this, the extensions introduced by \cc{} and \ccAdp{} does not degrade the system's decidability.
In other words, the extensions should not lead to non-termination in more situations than in System F.
To demonstrate this idea formally
(in Theorem \ref{lemma:sub-adp-termination} and \ref{lemma:typ-adp-termination}),
we are going to prove that
the typechecking of a program $t$ in \ccAlgo{} terminates
as long as typechecking the program $t^\prime$ does,
where $t^\prime$ is derived from $t$ by erasing all \cc{}-related constructs.

Now we define the function $\eraseCC{\cdot}$ to erase the capture sets and the boxes.
It traverses the type and drops all \cc{}-specific constructs.
\begin{align*}
    &\eraseCC{C\,S} \quad &= \quad  &\eraseCC{S} \\
    &\eraseCC{\tBox{T}} \quad &= \quad &\eraseCC{T} \\
    &\eraseCC{\tForall{z}{T}{U}} \quad &= \quad &\tForall{z}{\eraseCC{T}}{\eraseCC{U}} \\
    &\eraseCC{\tTForall{X}{S}{T}} \quad &= \quad &\tForall{X}{\eraseCC{S}}{\eraseCC{T}} \\
    &\eraseCC{X} \quad &= \quad &X \\
    &\eraseCC{\top} \quad &= \quad &\top \\
\end{align*}
When applied to a context $\G$,
$\eraseCC{\G}$ is straightfowardly defined as the new context
where $\eraseCC{\cdot}$ are applied to each of the bound types.

The following theorems demonstrate the termination conditions of typechecking in \ccAdp{}.
Here, \textsf{subtype} and \textsf{subtype}$_{\textrm{BI}}$ is the subtype checking procedure
of System F$_{<:}$ and \ccAdp{} respectively.
Similarly, \textsf{typecheck} and \textsf{typecheck}$_{\textrm{BI}}$ are the typechecking procedures.
They establish that
the box adaptation and typechecking procedure of any program $t$ in \ccAdp{}
will terminate,
provided that
the subtyping and typechecking procedure
of the erased program $t^\prime$
terminate in System F$_{<:}$.
The simple-formed assumption rules out the types that has nested boxes
(e.g. $\tBox{\tCap{C}{\tBox{C'}{S}}}$),
which do not make sense in practice.
Section \ref{sec:termination-proof} presents the detailed definitions and the proof.

\begin{restatable}[Conditional termination of box adaptation]{theorem}{subadpterm}  \label{lemma:sub-adp-termination}
  Given a well-formed and simple-formed context $\G$
  and simple-formed types $T, U$ that are well-formed in the environment,
  $\subtypeAdp{\G}{T}{U}$ terminates as long as
  $\subtype{\eraseCC{\G}}{\eraseCC{T}}{\eraseCC{U}}$ does.
\end{restatable}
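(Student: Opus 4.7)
The plan is to prove the theorem by a simulation argument: I would set up a correspondence between the recursion tree of box adaptation on $(\G, T, U)$ and the recursion tree of F$_{<:}$ subtyping on $(\eraseCC{\G}, \eraseCC{T}, \eraseCC{U})$. The goal is to show that every recursive call made by box adaptation either corresponds, after erasure, to a recursive call in the F$_{<:}$ tree, or contributes only bounded additional work that cannot trigger non-termination on its own.

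First I would dispose of the auxiliary machinery. Subcapturing $\subDft{C_1}{C_2}$ terminates on its own: a derivation decomposes into per-element checks via \ruleref{sc-set}, and each use of \ruleref{sc-var} follows a lookup chain in the finite environment $\G$, which is acyclic by well-formedness. So any subcapturing call made inside box adaptation contributes a bounded amount of work. Next, I would invoke the simple-form invariant: the hypothesis rules out types that nest a box inside another box. Therefore the rules \ruleref{ba-boxed}, \ruleref{ba-box}, and \ruleref{ba-unbox} can fire at most a bounded number of times along any path of the recursion before the problem descends into strictly smaller, box-free subcomponents. A short preservation lemma would also be needed showing that simple-formedness is closed under all environment extensions and substitutions performed by the adaptation rules, in particular \ruleref{ba-fun} which extends $\G$ with $x' : U_1$ and \ruleref{ba-tfun} which extends it with $X <: S_2$.

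The main induction then proceeds on the recursion tree of $\subtype{\eraseCC{\G}}{\eraseCC{T}}{\eraseCC{U}}$, which is finite by hypothesis. The structural rules \ruleref{ba-refl}, \ruleref{ba-top}, \ruleref{ba-tvar}, \ruleref{ba-fun}, and \ruleref{ba-tfun} erase one-for-one to the corresponding F$_{<:}$ rules: for each, the recursive premises erase to strictly smaller F$_{<:}$ subtyping queries that appear as children in the F$_{<:}$ recursion tree, so the induction hypothesis applies directly. The capture-calculus-specific rules \ruleref{ba-boxed}, \ruleref{ba-box}, and \ruleref{ba-unbox} are absorbed by the erasure (they yield the same F$_{<:}$ query on the enclosed shape types) and contribute only the bounded extra work justified above.

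The main obstacle will be that \ruleref{ba-box} and \ruleref{ba-unbox} are potentially applicable whenever one side carries a box, so the procedure must choose among several rule instances without looping. Handling this cleanly requires a combined well-founded measure over $(T, U)$ that strictly decreases on each firing of a box/unbox rule (using the absence of nested boxes) and that agrees with the F$_{<:}$ recursion measure on the purely structural rules. Constructing that measure and verifying that it descends in every case, particularly when \ruleref{ba-tvar} opens up a bound whose erasure must still correspond to a child in the F$_{<:}$ tree, is the technically delicate part of the argument.
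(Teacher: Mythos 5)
Your proposal is correct and follows essentially the same route as the paper: induction on the (finite) recursion of the erased F$_{<:}$ subtype check, a separate termination argument for subcapturing via acyclic lookup chains in a well-formed environment, and the simple-formedness hypothesis to ensure the box/unbox/boxed rules strip at most one box before the recursion realigns with the F$_{<:}$ tree. The "delicate measure" you anticipate at the end is resolved in the paper simply by a case analysis on the boxed status of $T$ and $U$, which makes the rule choice deterministic and lets each case close by the induction hypothesis directly.
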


\begin{restatable}[Conditional termination of typechecking in \ccAdp{}]{theorem}{typadpterm}  \label{lemma:typ-adp-termination}
  Given a well-formed and simple-formed context $\G$
  and the term $t$ that are well-formed in the environment and the type annotations are simple-formed,
  if $\typecheck{\eraseCC{\G}}{\eraseCC{t}}$ terminates,
  $\typecheckAdp{\G}{t}$ terminates too.
\end{restatable}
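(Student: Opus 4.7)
The plan is to proceed by induction on the structure of the input term $t$, showing that each typing rule in \ccAdp{} decomposes the problem into either structurally smaller subterms or auxiliary procedures that are already known to terminate. The main leverage is Theorem \ref{lemma:sub-adp-termination}, which supplies termination of box adaptation whenever $\eraseCC{\cdot}$-level subtyping terminates in System F$_{<:}$, together with the observation that $\eraseCC{\cdot}$ preserves the outer shape of terms (it only drops boxes/unboxes and capture sets). Consequently the recursive call tree of $\typecheckAdp{\G}{t}$ can be put in one-to-one correspondence with that of $\typecheck{\eraseCC{\G}}{\eraseCC{t}}$, plus a bounded number of auxiliary calls at each node.

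In more detail, the leaf cases (\adprn{bi-var}, \adprn{bi-box}, \adprn{bi-unbox}) perform only finite context lookups, variable-type widenings, and subcapture checks; subcapturing is a terminating fixed-depth computation over the finitely many variables in $\G$. For the binder cases (\adprn{bi-abs}, \adprn{bi-tabs}, \adprn{bi-let}) I apply the induction hypothesis to the body, whose erased form is exactly the body of $\eraseCC{t}$ and is therefore handled by the termination assumption on $\typecheck{\eraseCC{\G}}{\eraseCC{t}}$; in the let case I must additionally invoke the avoidance function $\avoidOp{x}{\cv{U}}{T}$, whose termination is immediate as a structural traversal of $T$ classifying covariant and contravariant occurrences. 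For the application rules (\adprn{bi-app}, \adprn{bi-tapp}) the procedure invokes variable typing with unboxing $\typUbDft{x}{t_x}{T}$, which in turn reduces to variable widening $\typUpDft{x}{T}$; because $\G$ is well-formed, chasing the type-variable chain terminates, and this chain mirrors the corresponding chain performed by the F$_{<:}$ typechecker on $\eraseCC{\G}$. The crucial subcall is the box adaptation $\adpTypDft{y}{t_y}{U}$ on the argument, whose termination follows directly from Theorem \ref{lemma:sub-adp-termination} applied to the argument type and $U$---both simple-formed by hypothesis, and whose erased versions are exactly the types on which the F$_{<:}$ subtype check would be invoked at the matching application node.

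The main obstacle is maintaining the side-conditions of Theorem \ref{lemma:sub-adp-termination} through the recursive descent: namely, that every environment and every type passed to a recursive $\typecheckAdp$ or adaptation call remains well-formed and simple-formed. I would handle this with a short sequence of preservation lemmas: extending $\G$ with a simple-formed binder preserves simple-formedness; substitutions appearing in \adprn{bi-app} and \adprn{bi-tapp} preserve simple-formedness of the resulting type (no new nested boxes are introduced, since simple-formed types forbid them); and the intermediate types produced by avoidance and by the $\typUbDft{\cdot}{\cdot}{\cdot}$ and $\typUpDft{\cdot}{\cdot}$ judgements remain simple-formed. Once these invariants are in place, the induction closes cleanly by pairing every recursive call made by $\typecheckAdp{\G}{t}$ with either (i) a strictly smaller recursive call on a subterm, whose erased counterpart is assumed to terminate in F$_{<:}$, or (ii) an auxiliary call (box adaptation, widening, avoidance, subcapturing) already shown to terminate.
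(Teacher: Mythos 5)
Your proposal is correct and takes essentially the same route as the paper: the paper likewise inducts over the recursion structure of the erased F$_{<:}$ typecheck (phrased as induction on its recursion depth rather than on the structure of $t$, which coincides since $\eraseCC{\cdot}$ preserves term shape), discharges the application case by Theorem \ref{lemma:sub-adp-termination}, and observes that avoidance terminates as a structural traversal of the type. Your extra care about propagating well-formedness and simple-formedness through the recursive calls is a detail the paper's proof leaves implicit.
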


\section{Implementing Capture Checking and Box Inference}

This section discusses the implementation of capture checking and box inference,
based on our experience with Scala 3,
which has practical support for capture checking
available as an experimental language feature.
Although the syntax-directed box inference calculi
directly give rise to a procedure for box inference,
the challenge lies in its efficient implemention and the integration into an existing language implementation.
To begin, we briefly introduce Scala's capture checking implementation,
and based on its framework, we proceed to discuss the implementation of box inference.

\subsection{Scala's Implementation of Capture Checking}

In Scala 3, capture checking is available as an experimental language feature
users may choose to turn on.
Since we expect capture checking to be an extension to the existing compiler
that can be easily enabled or disabled,
it is infeasible to deeply integrate it into the existing logic of type checking.
Therefore, Scala 3 implements capture checking as a standalone phase after typer.

It turns out that it is possible to do vanilla type checking and capture checking separately and in sequence.
During typing, capture set annotations in types are simply ignored,
with types being derived and checked normally.
Then in the capture checking phase,
we compute captured references of closures,
inference missing capture sets
and check the sets against the annotations.
This phase is where box inference is involved.
Also, capture set inference is a crucial part in the implementation of capture checking,
but is left as future work.

The capture checking phase computes the captured references of a closure
(which corresponds to the $\cv{\cdot}$ computation in \textsc{fun} and \textsc{tfun} rules).
Instead of implementing a $\cv{\cdot}$ function that traverses the body of a closure
each time we assign capture sets to functions,
we track the captured references of a term while checking it.
To do this, a capturing environment is created for each closure,
and the environments of nested closures are chained.
References are pushed in these environments as we check the body of these closures.
When checking the capture set in the type of closures,
we retrieve these captured references from the corresponding environments.

Let us inspect how capture checking behaves in the following example:
\begin{minted}{scala}
  def foo(io: {*} IO) =
    def bar(): Unit =
      io.use()
      fs.use()
      ()
    bar
\end{minted}
The function is capture-checked as \mintinline{scala}{{fs} (io: {*} IO) -> {io, fs} () -> Unit}.
When checking the body of \texttt{bar},
we have two capturing environments created for \texttt{bar} and \texttt{foo} respectively,
and the environment for \texttt{bar} has a pointer to \texttt{foo}, since they are nested.
The references \texttt{io} and \texttt{fs} are pushed into \texttt{bar}'s environment,
after which they are propagated to the nesting environment for \texttt{foo}.
However, \texttt{io} is not included in \texttt{foo}'s environment
because it is an argument of \texttt{foo}.
This is in line with the definition of $\cv{\cdot}$.

Explicit capture annotations imposes constraint on the capture sets,
which is enforced by comparing the derived capture sets and the explicit annotations.
This is where box inference is involved.
We want to check whether the derived type of a term conforms to the expectation
with box-related constructs inferred and completed.
The details of box inference implementation is discussed in the following section.
Apart from box inference, we have to infer the capture sets omitted by the users.
To this end, we create \emph{capture set variables} for the captrue sets to be inferred,
and record the subcapturing relationships as we check the program.
The algorithm for capture set inference is important in the implementation of capture checking,
and gathers considerable complexity to fit in with the existing architecture of the compiler.
The formalization of capture set inference is worth investigation too,
and is left as possible future work.

\subsection{Implementing Box Inference}

\subsubsection{Box Adaptation as a Separate Check}

The central part of box inference implementation is box adaptation.
In the box inference calculi,
box adaptation is proposed as a replacement for subtyping checks,
which simultaneously adapts the mismatched boxes and performs regular subtyping reasoning.
However,
completely rewriting the subtype comparison logic and
reframing it with box adaptation is arguably too heavy a change,
and makes the implementation less plugable than it should be.
It is therefore favorable to implementation box adaptation
as an independent step separated from subtype comparison.

Given the derived type \texttt{tp} and the expected type \texttt{pt},
before we invoke the subtype comparison \texttt{isSubtype(tp, pt)}
to check that the derived type conforms to the expectation,
we first call box adaptation \texttt{adapt(tp, pt)} to transform the type
into \texttt{tp1} with box adaptation.
During this, we traverse the type recursively
and heal the mismatch of boxes as much as possible.
Finally, we invoke the subtype comparison to check whether the adapted type
\texttt{tp1} could conform to the expectation.
For example, given the actual type \mintinline{scala}{{} (op: {io} Unit -> Unit) -> Unit}
and the expected type \mintinline{scala}{{} (op: box {io} Unit -> Unit) -> Unit},
box adaptation transforms the actual type into \mintinline{scala}{{io} (op: box {io} Unit -> Unit) -> Unit}
(which simulates an eta-expansion and an unbox on the term level).
Now we invoke the subtype comparison on the adapted type and the expectation,
which then reports a type-mismatch error.

\subsubsection{Tracking Additional Captured References}

Now we propose an approach to implement box adaptation
by tracking \emph{additional} captured references during the adaptation,
which fits in well with the architecture of capture checking.
The basis of the proposed approach is the following observation.

\emph{Observation.}
Box adaptation only makes the adapted term capture \emph{more} references
(i.e. $\thisx \in C$ if $\adptcDft{T}{U}{\cCat}{C}$)
unless the derivation tree of box adaptation looks like the following.
\begin{equation*}
  \inferrule*[right={box}]
  {\inferrule*[right={box}]
    {\inferrule*[right={box},vdots=1.5em]
      {\inferrule*{\cdots}{\adptcDft{\tCap{C_1}{S_1}}{\tCap{C_2}{S_2}}{\cVal/\cVar}{C}}}
      {\adptcDft{\tCap{C_1}{S_1}}{\Hsquare\,\tCap{C_2}{S_2}}{\cTrm}{\set{}}}}
    {\adptcDft{\tCap{C_1}{S_1}}{\Hsquare\,\cdots\,\Hsquare\tCap{C_2}{S_2}}{\cTrm}{\set{}}}}
  {\adptcDft{\tCap{C_1}{S_1}}{\Hsquare\,\Hsquare\,\cdots\,\Hsquare\tCap{C_2}{S_2}}{\cTrm}{\set{}}}
\end{equation*}
The above derivation first adapts the input to a variable or a value,
and then ends by inserting a chain of boxes.
In such cases, the resulted captured set is always empty.

This observation shows that
to compute the captured variable of the adapted term,
it suffices to record the additional captured references introduced by box adaptation,
and at the same time detect the special cases where boxes are inserted on the top of values,
in which case we should treat the term as if it captures nothing.
In the next we illustrate how we integrate this approach into the existing framework of capture checking.

\begin{listing}[htbp]
	\centering
  \begin{minted}[linenos]{scala}
def recheck(tree: Tree, pt: Type)(using Context): Type =
  val tpe = recheck1(tree, pt)
  checkConforms(tpe, pt)
  tpe
  \end{minted}
  \caption{Pseudo-code of Scala's capture checking implementation.}
  \label{lst:scala-cc-impl}
\end{listing}

Listing \ref{lst:scala-cc-impl} shows the pseudo-code for Scala's capture checker.%
\footnote{\url{https://github.com/lampepfl/dotty/blob/e751f51d8ba8fa33c0a35c26b66ce59466f2bc16/compiler/src/dotty/tools/dotc/cc/CheckCaptures.scala}}
\texttt{recheck1} implements the actual checking logic,
and computes the captured variable of \texttt{tree} with the capturing environments.
The \texttt{checkConforms} function checks whether the actual type \texttt{tpe} conforms to the expectation,
during which box adaptation is called.

To implement box inference, we firstly detect the special case in \texttt{recheck1}
when the expected type is boxed, and the \texttt{tree} is a value.
In that special case we stop the references in \texttt{tree} from being pushed into the capturing environments.
Secondly, during box adaptation, we record the additional capture references in the environments.
For instance, if we insert an unbox $\tUnbox{C}{x}$, the set of references $C$ gets pushed into the environment.




\section{Related Work}

\subsection{Capture Calculus}
The systems presented in this paper are based on the capture calculus \cite{Odersky2022ScopedCF},
which was first introduced in a report on implementing safer exceptions in Scala \cite{Odersky2021}.
The calculus proposes to track the capability for throwing an exception
thereby ensuring that these capabilties will not escape their intended scope.
The report contains an informal discussion of algorithmic typing and box inference,
which outlines the general idea of the box inference implementation in Scala 3.
Subsequently, \cite{Odersky2022ScopedCF} elaborates the capture calculus
and presents the full metatheory.
\cite{BoruchGruszecki2021TrackingCV} presents a similar calculus
that tracks captures in types.
By contrast, it does not have boxes, and type variables could potentially be impure and require being tracked.

\subsection{Algorithmic Typing}
Owing to their close correlation to compiler construction,
the design and decidability of (semi-)algorithmic type systems
have been the subject of extensive research.
The syntax-directed type systems have been proposed and studied for System F
and its variant, F$_{<:},$ which incorporates subtyping \cite{tapl,fsub1,fsub2,hindley1969,milner1978}.
System F$_{<:}$ presents challenges in algorithmic typing
due to the abiguity caused by the transitivity and the subsumption rules.
To circumvent this,
these rules are inlined into other rules to create a syntax-directed ruleset.
As \cc{} is based on System F$_{<:}$,
our system \ccAlgo{} uses the same strategy to achieve syntax-directness,
but it also modifies the rules to facilitate capturing types and employs algorithmic avoidance in \algorn{let}.
However,
it should be noted that the algorithmic typing of System F$_{<:}$ is shown to be undecidable \cite{tapl,fsub1,fsub2}.
Furthermore,
\cite{gadt1,gadt2} investigate the algorithmic type system with the presence of inheritances and variances,
which essentially formalises typechecking algorithms in object-oriented scenarios.
There are other research endeavors that attempt to achieve the algorithmics of typing
in a variety of type systems \cite{dot1,dot2,toopl}.


\section{Conclusion}

In this paper, we develop a family of semi-algorithmic box inference system,
which inserts box operations to heal the mismatch of boxes in the input program.
We start with the semi-algorithmic variant of capture calculus,
named System \ccAlgo{},
which makes the typing and subtyping rules in capture calculus syntax-directed.
Based on it, we develop the box inference system \ccAdp{}
that transforms the terms by inserting the missing box-related constructs
so that the result term is well-typed in capture calculus.
Taking one step further, we propose \ccAdpt{} and show that
it is possible to perform box inference on type level,
operating only on types and not computing the transformed terms,
but with the equivalent expressive power as term-level system \ccAdp{}.
In metatheory we establish the relationship between these systems,
showing that \ccAlgo{} is equivalent to \cc{},
that \ccAdp{} has strictly more expressive power than \ccAlgo{},
and that \ccAdp{} and \ccAdpt{} are equivalent.

\begin{acks}
  We thank
  Aleksander Boruch-Gruszecki, Jonathan Immanuel Brachthäuser, Edward Lee and Ondřej Lhoták
  for their insightful suggestions and constructive feedbacks during the development of this work.
\end{acks}

\bibliography{references.bib}

\appendix



\section{Proof}

\subsection{Well-Formed Environment}



Before carrying out the proof, we first define the notion of well-formedenss.
A capture set is well-formed iff it only includes the program variables in the environment and the root capability $*$,
i.e. ($\wfTyp{\G}{C}$ iff $C \subseteq \dom{\G} \cup \set{*}$).
Type well-formedness is standardly defined,
but additionally requiring that the capture sets are well-formed.
A well-formed environment has all bindings well-formed.

In the following proof, we implicitly assume that all the environments are well-formed,
and it is thus imperative that the transformations on environments should preserve well-formedness.

\subsection{Proof of System \ccAlgo{}}
\label{sec:cc-algo-proof}

\begin{lemma}[Transitivity of subcapturing]
  \label{lemma:transitivity-of-subcapturing}
	For any environment $\G$, if $\subDft{C_1}{C_2}$ and $\subDft{C_2}{C_3}$ then $\subDft{C_1}{C_3}$.
\end{lemma}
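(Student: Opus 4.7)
The plan is to prove transitivity by induction on the derivation of the first subcapturing judgment, $\subDft{C_1}{C_2}$, after first establishing a small inversion lemma. The key observation is that the rules \algorn{sc-var} and \algorn{sc-elem} only apply when the subcapture on the left is a singleton, while \algorn{sc-set} decomposes any subcapture judgment into judgments on singletons. This suggests that everything essentially reduces to reasoning about singletons.

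Before the main induction, I would prove an auxiliary inversion lemma: if $\subDft{C}{C'}$ and $x \in C$, then $\subDft{\set{x}}{C'}$. The proof goes by induction on the derivation of $\subDft{C}{C'}$; the \algorn{sc-set} case is immediate since we can read off the premise $\subDft{\set{x}}{C'}$ directly, and for singleton derivations ($C = \set{x}$) the result is the hypothesis itself.

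Now for the main statement, I would proceed by induction on the derivation of $\subDft{C_1}{C_2}$:
\begin{itemize}
  \item \textbf{Case \algorn{sc-set}:} Here $C_1 = \set{x_1, \ldots, x_n}$ with premises $\subDft{\set{x_i}}{C_2}$ for each $i$. Applying the induction hypothesis with $\subDft{C_2}{C_3}$ yields $\subDft{\set{x_i}}{C_3}$ for each $i$, and re-applying \algorn{sc-set} gives $\subDft{C_1}{C_3}$.
  \item \textbf{Case \algorn{sc-elem}:} Here $C_1 = \set{x}$ with $x \in C_2$. By the auxiliary inversion lemma applied to $\subDft{C_2}{C_3}$, we obtain $\subDft{\set{x}}{C_3}$ directly.
  \item \textbf{Case \algorn{sc-var}:} Here $C_1 = \set{x}$ with $\G(x) = \tCap{C}{S}$ and a strictly smaller premise $\subDft{C}{C_2}$. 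The induction hypothesis gives $\subDft{C}{C_3}$, and re-applying \algorn{sc-var} concludes $\subDft{\set{x}}{C_3}$.
\end{itemize}

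I expect the \algorn{sc-elem} case to be the main obstacle, since that case cannot appeal to the induction hypothesis on the first derivation (the premise $x \in C_2$ carries no subderivation), and so requires inspecting the structure of the second derivation $\subDft{C_2}{C_3}$ instead. Packaging this inspection as the separate inversion lemma above keeps the argument clean and avoids a mutual induction. Everything else is a straightforward rule-by-rule dispatch.
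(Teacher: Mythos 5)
Your proof is correct. Note that the paper does not actually prove this lemma itself --- it defers to Lemma A.12 of the cited capture-calculus paper --- so there is no in-paper argument to compare against; your induction on the first derivation, with the singleton-inversion lemma (if $\subDft{C}{C'}$ and $x \in C$ then $\subDft{\set{x}}{C'}$) handling the \ruleref{sc-elem} case where the first derivation offers no usable subderivation, is the standard and expected argument, and every case checks out against the three rules \ruleref{sc-var}, \ruleref{sc-set}, and \ruleref{sc-elem}.
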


This lemma has been proven in \cite{Odersky2022ScopedCF} (Lemma A.12).

\begin{lemma}[Term binding narrowing for subcapturing]
  \label{lemma:term-narrowing-subcapt}
  $\forall \G$, if (1) $\subAlgoDft{T_1}{T_2}$ and (2) $\subAlgo{\G_1, x: T_2, \G_2}{C_1}{C_2}$ then $\subAlgo{\G_1, x: T_1, \G_2}{C_1}{C_2}$.
\end{lemma}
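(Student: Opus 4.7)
The plan is to proceed by induction on the derivation of hypothesis (2), namely $\subAlgo{\G_1, x: T_2, \G_2}{C_1}{C_2}$, doing a case analysis on the last subcapturing rule applied. There are three rules: \ruleref{sc-elem}, \ruleref{sc-set}, and \ruleref{sc-var}. The first case is immediate since \ruleref{sc-elem} does not inspect the environment at all. The second case is routine: \ruleref{sc-set} decomposes $C_1 = \set{x_1, \dots, x_n}$ into singleton premises, I apply the induction hypothesis to each premise to move from $\G_1, x: T_2, \G_2$ to $\G_1, x: T_1, \G_2$, and then reassemble using \ruleref{sc-set}.

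The interesting case is \ruleref{sc-var}, where $C_1 = \set{y}$ with $(\G_1, x: T_2, \G_2)(y) = \tCap{C_y}{S_y}$ and a subderivation $\subAlgo{\G_1, x: T_2, \G_2}{C_y}{C_2}$. If $y \neq x$, then the binding lookup is unaffected by swapping $T_2$ for $T_1$; I apply the induction hypothesis to the inner premise and reapply \ruleref{sc-var}. The real work is when $y = x$, in which case $T_2 = \tCap{C_y}{S_y}$ and the binding in the narrowed environment is $x : T_1$, with $T_1 = \tCap{C'_y}{S'_y}$ (using the convention $S \equiv \tCap{\set{}}{S}$ to handle the shape-type case uniformly). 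I need to conclude $\subAlgo{\G_1, x: T_1, \G_2}{\set{x}}{C_2}$.

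To close this subcase, I invert hypothesis (1), $\subAlgoDft{T_1}{T_2}$, which is syntax-directed in \ccAlgo{} and whose only applicable rule for capturing types is \ruleref{alg-capt}; this yields $\subAlgo{\G_1}{C'_y}{C_y}$. Weakening that derivation by the extra bindings $x : T_1, \G_2$ (a routine structural property of subcapturing) gives $\subAlgo{\G_1, x: T_1, \G_2}{C'_y}{C_y}$. Independently, the induction hypothesis applied to the inner premise of the original \ruleref{sc-var} yields $\subAlgo{\G_1, x: T_1, \G_2}{C_y}{C_2}$. Composing these two via Lemma \ref{lemma:transitivity-of-subcapturing} (transitivity of subcapturing) produces $\subAlgo{\G_1, x: T_1, \G_2}{C'_y}{C_2}$, and a final application of \ruleref{sc-var} using the new binding $x : \tCap{C'_y}{S'_y}$ closes the goal.

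The main obstacle is precisely this $y = x$ subcase of \ruleref{sc-var}: all other cases either ignore the environment or propagate trivially through the induction hypothesis, whereas this case is the only place where hypothesis (1) actually gets consumed, and it depends crucially on two auxiliary facts about subcapturing, namely weakening and transitivity. Transitivity is already in hand as Lemma \ref{lemma:transitivity-of-subcapturing}, and weakening is a straightforward structural induction that I would either cite from \cite{Odersky2022ScopedCF} or discharge as a preliminary sub-lemma before beginning the main induction.
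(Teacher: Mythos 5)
Your proposal is correct and follows essentially the same route as the paper's proof: induction on the subcapturing derivation, with the only substantive case being \ruleref{sc-var} when the looked-up variable is the narrowed one, resolved by extracting $\subDft{\cs{T_1}}{\cs{T_2}}$ from the subtype derivation and composing with the remaining premise via Lemma \ref{lemma:transitivity-of-subcapturing}. You are in fact slightly more careful than the paper in spelling out the weakening step needed to transport the capture-set inclusion into the extended environment, which the paper's proof leaves implicit.
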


\begin{proof}
  Proceeds by induction on the subcapturing derivation.
  The \ruleref{sc-elem} case can be concluded directly from the precondition and the same rule.
  The \ruleref{sc-set} can be proven by applying the IH repeatedly and invoking the \ruleref{sc-set} rule afterwards.
  For the \textsc{sc-var} case,
  we have $C_1 = \set{y}$ for some $y$,
  $y : U \in \G$
  and $\sub{\G}{\cs{U}}{C_2}$.
  We proceed with a case analysis on whether $x = y$.
  Firstly, if $x \neq y$, the binding is not affected by narrowing,
  which allows us to conclude directly.
  Otherwise, we have $x = y$,
  implying that $\subDft{\cs{T_2}}{C_2}$.
  By inspecting the subtype derivation $\subAlgoDft{T_1}{T_2}$
  we can show that $\subDft{\cs{T_1}}{\cs{T_2}}$.
  Now we invoke Lemma \ref{lemma:transitivity-of-subcapturing} to show that
  $\subDft{\cs{T_1}}{C_2}$,
  and thus conclude the case by applying the \ruleref{sc-var} rule.
\end{proof}

\begin{lemma}[Type binding narrowing for subcapturing]
  \label{lemma:type-narrowing-subcapt}
  $\forall \G$, if (1) $\subAlgoDft{S_1}{S_2}$ and (2) $\sub{\G_1, X <: S_2, \G_2}{C_1}{C_2}$ then $\sub{\G_1, X <: S_1, \G_2}{C_1}{C_2}$.
\end{lemma}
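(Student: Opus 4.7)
The plan is to proceed by straightforward induction on the derivation of the subcapturing judgment (2), i.e.\ the derivation of $\sub{\G_1, X <: S_2, \G_2}{C_1}{C_2}$. The key observation, which makes this lemma substantially simpler than its term-binding counterpart (Lemma \ref{lemma:term-narrowing-subcapt}), is that the subcapturing rules \ruleref{sc-elem}, \ruleref{sc-set}, and \ruleref{sc-var} only ever interact with the environment through the lookup of \emph{program-variable} bindings $y : \tCap{C}{S} \in \G$, and never through the bounds of type variables. Replacing the type-variable binding $X <: S_2$ with $X <: S_1$ therefore does not alter any capture set attached to any program variable in $\G_1$ or $\G_2$, so the derivation essentially transports unchanged.

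Concretely, in the \ruleref{sc-elem} case the conclusion $x \in C_2$ does not touch the environment at all, so it re-applies directly in the narrowed environment. In the \ruleref{sc-set} case, I would apply the induction hypothesis to each of the premises $\subDft{\{x_i\}}{C_2}$ and reassemble with \ruleref{sc-set}. In the \ruleref{sc-var} case, the derivation uses some $y : \tCap{C}{S} \in \G_1, X<:S_2, \G_2$ together with a subderivation of $\sub{\G}{C}{C_2}$; since $y$ is a program variable, its binding lives in $\G_1$ or $\G_2$ and is untouched by narrowing the bound of $X$, so the very same lookup $y : \tCap{C}{S}$ holds in $\G_1, X<:S_1, \G_2$. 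Applying the induction hypothesis to the subderivation and then \ruleref{sc-var} closes the case.

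There is no significant obstacle here: unlike the term-narrowing case, where one has to exploit $\subAlgoDft{T_1}{T_2}$ to show $\subDft{\cs{T_1}}{\cs{T_2}}$ and then appeal to transitivity of subcapturing (Lemma \ref{lemma:transitivity-of-subcapturing}), type-variable narrowing never propagates into capture sets, so the hypothesis $\subAlgoDft{S_1}{S_2}$ is not actually used in any case of the induction. The proof goes through purely by structural re-application of the subcapturing rules.
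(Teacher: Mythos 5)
Your proof is correct and matches the paper's argument exactly: the paper also proceeds by induction on the subcapturing derivation and relies on the observation that none of the subcapturing rules depend on type-variable bindings. Your case-by-case elaboration and the remark that the hypothesis $\subAlgoDft{S_1}{S_2}$ is never actually used are both accurate refinements of the paper's one-line proof.
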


\begin{proof}
  Proven by induction on the subcapturing derivation.
  No rules depend on the type bindings.
\end{proof}

\begin{lemma}[Permutation]  \label{lemma:permutation}
  Given $\G$ and $\Delta$ where $\Delta$ is obtained by permuting $\G$ while preserving well-formedness:
  (1) $\subAlgoDft U T$ implies $\subAlgo{\Delta} U T$,
  (2) $\typAlgDft U T$ implies $\typAlg{\Delta} U T$.
\end{lemma}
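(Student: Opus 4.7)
The plan is to prove (1) and (2) simultaneously by induction on the given derivation, since typing and subtyping in \ccAlgo{} are mutually dependent (e.g.\ \ruleref{alg-app}, \ruleref{alg-tapp}, and \ruleref{alg-unbox} invoke subtyping in their premises). Before the main induction I would dispatch two auxiliary permutation properties as side-lemmas: one for subcapturing $\subDft{C_1}{C_2}$, and one for the variable-widening judgements $\typUpDft{X}{S}$ and $\typUpDft{x}{T}$. Both auxiliary properties follow by straightforward induction on their own derivations, since each of them only consults individual bindings of $\G$ and never the relative order of bindings.

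Case analysis then proceeds rule by rule. The axiom-form and lookup rules (\ruleref{alg-refl}, \ruleref{alg-top}, \ruleref{alg-var}, \ruleref{alg-tvar}) only read a specific binding from the environment, and a permutation preserves which bindings are present. The rules that merely recurse into subderivations without extending the environment (\ruleref{alg-capt}, \ruleref{alg-boxed}, \ruleref{alg-box}, \ruleref{alg-unbox}, \ruleref{alg-app}, \ruleref{alg-tapp}) are discharged by applying the IH to each premise and re-assembling the conclusion, using the subcapturing side-lemma where a subcapturing premise appears.

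The substantive cases are those that extend the environment: \ruleref{alg-fun} and \ruleref{alg-tfun} in subtyping, and \ruleref{alg-abs}, \ruleref{alg-tabs}, \ruleref{alg-let} in typing. In each case the premise adds a fresh binding $\mathit{bnd}$ (e.g.\ $x : U_2$ or $X <: S_2$) to the environment. I would extend $\Delta$ with the same binding $\mathit{bnd}$: because $\mathit{bnd}$ is fresh, $\Delta, \mathit{bnd}$ is again well-formed and remains a permutation of $\G, \mathit{bnd}$, so the IH applies directly. The outputs that appear in the conclusions — $\cv{t}$ (purely syntactic on the term), the substitutions $[z := y]U$ and $[X := S']U$, and the \textsf{avoid} computation $\avoidOp{x}{\cv{T}}{U}$ (which traverses $U$ structurally and depends on $\G$ only through the capture-set approximation of $T$, not on the order of bindings) — are unchanged by permutation, so the resulting judgement has exactly the required form.

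The main obstacle I anticipate is administrative rather than conceptual: ensuring that \emph{every} auxiliary judgement the main induction invokes is known to be permutation-stable before it is first used. In particular, the \ruleref{alg-app}/\ruleref{alg-tapp}/\ruleref{alg-unbox} cases rely on the variable-widening side-lemma, and the \ruleref{alg-let} case relies on stability of \textsf{avoid}; I would state these two small facts explicitly and prove them first, so that the main induction reduces to mechanical re-application of the corresponding rule in $\Delta$. No rule reorders or inspects the positional structure of $\G$, which is what ultimately makes the lemma go through.
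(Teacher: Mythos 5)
Your proposal is correct and follows essentially the same route as the paper: a straightforward induction on the derivation in which every rule is re-applied after invoking the IH, with the environment-lookup cases discharged by observing that permutation does not affect which bindings are present. The paper's own proof is just a terser version of this; your additional care about the mutual dependence on subcapturing, variable widening, and \textsf{avoid}, and about extending $\Delta$ with the same fresh binding in the binder cases, fills in details the paper leaves implicit but does not change the argument.
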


\begin{proof}
  By straightforward induction on the derivations,
  wherein all cases but \ruleref{var} and \ruleref{tvar} can be concluded by the IH and the same rule.
  The \ruleref{var} and \ruleref{tvar} cases,
  on the other hand,
  follow from the fact that the ordering of bindings does not affect the result of context lookup.
\end{proof}

\begin{lemma}[Term binding narrowing for algorithmic subtyping]
  \label{lemma:term-narrowing-sub-algo}
  $\forall \G_1, \G_2$, if (1) $\subAlgo{\G_1}{T_1}{T_2}$ and (2) $\subAlgo{\G_1, x: T_2, \G_2}{T}{U}$ then $\subAlgo{\G_1, x: T_1, \G_2}{T}{U}$.
\end{lemma}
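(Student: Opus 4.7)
The plan is to prove the lemma by straightforward structural induction on the derivation of hypothesis (2), namely $\subAlgo{\G_1, x: T_2, \G_2}{T}{U}$. The payoff of the semi-algorithmic presentation is precisely that the rules are syntax-directed, so each case is pinned down by the shape of the types being compared, and the inductive step has no choice to make.

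For the base cases \ruleref{alg-refl}, \ruleref{alg-top}, and \ruleref{alg-tvar}, the derivation depends only on syntactic equality, the top type, or a \emph{type} binding $X <: S$ in the environment; none of these touch the term binding for $x$, so replacing $x : T_2$ by $x : T_1$ leaves the premises intact and the same rule re-applies. For the structural cases \ruleref{alg-fun}, \ruleref{alg-tfun}, and \ruleref{alg-boxed}, I would apply the induction hypothesis to each recursive premise. In the \ruleref{alg-fun} and \ruleref{alg-tfun} cases the inner subtyping judgement is made in an extended context; by the usual convention of Barendregt-fresh bound names I can assume the newly bound variable is distinct from $x$, absorb the extension into a larger $\G_2' := \G_2, x' : U_2$ (respectively $\G_2, X <: S_2$), and apply the IH at this enlarged right suffix.

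The only genuinely substantive case is \ruleref{alg-capt}, where the premise includes a subcapturing judgement $\subAlgo{\G_1, x : T_2, \G_2}{C_1}{C_2}$. This is exactly the shape handled by the previously established Lemma~\ref{lemma:term-narrowing-subcapt} (term binding narrowing for subcapturing), so I invoke it on this subcapturing premise, use the IH on the shape-type subtyping premise, and reassemble via \ruleref{alg-capt}. The \ruleref{alg-capt} case is the crux because it is the only rule whose validity in the narrowed environment actually depends on how $x$'s type changes; the subtyping cases only care about the spine of types, which is why the subcapturing narrowing lemma has to be proved first.

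The main obstacle I anticipate is bookkeeping rather than a conceptual difficulty: keeping the binder-freshness conditions clean in the \ruleref{alg-fun} and \ruleref{alg-tfun} cases (so that the IH applies at a well-formed extended context $\G_1, x : T_1, \G_2'$), and verifying that well-formedness of the narrowed context is preserved, which follows from $\subAlgo{\G_1}{T_1}{T_2}$ together with the implicit assumption that $T_1$ and $T_2$ are both well-formed in $\G_1$. No new ideas beyond those already used in Lemmas~\ref{lemma:term-narrowing-subcapt} and \ref{lemma:permutation} are required.
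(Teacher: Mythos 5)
Your proposal matches the paper's proof exactly: induction on the derivation of $\subAlgo{\G_1, x: T_2, \G_2}{T}{U}$, with every case except \ruleref{alg-capt} closed by the IH and re-application of the same rule, and the \ruleref{alg-capt} case handled by the IH together with Lemma~\ref{lemma:term-narrowing-subcapt}. The additional bookkeeping you describe for the binder cases is implicit in the paper's one-line argument but does not change the structure.
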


\begin{proof}
  By induction on the subtyping derivation $\subAlgo{\G_1, x: T_2, \G_2}{T}{U}$.
  All the cases but the \ruleref{alg-capt} case can be proven immediately with IH and the same rule.
  The \ruleref{alg-capt} case can be proven by applying the IH and Lemma \ref{lemma:term-narrowing-subcapt}.
\end{proof}

%


%

%

\begin{lemma}[Weakening for subcapturing] \label{lemma:weakening-subcapt}
  Given $\G$ and $\Delta$ such that $\G, \Delta$ is well-formed,
  if $\subDft{C_1}{C_2}$, then $\sub{\G, \Delta}{C_1}{C_2}$ given that $\G, \Delta$ is still wellformed.
\end{lemma}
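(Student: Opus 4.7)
The plan is to proceed by straightforward induction on the derivation of $\subDft{C_1}{C_2}$, with a case analysis on the final rule applied. Since weakening is a purely structural property and none of the three subcapturing rules are environment-altering in their premises, each case should go through almost by inspection, and the result will follow by reapplying the same rule in the extended environment $\G, \Delta$.

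For the \ruleref{sc-elem} case, the conclusion is immediate: the premise $x \in C$ does not refer to $\G$ at all, so the rule applies verbatim in $\G, \Delta$. For \ruleref{sc-set}, where $C_1 = \set{x_1, \ldots, x_n}$, I would apply the induction hypothesis separately to each of the $n$ smaller derivations $\subDft{\set{x_i}}{C_2}$ to obtain $\sub{\G, \Delta}{\set{x_i}}{C_2}$, and then reassemble the conclusion with \ruleref{sc-set}.

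The only case that requires more than mechanical reapplication is \ruleref{sc-var}, where $C_1 = \set{x}$ with $\G(x) = \tCap{C}{S}$ and $\subDft{C}{C_2}$. Here I need the observation that extending a well-formed context preserves existing bindings, so $(\G, \Delta)(x) = \tCap{C}{S}$ continues to hold. Given that, the induction hypothesis on the subderivation $\subDft{C}{C_2}$ yields $\sub{\G, \Delta}{C}{C_2}$, and one application of \ruleref{sc-var} closes the case.

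The main obstacle, such as it is, is simply verifying the lookup preservation in the \ruleref{sc-var} case, which is a standard consequence of well-formedness of $\G, \Delta$ (no shadowing of $\G$'s bindings is introduced by a well-formed extension). No nontrivial mathematical content appears, and the lemma can be stated and proved without relying on any of the prior lemmas in this section; in particular, neither transitivity of subcapturing nor narrowing is needed here.
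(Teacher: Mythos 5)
Your proof matches the paper's: induction on the subcapturing derivation, with \textsc{sc-elem} and \textsc{sc-set} handled by the IH and reapplication of the same rule, and \textsc{sc-var} handled by the IH plus the observation that well-formedness of $\G, \Delta$ preserves the lookup $\G(x)$. The argument is correct and takes essentially the same approach as the paper.
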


\begin{proof}
  We can prove by induction on the subcapturing derivation.
  Both \textsc{sc-elem} and \textsc{sc-set} can be proven by IH and the same rule.
  For the \textsc{sc-var}, we can conclude by the wellformedness, IH and the same rule.
\end{proof}

\begin{lemma}[Weakening for algorithmic subtyping] \label{lemma:weakening-sub-algo}
  Given $\G$ and $\Delta$ such that $\G, \Delta$ is well-formed,
  if $\subAlgoDft UT$ then $\subAlgo{\G, \Delta}UT$ given that $\G, \Delta$ is still wellformed.
\end{lemma}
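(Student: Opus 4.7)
The plan is to proceed by structural induction on the derivation of $\subAlgoDft{U}{T}$, with a case split on the final rule. In every case the goal is to reconstruct the analogous derivation under the extended context $\G, \Delta$, relying on the induction hypothesis for subderivations and on the previously established weakening of subcapturing (Lemma \ref{lemma:weakening-subcapt}) for the one place where capture sets appear.

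The base cases \ruleref{alg-refl} and \ruleref{alg-top} are immediate since their premises either involve no context at all or only a syntactic check on types. For \ruleref{alg-tvar} I would use the fact that $X <: S \in \G$ implies $X <: S \in \G, \Delta$, then apply the IH to the recursive premise and re-apply the same rule. The structural cases \ruleref{alg-fun} and \ruleref{alg-tfun} each have a premise whose context is extended with a fresh binding, of the form $\subAlgo{\G, x : U_2}{T_1}{T_2}$ or $\subAlgo{\G, X <: S_2}{T_1}{T_2}$; applying the IH yields the same judgment under $\G, x : U_2, \Delta$ (respectively $\G, X <: S_2, \Delta$), and a use of Lemma \ref{lemma:permutation} puts the binding back on the right of $\Delta$, after which the same rule applies. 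The \ruleref{alg-boxed} case is a direct appeal to the IH and re-application of the rule.

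The only case that steps outside the purely inductive pattern is \ruleref{alg-capt}, which has a subcapturing premise $\subDft{C_1}{C_2}$ alongside a shape-level subtyping premise $\subAlgoDft{S_1}{S_2}$. For the shape-level premise the IH gives what we need; for the subcapturing premise we cannot appeal to the IH (different judgment), and instead invoke Lemma \ref{lemma:weakening-subcapt} to obtain $\sub{\G, \Delta}{C_1}{C_2}$. Combining the two under the same \ruleref{alg-capt} rule closes the case.

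I expect the main obstacle to be only notational: threading the permutation step cleanly in the \ruleref{alg-fun} and \ruleref{alg-tfun} cases, and ensuring that well-formedness of the extended context is preserved after each rearrangement. These bookkeeping steps are routine once the corresponding lemmas on permutation and well-formedness are in hand, so no deeper insight seems required beyond the Lemma \ref{lemma:weakening-subcapt} appeal in the \ruleref{alg-capt} case.
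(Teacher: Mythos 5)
Your proposal is correct and follows essentially the same route as the paper's proof: induction on the derivation, Lemma \ref{lemma:weakening-subcapt} for the subcapturing premise of \ruleref{alg-capt}, and Lemma \ref{lemma:permutation} to move the freshly bound variable past $\Delta$ in the \ruleref{alg-fun} and \ruleref{alg-tfun} cases. No substantive differences.
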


\begin{proof}
  By induction on the subtyping derivation.
  \ruleref{alg-refl} and \ruleref{alg-top} can be proven by applying the same rule.
  \ruleref{alg-tvar}, \ruleref{alg-capt} and \ruleref{alg-boxed} can be proven by IH and Lemma \ref{lemma:weakening-subcapt}.

  \emph{Case \ruleref{alg-fun}}.
  In this case $U = \tForall{z}{U_1}{U_2}$ and $T = \tForall{z}{T_1}{T_2}$.
  By IH we can show that $\subAlgo{\G, \Delta}{U_2}{U_1}$
  and $\subAlgo{\G, x: T_1, \Delta}{U_2}{T_2}$.
  By Lemma \ref{lemma:permutation} we can show that $\subAlgo{\G, \Delta, x: T_1}{U_2}{T_2}$.
  We can conclude by \algorn{app}.

  \emph{Case \ruleref{alg-tfun}}.
  In this case, $U = \tTForall{X}{S_1}{T_1}$ and $T = \tTForall{X}{S_2}{T_2}$.
  By IH we have $\subAlgo{\G, \Delta}{S_2}{S_1}$ and
  $\subAlgo{\G, X <: S_2, \Delta}{T_1}{T_2}$.
  By Lemma \ref{lemma:permutation} we have $\subAlgo{\G, \Delta, x <: S_2}{T_1}{T_2}$ and conclude by \algorn{tfun} rule.
\end{proof}

\begin{definition}[Subtype bridge]
  $T$ is a \tnew{subtype bridge} iff $\forall U_1, U_2.$ $\subAlgoDft{U_1}{T}$ and $\subAlgoDft{T}{U_2}$ implies
  $\subAlgoDft{U_1}{U_2}$.
\end{definition}

\begin{lemma}[Weak type binding narrowing] \label{lemma:weak-type-narrowing-sub-algo}
  If (1) $\subAlgoDft{S_1}{S_2}$;
  (2) $\subAlgo{\G, X <: S_2}UT$;
  and (3) $S_2$ is a subtype bridge,
  then
  $\subAlgo{\G, X <: S_1}UT$.
\end{lemma}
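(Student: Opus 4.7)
The plan is to proceed by induction on the derivation of $\subAlgo{\G, X <: S_2}{U}{T}$. The \ruleref{alg-refl} and \ruleref{alg-top} cases are immediate, as they do not consult the environment. The \ruleref{alg-boxed} case follows directly from the IH. The \ruleref{alg-capt} case combines the IH on the shape-type component with Lemma \ref{lemma:type-narrowing-subcapt} applied to the capture-set side, which handles narrowing of subcapturing under type bindings. For \ruleref{alg-fun} and \ruleref{alg-tfun}, the premise adds a further term or type binding on top of $X <: S_2$; to apply the IH in this extended context, one strengthens the induction hypothesis to contexts of the form $\G_1, X <: S_2, \G_2$, or equivalently invokes Lemma \ref{lemma:permutation} to move $X$ to the end before applying the IH and permute back afterwards.

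The crux of the proof is the \ruleref{alg-tvar} case, where the subtype-bridge hypothesis is actually consumed. Here some variable $Y <: S$ in $\G, X <: S_2$ yields $\subAlgo{\G, X <: S_2}{Y}{T}$ from $\subAlgo{\G, X <: S_2}{S}{T}$. A case split on whether $Y = X$ is needed. If $Y \neq X$, the lookup is unaffected by narrowing, so the IH on the premise followed by reapplying \ruleref{alg-tvar} closes the case. If $Y = X$, then $S = S_2$ in the original derivation, and the IH gives $\subAlgo{\G, X <: S_1}{S_2}{T}$. In the narrowed environment, however, the bound of $X$ is now $S_1$, so to reapply \ruleref{alg-tvar} one must establish $\subAlgo{\G, X <: S_1}{S_1}{T}$. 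The idea is to combine $\subAlgoDft{S_1}{S_2}$, weakened into the narrowed context via Lemma \ref{lemma:weakening-sub-algo}, with the IH-derived $\subAlgo{\G, X <: S_1}{S_2}{T}$, and invoke the subtype-bridge property of $S_2$ to splice them into the required judgement, concluding with \ruleref{alg-tvar}.

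The main obstacle is exactly this \ruleref{alg-tvar} case with $Y = X$. Without an appeal to transitivity, the facts $S_1 <: S_2$ and $S_2 <: T$ cannot be combined into $S_1 <: T$, and an unrestricted appeal to Lemma \ref{lemma:algorithmic-subtyping-transitivity} here would require a nontrivial mutual induction, because that lemma is itself proven by induction on the middle type and would thus refer back to narrowing. The bridge hypothesis is designed precisely to package the single transitivity instance that narrowing actually needs, so that this weakened narrowing lemma admits a proof by a single straightforward induction on the derivation, without becoming entangled with the proof of general transitivity.
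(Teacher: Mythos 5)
Your proof is correct and follows essentially the same route as the paper's: induction on the subtyping derivation, Lemma \ref{lemma:type-narrowing-subcapt} for the \ruleref{alg-capt} case, and in the \ruleref{alg-tvar} case with $Y = X$ the combination of the IH, weakening of $\subAlgoDft{S_1}{S_2}$ into the narrowed context, and the subtype-bridge property to splice the two judgements together. Your additional remark that the \ruleref{alg-fun} and \ruleref{alg-tfun} cases require either a generalized context $\G_1, X <: S_2, \G_2$ or an appeal to Lemma \ref{lemma:permutation} is a valid refinement of a detail the paper's proof leaves implicit.
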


\begin{proof}
  Proceeds by induction on the derivation of $\subAlgo{\G, X <: S_2}{U}{T}$,
  wherein all cases but \ruleref{alg-capt} and \ruleref{alg-tvar}
  can be directly concluded by applying the IH and the same rule.
  Now we present the proof of the \ruleref{alg-capt} and the \ruleref{alg-tvar} cases.

  \emph{Case \ruleref{alg-capt}}.
  Then $U = \tCap{C_1}{R_1}$,
  $T = \tCap{C_2}{R_2}$,
  $\subAlgo{\G, X <: S_2}{C_1}{C_2}$
  and $\subAlgo{\G, X <: S_2}{R_1}{R_2}$.
  By Lemma \ref{lemma:type-narrowing-subcapt}
  we demonstrate that $\subAlgo{\G, X <: S_1}{C_1}{C_2}$.
  By the IH we show that
  $\subAlgo{\G, X <: S_1}{R_1}{R_2}$.
  Now we can conclude by the \ruleref{alg-capt} rule.

  \emph{Case \ruleref{alg-tvar}}.
  Then we have $U = Y$ for some type variable $Y$.
  If $Y \neq X$, the binding of $Y$ is not affected by the narrowing and we can conclude by IH.
  If $X = Y$, we have $\subAlgo{\G, X <: S_2}{S_2}{T}$.
  By the IH we have $\subAlgo{\G, X <: S_1}{S_2}{T}$.
  By Lemma \ref{lemma:weakening-sub-algo}, we can show that $\subAlgo{\G, X <: S_1}{S_1}{S_2}$.
  Since $S_2$ is a subtype bridge,
  we can show that $\subAlgo{\G, X <: S_1}{S_1}{T}$ and conclude by the \ruleref{alg-tvar} rule.
\end{proof}

\begin{lemma} \label{lemma:sub-algo-X-Y-S}
  If
  (1) $\subAlgoDft{X}{Y}$; and
  (2) $\subAlgoDft{Y}{S}$
  then $\subAlgoDft{X}{S}$.
\end{lemma}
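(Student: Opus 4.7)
The plan is to proceed by induction on the derivation of premise (1), $\subAlgoDft{X}{Y}$. Since the right-hand side is a type variable, the derivation must end with either \ruleref{alg-refl} or \ruleref{alg-tvar}: all the other algorithmic subtyping rules (\ruleref{alg-top}, \ruleref{alg-fun}, \ruleref{alg-tfun}, \ruleref{alg-capt}, \ruleref{alg-boxed}) produce non-type-variable shapes on the right.

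The \ruleref{alg-refl} case is immediate: we have $X = Y$, so premise (2) directly gives the conclusion. In the \ruleref{alg-tvar} case the derivation is
\[
  \infer{\subAlgoDft{X}{Y}}{X <: S_0 \in \G & \subAlgoDft{S_0}{Y}}
\]
with $\subAlgoDft{S_0}{Y}$ a strictly smaller subderivation. The key observation is that the right-hand side of this sub-judgment is again the type variable $Y$, so by the same rule-inspection argument $S_0$ must itself be either $Y$ (derived by \ruleref{alg-refl}) or some type variable $Z$ (derived by \ruleref{alg-tvar}). If $S_0 = Y$, then applying \ruleref{alg-tvar} to $X <: Y \in \G$ together with premise (2) yields $\subAlgoDft{X}{S}$ directly. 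If $S_0 = Z$ is a type variable distinct from $Y$, the induction hypothesis applies to the smaller derivation $\subAlgoDft{Z}{Y}$ together with premise (2), giving $\subAlgoDft{Z}{S}$; combining this with $X <: Z \in \G$ via \ruleref{alg-tvar} again yields the conclusion.

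The only delicate point is recognizing that, in this algorithmic presentation, the right-hand side of a subtyping judgment being a type variable severely restricts the possible last rule, which is what allows the induction to go through without any auxiliary narrowing or weakening machinery. I expect this lemma to feed directly into the case $T_2 = Y$ of the transitivity proof (Lemma~\ref{lemma:algorithmic-subtyping-transitivity}), where the more general transitivity induction on the middle type $T_2$ reduces to exactly the shape handled here.
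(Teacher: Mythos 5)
Your proof is correct and follows essentially the same route as the paper's: induction on the first derivation, observing that only \ruleref{alg-refl} and \ruleref{alg-tvar} can conclude a judgment with a type variable on the right, inverting the subderivation to see that the bound must again be a type variable, and then applying the induction hypothesis followed by \ruleref{alg-tvar}. The only cosmetic difference is your extra case split on whether the bound equals $Y$, which the paper's uniform appeal to the IH already covers.
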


\begin{proof}
  Proof proceeds by induction on the derivation of the first premise.
  The \ruleref{alg-refl} case is trivial.
  For the \ruleref{alg-tvar} case, we have
  $X <: S_X \in \G$ and $\subAlgoDft{S_X}{Y}$.
  By inverting the judegement $\subAlgoDft{S_X}{Y}$ we have $S_X = Z$ for some type variable $Z$.
  Now we invoke IH again and prove that $\subAlgoDft{Z}{S}$ and conclude by the \ruleref{alg-tvar} rule.
\end{proof}

\begin{lemma} \label{lemma:sub-algo-T-fun-fun}
  If
  (1) $\subAlgoDft{T_1}{\tForall{z}{U_2}{T_2}}$;
  (2) $\subAlgoDft{U_3}{U_2}$;
  (3) $\subAlgo{\G, x: U_3}{T_2}{T_3}$;
  and (4) $U_2$ and $T_2$ are subtype bridges;
  then $\subAlgoDft{T_1}{\tForall{z}{U_3}{T_3}}$.
\end{lemma}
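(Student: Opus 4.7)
The plan is to induct on the derivation of premise (1), $\subAlgoDft{T_1}{\tForall{z}{U_2}{T_2}}$. Since the right-hand side is a function type, the only rules whose conclusion can match are \algorn{fun} (when $T_1$ is itself a function type) and \algorn{tvar} (when $T_1$ is a type variable); \algorn{top} and \algorn{refl} are excluded by the shape of the target.

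In the \algorn{fun} case, inversion gives $T_1 = \tForall{z}{U_1}{T_1'}$ with $\subAlgoDft{U_2}{U_1}$ and $\subAlgo{\G, z: U_2}{T_1'}{T_2}$. To re-apply \algorn{fun} and obtain the goal $\subAlgoDft{T_1}{\tForall{z}{U_3}{T_3}}$, I need two things. First, $\subAlgoDft{U_3}{U_1}$: I chain $\subAlgoDft{U_3}{U_2}$ from premise (2) with $\subAlgoDft{U_2}{U_1}$ using the subtype bridge property of $U_2$ from premise (4). Second, $\subAlgo{\G, z: U_3}{T_1'}{T_3}$: I apply term binding narrowing (Lemma \ref{lemma:term-narrowing-sub-algo}) to $\subAlgo{\G, z: U_2}{T_1'}{T_2}$ using $\subAlgoDft{U_3}{U_2}$ to get $\subAlgo{\G, z: U_3}{T_1'}{T_2}$, and then chain with premise (3) $\subAlgo{\G, z: U_3}{T_2}{T_3}$ using the subtype bridge property of $T_2$.

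The \algorn{tvar} case is immediate from the induction hypothesis. Here $T_1 = X$ with $X <: S \in \G$ and $\subAlgoDft{S}{\tForall{z}{U_2}{T_2}}$; the IH applies to this smaller derivation (with premises (2)--(4) unchanged) and yields $\subAlgoDft{S}{\tForall{z}{U_3}{T_3}}$, after which one application of \algorn{tvar} closes the goal.

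The main obstacle I anticipate is not the structural induction itself but the bookkeeping around the subtype bridge hypotheses. Strictly speaking, the bridge property of $T_2$ is used in the narrowed environment $\G, z: U_3$ rather than the ambient $\G$, so I need to convince myself that this transfer is legitimate -- either because the definition of ``subtype bridge'' is implicitly meant to hold in whatever context it is used, or because a separate weakening/narrowing argument extends bridgehood to the expanded context. A second minor nuisance is reconciling the mismatched binder names ($z$ in the function type versus $x$ in premise (3)); I will take these to denote the same binder by alpha-renaming, which is justified since neither $U_2$ nor $T_2$ depends on the free choice of bound variable name.
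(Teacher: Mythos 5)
Your proof is correct and follows essentially the same route as the paper's (induction on premise (1), with the \ruleref{alg-fun} case closed by the bridge hypotheses, Lemma \ref{lemma:term-narrowing-sub-algo}, and re-application of \ruleref{alg-fun}, and the \ruleref{alg-tvar} case by the IH); the paper's version is just far terser. Your flagged concern about transferring the bridge property of $T_2$ to the narrowed context $\G, z: U_3$ is a real subtlety the paper glosses over, but it is resolved by how the lemma is invoked in the transitivity proof, where bridgehood of the components comes from the outer structural induction hypothesis and thus holds in any well-formed environment.
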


\begin{proof}
  Proof proceeds by induction on the first premise.
  The \ruleref{alg-refl} case is trivial.
  The \ruleref{alg-fun} case can be concluded the premise, Lemma \ref{lemma:term-narrowing-sub-algo} and \ruleref{alg-fun} rule.
  The \ruleref{alg-tvar} case can be concluded by IH.
\end{proof}

\begin{lemma} \label{lemma:sub-algo-T-tfun-tfun}
  If
  (1) $\subAlgoDft{T_1}{\tTForall{X}{S_2}{U_2}}$;
  (2) $\subAlgoDft{S_3}{S_2}$;
  (3) $\subAlgo{\G, X <: S_3}{U_1}{U_2}$;
  and (4) both $S_2$ and $U_2$ are subtype bridges,
  then $\subAlgoDft{T_1}{\tForall{X}{S_3}{U_3}}$.
\end{lemma}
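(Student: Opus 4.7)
The plan is to proceed by induction on the derivation of the first premise $\subAlgoDft{T_1}{\tTForall{X}{S_2}{U_2}}$, closely mirroring the strategy used to establish Lemma \ref{lemma:sub-algo-T-fun-fun}. Since the right-hand side is a universal type, only three rules can sit at the root of this derivation: \ruleref{alg-refl}, \ruleref{alg-tvar}, and \ruleref{alg-tfun}.

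The \ruleref{alg-refl} case cannot arise, because that rule only concludes $\subAlgoDft{X}{X}$ between type variables, whereas the right-hand side here is a polymorphic function type. The \ruleref{alg-tvar} case is immediate from the IH: we have $T_1 = Y$ with $Y <: S_Y \in \G$ and $\subAlgoDft{S_Y}{\tTForall{X}{S_2}{U_2}}$, so the IH yields a subtyping from $S_Y$ to the target $\tTForall{X}{S_3}{U_3}$, which we close with \ruleref{alg-tvar}.

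The substantive case is \ruleref{alg-tfun}. Here $T_1 = \tTForall{X}{S_1}{U_1}$ with $\subAlgoDft{S_2}{S_1}$ and $\subAlgo{\G, X <: S_2}{U_1}{U_2}$, and the goal is to reassemble $\subAlgoDft{T_1}{\tTForall{X}{S_3}{U_3}}$ by one application of \ruleref{alg-tfun}. This requires two subderivations. First, $\subAlgoDft{S_3}{S_1}$ follows by chaining premise (2), $\subAlgoDft{S_3}{S_2}$, with the inherited $\subAlgoDft{S_2}{S_1}$ using the subtype-bridge hypothesis on $S_2$. Second, for the body I first invoke Lemma \ref{lemma:weak-type-narrowing-sub-algo} with $\subAlgoDft{S_3}{S_2}$ and the bridge assumption on $S_2$ to narrow the context of the inverted body derivation, obtaining $\subAlgo{\G, X <: S_3}{U_1}{U_2}$, and then chain it with premise (3), $\subAlgo{\G, X <: S_3}{U_2}{U_3}$, through the subtype-bridge property of $U_2$. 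Invoking \ruleref{alg-tfun} then closes the case.

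The main obstacle is the body step in the \ruleref{alg-tfun} case: bridging a subtyping in the wider context $\G, X <: S_2$ to one in the narrowed context $\G, X <: S_3$ while also composing it with a second subtyping requires both a faithful narrowing lemma and a transitivity-like principle for $U_2$. The two subtype-bridge hypotheses are exactly tailored to supply those two ingredients in the right contexts, so once they are identified the rest is mechanical. Everything else in the proof reduces to invoking the IH.
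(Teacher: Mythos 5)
Your proof is correct and follows essentially the same route as the paper's (induction on the first derivation, with \ruleref{alg-tvar} closed by the IH and \ruleref{alg-tfun} closed via the two subtype-bridge hypotheses together with Lemma \ref{lemma:weak-type-narrowing-sub-algo}); you also sensibly read premise (3) as $\subAlgo{\G, X <: S_3}{U_2}{U_3}$, correcting an apparent typo in the statement. Your observation that \ruleref{alg-refl} cannot arise (since it applies only to type variables) is a sharper reading than the paper's ``trivial,'' but the two are consistent.
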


\begin{proof}
  Proceed by induction on the first derivation.
  The \ruleref{alg-refl} is trivial.

  The \ruleref{alg-tfun} case can be concluded by the premise, Lemma \ref{lemma:weak-type-narrowing-sub-algo} and \ruleref{alg-tfun} rule.

  The \ruleref{alg-tvar} case can be concluded by applying the IH.
\end{proof}

\subalgorefl*

\begin{proof}
  Proceed by induction on the structure of $T$.

  \emph{When $T$ is a type variable or $\top$}.
  Both cases can be concluded directly from the \ruleref{alg-refl} and \ruleref{alg-top} rule.

  \emph{When $T = \tCap{C}{S}$}.
  This case can be concluded by IH and reflexivity of subcapturing.

  \emph{Other cases}.
  By the IH and the corresponding subtyping rule.
\end{proof}

\subalgotrans*

\begin{proof}
    By induction on $T_2$.
    \begin{itemize}
    \item \textsc{When $T_2$ is a type variable $X$}.
      Proof proceeds by induction on the judgement $\subAlgoDft{T_1}X$,
      wherein only
      \ruleref{alg-refl} and \ruleref{alg-tvar}
      are applicable.
        \begin{itemize}
        \item \emph{Case \ruleref{alg-refl}}.
          Then $T_1 = X$ too.
          The goal is trivially concluded by the premise.
        \item \emph{Case \ruleref{alg-tvar}}.
          We conclude immediately by Lemma \ref{lemma:sub-algo-X-Y-S}.
        \end{itemize}

      \item \textsc{When $T_2$ is $\top$}.
        By inspecting $\subAlgoDft{\top}{T_3}$ we can show that $T_3 = \top$,
        and thus conclude by the \ruleref{alg-top} rule.

      \item \textsc{When $T_2$ is a function $\tForall{x}{U_1}{U_2}$}.
        Proof proceeds by case analysis the second premise.
        The only possible cases are \ruleref{alg-top}, \ruleref{alg-refl} and \ruleref{alg-fun}.
        In the case of \ruleref{alg-top}, we conclude by the same rule.
        In the \ruleref{alg-refl} rule, the goal follows from the premise.
        For the \ruleref{alg-fun} case, we invoke Lemma \ref{lemma:sub-algo-T-fun-fun} to conclude.
        \item \textsc{When $T_2$ is a type function $\tTForall{X}{S}{U}$}.
        Proceeds by case analysis on the second premise.
        The \ruleref{alg-top} and \ruleref{alg-refl} cases are trivial.
        The \ruleref{alg-tfun} case can be concluded by Lemma \ref{lemma:sub-algo-T-tfun-tfun}.
        Other cases are not applicable when $T_2$ is a type function.

        \item \textsc{When $T_2$ is a boxed type $\tBox{U}$}.
        We proceed by induction on the premise $\subAlgoDft{T_1}{\tBox{U}}$.
        The three possible cases are \ruleref{alg-refl}, \ruleref{alg-tvar} and \ruleref{alg-boxed}.
        The \ruleref{alg-refl} case is trivial.
        In the \ruleref{alg-tvar} case,
        we can invoke the IH and conclude by the same rule.
        In the \ruleref{alg-boxed} case, we have $T_1 = \tBox{U_1}$ for some $U_1$, and $\subAlgoDft{U_1}{U}$.
        We invert the premise $\subAlgoDft{\tBox{U}}{T_3}$ and obtain three cases: \textsc{Refl}, \textsc{Top} and \textsc{Boxed}.
        Both the \ruleref{alg-refl} and the \ruleref{alg-top} cases are trivial.
        In the \ruleref{alg-boxed} case we get $T_3 = \tBox{U_3}$ for some $U_3$ and $\subAlgoDft{U}{U_3}$.
        We can show that $\subAlgoDft{U_1}{U_3}$ using the IH.
        We can prove the goal by applying the \ruleref{alg-boxed} rule then.

        \item \textsc{When $T_2$ is a capturing type $\tCap{C}{S}$}.
        We proceed the proof by induction on the premise $\subAlgoDft{T_1}{\tCap{C}{S}}$.
        There are three cases: \ruleref{alg-refl}, \ruleref{alg-tvar} and \ruleref{alg-capt}.
        The \ruleref{alg-refl} case is trivial.
        \ruleref{alg-tvar} case can be proven with the IH and the \ruleref{alg-tvar} rule.
        In the \ruleref{alg-capt} rule we have $T_1 = \tCap{C_1}{S_1}$ for some $C_1, S_1$, and $\subDft{C_1}{C}, \subAlgoDft{S_1}{S}$.
        We invert the premise $\subAlgoDft{\tCap C S}{T_3}$ and have three cases: \ruleref{alg-tvar}, \ruleref{alg-top} and \ruleref{alg-capt}.
        Both \ruleref{alg-refl} and \ruleref{alg-top} are trivial.
        For the \ruleref{alg-capt} case we have $T_3 = \tCap{C_3}{S_3}$ $\subDft{C}{C_3}$, and $\subAlgoDft{S}{S_3}$.
        With the IH we can prove that $\subAlgoDft{S_1}{S_3}$ and since the subcapture is transitive we also have $\subDft{C_1}{C_3}$.
        Now we apply the \ruleref{alg-capt} rule to prove the goal.
    \end{itemize}
\end{proof}


\begin{theorem}[Completeness of algorithmic subtyping $\subAlgoDft{T_1}{T_2}$]
  \label{thm:completeness-of-algorithmic-subtyping}
  $\forall T_1, T_2, \G$, if $\subDft{T_1}{T_2}$ then $\subAlgoDft{T_1}{T_2}$.
\end{theorem}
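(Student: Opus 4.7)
The plan is to proceed by induction on the derivation of the declarative subtyping judgement $\subDft{T_1}{T_2}$. The declarative system has (among other rules) reflexivity, transitivity, a subsumption-style structural rule for capturing types, and the usual structural rules for functions, type functions, boxed types, and type variables. Since \ccAlgo{}'s subtyping is obtained from \system{}'s by \emph{inlining} transitivity into \algorn{tvar} and restricting reflexivity to type variables, the only non-trivial cases will be exactly those two.

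For the structural cases (function, type function, capturing type, boxed type, $\top$, type-variable lookup), the corresponding algorithmic rule applies almost verbatim: invoke the induction hypothesis on each subderivation and close with the matching algorithmic rule. For subcapturing components inside \ruleref{alg-capt} we reuse the subcapturing judgement unchanged. For the \textsc{refl} case where the declarative derivation concludes $\subDft{T}{T}$ for an arbitrary $T$, we discharge it immediately by Lemma~\ref{lemma:sub-algo-refl}. For the \textsc{tvar}-style declarative rule that concludes $\subDft{X}{S}$ from $X <: S \in \G$, we combine it with reflexivity (Lemma~\ref{lemma:sub-algo-refl}) to obtain $\subAlgoDft{S}{S}$ and apply \ruleref{alg-tvar}.

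The main obstacle is the transitivity case: if the last step of the declarative derivation is $\subDft{T_1}{T_3}$ obtained from $\subDft{T_1}{T_2}$ and $\subDft{T_2}{T_3}$, the IH gives us $\subAlgoDft{T_1}{T_2}$ and $\subAlgoDft{T_2}{T_3}$, but algorithmic subtyping has no explicit transitivity rule. Here we appeal directly to Lemma~\ref{lemma:algorithmic-subtyping-transitivity}, which has already been established, to obtain $\subAlgoDft{T_1}{T_3}$. A similar appeal handles any case where the declarative rule implicitly chains through subsumption.

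With these tools in hand the induction goes through straightforwardly, and combined with the soundness direction (algorithmic to declarative), which is immediate since every algorithmic rule is a derived rule of the declarative system, this yields the equivalence of the two subtyping relations and feeds directly into the equivalence of \cc{} and \ccAlgo{}.
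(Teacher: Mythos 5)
Your proposal matches the paper's proof: both proceed by induction on the declarative derivation, discharge \textsc{refl} via Lemma~\ref{lemma:sub-algo-refl}, \textsc{trans} via Lemma~\ref{lemma:algorithmic-subtyping-transitivity} applied to the two inductive hypotheses, \textsc{tvar} via \ruleref{alg-tvar} with reflexivity, and the structural cases via the IH and the corresponding algorithmic rule. The argument is correct and essentially identical to the paper's.
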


\begin{proof}
  The proof proceeds by induction on premise $\subDft{T_1}{T_2}$
  \begin{itemize}
    \item Case \ruleref{refl}.
      By Lemma \ref{lemma:sub-algo-refl}.
    \item Case \ruleref{trans}.
      Then we have some $U$ such that $\subAlgoDft{T_1}{U}$ and $\subAlgoDft{U}{T_2}$
      by IH.
      We can conclude this case using Lemma \ref{lemma:algorithmic-subtyping-transitivity}.
    \item Case \ruleref{tvar}.
      We have $\subDft{X}{T_2}$ in this case, and $X <: T_2 \in \G$.
      We can prove the goal with \ruleref{alg-tvar} and \ruleref{alg-refl}.
    \item Case \ruleref{top}. Trivial.
    \item Case \ruleref{fun}, \ruleref{tfun}, \ruleref{capt} and \ruleref{boxed}.
      These cases can be concluded by apply the IH and the corresponding rule.
  \end{itemize}
\end{proof}

\begin{theorem}[Soundness of algorithmic subtyping $\subAlgoDft{T_1}{T_2}$]
  \label{thm:soundness-of-algorithmic-subtyping}
  For any $\G, T_1, T_2$, if $\subAlgoDft{T_1}{T_2}$ then $\subDft{T_1}{T_2}$.
\end{theorem}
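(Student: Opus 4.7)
The plan is to proceed by straightforward induction on the derivation of $\subAlgoDft{T_1}{T_2}$. Since the algorithmic system was obtained from the declarative one essentially by inlining \ruleref{refl}, \ruleref{trans}, and \ruleref{tvar} into the remaining rules, the proof amounts to recovering these inlined uses with explicit applications of the declarative rules.

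For the base cases \ruleref{alg-refl} (where $T_1 = T_2 = X$) and \ruleref{alg-top}, we conclude immediately by \ruleref{refl} and \ruleref{top} respectively. The structural cases \ruleref{alg-fun}, \ruleref{alg-tfun}, \ruleref{alg-capt}, and \ruleref{alg-boxed} all follow by applying the induction hypothesis to each premise and then invoking the corresponding declarative rule. For \ruleref{alg-capt}, the subcapturing premise $\subDft{C_1}{C_2}$ is already in the declarative form and carries over directly, since the subcapturing rules are unchanged between the two systems.

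The only case requiring a small extra step is \ruleref{alg-tvar}: here we have $X <: S \in \G$ and $\subAlgoDft{S}{S'}$, and we must derive $\subDft{X}{S'}$. By the induction hypothesis we obtain $\subDft{S}{S'}$; the declarative \ruleref{tvar} gives $\subDft{X}{S}$; then an application of \ruleref{trans} yields the goal. This is the one place where we must re-introduce a transitivity step that was implicit in the algorithmic rule.

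I do not expect any real obstacle in this proof: unlike the completeness direction (Theorem \ref{thm:completeness-of-algorithmic-subtyping}), which needed the admissibility of reflexivity and transitivity (Lemmas \ref{lemma:sub-algo-refl} and \ref{lemma:algorithmic-subtyping-transitivity}), the soundness direction only needs the declarative system's own \ruleref{refl} and \ruleref{trans} rules, which are available by assumption. The argument is thus a direct structural induction with no auxiliary lemmas required.
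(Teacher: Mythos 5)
Your proposal matches the paper's proof exactly: both proceed by induction on the algorithmic subtyping derivation, dispatch all cases except \ruleref{alg-tvar} by the induction hypothesis and the corresponding declarative rule, and handle \ruleref{alg-tvar} by combining the declarative \ruleref{tvar} with \ruleref{trans}. No gaps; the argument is correct as written.
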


\begin{proof}
    By induction on $\subAlgoDft{T_1}{T_2}$.
    All but one case can be proven by the IH
    and the corresponding rule.
    The only interesting case is \ruleref{alg-tvar}.
    We have $T_1 = X, \subAlgoDft{S}{T_2}$ and $X <: S \in \G$ for some $X, S$.
    By inductive hypothesis we know that $\subDft{S}{T_2}$.
    So we can conclude this case with \ruleref{tvar} and \ruleref{trans}.
\end{proof}



\begin{lemma}[Term substitution perserves subcapturing]
  \label{lemma:term-subst-subcapturing}
  If $\sub{\G_1, x: U, \G_2}{C_1}{C_2}$ and $\subDft{D}{\cv{U}}$,
  then $\sub{\G_1, [x := D] \G_2}{[x := D]C_1}{[x := D]C_2}$.
\end{lemma}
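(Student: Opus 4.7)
The plan is to proceed by induction on the derivation of $\sub{\G_1, x: U, \G_2}{C_1}{C_2}$, performing a case analysis on the final rule used. The \ruleref{sc-set} case is routine: each singleton subcapturing premise is handled by the induction hypothesis, and the conclusions are recombined by reapplying \ruleref{sc-set}. Before starting, I will confirm that $\G_1, [x := D]\G_2$ remains well-formed, which follows from the fact that $U$ was well-formed in $\G_1$ (so $U$ does not mention $x$) together with the hypothesis $\subDft{D}{\cv{U}}$ guaranteeing that $D$ is a sensible replacement for $x$'s captures.

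For the \ruleref{sc-elem} case, I would split on whether the lone element $y$ of $C_1$ is equal to $x$. If $y \neq x$, then $[x := D]\{y\} = \{y\}$ and $y$ remains in $[x := D]C_2$, so \ruleref{sc-elem} applies directly. If $y = x$, the substituted left-hand side becomes $D$ and $[x := D]C_2 = (C_2 \setminus \{x\}) \cup D$, so each element of $D$ belongs to $[x := D]C_2$, and I conclude by \ruleref{sc-elem} on each element combined with \ruleref{sc-set}.

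The main obstacle is the \ruleref{sc-var} case, in particular the subcase where the variable looked up in the environment is $x$ itself. Here $C_1 = \{x\}$, $U = \tCap{C}{S}$ with $C = \cv{U}$, and the premise yields $\sub{\G_1, x: U, \G_2}{C}{C_2}$. I plan to (i) apply the induction hypothesis to obtain $\sub{\G_1, [x := D]\G_2}{[x := D]C}{[x := D]C_2}$, noting that since $U$ is well-formed in $\G_1$, the set $C$ does not mention $x$ and hence $[x := D]C = C$; (ii) weaken the hypothesis $\subDft{D}{\cv{U}}$ via Lemma \ref{lemma:weakening-subcapt} to the extended context, yielding $\sub{\G_1, [x := D]\G_2}{D}{C}$; and (iii) chain the two judgements using transitivity of subcapturing (Lemma \ref{lemma:transitivity-of-subcapturing}) to conclude $\sub{\G_1, [x := D]\G_2}{D}{[x := D]C_2}$, which is the goal since $[x := D]\{x\} = D$. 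When $y \neq x$, the binding of $y$ either sits in $\G_1$ unchanged or lies in $\G_2$ with its type substituted; in both sub-cases the induction hypothesis on the inner premise combined with a reapplication of \ruleref{sc-var} closes the goal, after observing that $[x := D]\{y\} = \{y\}$.

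The non-trivial bookkeeping here is entirely in aligning the substituted and un-substituted forms of $\cv{U}$, which is painless thanks to the well-formedness discipline on $\G_1, x: U, \G_2$; once that observation is in hand, transitivity and weakening — both previously established — do the real work.
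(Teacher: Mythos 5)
Your proof is correct. Note that the paper itself does not prove this lemma at all --- it simply states it alongside the type-substitution analogue and remarks that both ``have been established in the metatheory of \cite{Odersky2022ScopedCF}'' --- so there is no in-paper argument to compare against. Your induction on the subcapturing derivation is the standard one, and the only genuinely delicate step, the \textsc{sc-var} subcase at $x$ itself, is handled correctly: well-formedness gives $[x := D]\,\cv{U} = \cv{U}$, weakening (Lemma \ref{lemma:weakening-subcapt}) lifts $\subDft{D}{\cv{U}}$ into the substituted context, and transitivity (Lemma \ref{lemma:transitivity-of-subcapturing}) closes the goal; the \textsc{sc-elem} subcase at $y = x$, discharged element-wise via \ruleref{sc-elem} and \ruleref{sc-set}, also covers the empty-$D$ corner.
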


\begin{lemma}[Type substitution perserves subcapturing]
  \label{lemma:type-subst-subcapturing}
  If $\sub{\G_1, X <: R, \G_2}{C_1}{C_2}$ and $\sub{\G_1}{P}{R}$,
  then $\sub{\G_1, [X := P] \G_2}{C_1}{C_2}$.
\end{lemma}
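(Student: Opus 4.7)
The plan is to proceed by straightforward induction on the derivation of $\sub{\G_1, X <: R, \G_2}{C_1}{C_2}$, exploiting the crucial observation that capture sets contain only term variables and therefore are syntactically unaffected by the type substitution $[X := P]$. In particular, for any binding $y : \tCap{C}{S}$ that appears in $\G_2$, applying $[X := P]$ yields $y : \tCap{C}{[X := P]S}$, leaving the capture set component $C$ intact.

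For the \ruleref{sc-elem} case, the rule only depends on membership $x \in C$ and not on the environment, so the same rule applies directly in $\G_1, [X := P]\G_2$. For the \ruleref{sc-set} case, I would apply the inductive hypothesis to each premise $\sub{\G_1, X <: R, \G_2}{\set{x_i}}{C_2}$ to obtain $\sub{\G_1, [X := P]\G_2}{\set{x_i}}{C_2}$, and then reassemble with \ruleref{sc-set}.

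The \ruleref{sc-var} case is the only one that interacts with the environment, and it is where the observation above is used. Here $C_1 = \set{y}$, $(\G_1, X <: R, \G_2)(y) = \tCap{C}{S}$, and $\sub{\G_1, X <: R, \G_2}{C}{C_2}$. I do a case analysis on where $y$ is bound: if $y \in \dom{\G_1}$, its binding is untouched by the substitution and we directly have $(\G_1, [X := P]\G_2)(y) = \tCap{C}{S}$; if $y \in \dom{\G_2}$, the substitution produces the binding $y : \tCap{C}{[X := P]S}$, whose capture-set component is still $C$. In both cases the inductive hypothesis applied to the second premise yields $\sub{\G_1, [X := P]\G_2}{C}{C_2}$, and \ruleref{sc-var} closes the goal.

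The main point to keep straight is the invariant that type-level substitution does not alter any capture set, since capture sets in this calculus range over term variables only; once that is noted, the induction goes through mechanically. A minor bookkeeping obligation is checking that $\G_1, [X := P]\G_2$ remains well-formed, which follows from $\sub{\G_1}{P}{R}$ together with the standard well-formedness lemmas for type substitution (analogous to the term-substitution case in Lemma~\ref{lemma:term-subst-subcapturing}).
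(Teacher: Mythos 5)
Your proof is correct. Note that the paper itself does not prove this lemma at all: it simply states that both substitution lemmas "have been established in the metatheory of" the original capture-calculus paper, so there is no in-paper argument to compare against. Your direct induction is sound and hinges on exactly the right observation — capture sets range over term variables only, so the substitution $[X := P]$ changes only the shape-type components of bindings in $\G_2$ and leaves every capture set (including the $C$ in any binding $y : \tCap{C}{S}$ consulted by \ruleref{sc-var}) untouched. The three cases are handled appropriately: \ruleref{sc-elem} is environment-independent, \ruleref{sc-set} is a straightforward appeal to the inductive hypothesis, and \ruleref{sc-var} splits on whether $y$ is bound in $\G_1$ or $\G_2$, with the capture-set component preserved either way. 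Your closing remark is also apt: the premise $\sub{\G_1}{P}{R}$ plays no role in the induction itself and is only needed to guarantee that $\G_1, [X := P]\G_2$ remains well-formed, consistent with the paper's blanket assumption that all environments are well-formed.
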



Both lemmas have been established in the metatheory of \cite{Odersky2022ScopedCF}.

\begin{lemma}[Term subtitution perserves subtyping]
  \label{lemma:term-subst-subtyping}
  If $\subAlgo{(\G_1, z : U, \G_2)}{T_1}{T_2}$, $\typAlgDft{y}{U^\prime}$ and $\subAlgo{\G_1}{U^\prime}{U}$, then $\subAlgo{(\G_1, [z := y]\G_2)}{[z := y] {T_1}}{[z := y]} T_2$.
\end{lemma}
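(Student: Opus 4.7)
The plan is to proceed by induction on the derivation of $\subAlgo{\G_1, z:U, \G_2}{T_1}{T_2}$, dispatching on the final subtyping rule. Since $z$ is a term variable, substitution leaves type variables, $\top$, boxes, and the outer constructor of function and polymorphic types syntactically intact; it only reaches capture sets and bindings that mention $z$ (inside $\G_2$, or inside argument and result types of functions).

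Most rules are routine. \ruleref{alg-refl} and \ruleref{alg-top} follow by reapplying the same rule. For \ruleref{alg-tvar}, the binding $X <: S$ either lies in $\G_1$ (unaffected by substitution) or in $\G_2$ (becoming $X <: [z:=y]S$ in $[z:=y]\G_2$); applying the IH to the sub-premise $\subAlgo{\G_1, z:U, \G_2}{S}{T_2}$ and reapplying \ruleref{alg-tvar} closes the case. The structural rules \ruleref{alg-fun}, \ruleref{alg-tfun}, and \ruleref{alg-boxed} proceed by applying the IH; in \ruleref{alg-fun} and \ruleref{alg-tfun} the recursive call sits under an extended environment (with $x : U_2$ or $X <: S_2$), which is handled by first weakening $\typAlgDft{y}{U^\prime}$ and $\subAlgo{\G_1}{U^\prime}{U}$ to that extended context via Lemma \ref{lemma:weakening-sub-algo} together with a standard typing weakening.

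The substantive case is \ruleref{alg-capt}, where $T_1 = \tCap{C_1}{S_1}$ and $T_2 = \tCap{C_2}{S_2}$. The IH discharges the shape-type premise $\subAlgo{\G_1, z:U, \G_2}{S_1}{S_2}$, but the subcapturing premise $\subAlgo{\G_1, z:U, \G_2}{C_1}{C_2}$ has to be pushed through substitution via Lemma \ref{lemma:term-subst-subcapturing}, whose side condition is $\subDft{\set{y}}{\cv{U}}$. To establish it, I observe that \ruleref{alg-var} assigns $y$ the algorithmic type $\tCap{\set{y}}{S_y}$, so $U^\prime = \tCap{\set{y}}{S_y}$; inverting $\subAlgo{\G_1}{U^\prime}{U}$ through \ruleref{alg-capt} (appealing to the equivalence $S \equiv \tCap{\eset}{S}$ if $U$ happens to be a bare shape type) yields $\subDft{\set{y}}{\cv{U}}$, exactly what Lemma \ref{lemma:term-subst-subcapturing} requires. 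Reapplying \ruleref{alg-capt} then finishes the case. The delicate point will be precisely this interplay between the subcapturing substitution lemma and the two possible forms of $U$; a minor additional obligation is that $[z:=y]\G_2$ remain well-formed, which follows from $y \in \dom{\G_1}$ implicit in $\typAlgDft{y}{U^\prime}$.
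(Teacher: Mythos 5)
Your proposal is correct and follows essentially the same route as the paper: induction on the subtyping derivation, with the \ruleref{alg-tvar} case split on whether the type-variable binding lies in $\G_1$ or $\G_2$, and the \ruleref{alg-capt} case discharged by Lemma \ref{lemma:term-subst-subcapturing} after extracting $\subDft{\set{y}}{\cs{U}}$ from the typing of $y$ and the premise $\subAlgo{\G_1}{U^\prime}{U}$. Your explicit remarks about weakening in the binder cases and well-formedness of $[z:=y]\G_2$ are details the paper leaves implicit, not a different argument.
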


\newcommand{\theenv}{\G_1, z: U, \G_2}

\begin{proof}
	We proceed the proof through induction on
  $\subAlgo{(\G_1, z : U, \G_2)}{T_1}{T_2}$.
  \begin{itemize}
    \item Case \ruleref{alg-refl} and \ruleref{alg-top}.
    By the same rule.

    \item Case \ruleref{alg-capt}.
      In this case, $T_1 = \tCap{C_1}{S_1}$ and $T_2 = \tCap{C_2}{S_2}$.
      And we have $\sub{\theenv}{C_1}{C_2}$ and $\subAlgo{\theenv}{S_1}{S_2}$.
      From $\subAlgo{\G_1}{U^\prime}{U}$ we know that $\sub{\G_1}{\cs{U^\prime}}{\cs{U}}$.
      Therefore $\subAlgo{\G_1}{\set{y}}{\cs{U}}$.
      We can therefore invoke Lemma \ref{lemma:term-subst-subcapturing}
      to show that
      $\sub{(\G_1, [z := y]\G_2)}{[z := y] \cs{U^\prime}}{[z := y] \cs{U}}$,
      and thus conclude the case by the IH and the \ruleref{alg-capt} rule.

    \item Case \ruleref{alg-boxed}, \ruleref{alg-fun} and \ruleref{alg-tfun}.
      These cases can be proven by the IH and the same rule.

    \item Case \ruleref{alg-tvar}.
    In this case $T_1 = X$, $X <: S \in \theenv$ and $\subAlgo{\theenv}{S}{T_2}$.
    By IH we know that $\sub{\G_1, [z := y] \G_2}{[z := y] S}{[z := y] T_2}$.
    If $X <: S \in \G_1$, then by the well-formness of $\theenv$ we know that $z$ does not appear in $S$.
    Therefore, $[z := y] S = S$ and we can conclude using the \ruleref{tvar} rule.
    If $X <: S \in \G_2$, then we know that $X <: [z := y] \in [z := y] \G_2$,
    which allows us to conclude by the \ruleref{tvar} rule as well.
  \end{itemize}
\end{proof}

\begin{lemma}[Type subtitution perserves subtyping]
  \label{lemma:type-subst-subtyping}
  If $\subAlgo{(\G_1, X <: S, \G_2)}{T_1}{T_2}$, then $\subAlgo{(\G_1, [X := S]\G_2)}{[X := S] {T_1}}{[X := S]} T_2$.
\end{lemma}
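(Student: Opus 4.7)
The plan is to mirror the structure of the preceding term-substitution lemma (Lemma \ref{lemma:term-subst-subtyping}), proceeding by induction on the derivation of $\subAlgo{(\G_1, X <: S, \G_2)}{T_1}{T_2}$. Let $\theenv = \G_1, X <: S, \G_2$ throughout. The \ruleref{alg-refl} and \ruleref{alg-top} cases reduce to re-applying the same rule after substitution, noting that $[X := S]\top = \top$ and that type-variable substitution maps type variables to types of the right shape. The \ruleref{alg-boxed}, \ruleref{alg-fun}, and \ruleref{alg-tfun} cases follow by invoking the IH on the subderivations and re-applying the same rule; for \ruleref{alg-fun} and \ruleref{alg-tfun} I would use the fact that substitution commutes with extending the context by a fresh binder, so the inductive call produces a judgement in the context $\G_1, [X := S](\G_2, x : U_2)$ or $\G_1, [X := S](\G_2, Y <: S_2)$ as required.

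For the \ruleref{alg-capt} case, with $T_1 = \tCap{C_1}{S_1}$ and $T_2 = \tCap{C_2}{S_2}$, type substitution leaves capture sets unchanged on their term-variable members, but we still need Lemma \ref{lemma:type-subst-subcapturing} to re-home the subcapturing derivation into the substituted context $\G_1, [X := S]\G_2$. The side condition $\sub{\G_1}{P}{R}$ in that lemma is discharged by taking $P = R = S$ and applying reflexivity of subcapturing. The IH then handles the shape-subtyping premise, and \ruleref{alg-capt} recombines them.

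The main obstacle is the \ruleref{alg-tvar} case, which requires a case split on whether the variable being widened equals $X$. Suppose $T_1 = Y$ with $Y <: R \in \theenv$ and $\subAlgo{\theenv}{R}{T_2}$. If $Y \neq X$, then $Y <: R$ lives either in $\G_1$ (in which case well-formedness of $\theenv$ ensures $X \notin \fv{R}$, so $[X := S]R = R$ and the binding is preserved verbatim) or in $\G_2$ (in which case the substituted binding $Y <: [X := S]R$ appears in $[X := S]\G_2$); either way, the IH and \ruleref{alg-tvar} conclude. If $Y = X$, then $R = S$ and we have $\subAlgo{\theenv}{S}{T_2}$; the IH gives $\subAlgo{\G_1, [X := S]\G_2}{[X := S]S}{[X := S]T_2}$, and since $S$ is bound in $\G_1$ the well-formedness of $\theenv$ forces $X \notin \fv{S}$, hence $[X := S]S = S = [X := S]X$, delivering the goal directly without needing \ruleref{alg-tvar} at all.

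A small subtlety I would flag is that the substituted context $\G_1, [X := S]\G_2$ must remain well-formed, which follows because substituting $X$ by $S$ in any type well-formed under $\G_1, X <: S$ yields a type well-formed under $\G_1$; this preservation of well-formedness is used implicitly whenever we invoke weakening or the IH on subderivations in extended contexts. No new machinery beyond the substitution lemma for subcapturing and the well-formedness invariant is needed, so the argument goes through by straightforward induction once the \ruleref{alg-tvar} case is handled with the above case split.
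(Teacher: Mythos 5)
Your proposal is correct and follows essentially the same route as the paper's proof: induction on the subtyping derivation, dispatching the structural cases by the IH and re-application of the same rule, using Lemma \ref{lemma:type-subst-subcapturing} for \ruleref{alg-capt}, and handling \ruleref{alg-tvar} by the same three-way case split on where $Y$ is bound, with well-formedness forcing $[X := S]S = S$ in the $Y = X$ case.
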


\renewcommand{\theenv}{\G_1, X <: S, \G_2}

\begin{proof}
	Proof proceeds by induction on the subtyping derivation.
  \begin{itemize}
  \item \emph{Case \ruleref{alg-refl}, \ruleref{alg-top}}.
    By the same rule.

  \item \emph{Case \ruleref{alg-capt}}.
    Then $T_1 = \tCap{C_1}{S_1}$ and $T_2 = \tCap{C_2}{S_2}$.
    This case can be concluded by Lemma \ref{lemma:type-subst-subcapturing}, the IH and the same rule.

  \item \emph{Case \ruleref{alg-boxed}, \ruleref{alg-fun} and \ruleref{alg-tfun}}.
    Prove by the IH and the re-application of the same rule.

  \item \emph{Case \ruleref{alg-tvar}}.
    In this case $T_1 = Y$, $Y <: S_1 \in \theenv$ and $\subAlgo{\theenv}{S_1}{T_2}$.
    Proceed with a case analysis on where $Y$ is bound.
    \begin{itemize}
    \item \emph{Case $X = Y$}.
      In this case $S_1 = S$.
      The goal becomes $\subAlgo{\G_1, [X := S] \G_2}{S}{\fsubst X S T_2}$.
      By the IH, we can show that
      $\subAlgo{\G, \fsubst X S \G_2}{\fsubst X S S}{\fsubst X S T_2}$.
      By the wellformness of the environment we know that $S \in \G_1$ and $X$ does not appear in $S$.
      Therefore $[X := S] S = S$, thus concluding this case.
    \item \emph{Case $Y <: S_1 \in \G_1$}.
      In this case, by the well-formness we also know that $X$ does not appear in $S_1$,
      thus $[X := S] S_1 = S_1$.
      We can conclude by the IH and the \ruleref{alg-tvar} rule.

    \item \emph{Case $Y <: S_1 \in \G_2$}.
      In this case, $Y <: [X := S] S_1 \in [X := S] \G_2$.
      We can conclude with the IH and the \ruleref{alg-tvar} rule.
    \end{itemize}
  \end{itemize}
\end{proof}

\begin{lemma}[Subtype inversion: capturing type]
  \label{lemma:capturing-inv}
  If $\subAlgoDft{T_1}{\tCap{C_2}{S_2}}$,
  then $T_1 = \tCap{C_1}{S_1}$, $\subDft{C_1}{C_2}$ and $\subAlgoDft{S_1}{S_2}$.
\end{lemma}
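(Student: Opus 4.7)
The plan is to proceed by a single case analysis on the last rule used in the derivation of $\subAlgoDft{T_1}{\tCap{C_2}{S_2}}$, since the lemma is a syntactic inversion principle. No induction on height is needed because only one rule in Figure \ref{fig:algo-typing} can actually produce a judgement whose right-hand side has the form $\tCap{C_2}{S_2}$.

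First I would walk through each algorithmic subtyping rule and rule it out by inspecting the shape of its conclusion's right-hand side. The rule \ruleref{alg-refl} concludes with a type variable on the right, \ruleref{alg-top} with $\top$, \ruleref{alg-fun} with a $\forall(x{:}U)\,T$, \ruleref{alg-tfun} with a $\forall[X{<:}S]\,T$, and \ruleref{alg-boxed} with a $\tBox{T}$. None of these syntactic forms match $\tCap{C_2}{S_2}$. The rule \ruleref{alg-tvar} concludes with $\subAlgoDft{X}{S^\prime}$ where $S^\prime$ is a shape type, which again does not match the capturing form on the right, so it is ruled out as well.

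The only remaining case is \ruleref{alg-capt}, and its premises directly deliver the three pieces of the conclusion: $T_1$ must syntactically be of the form $\tCap{C_1}{S_1}$, together with $\subDft{C_1}{C_2}$ and $\subAlgoDft{S_1}{S_2}$. This gives the lemma immediately.

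The main potential obstacle is notational rather than mathematical: the background states the equivalence $S \equiv \tCap{\set{}}{S}$, which could blur the boundary between shape types and capturing types in the $T_1$ position. I expect to handle this by treating the rules strictly syntactically (which is consistent with how \ruleref{alg-capt} is stated and with how the algorithmic rules are organised elsewhere in the paper), in which case the case analysis above is exhaustive. If instead $T_1$ is a bare shape type $S_1$ understood via the equivalence, one simply reads it as $\tCap{\set{}}{S_1}$ and the subcapture conclusion $\subDft{\set{}}{C_2}$ is trivial by \ruleref{sc-set} applied vacuously.
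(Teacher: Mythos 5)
Your proposal is correct and takes essentially the same approach as the paper, which likewise inverts the last rule of the derivation and reads the three conclusions directly off the premises of \ruleref{alg-capt}. The only cosmetic difference is that the paper also admits \ruleref{alg-refl} as a second, trivially handled case (exactly the $S \equiv \tCap{\set{}}{S}$ degenerate reading you flag in your final paragraph) and phrases the case analysis as an induction even though no inductive hypothesis is ever used.
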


\begin{proof}
  Proof proceeds by induction on $\subAlgoDft{T_1}{\tCap{C_2}{S_2}}$,
  wherein only the \ruleref{alg-refl} and the \ruleref{alg-capt} cases
  are possible,
  and in both cases we conclude trivially.
\end{proof}




%

\begin{lemma}[Subtyping inversion: functions]
  \label{lemma:inv-subtype-forall}
	If $\subAlgoDft{S}{\tForall{x}{U}{T}}$, then
  either $S = X$, $\typUpDft{X}{\tForall{x}{U^\prime}{T^\prime}}$ and $\subAlgoDft{\tForall{x}{U^\prime}{T^\prime}}{\tForall{x}{U}{T}}$;
  or $S = \tForall{x}{U^\prime}{T^\prime}$ and $\subAlgoDft{\tForall{x}{U^\prime}{T^\prime}}{\tForall{x}{U}{T}}$.
\end{lemma}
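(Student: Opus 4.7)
The plan is to proceed by induction on the derivation of $\subAlgoDft{S}{\tForall{x}{U}{T}}$. I first observe that, since the right-hand side is a function type, only three of the algorithmic subtyping rules can have produced this judgement: \ruleref{alg-refl} applies only between identical type variables (ruled out here), \ruleref{alg-top}, \ruleref{alg-capt}, \ruleref{alg-tfun}, and \ruleref{alg-boxed} all produce right-hand sides of the wrong shape, so the only possibilities are \ruleref{alg-fun} and \ruleref{alg-tvar}.

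For the \ruleref{alg-fun} case, $S$ is already of the form $\tForall{x}{U^\prime}{T^\prime}$, so the second disjunct holds and the required subtyping is precisely the premise. This case is immediate.

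For the \ruleref{alg-tvar} case, $S = X$ with $X <: S^\prime \in \G$ and $\subAlgoDft{S^\prime}{\tForall{x}{U}{T}}$. I would apply the inductive hypothesis to this sub-derivation, obtaining two sub-cases. If the IH yields $S^\prime = \tForall{x}{U^\prime}{T^\prime}$ with $\subAlgoDft{\tForall{x}{U^\prime}{T^\prime}}{\tForall{x}{U}{T}}$, then a single use of \ruleref{widen-shape} (which fires because $S^\prime \notin \mathcal{X}$) gives $\typUpDft{X}{\tForall{x}{U^\prime}{T^\prime}}$, establishing the first disjunct. If instead the IH yields $S^\prime = Y$ for some type variable $Y$ together with $\typUpDft{Y}{\tForall{x}{U^\prime}{T^\prime}}$ and the subtyping, a single use of \ruleref{widen-tvar} chains to produce $\typUpDft{X}{\tForall{x}{U^\prime}{T^\prime}}$, and the required subtyping is inherited unchanged.

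The main obstacle is mostly bookkeeping: ensuring that the two variable-widening rules \ruleref{widen-shape} and \ruleref{widen-tvar} compose correctly when the IH keeps producing the type-variable disjunct, i.e., that a chain of type-variable bounds terminates in a bona fide function shape. This is handled automatically by the induction on the subtyping derivation, since each use of \ruleref{alg-tvar} strictly descends into a smaller sub-derivation, and the base case must eventually be \ruleref{alg-fun}. No narrowing or transitivity lemmas are required beyond the ones already proved, and the subtyping witness is threaded through verbatim, so no extra side conditions arise.
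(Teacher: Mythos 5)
Your proposal is correct and follows essentially the same route as the paper's proof: induction on the subtyping derivation, with \textsc{alg-fun} immediate and \textsc{alg-tvar} handled by the inductive hypothesis followed by a case split on whether the bound is itself a type variable, discharged via \textsc{widen-tvar} or \textsc{widen-shape} respectively. The only cosmetic difference is that the paper lists \textsc{alg-refl} as a (trivially dismissed) case while you rule it out up front; both treatments are fine.
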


\begin{proof}
	We proceed the proof by induction on $\subAlgoDft{S}{\tForall x U T}$.
  There are three possible cases: \ruleref{alg-refl}, \ruleref{alg-tvar} and \ruleref{alg-fun}.
  \begin{itemize}
    \item \ruleref{alg-refl}. Trivial.
    \item \ruleref{alg-tvar}.
    In this case, $S = X$, $X <: S_1 \in \G$ and $\subAlgoDft{S_1}{\tForall x U T}$.
    By invoking IH we know that either
    either $S_1 = X_1$, $\typUpDft{X_1}{\tForall{x}{U^\prime}{T^\prime}}$ and $\subAlgoDft{\tForall{x}{U^\prime}{T^\prime}}{\tForall{x}{U}{T}}$;
    or $S = \tForall{x}{U^\prime}{T^\prime}$ and $\subAlgoDft{\tForall{x}{U^\prime}{T^\prime}}{\tForall{x}{U}{T}}$.
    In the first case, we can show that $\typUpDft{X}{\tForall{x}{U^\prime}{T^\prime}}$.
    So the goal is proven.
    We can also conclude in the second case.

    \item \ruleref{alg-fun}.
    In this case
    $S = \tForall{x}{U^\prime}{T^\prime}$ and $\subAlgoDft{\tForall{x}{U^\prime}{T^\prime}}{\tForall{x}{U}{T}}$.
    So we conclude directly.
  \end{itemize}
\end{proof}


\begin{lemma}[Subtyping inversion: polymorphic functions]
  \label{lemma:inv-subtype-poly}
  If $\subAlgoDft{T_1}{\tTForall{X}{S_2}{U_2}}$, then
  either $T_1 = X$, $\typUpDft{X}{\tTForall{X}{S_1}{U_1}}$, and $\subAlgoDft{\tTForall{X}{S_1}{U_1}}{\tTForall{X}{S_2}{U_2}}$;
  or $T_1 = \tTForall{X}{S_1}{U_1}$ and $\subAlgoDft{\tTForall{X}{S_1}{U_1}}{\tTForall{X}{S_2}{U_2}}$.
\end{lemma}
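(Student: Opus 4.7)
The proof will parallel the structure of Lemma \ref{lemma:inv-subtype-forall} almost verbatim, proceeding by induction on the derivation of $\subAlgoDft{T_1}{\tTForall{X}{S_2}{U_2}}$. First I would observe that, since the codomain $\tTForall{X}{S_2}{U_2}$ is neither a type variable nor $\top$, the only rules that can conclude the judgement are \ruleref{alg-tvar} and \ruleref{alg-tfun}; rules such as \ruleref{alg-refl}, \ruleref{alg-top}, \ruleref{alg-fun}, \ruleref{alg-capt}, and \ruleref{alg-boxed} are ruled out by inspecting their conclusion shapes.

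In the \ruleref{alg-tfun} case, we have $T_1 = \tTForall{X}{S_1}{U_1}$ directly from the syntactic shape of the rule's conclusion, and the subtyping judgement $\subAlgoDft{\tTForall{X}{S_1}{U_1}}{\tTForall{X}{S_2}{U_2}}$ is the premise reassembled by reapplying \ruleref{alg-tfun}; this gives the second disjunct immediately.

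The \ruleref{alg-tvar} case is the main step. Here $T_1 = Y$ for some type variable $Y$ with $Y <: S \in \G$, and we have a smaller subderivation $\subAlgoDft{S}{\tTForall{X}{S_2}{U_2}}$. Applying the inductive hypothesis to this subderivation yields two subcases: either $S$ is itself a type variable $Z$ with $\typUpDft{Z}{\tTForall{X}{S_1}{U_1}}$ and the corresponding subtyping, or $S = \tTForall{X}{S_1}{U_1}$ already with the subtyping. In the first subcase, \ruleref{widen-tvar} applied to $Y <: Z$ and the recursive widening gives $\typUpDft{Y}{\tTForall{X}{S_1}{U_1}}$; in the second, \ruleref{widen-shape} applied to $Y <: \tTForall{X}{S_1}{U_1}$ gives the same widening (the side-condition $S \notin \mathcal{X}$ is satisfied since $\tTForall{X}{S_1}{U_1}$ is not a type variable). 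Either way, the first disjunct of the conclusion holds with $T_1 = Y$.

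The only subtlety worth flagging is bookkeeping around the widening judgement $\typUpDft{\cdot}{\cdot}$: we must be careful that the two forms of the widening conclusion produced by the two inductive subcases both route through the correct clause of \ruleref{widen-shape} versus \ruleref{widen-tvar}. I expect no other obstacles; the proof is essentially a mechanical transliteration of Lemma \ref{lemma:inv-subtype-forall} from term-abstraction types to type-abstraction types.
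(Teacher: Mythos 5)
Your proposal is correct and follows essentially the same route as the paper: induction on the subtyping derivation, with the \textsc{alg-tfun} case immediate and the \textsc{alg-tvar} case handled by the inductive hypothesis followed by reassembling the widening judgement via \textsc{widen-tvar} or \textsc{widen-shape}. Your treatment is in fact slightly more explicit than the paper's (which lumps the non-\textsc{tvar} cases together as ``trivial'' and does not spell out which widening rule applies in each subcase), but there is no substantive difference.
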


\begin{proof}
	Proof proceeds with induction on the subtyping judgement.
  \begin{itemize}
    \item \emph{Case \ruleref{alg-refl} and \ruleref{alg-fun}.}
      This two cases can be concluded trivially.

    \item \emph{Case \ruleref{alg-tvar}.}
    In this case $T_1 = X$, $X <: S \in \G$ and $\subAlgoDft{S}{\tTForall{X}{S_2}{U_2}}$.
    By IH we either
    know $S = Y$, $\typUpDft{Y}{\tTForall{X}{S_1}{U_1}}$, and $\subAlgoDft{\tTForall{X}{S_1}{U_1}}{\tTForall{X}{S_2}{U_2}}$;
    or $S = \tTForall{X}{S_1}{U_1}$ and $\subAlgoDft{\tTForall{X}{S_1}{U_1}}{\tTForall{X}{S_2}{U_2}}$.
    In both cases we can derive $\typUpDft{X}{\tTForall{X}{S_1}{U_1}}$
    and also $\subAlgoDft{\tTForall{X}{S_1}{U_1}}{\tTForall{X}{S_2}{U_2}}$.
  \end{itemize}
\end{proof}

\begin{lemma}[Subtyping inversion: Boxed types]
	\label{lemma:inv-subtype-box}
  If $\subAlgoDft{T_1}{\tBox{U_2}}$, then
  either $T_1 = X$, $\typUpDft{X}{\tBox{U_1}}$ and $\subAlgoDft{\tBox{U_1}}{\tBox{U_2}}$;
  or $T_1 = \tBox{U_1}$ and $\subAlgoDft{\tBox{U_1}}{\tBox{U_2}}$.
\end{lemma}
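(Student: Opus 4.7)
The plan is to proceed by induction on the derivation of $\subAlgoDft{T_1}{\tBox{U_2}}$, mirroring the structure of the two preceding inversion lemmas (Lemma \ref{lemma:inv-subtype-forall} and Lemma \ref{lemma:inv-subtype-poly}). Since the right-hand side $\tBox{U_2}$ is a boxed type, only three rules in the algorithmic subtyping system can conclude such a judgement: \ruleref{alg-refl}, \ruleref{alg-tvar}, and \ruleref{alg-boxed}. The \ruleref{alg-top}, \ruleref{alg-fun}, \ruleref{alg-tfun}, and \ruleref{alg-capt} rules are ruled out by the shape of the right-hand side, and \ruleref{alg-refl} is ruled out since it only applies when both sides are the same type variable, whereas $\tBox{U_2}$ is not a type variable.

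In the \ruleref{alg-boxed} case, we have $T_1 = \tBox{U_1}$ for some $U_1$ and $\subAlgoDft{U_1}{U_2}$. By re-applying \ruleref{alg-boxed}, we obtain $\subAlgoDft{\tBox{U_1}}{\tBox{U_2}}$, matching the second disjunct of the conclusion. In the \ruleref{alg-tvar} case, we have $T_1 = X$ for some type variable $X$, with $X <: S \in \G$ and $\subAlgoDft{S}{\tBox{U_2}}$. Applying the IH to the latter judgement gives two subcases: either $S$ is itself a type variable $Y$ with $\typUpDft{Y}{\tBox{U_1}}$ and $\subAlgoDft{\tBox{U_1}}{\tBox{U_2}}$, or $S = \tBox{U_1}$ with the same subtyping relation. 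In the first subcase, chaining the binding $X <: Y$ with the widening of $Y$ via \ruleref{widen-tvar} yields $\typUpDft{X}{\tBox{U_1}}$; in the second, \ruleref{widen-shape} applied directly to $X <: \tBox{U_1}$ gives the same conclusion.

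The proof is entirely routine and follows the template of Lemmas \ref{lemma:inv-subtype-forall} and \ref{lemma:inv-subtype-poly} verbatim, with the only minor subtlety being the bookkeeping needed to collapse the two IH subcases into a single occurrence of $\typUpDft{X}{\tBox{U_1}}$ via the two widening rules. I do not anticipate a real obstacle here; the main point to get right is that \ruleref{alg-refl}, as defined, truly admits only type variables, so no separate case for $T_1 = \tBox{U_2}$ arising from reflexivity needs to be treated.
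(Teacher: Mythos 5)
Your proof is correct and follows essentially the same route as the paper: induction on the subtyping derivation, with \ruleref{alg-boxed} handled directly and \ruleref{alg-tvar} handled by the IH plus the two widening rules to recover $\typUpDft{X}{\tBox{U_1}}$. The only cosmetic difference is that you discharge \ruleref{alg-refl} as inapplicable (correctly, since the rule only relates type variables) while the paper simply labels that case ``trivial.''
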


\begin{proof}
	Proof proceeds by induction on the subtyping derivation.
  \begin{itemize}
  \item \emph{Case \ruleref{alg-refl} and \ruleref{alg-boxed}}. Trivial.

  \item \emph{Case \ruleref{alg-tvar}}.
    In this case, $T_1 = X$, $X <: S \in \G$ and $\subAlgoDft{S}{\tBox{U_2}}$.
    By IH we know that
    either $S_1 = Y$, $\typUpDft{Y}{\tBox{U_1}}$ and $\subAlgoDft{\tBox{U_1}}{\tBox{U_2}}$;
    or $S_1 = \tBox{U_1}$ and $\subAlgoDft{\tBox{U_1}}{\tBox{U_2}}$.
    In both cases $\typUpDft{X}{\tBox{U_1}}$ and $\subAlgoDft{\tBox{U_1}}{\tBox{U_2}}$.
  \end{itemize}
\end{proof}

\begin{lemma} \label{lemma:tvar-upcast-precise}
  $\forall \G$, if $\typUpDft{X}{S}$, then $\forall T$ that is not a type variable, $\subAlgoDft{X}{T}$ implies $\subAlgoDft{S}{T}$.
\end{lemma}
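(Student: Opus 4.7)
The plan is to proceed by induction on the derivation of $\typUpDft{X}{S}$, and in each case unfold the derivation of $\subAlgoDft{X}{T}$. The key observation is that when the source is a type variable $X$ and the target $T$ is \emph{not} a type variable, the only subtyping rules that can have fired at the root of $\subAlgoDft{X}{T}$ are \ruleref{alg-tvar} and \ruleref{alg-top}: the axiom \ruleref{alg-refl} forces the target to equal $X$, which is a type variable, and the remaining rules all require a non-variable source. This restricts the case analysis significantly.

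For the base case, the widening is derived by \ruleref{widen-shape}, so $X <: S \in \G$ with $S \notin \mathcal{X}$. Inverting $\subAlgoDft{X}{T}$ gives either \ruleref{alg-top} (then $T = \top$ and $\subAlgoDft{S}{\top}$ follows by \ruleref{alg-top} since $S$ is a shape type), or \ruleref{alg-tvar} with $X <: S' \in \G$ and $\subAlgoDft{S'}{T}$. By well-formedness of $\G$, $S' = S$, so $\subAlgoDft{S}{T}$ holds directly.

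For the inductive step, the widening is \ruleref{widen-tvar}: $X <: Y \in \G$ with $\typUpDft{Y}{S}$. Again inverting $\subAlgoDft{X}{T}$, the \ruleref{alg-top} subcase is handled as before. In the \ruleref{alg-tvar} subcase, the bound of $X$ is $Y$ by well-formedness, and we obtain $\subAlgoDft{Y}{T}$. The induction hypothesis applied to $\typUpDft{Y}{S}$ then yields $\subAlgoDft{S}{T}$, as required.

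The proof is short and mostly bookkeeping. The only mild subtlety is recognising up front why the hypothesis ``$T$ is not a type variable'' is essential: it excludes \ruleref{alg-refl} from the case analysis, which is the case where $X$ would have been allowed to equal $T$ without passing through its bound, breaking the conclusion. Once this is observed, no additional lemmas beyond well-formedness of the environment are needed.
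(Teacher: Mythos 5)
Your proof is correct and follows essentially the same route as the paper's: induction on the derivation of $\typUpDft{X}{S}$, inverting $\subAlgoDft{X}{T}$ in each case to find only \ruleref{alg-top} and \ruleref{alg-tvar} applicable (since $T$ is not a type variable, ruling out \ruleref{alg-refl}), concluding directly in the \ruleref{widen-shape} case and via the induction hypothesis in the \ruleref{widen-tvar} case. Your explicit remarks on why \ruleref{alg-refl} is excluded and on uniqueness of the bound in $\G$ are details the paper leaves implicit, but the argument is the same.
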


\begin{proof}
  Proof proceeds by induction on the upcasting relationship.

  \emph{Case \ruleref{widen-shape}}.
  We then invert the subtype judgement and only find two possible cases: \ruleref{alg-top} and \ruleref{alg-tvar}.
  The \ruleref{alg-top} case is trivial.
  In the \ruleref{alg-tvar} case we have $\subAlgoDft{S}{T}$ and conclude immediately.

  \emph{Case \ruleref{widen-tvar}}.
  We similarly invert the subtype judgement and find two cases.
  The \ruleref{alg-top} case is again trivial.
  For the \ruleref{alg-tvar} case we have $\subAlgoDft{Y}{T}$ and invoke IH to conclude.
\end{proof}

\begin{lemma} \label{lemma:sub-rhs-tvar-inv}
  If $\subAlgoDft{S}{X}$ then $S$ is a type variable.
\end{lemma}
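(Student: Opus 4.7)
The plan is to proceed by straightforward inversion on the derivation of $\subAlgoDft{S}{X}$. Since the right-hand side $X$ is a type variable, I inspect each algorithmic subtyping rule and ask which could produce a conclusion of the form $\subAlgoDft{\_}{X}$.

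Scanning the rules in Figure \ref{fig:algo-typing}: the rules \ruleref{alg-top}, \ruleref{alg-fun}, \ruleref{alg-tfun}, \ruleref{alg-capt}, and \ruleref{alg-boxed} all have a right-hand side whose outermost shape is $\top$, a term function, a type function, a capturing type, or a boxed type respectively, none of which is a type variable. Hence none of these rules is applicable. The only remaining rules are \ruleref{alg-refl} and \ruleref{alg-tvar}. In the \ruleref{alg-refl} case we have $S = X$, which is a type variable. In the \ruleref{alg-tvar} case the rule imposes that the left-hand side is itself a type variable, so $S$ is a type variable as well.

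There is no real obstacle here; the statement is a direct syntactic observation about the shape of the subtyping rules, made possible by the fact that \ccAlgo{} is syntax-directed (with transitivity inlined into \ruleref{alg-tvar}). The key point to flag is that it is precisely the inlining of transitivity that makes this inversion lemma hold: in the declarative system \cc{}, an arbitrary derivation $\subDft{S}{X}$ could pass through a \ruleref{trans} step with an intermediate type, so the analogous statement would not be immediate there.
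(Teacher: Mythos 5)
Your proof is correct and matches the paper's argument: a direct inversion on the derivation of $\subAlgoDft{S}{X}$, observing that only \ruleref{alg-refl} and \ruleref{alg-tvar} can produce a type variable on the right, and both force $S$ to be a type variable. (The paper's own case list mentions \ruleref{alg-top} where it presumably means \ruleref{alg-refl}; your version is the cleaner reading.)
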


\begin{proof}
  We prove by inverting the derivation.

  \emph{Case \ruleref{alg-top}}. Conclude immediately.

  \emph{Case \ruleref{alg-tvar}}.
  In this case we have $S = Y$ for some $Y$ and conclude.
\end{proof}

\begin{lemma} \label{lemma:sub-tvar-upcast-to-same}
  If $\subAlgoDft{X}{Y}$ and $\typUpDft{Y}{S}$ then $\typUpDft{X}{S}$.
\end{lemma}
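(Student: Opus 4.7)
The plan is to proceed by induction on the derivation of $\subAlgoDft{X}{Y}$. Since the right-hand side is a type variable, the applicable subtyping rules are limited: only \ruleref{alg-refl} and \ruleref{alg-tvar} can produce a judgement of the form $\subAlgoDft{X}{Y}$ (the \ruleref{alg-top} case is excluded because $Y \neq \top$, and the structural rules for functions, capturing, and boxed types do not apply).

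In the \ruleref{alg-refl} case we have $X = Y$, so the conclusion $\typUpDft{X}{S}$ is immediate from the second premise. In the \ruleref{alg-tvar} case we have $X <: S' \in \G$ for some shape type $S'$, together with $\subAlgoDft{S'}{Y}$. Here I would invoke Lemma \ref{lemma:sub-rhs-tvar-inv} on $\subAlgoDft{S'}{Y}$ to deduce that $S'$ must itself be a type variable, say $S' = Z$. Thus $X <: Z \in \G$ and $\subAlgoDft{Z}{Y}$. Applying the induction hypothesis to the smaller derivation $\subAlgoDft{Z}{Y}$ and the unchanged premise $\typUpDft{Y}{S}$ yields $\typUpDft{Z}{S}$. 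Finally, using $X <: Z \in \G$ together with $\typUpDft{Z}{S}$, the rule \ruleref{widen-tvar} concludes $\typUpDft{X}{S}$.

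The only mildly subtle step is the appeal to Lemma \ref{lemma:sub-rhs-tvar-inv} to force the intermediate bound $S'$ to be a type variable; without this, we could not apply the induction hypothesis since it requires a subtyping judgement between two type variables. Everything else is direct rule application, so I expect no essential obstacles once the induction variable is correctly chosen as the subtyping derivation (not the upcasting derivation, which would fail to give us useful information about $X$).
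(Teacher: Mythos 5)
Your proof is correct and follows exactly the same route as the paper's: induction on the subtyping derivation, with the \ruleref{alg-refl} case immediate and the \ruleref{alg-tvar} case using Lemma \ref{lemma:sub-rhs-tvar-inv} to force the intermediate bound to be a type variable before applying the induction hypothesis and \ruleref{widen-tvar}. No issues.
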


\begin{proof}
  Proof by induction on the subtype judgement.

  \emph{Case \ruleref{alg-refl}}. We conclude immediately.

  \emph{Case \ruleref{alg-tvar}}.
  In this case we have $X <: R \in \G$ and $\subAlgoDft{R}{Y}$.
  By Lemma \ref{lemma:sub-rhs-tvar-inv} we know that $R = Z$ for some $Z$.
  By IH we have $\typUpDft{Z}{S}$ and conclude by applying \ruleref{widen-tvar}.

  Other cases are not applicable here.
\end{proof}

\begin{lemma}[Term binding narrowing for widen variable typing] \label{lemma:term-narrowing-typ-up}
  $\forall \G$, if (1) $\subAlgoDft{T_1}{T_2}$ and (2) $\typAlg{\G, x: T_2, \Delta}{t}{U}$, then $\typUp{\G, x: T_1, \Delta}{y}{U^\prime}$ and $\subAlgoDft{U^\prime}{U}$.
\end{lemma}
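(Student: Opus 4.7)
The plan is to proceed by induction on the derivation of the widen variable typing (treating the statement's $t$ and $y$ as the same queried variable, which I read as a typographical slip). The easy part is the case in which the queried variable is distinct from $x$: the binding is unchanged under narrowing, so the induction hypothesis together with the same widening rule suffices, and I take $U' = U$, discharging the subtype obligation via Lemma~\ref{lemma:sub-algo-refl}.

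The interesting case is when the queried variable coincides with $x$. I first apply Lemma~\ref{lemma:capturing-inv} to invert $\subAlgoDft{T_1}{T_2}$, writing $T_i = \tCap{C_i}{S_i}$ with $\subDft{C_1}{C_2}$ and $\subAlgoDft{S_1}{S_2}$. Since \algorn{var} produces the self-type $\tCap{\{x\}}{S_i}$, the whole question reduces to how the shape widening changes when $S_2$ is weakened to $S_1$. I split on which widening rule was used in the wider context. If \ruleref{var-lookup} was used (so $S_2 \notin \mathcal{X}$) and $S_1$ is also not a type variable, I reapply \ruleref{var-lookup} directly and close with \ruleref{alg-capt} plus reflexivity of subcapturing. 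If \ruleref{var-widen} was used, then $S_2$ is a type variable $Y$ widened to some $S'$; here Lemma~\ref{lemma:sub-tvar-upcast-to-same} applied to $\subAlgoDft{S_1}{Y}$ (when $S_1$ is itself a type variable) immediately ports the widening of $Y$ to $S_1$, and when $S_1$ is concrete I use Lemma~\ref{lemma:tvar-upcast-precise} to relate the two widenings.

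The main obstacle is the residual subcase where \ruleref{var-lookup} was used in the wider context (so $S_2$ is concrete) but $S_1$ is a type variable $X$: the narrowed context now needs a genuine widening step that the wider context did not perform. To handle it, I plan to establish a small side lemma: whenever $\subAlgoDft{X}{T}$ holds with $T$ not a type variable, there exists $S'$ with $\typUpDft{X}{S'}$ and $\subAlgoDft{S'}{T}$. This is proved by induction on the subtyping derivation, which can only end in \ruleref{alg-top} (take $S' = \top$) or \ruleref{alg-tvar} (chase the binding $X <: S \in \G$, closing via \ruleref{widen-shape} when $S$ is concrete and recursing otherwise). With this helper I obtain $\typUp{\G, x{:}T_1, \Delta}{x}{\tCap{\{x\}}{S'}}$, and a final application of \ruleref{alg-capt} (using $\subAlgoDft{S'}{S_2}$ and subcapturing reflexivity on $\{x\}$) discharges the required subtype relation against the original $U$.
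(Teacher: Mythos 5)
Your overall structure matches the paper's: split on whether the queried variable is the narrowed one, and in the interesting case do a case analysis on whether the shape parts of $T_1$ and $T_2$ are type variables, closing the tvar/tvar subcase with Lemma~\ref{lemma:sub-tvar-upcast-to-same} and the remaining subcases with \ruleref{var-lookup} or a precision argument in the style of Lemma~\ref{lemma:tvar-upcast-precise}. (Your split on ``which widening rule was used'' is the same split as the paper's on ``is $T_2$ a type variable,'' and your worry about the subcase where $S_2$ is a type variable but $S_1$ is concrete is vacuous by Lemma~\ref{lemma:sub-rhs-tvar-inv}, so citing Lemma~\ref{lemma:tvar-upcast-precise} there is harmless even though that lemma would not apply.)

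There is, however, one step in your side lemma that does not go through as written. In the \ruleref{alg-top} case you ``take $S' = \top$,'' but $\typUpDft{X}{\top}$ is only derivable when the bound chain of $X$ actually terminates at $\top$; if, say, $X <: \tForall{x}{U}{T} \in \G$, then $X$ widens to that function type, not to $\top$, yet $\subAlgoDft{X}{\top}$ still holds. The correct witness is whatever concrete type $X$ actually widens to, followed by \ruleref{alg-top} to discharge $\subAlgoDft{S'}{\top}$ --- but the existence of that widening is not supplied by your induction on the subtyping derivation, since \ruleref{alg-top} is a leaf with no subderivation to recurse into. You need the separate (easy) fact that in a well-formed environment every type variable has a widening, proved by induction on the length of its bound chain; with that in hand your side lemma and the rest of the argument are fine, and indeed your packaging makes explicit an existence assumption that the paper's own appeal to Lemma~\ref{lemma:tvar-upcast-precise} leaves implicit.
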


\begin{proof}
  If the narrowed and typed variable is not the same one, then the typing is not affected and we conclude immediately.
  Otherwise, if the variable is narrowed, we do case analysis on what $T_2$ is.
  If $T_2$ is not a type variable, we know that $U = T_2$.
  If $T_1$ is not a type variable, we conclude immediately.
  If $T_1$ is a type variable we conclude with Lemma \ref{lemma:tvar-upcast-precise}.
  Otherwise if $T_2$ is a type variable, 
  we analyze the type of $T_1$.
  If $T_1$ is not a type variable again we conclude immediately.
  Otherwise, we conclude by Lemma \ref{lemma:sub-tvar-upcast-to-same}.
\end{proof}

\begin{lemma} \label{lemma:widen-to-tvar-absurd}
  If $\typUpDft{X}{S}$ then $S$ is not a type variable.
\end{lemma}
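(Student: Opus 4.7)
The plan is to proceed by straightforward induction on the derivation of $\typUpDft{X}{S}$, since the widening judgment has only two rules (\ruleref{widen-shape} and \ruleref{widen-tvar}) and the constraint we want is baked directly into the base case.

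In the \ruleref{widen-shape} case, the rule's premise explicitly requires $S \notin \mathcal{X}$, so $S$ is not a type variable and the conclusion follows immediately. In the \ruleref{widen-tvar} case, the derivation has the form $X <: Y \in \G$ with a subderivation $\typUpDft{Y}{S}$; applying the induction hypothesis to this subderivation gives that $S$ is not a type variable, which is exactly what we need.

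I do not anticipate any obstacle here: the widening relation was designed precisely so that it only stops at a non-type-variable bound, so the property is preserved structurally through every rule. The lemma is essentially a restatement of the side-condition $S \notin \mathcal{X}$ in \ruleref{widen-shape} propagated up through \ruleref{widen-tvar} by induction.
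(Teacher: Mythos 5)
Your proof is correct and follows exactly the same route as the paper's: induction on the widening derivation, with the \ruleref{widen-shape} case discharged by the side-condition $S \notin \mathcal{X}$ and the \ruleref{widen-tvar} case by the induction hypothesis. No issues.
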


\begin{proof}
  Proof proceeds by induction on the derivation. 
  For the \ruleref{widen-shape} case we conclude immediately.
  For the \ruleref{widen-tvar} we conclude by IH.
\end{proof}

\begin{lemma} \label{lemma:subcapt-indom}
  If $C \subseteq \dom{\G}$ and $\subDft{C^\prime}{C}$ then $C^\prime \subseteq \dom{\G}$.
\end{lemma}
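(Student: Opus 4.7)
The plan is to proceed by straightforward induction on the derivation of the subcapturing judgement $\subDft{C'}{C}$, considering each of the three rules in turn. The key observation is that every subcapturing rule either introduces singletons that are manifestly in $\dom{\G}$ (via a context lookup or via membership in $C$) or combines smaller derivations, so preservation of the containment $\subseteq \dom{\G}$ falls out immediately from the rule shapes.

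More concretely, I would argue as follows. In the \ruleref{sc-elem} case, $C' = \set{x}$ with $x \in C$, and since $C \subseteq \dom{\G}$ by assumption, $x \in \dom{\G}$. In the \ruleref{sc-var} case, $C' = \set{x}$ with $\G(x) = \tCap{C_0}{S}$; the very fact that the lookup succeeds gives $x \in \dom{\G}$ (the premise $\subDft{C_0}{C}$ is not even needed here). In the \ruleref{sc-set} case, $C' = \set{x_1, \dots, x_n}$ with $\subDft{\set{x_i}}{C}$ for each $i$; applying the inductive hypothesis to each singleton derivation yields $\set{x_i} \subseteq \dom{\G}$, and therefore $C' = \bigcup_i \set{x_i} \subseteq \dom{\G}$.

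I do not expect any real obstacle: the argument is purely syntactic and requires no auxiliary lemma. The only subtlety worth flagging is the status of the root capability $*$, which by the convention on well-formed capture sets sits in $\dom{\G} \cup \set{*}$ rather than in $\dom{\G}$ proper; however, $*$ can never be introduced on the left of a subcapturing judgement by any of the three rules above (it is not bound in $\G$, so \ruleref{sc-var} cannot produce it, and \ruleref{sc-elem} only introduces elements of $C \subseteq \dom{\G}$), so the induction goes through with the stated containment $\dom{\G}$ rather than $\dom{\G} \cup \set{*}$.
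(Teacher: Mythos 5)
Your proof is correct and follows essentially the same route as the paper's: induction on the subcapturing derivation, with \ruleref{sc-elem} discharged by $\set{x} \subseteq C \subseteq \dom{\G}$, \ruleref{sc-var} by the success of the context lookup, and \ruleref{sc-set} by the inductive hypothesis on each singleton. Your additional remark about the root capability $*$ is a reasonable extra precaution but not needed for the argument.
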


\begin{proof}
  Proof proceeds by induction on the derivation of the subcapturing judgement.

  \emph{Case \ruleref{sc-elem}}.
  We conclude immediately by $\set{x} \subseteq C \subseteq \dom{\G}$.

  \emph{Case \ruleref{sc-set}}.
  Conclude by IH.

  \emph{Case \ruleref{sc-var}}.
  Conclude immediately from the premise.
\end{proof}




\begin{proof}
  By induction on the derivations,
  wherein all cases can be concluded by IH and the same rule.
  Notably,
  the \ruleref{var} and \ruleref{tvar} case can be concluded due to the fact that
  the mentioned variables are not dropped from the environment.
\end{proof}

\begin{lemma}[Avoidance implies subtyping]
  \label{lemma:avoidance-sub}
  If $\G(x) = \tCap{C}{S}$ and $U = \avoidOp{x}{C}{T}$ then $\subDft{T}{U}$.
\end{lemma}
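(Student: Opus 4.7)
The plan is to proceed by induction on the structure of the type $T$, but with a strengthened statement that tracks variance, since the avoidance function approximates $x$ upwards at covariant occurrences (to $C$) and downwards at contravariant occurrences (to $\set{}$). Specifically, I would define two mutually recursive traversals $\mathsf{avoid}^+$ and $\mathsf{avoid}^-$ corresponding to covariant and contravariant positions, and prove simultaneously that
\[
  \subDft{T}{\mathsf{avoid}^+(x, C, T)} \qquad \text{and} \qquad \subDft{\mathsf{avoid}^-(x, C, T)}{T},
\]
from which the lemma follows by taking the covariant case at the top level.

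The key steps are as follows. First I would treat the capture-set case: if $T = \tCap{D}{S}$ at a covariant position, then $\mathsf{avoid}^+$ replaces $x \in D$ by $C$, yielding $D' = (D \setminus \{x\}) \cup C$. The inclusion $\subDft{D}{D'}$ follows from \ruleref{sc-var} applied to $x$ (using $\G(x) = \tCap{C}{S}$), together with \ruleref{sc-elem} and \ruleref{sc-set} for the remaining elements. At a contravariant position $x$ is approximated by $\emptyset$, so $D' = D \setminus \set{x}$ and we get $\subDft{D'}{D}$ trivially by \ruleref{sc-elem}/\ruleref{sc-set}. Combined with the inductive hypothesis on $S$ (which is covariant inside the capturing type), the \ruleref{alg-capt} rule — and hence the corresponding subtyping rule in \cc{} — closes the case. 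Second, the function case $T = \tForall{y}{U_1}{U_2}$ switches variance on the parameter: a covariant occurrence of $T$ calls $\mathsf{avoid}^-$ on $U_1$ and $\mathsf{avoid}^+$ on $U_2$. The two induction hypotheses give exactly $\subDft{\mathsf{avoid}^-(U_1)}{U_1}$ and $\subDft{U_2}{\mathsf{avoid}^+(U_2)}$, which combine via \ruleref{fun}. The polymorphic function case \ruleref{tfun} is analogous, and the boxed case \ruleref{boxed} is straightforward since box is covariant. The base cases ($X$, $\top$) are immediate from \ruleref{refl}/\ruleref{top}.

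The main obstacle I anticipate is the dependent function case. When the parameter $y$ appears in $U_2$ and we substitute $x$-avoidance under a new binding, we need to ensure that the avoidance still produces a well-formed type and that the inductive hypothesis for $U_2$ holds under the extended environment $\G, y : \mathsf{avoid}^-(U_1)$ rather than $\G, y : U_1$. This requires a narrowing step: having shown $\subDft{\mathsf{avoid}^-(U_1)}{U_1}$, I would invoke a narrowing lemma (analogous to Lemma~\ref{lemma:term-narrowing-sub-algo}, transported to the declarative system via Theorems~\ref{thm:completeness-of-algorithmic-subtyping} and \ref{thm:soundness-of-algorithmic-subtyping}) to move the inductive hypothesis on $U_2$ into the narrowed environment before applying \ruleref{fun}. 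Once this bookkeeping is in place, the rest of the induction goes through mechanically.
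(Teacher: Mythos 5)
Your proposal is correct and follows essentially the same route as the paper's proof, which is a one-line structural induction on $T$ closing each case with the corresponding subtyping rule. Your explicit variance-indexed strengthening ($\mathsf{avoid}^+$/$\mathsf{avoid}^-$ with the two-sided subtyping statement) is in fact what the paper's terse "all cases by IH and the corresponding rule" implicitly relies on for the contravariant function-argument position, so you have merely spelled out the necessary bookkeeping; the narrowing worry in the dependent case can be avoided by stating the induction hypothesis for an arbitrary well-formed environment containing the binding $x : \tCap{C}{S}$ and instantiating it directly under $\G, y : \mathsf{avoid}^-(U_1)$.
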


\begin{proof}
  We prove by structural induction on type $T$.
  All cases can be proven by IH and the corresponding rule.
\end{proof}

\begin{lemma} \label{lemma:narrowing-avoidance-subcapt}
  If $\subAlgo{\G, x: \tCap{C_1}{S}, \Delta}{D_1}{D_2}$ and $\subDft{C_1}{C_2}$,
  then for $D_i^\prime = \avoidOp{x}{C_i}{D_i}$ we have $\subAlgo{\G, x: \tCap{C_1}{S}, \Delta}{D_1^\prime}{D_2^\prime}$.
\end{lemma}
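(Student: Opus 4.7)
The plan is to first unfold the avoidance on capture sets: since a capture set contains only covariant occurrences of variables, $D_i^\prime$ reduces to $(D_i \setminus \set x) \cup C_i$ when $x \in D_i$, and to $D_i$ otherwise. Observe also that in the given environment the relation $\subDft{\set x}{C_1}$ holds by \ruleref{sc-var} on the binding $x : \tCap{C_1}{S}$.

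The core of the proof is an auxiliary claim: for any $y \neq x$, if $\subDft{\set y}{D_2}$ then $\subDft{\set y}{D_2^\prime}$. I would prove this by induction on the derivation. The \ruleref{sc-elem} case is immediate because $y \in D_2$ and $y \neq x$ force $y \in D_2^\prime$. In the \ruleref{sc-var} case, $y$ has some capture set $C_y$ with $\subDft{C_y}{D_2}$, and since $y \neq x$ its binding survives narrowing, so by \ruleref{sc-var} it suffices to show $\subDft{C_y}{D_2^\prime}$. That in turn decomposes via \ruleref{sc-set} into a check for each $z \in C_y$: for $z \neq x$ the induction hypothesis applies to a smaller subderivation, while for $z = x$ I would combine $\subDft{\set x}{C_1}$ with $\subDft{C_1}{C_2}$ and transitivity of subcapturing (Lemma \ref{lemma:transitivity-of-subcapturing}), using a sub-case on $x \in D_2$ to locate $C_1$ inside $D_2^\prime$---either through $C_2 \subseteq D_2^\prime$ when $x \in D_2$, or through the \ruleref{sc-var} factorization of the sub-derivation $\subDft{\set x}{D_2}$ (forced when $x \notin D_2$) which directly supplies $\subDft{C_1}{D_2} = \subDft{C_1}{D_2^\prime}$.

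Given the claim, the main statement follows by case analysis on whether $x \in D_1$. When $x \notin D_1$, each $y \in D_1^\prime = D_1$ is different from $x$ and the claim applies pointwise. When $x \in D_1$, the elements of $D_1^\prime$ come from $D_1 \setminus \set x$ (still covered by the claim) together with $C_1$, so it remains to prove $\subDft{C_1}{D_2^\prime}$; this follows from exactly the same sub-case analysis on $x \in D_2$ as in the previous paragraph. Combining the pointwise derivations via \ruleref{sc-set} yields $\subAlgo{\G, x : \tCap{C_1}{S}, \Delta}{D_1^\prime}{D_2^\prime}$.

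The main obstacle is the \ruleref{sc-var} case of the auxiliary claim, where occurrences of $x$ inside a nested capture set $C_y$ must be handled without losing information: one needs to extract the sub-derivation $\subDft{\set x}{D_2}$, inspect how it was justified, and in each branch find $C_1$ as a subcapture of $D_2^\prime$ so that the bound $\subDft{\set x}{C_1}$ can be composed with $\subDft{C_1}{C_2}$ through transitivity. Everything else is routine bookkeeping on the subcapturing rules.
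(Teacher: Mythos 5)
Your proof is correct and rests on the same engine as the paper's: induction on the subcapturing derivation, with the chain $\subDft{\set{x}}{C_1}$ (from \ruleref{sc-var} on the narrowed binding), the premise $\subDft{C_1}{C_2}$, and Lemma \ref{lemma:transitivity-of-subcapturing} doing the work wherever $x$ shows up. The organization differs slightly: the paper runs a direct induction over the three rules \ruleref{sc-elem}, \ruleref{sc-set}, \ruleref{sc-var} and disposes of \ruleref{sc-var} with a bare ``by the IH,'' whereas you factor out a pointwise claim for elements $y \neq x$ and handle the $x$-element separately with an explicit sub-case on whether $x \in D_2$. Your version is the more honest one at the delicate point --- the \ruleref{sc-var} case where $x$ occurs inside a nested capture set $C_y$, which is exactly where the IH alone does not literally give the goal and one must re-derive $\subDft{\set{x}}{D_2'}$ via transitivity --- so the extra structure buys a proof that actually checks. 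The one thing the paper's proof covers that yours waves away is the contravariant use of avoidance, where $x$ is approximated to $\set{}$ rather than $C_i$; the paper notes that case is trivial (the left-hand side degenerates to $\set{}$), and since the lemma statement parameterizes the avoidance by $C_i$ the covariant reading you adopt is the intended one, so this is a presentational rather than a substantive gap.
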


\begin{proof}
	By straightforward induction on the derivation.

  \emph{Case \ruleref{sc-elem}}.
  Then $D_1 = \set{y}$ and $y \in D_2$.
  If $x = y$ then
  the goal becomes $\subAlgo{\G, x : \tCap{C_2}{S}, \Delta}{C_1}{\fsubst{x}{C_2}D_2}$
  at covariant occurrences
  and is $\subAlgo{\G, x : \tCap{C_2}{S}, \Delta}{\set}{D_2 \setminus \set{x}}$
  at contravariant occurrences.
  The contravariant case is trivial,
  whereas the covariant case can be concluded by using the premise the weakening lemma.

  \emph{Case \ruleref{sc-set}}. By repeated IH.

  \emph{Case \ruleref{sc-var}}. By the IH.
\end{proof}

\begin{lemma} \label{lemma:narrowing-avoidance}
  If $\subAlgo{\G, x: \tCap{C_1}{S}, \Delta}{U_1}{U_2}$ and $\subDft{C_1}{C_2}$,
  then for $U_i^\prime = \avoidOp{x}{C_i}{U_i}$ we have $\subAlgo{\G, x: \tCap{C_1}{S}, \Delta}{U_1^\prime}{U_2^\prime}$.
\end{lemma}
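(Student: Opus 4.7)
The plan is to proceed by induction on the derivation of $\subAlgo{\G, x: \tCap{C_1}{S}, \Delta}{U_1}{U_2}$, in close parallel to the proof of Lemma \ref{lemma:narrowing-avoidance-subcapt} but now lifted from capture sets to full types. The intuition is that avoidance is a monotone operation on types: for covariant occurrences of $x$ it over-approximates with $C_i$, and for contravariant occurrences it under-approximates with $\emptyset$, so if the input types are already related by algorithmic subtyping, the avoided versions should remain so provided the covariant approximants are themselves comparable, which is exactly what $\subDft{C_1}{C_2}$ supplies.

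For the base cases \ruleref{alg-refl}, \ruleref{alg-top}, and \ruleref{alg-tvar}, the types involved are either $\top$ or type variables, neither of which mention the term variable $x$, so $U_i^\prime = U_i$ (modulo the recursive traversal of $\avoid$) and the result follows immediately from the same rule. For \ruleref{alg-boxed}, I invoke the IH on the inner type and reapply \ruleref{alg-boxed}. The key case is \ruleref{alg-capt}, where $U_1 = \tCap{D_1}{R_1}$ and $U_2 = \tCap{D_2}{R_2}$ with $\subDft{D_1}{D_2}$ and $\subAlgoDft{R_1}{R_2}$: the subcapturing half is dispatched by Lemma \ref{lemma:narrowing-avoidance-subcapt}, and the shape half by the IH, after which \ruleref{alg-capt} closes the case.

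The main obstacle is the function cases \ruleref{alg-fun} and \ruleref{alg-tfun}, because there the avoidance function flips variance when recursing into argument types. Writing $U_1 = \tForall{y}{V_1}{W_1}$ and $U_2 = \tForall{y}{V_2}{W_2}$, the avoidance of $U_i$ must approximate $V_i$ contravariantly (so $x$ gets replaced by $\emptyset$ there) and $W_i$ covariantly (so $x$ gets replaced by $C_i$). I would handle this by setting up a strengthened IH that takes a variance parameter, so that applying the IH contravariantly to the domain yields $\subAlgo{\G, x: \tCap{C_1}{S}, \Delta}{V_2^\prime}{V_1^\prime}$ even though $C_1 <: C_2$, while applying it covariantly to the codomain yields $\subAlgo{\G, x: \tCap{C_1}{S}, y: V_2^\prime, \Delta}{W_1^\prime}{W_2^\prime}$. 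The contravariant instance is the delicate point: there the approximation is to the empty set, which is independent of $C_i$, so both $V_i^\prime$ agree on their treatment of $x$ and the relation is preserved up to a routine application of the narrowing/weakening lemmas established earlier (Lemmas \ref{lemma:term-narrowing-sub-algo} and \ref{lemma:weakening-sub-algo}) to accommodate the change of binding for $y$. The \ruleref{alg-tfun} case is analogous, with Lemma \ref{lemma:weak-type-narrowing-sub-algo} handling the bound narrowing.

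Once the variance-indexed IH is in place, each case reduces to a mechanical combination of the IH, Lemma \ref{lemma:narrowing-avoidance-subcapt} for capture-set components, and the corresponding algorithmic subtyping rule; so the conceptual work is entirely contained in setting up the strengthened induction with the right contravariant/covariant behaviour of $\avoid$.
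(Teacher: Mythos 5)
Your proposal follows essentially the same route as the paper: induction on the subtyping derivation, dispatching \ruleref{alg-capt} via Lemma \ref{lemma:narrowing-avoidance-subcapt} and every other case by the IH plus the same rule. The one place you go beyond the paper is in flagging the variance flip in \ruleref{alg-fun}/\ruleref{alg-tfun} and proposing a variance-indexed strengthening of the IH; the paper's proof simply asserts these cases follow ``by IH and the same rule,'' glossing over the fact that the contravariant premise swaps the roles of the two types and that $\avoidOp{x}{\cdot}{\cdot}$ approximates to $\set{}$ there, so your strengthened induction is a legitimate (and arguably necessary) refinement of the argument rather than a different proof.
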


\begin{proof}
  By straightforward induction on the subtype derivation,
  wherein all cases but \ruleref{alg-capt} can be concluded by IH and the same rule.
  For the \ruleref{alg-capt} case, we use Lemma \ref{lemma:narrowing-avoidance-subcapt} to conclude.
\end{proof}

\begin{lemma}[Avoidance]
	Given any environment $\G, x : T$,
  and any $U''$ such that $x \notin \fv{U''}$ and $\sub{\G, x : T}{U}{U''}$,
  we have $\sub{\G}{U'}{U''}$.
  Here $U' = \avoidOp{x}{\cv{T}}{U}$.
\end{lemma}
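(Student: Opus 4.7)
The plan is to prove this by induction on the subtyping derivation $\sub{\G, x:T}{U}{U''}$, exploiting the fact that $x \notin \fv{U''}$ so the right-hand side is unaffected by avoidance, while on the left we systematically replace $x$ by $\cv{T}$ at covariant occurrences and by $\emptyset$ at contravariant ones. The heart of the argument is a corresponding lemma for subcapturing: if $\sub{\G, x:T}{C_1}{C_2}$ and $x \notin C_2$, then $\sub{\G}{[x := \cv{T}]\,C_1}{C_2}$ (and a fortiori $\sub{\G}{C_1 \setminus \{x\}}{C_2}$, which handles contravariant positions where we replace $x$ by $\emptyset$). This subcapturing fact follows from inverting \ruleref{sc-var}: any occurrence of $x$ on the left that is not absorbed by $x \in C_2$ must have been justified through $\sub{\G, x:T}{\cv{T}}{C_2}$, and since $\cv{T}$ is computed from $T$ in $\G$ it contains no occurrence of $x$, so by strengthening it also holds in $\G$.

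With that in hand, I would perform structural induction, but to keep the polarities straight it is cleanest to prove the statement together with a dual covariant/contravariant version. Concretely, I would establish simultaneously: (a) $\sub{\G, x:T}{U}{U''}$ with $x \notin \fv{U''}$ implies $\sub{\G}{\avoidOp{x}{\cv{T}}{U}}{U''}$; and (b) $\sub{\G, x:T}{U''}{U}$ with $x \notin \fv{U''}$ implies $\sub{\G}{U''}{\avoidOpContra{x}{U}}$, where $\avoidOpContra{}{}$ is the dual approximation (using $\emptyset$ at outer covariant positions and $\cv{T}$ at outer contravariant ones). The two halves then call each other at \ruleref{alg-fun} and \ruleref{alg-tfun} where polarity flips in the argument.

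The remaining cases are routine. The \ruleref{alg-refl}, \ruleref{alg-top} cases are immediate since $U''$ being $\top$ or a type variable $\neq X$ forces $U$ to already avoid $x$. For \ruleref{alg-capt}, I combine the subcapturing lemma above with the inductive hypothesis on the shape component. For \ruleref{alg-boxed}, I recurse directly. The \ruleref{alg-tvar} case uses the fact that type variable bindings in $\G$ cannot mention $x$ by well-formedness, so $\avoidOp{x}{\cv{T}}{Y} = Y$ and the derivation extends unchanged. For \ruleref{alg-fun} and \ruleref{alg-tfun}, after narrowing with Lemma \ref{lemma:term-narrowing-sub-algo}, I apply the two halves of the simultaneous induction to the domain and codomain respectively, then re-assemble via the same subtyping rule.

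The main obstacle is bookkeeping rather than any deep fact: verifying that the definition of $\textsf{avoid}$ traverses types in lockstep with the subtyping derivation, and that the polarity-flipping at function arrows matches the dual approximation used in the second half of the induction. A subtle point to handle carefully is the dependent function case $\tForall{y}{U_1}{U_2}$, where the bound variable $y$ must be avoided before we replace $x$ in $U_2$; since $y$ is fresh in $\G, x:T$ and does not occur in $\cv{T}$, the substitution $[x := \cv{T}]$ commutes with the traversal under the binder, and the narrowing lemma supplies the needed adjustment to the environment on the codomain.
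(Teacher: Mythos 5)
Your proposal follows the same route as the paper — induction on the subtyping derivation, closing each case with the IH and the corresponding rule — so it is essentially the paper's proof, which the paper states in a single line without elaboration. Your additional structure (the subcapturing auxiliary and, in particular, the simultaneous dual statement for contravariant positions, needed because $\textsf{avoid}$ approximates $x$ by $\cv{T}$ covariantly but by $\emptyset$ contravariantly at the domain of \textsc{alg-fun}/\textsc{alg-tfun}) is a correct and genuinely necessary elaboration of what the paper's one-liner leaves implicit; the only small inaccuracy is that the function cases call for a permutation/generalized induction hypothesis over environments of the form $\G_1, x{:}T, \G_2$ rather than narrowing, since the binder's type $U_2$ already avoids $x$.
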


\begin{proof}
  By induction on the subtype derivation,
  wherein all cases can be concluded by the IH and the same rule.
\end{proof}

\newcommand{\sandwichEnv}[1]{\G, x: {#1}, \Delta}

\begin{lemma}[Term binding narrowing for algorithmic typing]
	\label{lemma:narrowing-typ-algo}
  $\forall \G$, if (1) $\subAlgoDft{T_1}{T_2}$ and (2) $\typAlg{\G, x: T_2, \Delta}{t}{U}$, then $\typAlg{\G, x: T_1, \Delta}{t}{U^\prime}$ and $\subAlgo{\sandwichEnv{T_1}}{U^\prime}{U}$.
\end{lemma}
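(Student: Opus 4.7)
The plan is to proceed by induction on the typing derivation $\typAlg{\sandwichEnv{T_2}}{t}{U}$. Most cases dispatch to a corresponding inductive hypothesis plus a supporting narrowing lemma already established in the excerpt: Lemma~\ref{lemma:term-narrowing-sub-algo} for narrowing subtyping judgments, Lemma~\ref{lemma:term-narrowing-typ-up} for narrowing variable widening, Lemma~\ref{lemma:term-narrowing-subcapt} for subcapturing, and Lemma~\ref{lemma:narrowing-avoidance} for the avoidance function. For each recursive rule we additionally re-apply the corresponding typing rule and assemble the subtyping $\subAlgo{\sandwichEnv{T_1}}{U'}{U}$ from the subtyping facts returned by the IH, using Lemma~\ref{lemma:weakening-sub-algo} to transport $\subAlgoDft{T_1}{T_2}$ into the extended contexts used in \ruleref{alg-abs}, \ruleref{alg-tabs}, and \ruleref{alg-let}.

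The critical base case is \ruleref{alg-var} applied to $x$ itself. Here Lemma~\ref{lemma:capturing-inv} inverts $\subAlgoDft{T_1}{T_2}$: writing $T_2 = \tCap{C_2}{S_2}$, we obtain $T_1 = \tCap{C_1}{S_1}$ with $\subDft{C_1}{C_2}$ and $\subAlgoDft{S_1}{S_2}$. The narrowed derivation then yields $U' = \tCap{\set x}{S_1}$, and $\subAlgo{\sandwichEnv{T_1}}{\tCap{\set x}{S_1}}{\tCap{\set x}{S_2}}$ follows by \ruleref{alg-capt}, reflexivity of subcapturing on $\set{x}$ (via \ruleref{sc-elem}), and $\subAlgoDft{S_1}{S_2}$. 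The rules that rely on variable widening (\ruleref{alg-app}, \ruleref{alg-tapp}, \ruleref{alg-box}, \ruleref{alg-unbox}) appeal directly to Lemma~\ref{lemma:term-narrowing-typ-up}, which supplies a narrower widened type; we then use the subtyping inversion lemmas (\ref{lemma:inv-subtype-forall}, \ref{lemma:inv-subtype-poly}, \ref{lemma:inv-subtype-box}) to confirm that the narrowed type still has the required outer shape, and for \ruleref{alg-app} we close the dependent substitution step with Lemma~\ref{lemma:term-subst-subtyping}.

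The main obstacle is \ruleref{alg-let}. Applying the IH to the bound expression $s$ yields a possibly strictly narrower type $T'$ with $\subAlgo{\sandwichEnv{T_1}}{T'}{T}$; to invoke the IH on the body $t$ we must narrow the context at both the $x$-binding (from $T_2$ to $T_1$) and at the new let-binding (from $T$ to $T'$), which requires a small chaining argument plus Lemma~\ref{lemma:permutation}. The body then has some $U'' \le U'$ in the twice-narrowed context, and we must show that the avoidance step commutes with narrowing, so that $\avoidOp{x}{\cv{T'}}{U''}$ is a subtype of $\avoidOp{x}{\cv{T}}{U'}$ in the narrowed environment; this is exactly what Lemma~\ref{lemma:narrowing-avoidance} provides (using $\subDft{\cv{T'}}{\cv{T}}$, which is read off from $\subAlgoDft{T'}{T}$). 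Reapplying \ruleref{alg-let} with the two narrowed subderivations then yields the required typing and the desired subtyping on the result type, completing the induction.
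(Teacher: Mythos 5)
Your proposal is correct and follows essentially the same route as the paper's proof: induction on the typing derivation, dispatching to the auxiliary narrowing lemmas for subtyping, variable widening, and avoidance, handling \textsc{alg-app}/\textsc{alg-tapp} via the inversion lemmas and Lemma~\ref{lemma:term-subst-subtyping}, and closing \textsc{alg-let} with Lemma~\ref{lemma:narrowing-avoidance}. If anything, your treatment of the let case is more careful than the paper's, since you make explicit the second narrowing step (from the bound type $T$ to the narrower $T'$ at the let-binding) that the paper's proof elides.
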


\begin{proof}
  Proceeds by induction on the derivation of the typing judgement.

  \emph{Case \ruleref{alg-var}}.
  If the binding that gets narrowed is the same variable in the typing judgement,
  we conclude immediately from the premise.
  Otherwise, we also conclude immediately since the binding is not affected.

  \emph{Case \ruleref{alg-abs} and \ruleref{alg-tabs}}.
  Conclude by the IH, the same rule, \ruleref{alg-fun} and \ruleref{alg-tfun}.

  \emph{Case \ruleref{alg-app}}.
  By Lemma \ref{lemma:term-narrowing-typ-up} we have $\typUp{\G, x: T_1, \Delta}{x}{U_0}$ and 
  $\subAlgo{\sandwichEnv{T_1}}{U_0}{C\,\tForall{z}{U_1}{U_2}}$.
  By Lemma \ref{lemma:capturing-inv} and Lemma \ref{lemma:inv-subtype-forall} we have
  $U_0 = C^\prime\,\tForall{z}{U_1^\prime}{U_2^\prime}$,
  $\sub{\sandwichEnv{T_1}}{C^\prime}{C}$, $\subAlgo{\sandwichEnv{T_1}}{U_1}{U_1^\prime}$ and $\subAlgo{\sandwichEnv{T_1}, z: U_1}{U_2^\prime}{U_2}$.
  By IH we have $\typAlg{\sandwichEnv{T_1}}{t}{T_0}$ and $\subAlgo{\sandwichEnv{T_1}}{T_0}{T^\prime}$.
  By Lemma \ref{lemma:algorithmic-subtyping-transitivity} we have $\subAlgo{\sandwichEnv{T_1}}{T_0}{U_1^\prime}$.
  Therefore, we can type the result as $\tSubst{z}{y}{U_2^\prime}$.
  We can show that $\subAlgo{\sandwichEnv{T_1}}{\tSubst{z}{y}{U_2^\prime}}{\tSubst zy{U_2}}$
  by Lemma \ref{lemma:term-subst-subtyping}.

  \emph{Case \ruleref{alg-tapp}}. Similarly to \ruleref{alg-app}.

  \emph{Case \ruleref{alg-box}}.
  Conclude by IH, Lemma \ref{lemma:capturing-inv}, Lemma \ref{lemma:subcapt-indom} and the same rule.

  \emph{Case \ruleref{alg-unbox}}.
  By Lemma \ref{lemma:term-narrowing-typ-up}, Lemma \ref{lemma:capturing-inv} and Lemma \ref{lemma:inv-subtype-box} we can show that $\typUp{\sandwichEnv{T_1}}{x}{C_x^\prime\,\tBox{\tCap{C_0}{S_0}}}$,
  $\sub{\sandwichEnv{T_1}}{C_0}{C}$ and $\sub{\sandwichEnv{T_1}}{S_0}{S}$.
  By Lemma \ref{lemma:transitivity-of-subcapturing} we can show that $\sub{\sandwichEnv{T_1}}{C_0}{C}$.
  We can then apply the \ruleref{alg-unbox} rule can type the result as $\tCap{C_0}{S_0}$ and conclude.

  \emph{Case \ruleref{alg-let}}. 
  By IH we can show that $\typAlg{\sandwichEnv{T_1}}{s}{T^\prime}$ and $\subAlgo{\sandwichEnv{T_1}}{T^\prime}{T}$.
  We again invoke IH and show that $\typAlg{\sandwichEnv{T_1}, y: T}{t}{U_1}$
  and $\subAlgo{\sandwichEnv{T_1}, y: T}{U_1}{U}$.
  Let $U_1^\prime = \avoidOp{x}{\cv{T^\prime}}{U_1}$.
  By Lemma \ref{lemma:narrowing-avoidance} we can show that $\subAlgo{\sandwichEnv{T_1}}{U_1^\prime}{U^\prime}$ and conclude.
\end{proof}

\begin{theorem}[Completeness of algorithmic typing] \label{thm:typ-algo-completeness}
  If $\typDft{t}{T}$, then $\typAlgDft{t}{T^\prime}$ for some $T^\prime$ and $\subAlgoDft{T^\prime}{T}$.
\end{theorem}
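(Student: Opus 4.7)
The plan is to proceed by induction on the derivation of $\typDft{t}{T}$. The statement is crafted so that the algorithmic system is allowed to derive a \emph{more precise} type than the declarative one, and this slack is exactly what absorbs the declarative subsumption rule. Concretely, for the subsumption case $\typDft{t}{T}$ obtained from $\typDft{t}{T_0}$ together with $\subDft{T_0}{T}$, the IH gives $\typAlgDft{t}{T^\prime}$ with $\subAlgoDft{T^\prime}{T_0}$, and we conclude by Theorem \ref{thm:completeness-of-algorithmic-subtyping} applied to $\subDft{T_0}{T}$ followed by transitivity (Lemma \ref{lemma:algorithmic-subtyping-transitivity}).

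The structural cases are mostly mechanical. For \textsc{var}, the \ruleref{alg-var} rule produces $\tCap{\set{x}}{S}$, which is a subtype of the declarative $\tCap{C}{S}$ by \ruleref{sc-var} followed by reflexivity. For \textsc{box} and \textsc{unbox}, the IH combined with Lemma \ref{lemma:capturing-inv} and Lemma \ref{lemma:inv-subtype-box} provides the needed inversion, and the conclusion follows via the corresponding algorithmic rule together with Lemma \ref{lemma:subcapt-indom} to preserve well-formedness. For \textsc{abs} and \textsc{tabs}, the IH on the body yields a more precise body type; applying \ruleref{alg-abs}/\ruleref{alg-tabs} and then closing with \ruleref{alg-fun}/\ruleref{alg-tfun} gives the subtyping relation in the outer type, where one must check that the capture set $\cv{t}$ of the closure is unchanged and so the outer capture matches up directly.

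The interesting cases are the two application rules and the \textsc{let} rule. For \textsc{app}, the IH on the function variable produces a more precise type that is a subtype of $\tCap{C}{\tForall{z}{U}{T}}$; I would use Lemma \ref{lemma:capturing-inv} followed by Lemma \ref{lemma:inv-subtype-forall} to extract either a function type or a type variable that widens to one, re-establishing the shape needed by \ruleref{alg-app}. The IH on the argument gives a precise argument type, which is combined with the contravariant argument subtyping coming from inversion via Lemma \ref{lemma:algorithmic-subtyping-transitivity} to discharge the $\subAlgoDft{T^\prime}{T}$ premise of \ruleref{alg-app}. The resulting type is propagated through a substitution, handled by Lemma \ref{lemma:term-subst-subtyping}. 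The \textsc{tapp} case is analogous but uses Lemma \ref{lemma:inv-subtype-poly} and Lemma \ref{lemma:type-subst-subtyping}. For \textsc{let}, the IH gives precise types for both $s$ and $t$; narrowing the binding of $x$ by Lemma \ref{lemma:narrowing-typ-algo} realigns the derivation under the refined binder, and then the avoidance lemma supplies the least $x$-avoiding supertype so that $\avoidOp{x}{\cv{T^\prime}}{U^\prime}$ is a subtype of the declarative witness $U$.

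The principal difficulty, in my view, is orchestrating the application cases: the IH delivers a type one subtyping step below the declarative function type, but the algorithmic \ruleref{alg-app} rule expects a type variable upcast via $\typUpDft{x}{\cdot}$ rather than a subtype hypothesis, so one must repeatedly peel type-variable layers using Lemmas \ref{lemma:inv-subtype-forall}, \ref{lemma:inv-subtype-poly}, and \ref{lemma:tvar-upcast-precise} to land in the shape the algorithmic rule accepts, while simultaneously tracking the dependent substitution of $z$ by $y$ and preserving well-formedness. Once this bookkeeping is in place, transitivity of algorithmic subtyping closes each case uniformly.
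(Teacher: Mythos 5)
Your proposal follows essentially the same route as the paper's proof: induction on the declarative derivation, absorbing subsumption via Theorem \ref{thm:completeness-of-algorithmic-subtyping} and Lemma \ref{lemma:algorithmic-subtyping-transitivity}, using the inversion lemmas (Lemmas \ref{lemma:capturing-inv}, \ref{lemma:inv-subtype-forall}, \ref{lemma:inv-subtype-poly}, \ref{lemma:inv-subtype-box}) together with the substitution lemmas for the application cases, and narrowing plus avoidance for \textsc{let}. The argument and the choice of supporting lemmas match the paper's, so the proposal is correct as a sketch.
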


\begin{proof}
  The proof proceeds by induction on premise $\typDft{t}{T}$.
  \begin{itemize}
    \item Case \ruleref{var}.
    In this case we have $t = x$ for some $x$ and $\G(x) = \tCap C S$.
      Also, $T = \tCap{\set{x}}{S}$.
    We can apply the \ruleref{alg-var} rule to prove the goal.

    \item Case \ruleref{sub}.
    In this case we have $\typDft{t}{U}$ and $\subDft{U}{T}$ for some $U$.
    With the inductive hypothesis we know that $\typAlgDft t {U^\prime}$ for some $U^\prime$ and $\subAlgoDft{U^\prime}U$.
    By applying Theorem \ref{thm:completeness-of-algorithmic-subtyping} on $\subDft U T$ we have $\subAlgoDft U T$.
    By applying Lemma \ref{lemma:algorithmic-subtyping-transitivity} we can prove $\subAlgoDft{U^\prime}T$.

    \item Case \ruleref{abs}.
    In this case, $t = \tLambda{x}{U_1}{u}$, $T = \tForall{x}{U_1}{U_2}$ and $\typ{\extendG{x}{U_1}}{u}{U_2}$.
    By inductive hypothesis we know that $\typAlg{\extendG{x}{U_1}}{u}{U_2^\prime}$ and $\subAlgo{\extendG{x}{U_1}}{U_2^\prime}{U_2}$.
      By applying the rule in the algorithmic typing system we can prove that $\typAlgDft{t}{\cv{t}/x\,\tForall{x}{U_1}{U_2^\prime}}$.
    By applying the \ruleref{alg-fun} we can prove that $\subAlgoDft{\tForall{x}{U_1}{U_2^\prime}}{\tForall{x}{U_1}{U_2}}$.
    Then we conclude by applying \ruleref{alg-capt} and the reflexivity of subcapturing.

    \item Case \ruleref{tabs}.
    Here, $t = \tTLambda X S u$, $T = \tTForall{X}{S}{U}$, and $\typ{\extendGT{X}{S}}uU$.
    By inductive hypothesis we have $\typAlg{\extendGT X S}{u}{U^\prime}$ and $\subAlgo{\extendGT X S}{U^\prime}U$.
    We can show that $\typAlgDft{t}{\cv{t}\,\tTForall X S {U^\prime}}$ and $\subAlgoDft{\tTForall X S {U^\prime}}{\tTForall X S U}$.
      Now we conclude by applying \ruleref{alg-capt} and reflexivity of subcapturing.

    \item Case \ruleref{app}.
    Here, $t = x\,y$.
    By inductive hypothesis we know that $\typAlgDft{x}{U_1}$, $\typAlgDft{y}{U_2}$, $\subAlgoDft{U_1}{C\,\tForall{x}{U}{T}}$ and ${\subAlgoDft{U_2}{U}}$.
    We first apply Lemma \ref{lemma:capturing-inv} to show that $U_1 = \tCap{C_1}{S_1}$ and $\subDft{C_1}{C_2} \wedge \subAlgoDft{S_1}{\tForall x U T}$.
    Then, with Lemma \ref{lemma:inv-subtype-forall} we know that
    either $S_1 = X$, $\typUpDft{X}{\tForall{x}{U^\prime}{T^\prime}}$ and $\subAlgoDft{\tForall{x}{U^\prime}{T^\prime}}{\tForall{x}{U}{T}}$;
    or $S_1 = \tForall{x}{U^\prime}{T^\prime}$ and $\subAlgoDft{\tForall{x}{U^\prime}{T^\prime}}{\tForall{x}{U}{T}}$.
    In both cases, we can show that $\typUpDft{x}{\tForall{x}{U^\prime}{T^\prime}}$,
    and in both cases we have
    $\subAlgoDft{\tForall{x}{U^\prime}{T^\prime}}{\tForall{x}{U}{T}}$.
      By inverting this judgement we have $\subAlgoDft{U}{U^\prime}$ and $\subAlgo{\G, z: U}{T^\prime}{T}$.
    By applying Lemma \ref{lemma:algorithmic-subtyping-transitivity}, we can show that $\subAlgoDft{U_2}{U^\prime}$.
    We use the \ruleref{alg-app} rule to derive $\typAlgDft{x\,y}{[z := y] T^\prime}$.
      We can conclude by invoking Lemma \ref{lemma:term-subst-subtyping} on $\subAlgo{\G, z: U}{T^\prime}{T}$.

    \item Case \ruleref{tapp}.
    In this case $t = x [S]$.
    By IH we know that $\typAlgDft{x}{U_1}$ and $\subAlgoDft{U_1}{C\,\tTForall{X}{S}{T}}$.
    By Lemma \ref{lemma:inv-subtype-poly} we know that
    either $U_1 = X$, $\typUpDft{X}{\tTForall{X}{S^\prime}{T^\prime}}$, and $\subAlgoDft{\tTForall{X}{S^\prime}{T^\prime}}{\tTForall{X}{S}{T}}$;
    or $U_1 = \tTForall{X}{S^\prime}{U^\prime}$ and $\subAlgoDft{\tTForall{X}{S^\prime}{U^\prime}}{\tTForall{X}{S}{U}}$.
    In both cases we have $\typUpDft{x}{\tTForall{X}{S^\prime}{U^\prime}}$.
    And in both cases we can invert the subtype judgement.
    By applying the \ruleref{alg-tapp} we can derive that $\typAlgDft{x[S]}{[X := S] T^\prime}$.
    We conclude by invoking Lemma \ref{lemma:type-subst-subtyping} on $\subAlgo{\G, X <: S}{T^\prime}{T}$.

    \item Case \ruleref{box}.
    $t = \tBox{x}$.
    By IH we know that $\typAlgDft{x}{U_1}$ and $\subAlgoDft{U_1}{{\tCap {C_2} {S_2}}}$.
    By Lemma \ref{lemma:capturing-inv} we know that $U_1 = \tCap {C_1} {S_1}$, $\subDft{C_1}{C_2}$ and $\subAlgoDft{S_1}{S_2}$.
    With \ruleref{alg-box} we can derive that $\typAlgDft{t}{\tBox{\tCap{C_1}{S_1}}}$.
    We conclude with $\subAlgoDft{\tBox{\tCap{C_1}{S_1}}}{\tBox{\tCap{C_2}{S_2}}}$.

    \item Case \ruleref{unbox}.
    $t = \tUnbox{C}{x}$.
    By IH we know that $\typAlgDft{x}{U_1}$ and $\subAlgoDft{U_1}{\tBox{\tCap C {S}}}$.
    Now we invoke Lemma \ref{lemma:inv-subtype-box} and show that $\typUpDft{x}{\tBox{\tCap{C^\prime}{S^\prime}}}$ in both cases, where we have $\subAlgoDft{\tCap{C^\prime}{S^\prime}}{\tCap{C}{S}}$.
    Now we invoke the \ruleref{alg-unbox} rule to derive $\typAlgDft{x}{\tCap{C^\prime}{S^\prime}}$ and conclude.

    \item Case \ruleref{let}.
    $t = \tLet{x}{s}{t}$.
    By IH we know that $\typAlgDft{s}{T^\prime}$ and $\subAlgoDft{T^\prime}{T}$.
    Also $\typAlg{\extendG{x}{T}}{t}{U^\prime}$ and $\subAlgo{\extendG{x}{T}}{U^\prime}{U}$.
    By Lemma \ref{lemma:narrowing-typ-algo} we can show that $\typAlg{\extendG{x}{T^\prime}}{t}{U^\prime}$.
    By the \ruleref{alg-let} we can derive that $\typAlgDft{t}{U_a}$ where $U_a = \avoidOp{x}{\cv{T^\prime}}{U^\prime}$ given that $T^\prime = \tCap{C^\prime}{S^\prime}$.
    So the goal is to show that $\subAlgoDft{U_a}{U}$.
    We can conclude with the avoidance lemma.


  \end{itemize}
\end{proof}



\begin{lemma}
  \label{lemma:widen-to-general}
	If $\typUpDft{X}{S}$ then $\subDft{X}{S}$.
\end{lemma}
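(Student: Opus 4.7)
The plan is to proceed by straightforward induction on the derivation of $\typUpDft{X}{S}$. There are only two cases to consider, corresponding to the two widening rules \ruleref{widen-shape} and \ruleref{widen-tvar}, and both reduce immediately to standard declarative subtyping reasoning.

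In the \ruleref{widen-shape} case, I have $X <: S \in \G$ with $S$ not a type variable. The declarative \ruleref{tvar} rule directly gives $\subDft{X}{S}$, closing this case without further work.

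In the \ruleref{widen-tvar} case, I have $X <: Y \in \G$ together with a shorter derivation $\typUpDft{Y}{S}$. The induction hypothesis yields $\subDft{Y}{S}$, while the declarative \ruleref{tvar} rule applied to $X <: Y \in \G$ yields $\subDft{X}{Y}$. Composing these with the declarative transitivity rule \ruleref{trans} gives $\subDft{X}{S}$, as required.

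There is no real obstacle here: the argument is entirely routine once we observe that widening is just a syntax-directed chaining of \ruleref{tvar} lookups, and that the declarative subtyping has both \ruleref{tvar} and \ruleref{trans} available (in contrast to the algorithmic system, where transitivity is inlined and we would instead invoke Lemma~\ref{lemma:algorithmic-subtyping-transitivity}). Since the conclusion lives in the declarative system, we may use transitivity freely, and no inversion, substitution, or narrowing lemmas are needed.
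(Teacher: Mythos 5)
Your proof is correct and matches the paper's own argument exactly: induction on the widening derivation, with the \ruleref{widen-shape} case closed by the declarative \ruleref{tvar} rule and the \ruleref{widen-tvar} case closed by the induction hypothesis, \ruleref{tvar}, and \ruleref{trans}. No further comment is needed.
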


\begin{proof}
  Proof proceeds by induction on the derivation of the premise.
  \begin{itemize}
  \item \emph{Case \ruleref{widen-shape}}.
    Conclude with the \ruleref{tvar} rule.
  \item
    \emph{Case \ruleref{widen-tvar}}.
    Conclude with IH and \ruleref{trans}.
  \end{itemize}
\end{proof}

\begin{lemma} \label{lemma:up-to-typ}
	If $\typUpDft{x}{T}$ then $\typDft{x}{T}$.
\end{lemma}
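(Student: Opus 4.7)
The plan is to proceed by induction on the derivation of $\typUpDft{x}{T}$, which by the rules in Figure \ref{fig:var-typing} has only two cases: \ruleref{var-lookup} and \ruleref{var-widen}. In both cases, the strategy is to first produce a declarative typing judgement via the soundness of algorithmic typing (Theorem \ref{thm:typ-algo-soundness}), and then, if the head shape was widened through a type variable, recover the widened type by appealing to Lemma \ref{lemma:widen-to-general} together with a use of declarative subsumption.

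First, I would handle the \ruleref{var-lookup} case. Here $T = \tCap{C}{S}$ with $S \notin \mathcal{X}$, and the premise $\typAlgDft{x}{\tCap{C}{S}}$ is available directly. Applying Theorem \ref{thm:typ-algo-soundness} yields $\typDft{x}{\tCap{C}{S}}$, which is exactly the goal. This case is immediate.

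Next, I would handle the \ruleref{var-widen} case. In this case $T = \tCap{C}{S}$ is obtained from the premises $\typAlgDft{x}{\tCap{C}{X}}$ and $\typUpDft{X}{S}$. By Theorem \ref{thm:typ-algo-soundness} applied to the former, we get $\typDft{x}{\tCap{C}{X}}$. By Lemma \ref{lemma:widen-to-general} applied to the latter, $\subDft{X}{S}$. Then a single application of \ruleref{capt} (using reflexivity of subcapturing on $C$) gives $\subDft{\tCap{C}{X}}{\tCap{C}{S}}$, and one use of the declarative subsumption rule \ruleref{sub} concludes $\typDft{x}{\tCap{C}{S}}$.

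I do not anticipate a serious obstacle here: the lemma essentially says that widening is just a controlled use of subsumption, and both ingredients (soundness of algorithmic typing, and the fact that $\typUpDft{X}{S}$ is a derived form of subtyping from Lemma \ref{lemma:widen-to-general}) are already available. The only minor point worth being careful about is that the widening happens only at the head shape, so the capture set $C$ is unchanged across the subtyping step, which makes the \ruleref{capt} instance trivial rather than requiring any auxiliary subcapturing reasoning.
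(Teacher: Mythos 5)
Your case split and the substance of the \ruleref{var-widen} case match the paper's proof exactly: the paper likewise inverts the premise into the two rules, dispatches \ruleref{var-lookup} directly, and concludes \ruleref{var-widen} from Lemma \ref{lemma:widen-to-general} plus the declarative \ruleref{sub} rule (with the trivial \ruleref{capt} instance you describe). The one concrete problem is your appeal to Theorem \ref{thm:typ-algo-soundness} to discharge the $\typAlgDft{x}{\tCap{C}{S}}$ premises: in the paper's development that theorem is proved \emph{after} this lemma, and its \ruleref{alg-var} case is itself discharged by citing this lemma, so as written your argument is circular. The fix is immediate and costs nothing: for a variable $x$, the judgement $\typAlgDft{x}{\tCap{C}{S}}$ can only have been derived by \ruleref{alg-var}, i.e.\ $\G(x) = \tCap{C'}{S}$ and $C = \set{x}$, and the declarative \ruleref{var} rule then yields $\typDft{x}{\tCap{C}{S}}$ directly, with no detour through the general soundness theorem. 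With that substitution your proof is exactly the paper's.
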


\begin{proof}
  We invert the premise and get two cases.
  In the \ruleref{var-widen} case we conclude with Lemma \ref{lemma:widen-to-general} and \ruleref{sub} rule.
  In the \ruleref{var-lookup} we conclude with \ruleref{var}.
\end{proof}

\begin{theorem}[Soundness of Algorithmic Typing] \label{thm:typ-algo-soundness}
	If $\typAlgDft{t}{T}$ then $\typDft{t}{T}$.
\end{theorem}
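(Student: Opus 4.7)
The plan is to proceed by straightforward induction on the derivation of $\typAlgDft{t}{T}$, translating each algorithmic rule into a sequence of declarative-system rules. The declarative system has the same typing constructors plus the orthogonal rules \ruleref{sub} and the ability to freely combine them, so each algorithmic rule should be simulable once we undo the inlining. The three prior ingredients we need are Theorem \ref{thm:soundness-of-algorithmic-subtyping} (to convert algorithmic subtyping judgements into declarative ones), Lemma \ref{lemma:up-to-typ} (to turn $\typUpDft{x}{T}$ into $\typDft{x}{T}$, thereby undoing the algorithmic inlining of variable widening), and Lemma \ref{lemma:avoidance-sub} (to show that the avoidance result $U' = \avoidOp{x}{\cv{T}}{U}$ is a declarative supertype of $U$).

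The easy cases are \ruleref{alg-var}, \ruleref{alg-abs}, \ruleref{alg-tabs}, and \ruleref{alg-box}: they are structurally identical to their declarative counterparts \ruleref{var}, \ruleref{abs}, \ruleref{tabs}, \ruleref{box}, so an application of the induction hypothesis to the subderivations followed by the same rule suffices. For \ruleref{alg-unbox}, we use Lemma \ref{lemma:up-to-typ} on the premise $\typUpDft{x}{C_x\,\tBox{\tCap C S}}$ to get $\typDft{x}{C_x\,\tBox{\tCap C S}}$, then apply \ruleref{unbox} (the side condition $\subAlgoDft{C}{C'}$ carries over by the soundness of subcapturing, which is a special case of the subtyping soundness theorem).

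The application cases require a little more care. For \ruleref{alg-app}, Lemma \ref{lemma:up-to-typ} turns $\typUpDft{x}{\tCap{C}{\tForall{z}{T}{U}}}$ into a declarative typing; the IH gives $\typDft{y}{T'}$; Theorem \ref{thm:soundness-of-algorithmic-subtyping} turns $\subAlgoDft{T'}{T}$ into $\subDft{T'}{T}$; an application of \ruleref{sub} then gives $\typDft{y}{T}$, after which \ruleref{app} produces the desired $\typDft{x\,y}{\tSubst{z}{y}{U}}$. The case \ruleref{alg-tapp} is analogous, using \ruleref{tapp} instead.

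The main obstacle I expect is \ruleref{alg-let}, because here the algorithmic rule returns $U' = \avoidOp{x}{\cv{T}}{U}$ whereas the declarative \ruleref{let} rule simply requires that the body's type avoid $x$. By the IH I obtain $\typDft{s}{T}$ and $\typ{\extendG{x}{T}}{t}{U}$; Lemma \ref{lemma:avoidance-sub} gives $\sub{\extendG{x}{T}}{U}{U'}$, so \ruleref{sub} lifts the body typing to $\typ{\extendG{x}{T}}{t}{U'}$, and by construction $x \notin \fv{U'}$. The declarative \ruleref{let} then concludes $\typDft{\tLet{x}{s}{t}}{U'}$. A subtle point here is that Lemma \ref{lemma:avoidance-sub}'s soundness must actually hold in the extended environment $\extendG{x}{T}$ rather than just $\G$; if the statement as given is for $\G$ alone, a minor reformulation or a weakening step will be needed, but no deeper work is expected beyond invoking the already-established lemmas.
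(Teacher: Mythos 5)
Your proposal is correct and follows essentially the same route as the paper: induction on the algorithmic derivation, using the soundness of algorithmic subtyping, the lemma converting widened variable typing to declarative typing, and the avoidance lemma for the let case. The only (immaterial) difference is in the application case, where you apply \ruleref{sub} to widen the argument's type to the parameter type, whereas the paper instead narrows the function's parameter type contravariantly before applying \ruleref{app} — both are standard and equally valid.
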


\begin{proof}
	Proof proceeds by induction on the derivation of the typing judgement.
  \begin{itemize}
  \item \emph{Case \ruleref{alg-var}}.
    Conclude with Lemma \ref{lemma:up-to-typ}.

  \item
    \emph{Case \ruleref{alg-abs}, \ruleref{alg-tabs}, \ruleref{alg-box} and \ruleref{alg-unbox}}.
    Conclude with IH and the corresponding rule.

  \item
    \emph{Case \ruleref{alg-app}}.
    By Lemma \ref{lemma:widen-to-general} we have
    $\typDft{x}{C\,\tForall{x}{T}{U}}$.
    By Lemma \ref{thm:soundness-of-algorithmic-subtyping} we have
    $\subDft{T^\prime}{T}$.
    By \ruleref{alg-fun} and \ruleref{alg-sub} we have
    $\typDft{x}{C\,\tForall{x}{T^\prime}{U}}$.
    We conclude with \ruleref{app}.

  \item
    \emph{Case \ruleref{alg-tapp}}.
    As above.

  \item
    \emph{Case \ruleref{alg-let}}.
    We conclude by applying \ruleref{let} and Lemma \ref{lemma:avoidance-sub}.

  \end{itemize}
\end{proof}

\subsection{Proof of System \ccAdp{}}
\label{sec:cc-adp-proof}

\subsubsection{Properties of term normalisation}

\begin{lemma} \label{lemma:alg-typ-term-subst}
  If $\typAlg{\G_1, x: T, \G_2}{t}{U}$ and $\typAlg{\G}{y}{T}$ then $\typAlg{\G_1, [x := y]\G_2}{[x:=y]t}{[x:=y]U}$.
\end{lemma}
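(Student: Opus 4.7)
The plan is to proceed by structural induction on the typing derivation $\typAlg{\G_1, x: T, \G_2}{t}{U}$, performing a case analysis on the last rule applied. Most cases reduce to the induction hypothesis followed by a re-application of the same rule, provided a handful of auxiliary substitution lemmas are in place. Specifically, I expect to lean on Lemma~\ref{lemma:term-subst-subtyping} and Lemma~\ref{lemma:term-subst-subcapturing} to push the substitution through the subtype and subcapture premises appearing in \ruleref{alg-app}, \ruleref{alg-tapp}, \ruleref{alg-box}, and \ruleref{alg-unbox}, plus a weakening lemma to transport the typing of $y$ into the extended environment $\G_1, [x:=y]\G_2$.

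Before the main induction, I would first prove a companion substitution lemma for variable widening: if $\typUp{\G_1, x: T, \G_2}{z}{V}$, then $\typUp{\G_1, [x:=y]\G_2}{[x:=y]z}{[x:=y]V}$. The proof is by induction on the widening derivation, case-splitting on whether $z = x$; when $z = x$ one uses the second premise together with \ruleref{var-widen}/\ruleref{var-lookup}, otherwise the binding is unaffected. This companion lemma is what makes the \ruleref{alg-app}, \ruleref{alg-tapp}, and \ruleref{alg-unbox} cases go through. The \ruleref{alg-var} case itself splits on whether the variable is $x$: if so, weakening of $\typAlg{\G}{y}{T}$ suffices; otherwise the binding is preserved and the rule re-applies directly.

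The \ruleref{alg-abs} and \ruleref{alg-tabs} cases require checking that the capture set computed in the conclusion after substitution, namely $\cv{[x:=y]t} \setminus \{z\}$, agrees with $[x:=y](\cv{t} \setminus \{z\})$; this reduces to the routine commutation of $\cv{\cdot}$ with term substitution (using Definition~\ref{def:captured-variables}), together with Barendregt-style freshness of the bound name $z$. The \ruleref{alg-app} case additionally requires the identity $[x:=y][z:=y']U = [z:=[x:=y]y'][x:=y]U$ under the standard freshness assumption on $z$.

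The main obstacle will be the \ruleref{alg-let} case, because the result type is constructed by the avoidance function $\avoidOp{x'}{\cv{T'}}{U'}$, and we must show the substitution commutes with avoidance. Concretely, I would establish the auxiliary equation
\[
  \avoidOp{x'}{[x:=y]C}{[x:=y]U} \;=\; [x:=y]\,\avoidOp{x'}{C}{U},
\]
provided $x' \neq x$ and $x'$ is fresh for $y$, by structural induction on $U$ following the covariant/contravariant approximation scheme of \textsf{avoid}. With this in hand, the \ruleref{alg-let} case is closed by applying the induction hypothesis to both $s$ and $t$ (using Lemma~\ref{lemma:permutation} to accommodate the extension $x': T'$), invoking the commutation identity, and re-applying \ruleref{alg-let}. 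This avoidance-substitution interaction is the step I expect to be the most delicate, since it is sensitive to the precise definition of avoidance which is only sketched informally in the main text.
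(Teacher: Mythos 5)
Your plan is workable, but it is not the route the paper takes: the paper's entire proof of this lemma is a one-line appeal to the equivalence between \ccAlgo{} and \cc{} (Theorems \ref{thm:typ-algo-soundness} and \ref{thm:typ-algo-completeness}), arguing that term substitution is already established for the declarative system in the metatheory of \cite{Odersky2022ScopedCF} and therefore transfers ``for free.'' You instead carry out the induction directly inside the algorithmic system, with companion substitution lemmas for variable widening, for $\cv{\cdot}$, and for the avoidance function. The trade-off is real. The paper's shortcut is short but, read strictly, only yields the conclusion up to subtyping: completeness of algorithmic typing returns some $U'$ with $\subAlgoDft{U'}{[x:=y]U}$ rather than $[x:=y]U$ on the nose, so the ``for free'' argument proves a slightly weaker statement than the lemma as written --- one that happens to suffice at its only use site (the \eruleref{nl-rename} case of Lemma \ref{lemma:compact-let}, which immediately weakens by subtyping anyway). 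Your direct proof, if completed, would establish the exact statement, at the price of the avoidance/substitution commutation identity you correctly single out as the delicate step; since \textsf{avoid} is only sketched informally in the paper, that is considerably more work than the paper invests here, and you should be prepared for the possibility that the exact-type version forces you to weaken the conclusion to a subtype in some cases (e.g.\ where avoidance or the precise capture sets computed by \ruleref{alg-var} interact with the substitution), landing you back at the statement the paper actually uses.
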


\begin{proof}
  By Lemma \ref{thm:typ-algo-soundness} and \ref{thm:typ-algo-completeness} we know that the algorithmic system and the original system is equivalent.
  Therefore, we get this Lemma for free from the proof of the original system.
\end{proof}

\begin{lemma} \label{lemma:typ-up-to-prec}
  If $\typUpDft{x}{T}$ then $x : T^\prime \in \G$ and $\subDft{T^\prime}{T}$.
\end{lemma}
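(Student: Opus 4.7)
My plan is to proceed by straightforward induction on the derivation of $\typUpDft{x}{T}$. Inspecting Figure~\ref{fig:var-typing}, there are exactly two rules concluding this judgement, \ruleref{var-lookup} and \ruleref{var-widen}, giving two cases.

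In the \ruleref{var-lookup} case, the derivation ends with premises $\typAlgDft{x}{\tCap{C}{S}}$ and $S \notin \mathcal{X}$, yielding $T = \tCap{C}{S}$. Since the only rule in \ccAlgo{} that assigns a type to a variable is \ruleref{alg-var}, I invert that premise to recover the actual binding $\G(x) = \tCap{C_0}{S}$ with $C = \set{x}$. I then take $T^\prime = \tCap{C_0}{S}$, which is in $\G$ by construction. The required $\subDft{T^\prime}{T}$ reduces via \ruleref{capt} to reflexivity on the shape $S$ together with a subcapturing comparison between $C_0$ and $\set{x}$, which follows directly from \ruleref{sc-var} applied to $x$'s binding combined with reflexivity of subcapturing.

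In the \ruleref{var-widen} case, the derivation's premises are $\typAlgDft{x}{\tCap{C}{X}}$ and $\typUpDft{X}{S}$, with $T = \tCap{C}{S}$. Inverting \ruleref{alg-var} on the first premise gives $\G(x) = \tCap{C_0}{X}$ and $C = \set{x}$, so I pick $T^\prime = \tCap{C_0}{X}$. Applying Lemma~\ref{lemma:widen-to-general} to the second premise yields $\subDft{X}{S}$, and then \ruleref{capt} (combined, if needed, with \ruleref{trans} in the declarative system) glues this together with the subcapturing step from the previous case to produce $\subDft{T^\prime}{T}$.

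The only non-routine step is the subcapturing comparison between $C_0$ and the singleton $\set{x}$; this is the place where the peculiarity of the \ruleref{alg-var} rule (which always assigns the singleton capture set regardless of the binding) meets the declarative notion of subtyping. Once that step is discharged by \ruleref{sc-var} and reflexivity, both cases of the induction close by direct rule application, and the lemma follows.
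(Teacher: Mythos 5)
Your case split and the use of Lemma~\ref{lemma:widen-to-general} in the \ruleref{var-widen} case match the paper's (very terse) proof, but the step you yourself flag as ``the only non-routine step'' is where the argument breaks. Having inverted \ruleref{alg-var} to get $\G(x) = \tCap{C_0}{S}$ and $T = \tCap{\set{x}}{S}$, you take $T' = \tCap{C_0}{S}$ and need $\subDft{\tCap{C_0}{S}}{\tCap{\set{x}}{S}}$, i.e.\ $\subDft{C_0}{\set{x}}$. You claim this ``follows directly from \ruleref{sc-var} applied to $x$'s binding combined with reflexivity of subcapturing,'' but \ruleref{sc-var} derives judgements with the \emph{singleton on the left}: from $\G(x) = \tCap{C_0}{S}$ and $\subDft{C_0}{C_0}$ it yields $\subDft{\set{x}}{C_0}$, which is the opposite inclusion. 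The inclusion you actually need, $\subDft{C_0}{\set{x}}$, fails in general: well-formedness forces $x \notin C_0$, so for any $y \in C_0$ the \ruleref{sc-elem} route is closed, and e.g.\ $C_0 = \set{*}$ gives an immediate counterexample. So with your choice of $T'$ the required subtyping does not hold, and the same directional problem recurs in your \ruleref{var-widen} case.

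The paper's proof avoids this by, in effect, not insisting that $T'$ be the raw binding: in the \ruleref{var-lookup} case it concludes ``by reflexivity of subtyping,'' which only makes sense if $T'$ is taken to be the algorithmic type $\tCap{\set{x}}{S}$ of $x$ itself (equivalently, if the conclusion is read as ``$x$ admits a type $T'$ with $\subDft{T'}{T}$,'' which is also how the lemma is used in Lemmas~\ref{lemma:compact-let} and~\ref{lemma:compact-lam}). Under that reading the lookup case is literally reflexivity and the widen case is Lemma~\ref{lemma:widen-to-general} plus \ruleref{capt} with reflexive subcapturing. Your instinct that the literal phrase ``$x : T' \in \G$'' creates a mismatch between $C_0$ and $\set{x}$ is correct, but the mismatch cannot be discharged in the direction you need; the fix is to change which $T'$ you exhibit (or to read the statement as the paper's proof evidently does), not to appeal to \ruleref{sc-var}.
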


\begin{proof}
  Proceed by examining the derivation of $\typUpDft{x}{T}$.
  In the case of \ruleref{var-lookup},
  we can conclude by the reflexivity of subtyping.
  In the \ruleref{var-widen} case,
  we can invoke Lemma \ref{lemma:widen-to-general} to conclude.
\end{proof}

\begin{lemma} \label{lemma:compact-let}
  If $\typAlgDft{t}{T}$ and $t^\prime = \compactLet{t}$,
  then $\exists T^\prime, \typAlgDft{t^\prime}{T^\prime}$ and $\subDft{T^\prime}{T}$.
\end{lemma}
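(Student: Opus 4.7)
The approach is a case analysis on which clause of $\compactLet{\cdot}$ matches $t$; if no clause fires then $t' = t$ and the goal is immediate from Lemma \ref{lemma:sub-algo-refl}. In every productive case $t$ has the form $\tLet{y}{s}{u}$, so I would first invert \ruleref{alg-let} to extract $\typAlgDft{s}{U}$, $\typAlg{\G, y: U}{u}{V}$, and $T = \avoidOp{y}{\cv{U}}{V}$.

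For \textsc{nl-deref} ($u = y$, hence $t' = s$), inverting \ruleref{alg-var} forces $V = \tCap{\set{y}}{S_U}$ where $U = \tCap{C_U}{S_U}$; avoidance rewrites the covariant $\set{y}$ to $C_U$, so $T = U$ and taking $T' = U$ closes the goal via reflexivity. For \textsc{nl-rename} ($s = x$, hence $t' = u[y \mapsto x]$), \ruleref{alg-var} pins $U = \tCap{\set{x}}{S_x}$, after which Lemma \ref{lemma:alg-typ-term-subst} delivers $\typAlgDft{u[y \mapsto x]}{V[y \mapsto x]}$. The remaining obligation $\subDft{V[y \mapsto x]}{\avoidOp{y}{\set{x}}{V}}$ I would discharge by structural induction on $V$: at covariant capture-set occurrences of $y$ the two operations agree (both producing $\set{x}$); at contravariant occurrences, substitution yields $\set{x}$ whereas avoidance yields $\eset$, and combining $\eset <: \set{x}$ with the argument-side flip in \ruleref{alg-fun} supplies the missing direction.

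The \textsc{nl-box} case ($t = \tLet{y}{\tUnbox{C}{x}}{\tBox{y}}$ with $t' = x$) is where the real work lies. Chaining the inversions of \ruleref{alg-unbox} and \ruleref{alg-box} pins $U = \tCap{C_0}{S_0}$, $V = \tBox{\tCap{\set{y}}{S_0}}$, and hence $T = \tCap{\eset}{\tBox{\tCap{C_0}{S_0}}}$, whereas \ruleref{alg-var} derives $T' = \tCap{\set{x}}{\tBox{\tCap{C_0}{S_0}}}$ for $t'$. After peeling the boxed layer with a \ruleref{alg-boxed} step, the subtyping goal reduces to the subcapturing step $\set{x} <: \eset$ at the outermost capture set, which is not derivable for an arbitrary variable.

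The main obstacle is therefore this last subcapturing step, and it is where the bulk of the careful reasoning will go. My plan is to discharge it by exploiting invariants inherited from the contexts in which $\compactLet$ is actually invoked — namely, inside $\fEmbed{\cdot}$ calls from the \ruleref{ba-boxed}, \ruleref{ba-fun}, and \ruleref{ba-tfun} rules of \ccAdp{}, where the outer capture set of the adaptation's conclusion is left free through the inlined \ruleref{alg-capt} premise and can be chosen to absorb $\set{x}$. If a fully context-free statement is required, the lemma must instead be strengthened with a side condition restricting \textsc{nl-box} to unbox-of-box patterns produced by these rules (where the freshly-bound $y$'s capture set matches the boxed inner capture exactly), so that the extra $\set{x}$ appearing after normalisation is already accounted for by the ambient subcapturing obligation.
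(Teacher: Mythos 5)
Your case split and your handling of \textsc{nl-deref} and \textsc{nl-rename} coincide with the paper's own proof --- in fact you are more explicit than the paper, which for \textsc{nl-deref} merely observes that the freshness of $y$ makes avoidance a no-op, and for \textsc{nl-rename} invokes Lemma \ref{lemma:alg-typ-term-subst} followed by exactly the covariant/contravariant comparison of substitution against avoidance that you spell out. So for two of the three clauses you have reproduced the intended argument.

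For \textsc{nl-box} the paper's proof is a single line: after inverting the let it asserts $T = \tBox{\tCap{C'}{S}}$ and ``concludes by Lemma \ref{lemma:typ-up-to-prec}.'' That lemma only relates the declared type of $x$ to its widened type; it does not bridge the discrepancy you isolate, namely that \ruleref{alg-var} necessarily types the normalised term $x$ with outer capture set $\set{x}$, while the let's result type is the pure shape $\tBox{\tCap{C'}{S}} \equiv \tCap{\eset}{\tBox{\tCap{C'}{S}}}$, so the required $\subDft{\set{x}}{\eset}$ is not derivable for a general $x$. Your diagnosis is therefore correct: the lemma's \textsc{nl-box} case does not go through as stated, and the paper's cited justification does not repair it. Your proposed remedy --- pushing the extra $\set{x}$ obligation to the call sites, where Corollary \ref{lemma:compact} is consumed under \ruleref{ba-boxed} and \ruleref{ba-fun} with an outer capture set on the expected type that is only constrained by a subcapturing premise and can absorb $\set{x}$ --- matches how the result is actually used in Lemma \ref{lemma:adp-sub-soundness-1}, but it does mean the lemma (or at least its \textsc{nl-box} clause) must be restated rather than proved in its present context-free form. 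This is a defect you have surfaced in the paper, not a gap in your own reasoning.
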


\begin{proof}
	By case analysis on the $\compactLet{t}$ function.

  \emph{Case \eruleref{nl-box}.}
  Then $t = \tLet{y}{\tUnbox{C}{x}}{\tBox{y}}$,
  and $t^\prime = x$.
  By inverting the typing judgement $\typAlgDft{\tLet{y}{\tUnbox C x}{\tBox y}}{T}$ we know that $\typUpDft{x}{\tBox{\tCap {C^\prime} S}}$, $\typAlg{(\G, y : \tCap{C^\prime}{S})}{\tBox{y}}{\tBox{\tCap{C^\prime}{S}}}$, $\subDft{C}{C^\prime}$ and $T = \tBox{\tCap{C^\prime}{S}}$.
  We conclude by Lemma \ref{lemma:typ-up-to-prec}.

  \emph{Case \eruleref{nl-deref}.}
  In this case $t = \tLet{y}{s}{y}$.
  By inverting the typing judgement $\typAlgDft{t}{T}$, we have
  $\typAlgDft{s}{T}$ and $\typAlg{(\G, y: T)}{y}{T}$.
  Note that since $y$ is fresh, $y$ does not appear as a free variable in $T$ so avoidance has no effect.
  We can then conclude directly.

  \emph{Case \eruleref{nl-rename}.}
  In this case $t = \tLet{x}{y}{s}$.
  By inverting the typing judgement we have
  $\typAlgDft{x}{\tCapSet{y}{S}}$, $\typ{(\G, x: \tCapSet{y}{S})}{t}{U}$
  and $T = \avoidOp{x}{\set{y}}{U}$.
  By Lemma \ref{lemma:alg-typ-term-subst} we have
  $\typ{\G}{[x:=y]t}{[x:=y]U}$.
  Since in the covariant place the substitution will be identical and in contravariant place $\subDft{\eset{}}{\set{y}}$ we have $\subDft{[x:=y]U}{T}$.
\end{proof}

\begin{lemma} \label{lemma:compact-lam}
  If $\typAlgDft{t}{T}$ and $t^\prime = \compactLam{t}$,
  then $\exists T^\prime, \typAlgDft{t^\prime}{T^\prime}$ and $\subDft{T^\prime}{T}$.
\end{lemma}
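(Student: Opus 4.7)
The plan is to follow the same template as Lemma~\ref{lemma:compact-let}: case-split on which defining equation of $\compactLam{\cdot}$ applies, and in each case invert the given derivation $\typAlgDft{t}{T}$ and then reconstruct a derivation for the simplified term $t^\prime$. When none of the two listed equations fires, $\compactLam$ acts as the identity and the conclusion is immediate by Lemma~\ref{lemma:sub-algo-refl}. Only the two eta-reduction cases \textsc{nl-beta} and \textsc{nl-tbeta} require real work, and they are structurally analogous.

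For \textsc{nl-beta} I have $t = \tLambda{z}{T_0}{x\,z}$ and $t^\prime = x$ (with the implicit freshness convention $z \neq x$). Inverting \ruleref{alg-abs} yields $\typAlg{\G, z : T_0}{x\,z}{T_1}$ together with $T = \tCap{\cv{x\,z}\setminus\set{z}}{\tForall{z}{T_0}{T_1}}$; inverting \ruleref{alg-app} on the body then gives $\typUp{\G, z : T_0}{x}{\tCap{\set{x}}{\tForall{z''}{U}{V}}}$, the premise $\subAlgo{\G, z : T_0}{T_0}{U}$, and $T_1 = [z'' := z]V$. Because $x$ is bound in $\G$ strictly before $z$, the widened function type contains no occurrence of $z$, so after alpha-renaming the codomain of $T$ is $\tForall{z''}{T_0}{V}$, and its capture set collapses to $\set{x}$. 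I would then re-derive $\typAlgDft{x}{\tCap{\set{x}}{S_x}}$ using \ruleref{alg-var} (following the widening chain with iterated \ruleref{alg-tvar} if $x$'s recorded shape is a type variable), and close the case by combining \ruleref{alg-fun} on $\subAlgoDft{T_0}{U}$ with reflexive codomain subtyping, wrapping the result with \ruleref{alg-capt} at the matching capture set $\set{x}$, and finally appealing to Theorem~\ref{thm:soundness-of-algorithmic-subtyping} for the declarative conclusion.

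The \textsc{nl-tbeta} case is entirely symmetric, with \ruleref{alg-tabs}, \ruleref{alg-tapp}, and \ruleref{alg-tfun} in place of their term-level counterparts; the contravariant bound subtyping needed by \ruleref{alg-tfun} is directly supplied by the $\subAlgoDft{S^\prime}{S}$ premise of \ruleref{alg-tapp}. The main obstacle I anticipate is the widening step: when the recorded shape of $x$ is a type variable $X$, relating $\typAlgDft{x}{\tCap{\set{x}}{X}}$ to the function or polymorphic type delivered by $\typUp$ requires threading a sequence of \ruleref{alg-tvar} applications, essentially re-playing the argument behind Lemma~\ref{lemma:widen-to-general}. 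Provided one observes that this widening chain can be reconstructed in $\G$ alone — since $\typUp$ only consults bindings present when $x$ was introduced — both eta cases collapse to a single uniform argument.
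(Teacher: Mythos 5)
Your proposal is correct and follows essentially the same route as the paper's proof: case analysis on the \textsc{nl-beta}/\textsc{nl-tbeta} equations, inversion of the abstraction and application rules, alpha-conversion of the bound variable, contravariant function subtyping to absorb the $\subAlgoDft{T_0}{U}$ premise, and relating $x$'s environment binding to the widened function type. The only cosmetic difference is that the paper packages your hand-rolled widening chain as Lemma~\ref{lemma:typ-up-to-prec} and finishes by transitivity, rather than replaying \ruleref{alg-tvar} steps explicitly.
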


\begin{proof}
	Proof proceeds by case analysis on $\compactLam{t}$.

  \emph{Case \eruleref{nl-beta}.}
  In this case $t = \tLambda{z}{U}{x\,z}$,
  and $t^\prime = x$.
  Then we invert the typing judgement and get
  $\typUp{(\G, z: U)}{x}{\tForall{y}{U^\prime}{T^\prime}}$, $\subAlgoDft{U}{U^\prime}$ and
  $T = \tForall{z}{U}{[y:=z]T^\prime}$.
  With alpha conversion we have $T = \tForall{y}{U}{T^\prime}$.
  We conclude with the \textsc{Fun} rule.
  Now we conclude by Lemma \ref{lemma:typ-up-to-prec}
  and the transitivity of subtyping.

  \emph{Case \eruleref{nl-tbeta}.}
  As above.

\end{proof}

\begin{corollary} \label{lemma:compact}
  If $\typAlgDft{t}{T}$ and $t^\prime = \fEmbed{t}$,
  then $\exists T^\prime, \typAlgDft{t^\prime}{T^\prime}$ and $\subDft{T^\prime}{T}$.
\end{corollary}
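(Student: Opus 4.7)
The plan is to proceed by structural induction on $t$, following the four cases of $\fEmbed{\cdot}$. The \textsc{norm-return} case is immediate: $\fEmbed{t} = t$, so take $T^\prime = T$ and conclude by reflexivity of subtyping. The two lambda cases, \textsc{norm-fun} and \textsc{norm-tfun}, are essentially symmetric: first invert \ruleref{alg-abs} (respectively \ruleref{alg-tabs}) to expose the body typing, apply the inductive hypothesis to the body to obtain a more precise body type, reapply the lambda rule to type the non-compacted lambda with a type that is a subtype of the original (using \ruleref{alg-fun} or \ruleref{alg-tfun} on the shape, plus \ruleref{alg-capt} and reflexivity of subcapturing to handle the outer capture annotation), and finally invoke Lemma \ref{lemma:compact-lam} to dispose of the outer $\compactLam$. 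A minor subtlety is that the captured-variables set of the lambda is computed from the body via $\cv{\cdot}$, so normalizing the body may shrink this set; but the subcapture direction is compatible with the shrinking, and \ruleref{alg-capt} handles the rest.

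The real work is the \textsc{norm-let} case, where $t = \tLet{x}{s}{u}$ and $\fEmbed{t} = \compactLet{\tLet{x}{\fEmbed{s}}{\fEmbed{u}}}$. I would invert \ruleref{alg-let} on the premise to obtain $U, V$ with $\typAlgDft{s}{U}$, $\typAlg{\G, x: U}{u}{V}$, and $T = \avoidOp{x}{\cv{U}}{V}$. The inductive hypothesis on $s$ yields $\typAlgDft{\fEmbed{s}}{U^\prime}$ with $\subDft{U^\prime}{U}$. I then apply Lemma \ref{lemma:narrowing-typ-algo} to narrow the binding of $x$, obtaining $\typAlg{\G, x: U^\prime}{u}{V_0}$ with $\subAlgo{\G, x: U^\prime}{V_0}{V}$. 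A second use of the inductive hypothesis, now on $u$ under the narrowed environment, gives $\typAlg{\G, x: U^\prime}{\fEmbed{u}}{V^\prime}$ with $\subAlgo{\G, x: U^\prime}{V^\prime}{V_0}$; transitivity of algorithmic subtyping then gives $\subAlgo{\G, x: U^\prime}{V^\prime}{V}$. Reapplying \ruleref{alg-let} types $\tLet{x}{\fEmbed{s}}{\fEmbed{u}}$ as $T_0 = \avoidOp{x}{\cv{U^\prime}}{V^\prime}$; Lemma \ref{lemma:narrowing-avoidance} (combined with the subcapturing relation $\subDft{\cv{U^\prime}}{\cv{U}}$ which follows from $\subDft{U^\prime}{U}$) then yields $\subDft{T_0}{T}$. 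A final call to Lemma \ref{lemma:compact-let} on the outer $\compactLet$ produces $T^\prime$ with $\typAlgDft{\fEmbed{t}}{T^\prime}$ and $\subDft{T^\prime}{T_0}$, and one more use of transitivity closes the case.

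The main obstacle is the bookkeeping in the let case. The double application of the inductive hypothesis forces us to transport the typing of $u$ across an environment narrowing, and we must further verify that the avoidance of the more precise body type $V^\prime$ under the possibly smaller capture set $\cv{U^\prime}$ still lies below the original avoidance of $V$ under $\cv{U}$. This ultimately reduces to Lemma \ref{lemma:narrowing-avoidance}, but checking that its hypotheses are in place — in particular the subcapture relation $\subDft{\cv{U^\prime}}{\cv{U}}$ derived by inspecting the shape of $\subDft{U^\prime}{U}$ — is where a careless proof would slip. Everything else is standard rule-by-rule induction.
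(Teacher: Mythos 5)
Your proof is correct and matches what the paper intends: the corollary is meant to follow from Lemmas \ref{lemma:compact-let} and \ref{lemma:compact-lam} by exactly the structural induction you describe, with narrowing (Lemma \ref{lemma:narrowing-typ-algo}), avoidance monotonicity (Lemma \ref{lemma:narrowing-avoidance}), and transitivity discharging the let case. The only detail you leave implicit is the auxiliary fact that $\cv{\fEmbed{u}}$ subcaptures $\cv{u}$ (needed when re-deriving the capture-set annotation of the normalized lambda in the \textsc{norm-fun}/\textsc{norm-tfun} cases); it does hold, but it requires its own small induction over the normalisation equations.
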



\subsubsection{Metatheory of Box Adaptation}

\begin{wide-rules}

\textbf{Algorithmic subtyping with subcapturing inlined \quad $\subCaptDft{T}{U}$}

\begin{multicols}{2}

\infrule[\ruledef{algc-refl}]
  {\subDft{C}{C^\prime}}
  {\subCaptDft{\tCap{C}{S}}{\tCap{C^\prime}{S}}}

\infrule[\ruledef{algc-tvar}]
  {X <: S \in \G \\
   \subCaptDft{\tCap{C}{S}}{\tCap{C^\prime}{S^\prime}}}
  {\subCaptDft{\tCap{C}{X}}{\tCap{C^\prime}{S^\prime}}}

\infrule[\ruledef{algc-top}]
  {\subDft{C}{C^\prime}}
  {\subCaptDft{\tCap{C}{S}}{\tCap{C^\prime}{\top}}}

\infrule[\ruledef{algc-boxed}]
  {
    \subCaptDft{U_1}{U_2} \\
    \subDft{C_1}{C_2}
  }
  {\subCaptDft{C_1\,\tBox{U_1}}{\tCap{C_2}{\tBox{U_2}}}}

\end{multicols}

\begin{multicols}{2}

\infrule[\ruledef{algc-fun}]
  {\subCaptDft{U_2}{U_1} \\
    \subCapt{\extendG{x}{U_2}}{T_1}{T_2} \\
    \subDft{C}{C^\prime}
  }
  {\subCaptDft{C\,\tForall{x}{U_1}{T_1}}{C^\prime\,\tForall{x}{U_2}{T_2}}}

\infrule[\ruledef{algc-tfun}]
  {
    \subCaptDft{S_2}{S_1} \\
    \subCapt{\extendGT{X}{S_2}}{T_1}{T_2} \\
    \subDft{C}{C^\prime}
  }
  {\subCaptDft{C\,\tTForall{X}{S_1}{T_1}}{C^\prime\,\tTForall{X}{S_2}{T_2}}}

\end{multicols}

\caption{Algorithmic subtyping rules with subcapturing inlined}
  \label{fig:capt-sub}

\end{wide-rules}

To ease the development of the metatheory for box adaptation,
we make use of an auxiliary subtyping relation
which is derived from algorithmic subtyping by inling the \textsc{capt} rule.
The rules are presented in Figure \ref{fig:capt-sub}.
We begin by establishing its equivalence with algorithmic subtyping.

\begin{lemma}[Soundness of $\vdash_{C\rightarrow}$]
  \label{lemma:capt-sub-to-algo}
	If $\subCaptDft{T}{U}$ then $\subAlgoDft{T}{U}$.
\end{lemma}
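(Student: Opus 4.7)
The plan is to proceed by straightforward induction on the derivation of $\subCaptDft{T}{U}$. The auxiliary system is designed so that each of its rules is simply the composition of the corresponding algorithmic rule with \ruleref{alg-capt}, so in every case the goal will be obtained by invoking the induction hypothesis and then reassembling the derivation in \ccAlgo{}.

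For the base rules, \ruleref{algc-refl} follows from Lemma \ref{lemma:sub-algo-refl} together with \ruleref{alg-capt}, and \ruleref{algc-top} is immediate from \ruleref{alg-top} followed by \ruleref{alg-capt}. For \ruleref{algc-boxed}, the IH gives $\subAlgoDft{U_1}{U_2}$, from which \ruleref{alg-boxed} yields $\subAlgoDft{\tBox{U_1}}{\tBox{U_2}}$, and then \ruleref{alg-capt} combined with the subcapturing premise $\subDft{C_1}{C_2}$ closes the case. The function cases \ruleref{algc-fun} and \ruleref{algc-tfun} are analogous: the IH on the covariant and contravariant premises produces the algorithmic subtyping judgements needed to invoke \ruleref{alg-fun} (resp.\ \ruleref{alg-tfun}) on the underlying shape, after which \ruleref{alg-capt} adds the capture-set comparison back on top.

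The only case requiring a small inversion step is \ruleref{algc-tvar}. Here we have $X <: S \in \G$ and, by IH, $\subAlgoDft{\tCap{C}{S}}{\tCap{C^\prime}{S^\prime}}$. Since this derivation must end in \ruleref{alg-capt} (the top-level shape $S$ is not a type variable unless $S$ itself is one, in which case we can iterate, or equivalently invert the algorithmic judgement as done in Lemma \ref{lemma:capturing-inv}), we obtain $\subDft{C}{C^\prime}$ and $\subAlgoDft{S}{S^\prime}$. Then \ruleref{alg-tvar} applied with $X <: S \in \G$ gives $\subAlgoDft{X}{S^\prime}$, and a final application of \ruleref{alg-capt} with $\subDft{C}{C^\prime}$ delivers $\subAlgoDft{\tCap{C}{X}}{\tCap{C^\prime}{S^\prime}}$.

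I expect no serious obstacle in this proof; the main point is simply to invoke Lemma \ref{lemma:capturing-inv} in the \ruleref{algc-tvar} case to extract the component judgements from the IH before reassembling them through \ruleref{alg-tvar} and \ruleref{alg-capt}. The dual direction (completeness of $\vdash_{C\rightarrow}$ relative to $\vdash^{<:}$) will presumably use an analogous inversion-and-reassemble strategy, but that is a separate lemma.
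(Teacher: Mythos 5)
Your proposal is correct and follows essentially the same route as the paper: induction on the derivation of $\subCaptDft{T}{U}$, concluding each case by the IH, the corresponding shape-level rule, and \ruleref{alg-capt}. The extra inversion step you spell out for \ruleref{algc-tvar} (via Lemma \ref{lemma:capturing-inv}) is exactly the detail the paper's one-line proof leaves implicit, and it goes through as you describe.
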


\begin{proof}
	Proof proceeds by induction on the derivation of the subtyping judgement,
  wherein all cases can be concluded by applying the IH and the \textsc{capt} rule.
\end{proof}

\begin{lemma}[Completeness of $\vdash_{C\rightarrow}$]
  \label{lemma:algo-sub-to-capt}
  If $\subAlgoDft{T}{U}$ then $\subCaptDft{T}{U}$.
\end{lemma}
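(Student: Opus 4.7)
The plan is to prove this by induction on the derivation of $\subAlgoDft{T}{U}$, silently using the paper's convention $S \equiv \tCap{\set{}}{S}$ whenever a shape type needs to be viewed as a capturing type by an auxiliary rule. Every rule except \ruleref{alg-capt} has shape types (or a type variable on the left) at the conclusion, so for these I apply the induction hypothesis to the sub-premises and close with the matching auxiliary rule, instantiating both outer capture sets to $\set{}$; the extra subcapturing obligation $\subDft{\set{}}{\set{}}$ is then discharged trivially by reflexivity. Concretely, \ruleref{alg-refl}, \ruleref{alg-top}, \ruleref{alg-fun}, \ruleref{alg-tfun} and \ruleref{alg-boxed} translate directly to \ruleref{algc-refl}, \ruleref{algc-top}, \ruleref{algc-fun}, \ruleref{algc-tfun} and \ruleref{algc-boxed} respectively. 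In \ruleref{alg-tvar}, the induction hypothesis on the sub-premise $\subAlgoDft{S}{U}$ yields $\subCaptDft{\tCap{\set{}}{S}}{U}$, and \ruleref{algc-tvar} immediately produces $\subCaptDft{\tCap{\set{}}{X}}{U}$.

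The one genuinely interesting case is \ruleref{alg-capt}, where $T = \tCap{C_1}{S_1}$, $U = \tCap{C_2}{S_2}$, with $\subDft{C_1}{C_2}$ and $\subAlgoDft{S_1}{S_2}$. The induction hypothesis only delivers $\subCaptDft{\tCap{\set{}}{S_1}}{\tCap{\set{}}{S_2}}$, so I have to promote the empty outer capture sets to $C_1, C_2$. To that end I would first establish a small auxiliary \emph{capture-set replacement} lemma: if $\subCaptDft{\tCap{D_1}{S_a}}{\tCap{D_2}{S_b}}$ and $\subDft{D_1'}{D_2'}$, then $\subCaptDft{\tCap{D_1'}{S_a}}{\tCap{D_2'}{S_b}}$. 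This is a routine induction on its first premise: in each of the six rules of $\vdash_{C\rightarrow}$ the outer capture sets appear in exactly one subcapturing side-condition, which we simply swap with the new hypothesis $\subDft{D_1'}{D_2'}$; the only recursive case is \ruleref{algc-tvar}, where we first replace the capture sets on the inner premise $\subCaptDft{\tCap{D_1}{S}}{\tCap{D_2}{S_b}}$ by the inductive hypothesis and then reapply \ruleref{algc-tvar}. Feeding $D_1' := C_1$ and $D_2' := C_2$ into the replacement lemma closes the \ruleref{alg-capt} case.

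The main obstacle is purely organisational: one must keep the identification $S \equiv \tCap{\set{}}{S}$ consistent so that the empty capture sets produced by the induction hypothesis line up with what each auxiliary rule expects, and verify the recursive \ruleref{algc-tvar} case of the replacement lemma. Once that bookkeeping is in place, the proof is a mechanical unpacking of the fact that $\vdash_{C\rightarrow}$ is essentially $\vdash$ with the \ruleref{alg-capt} rule pushed down into every other constructor, so no deeper difficulty arises.
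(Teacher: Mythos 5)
Your proof is correct and follows the same top-level structure as the paper's: induction on the derivation of $\subAlgoDft{T}{U}$, with every case except \ruleref{alg-capt} closed by the IH and the corresponding $\vdash_{C\rightarrow}$ rule (instantiating the outer capture sets to $\set{}$ under the convention $S \equiv \tCap{\set{}}{S}$). The one place you diverge is the \ruleref{alg-capt} case. The paper handles it by \emph{inverting} the shape-level premise $\subAlgoDft{S_1}{S_2}$, applying the IH to that rule's sub-premises, and then reapplying the corresponding $\vdash_{C\rightarrow}$ rule directly with the outer sets $C_1, C_2$ in place; you instead apply the IH as a black box, obtain $\subCaptDft{\tCap{\set{}}{S_1}}{\tCap{\set{}}{S_2}}$, and promote the empty outer sets via a separate capture-set replacement lemma. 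Both routes are sound: your replacement lemma does go through by the routine induction you describe, since in every $\vdash_{C\rightarrow}$ rule the outer capture sets occur only in a single subcapturing side condition (recursively threaded through \ruleref{algc-tvar}). The paper's inversion avoids stating an extra lemma; your replacement lemma is a touch more machinery but is more modular and would be reusable wherever one needs to retarget the outer capture sets of a $\vdash_{C\rightarrow}$ derivation.
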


\begin{proof}
	Proof proceeds by induction on the subtype derivation.
  All cases but \textsc{capt} can be concluded directly from the IH and the corresponding rule.
  For the \textsc{capt} case, we can invert the subtyping judgement between shape types, apply the IH and corresponding rule again to conclude.
\end{proof}




\begin{lemma}[Weakening of box adaptation]
  \label{lemma:adp-weakening}
  If $\adpDft{x}{T}{t_x}{U}$,
  then $\adp{\G, \Delta}{x}{T}{t_x}{U}$.
\end{lemma}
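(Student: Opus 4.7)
The plan is to induct on the derivation of $\adpDft{x}{T}{t_x}{U}$, handling each box adaptation rule by appealing to the inductive hypothesis on smaller sub-derivations and invoking the previously established weakening lemmas on their non-adaptation premises (subcapturing and algorithmic subtyping). Since the adaptation rules have subcapturing premises (for instance in \ruleref{ba-refl}, \ruleref{ba-top}, \ruleref{ba-fun} and \ruleref{ba-tfun}) and a subtyping premise (in \ruleref{ba-tfun}), I would invoke Lemma~\ref{lemma:weakening-subcapt} and Lemma~\ref{lemma:weakening-sub-algo} on those premises to transport them from $\G$ to $\G, \Delta$. The well-formedness of $\G, \Delta$ is implicitly assumed, in line with the convention stated at the start of the proof section.

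The leaf cases \ruleref{ba-refl}, \ruleref{ba-top} follow directly from Lemma~\ref{lemma:weakening-subcapt} and re-application of the rule. The \ruleref{ba-tvar} case uses the fact that the binding $X <: S \in \G$ remains in $\G, \Delta$, together with the IH on the sub-derivation. The cases \ruleref{ba-boxed}, \ruleref{ba-box} and \ruleref{ba-unbox} are all concluded by the IH together with the side condition $C \subseteq \dom{\G} \subseteq \dom{\G, \Delta}$ and, where applicable, Lemma~\ref{lemma:weakening-subcapt}.

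The main obstacle is the function case \ruleref{ba-fun} (and, symmetrically, \ruleref{ba-tfun}), whose inner premise adapts under an extended environment, namely $\G, x: U_2, x': U_1$. Applying the IH to this sub-derivation with the weakening $\Delta$ yields $\adp{\G, x: U_2, x': U_1, \Delta}{z}{T'_1}{t_z}{T_2}$, whereas the conclusion of \ruleref{ba-fun} re-assembled in the weakened environment requires $\adp{\G, \Delta, x: U_2, x': U_1}{z}{T'_1}{t_z}{T_2}$. I would close this gap by invoking the permutation lemma (Lemma~\ref{lemma:permutation}), relying on the standard Barendregt convention that $x, x'$ are fresh with respect to $\Delta$ so that the resulting context is well-formed. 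The case \ruleref{ba-tfun} is handled analogously, with the extension $X <: S_2$ permuted past $\Delta$, and the bound-comparison premise $\subAlgoDft{S_2}{S_1}$ weakened via Lemma~\ref{lemma:weakening-sub-algo}.

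Finally, in each of \ruleref{ba-fun} and \ruleref{ba-tfun} the capture set premise of the form $\subDft{\substIn{x_f}{C}{\cv{\fEmbed{t_f}}}}{C^\prime}$ is weakened by Lemma~\ref{lemma:weakening-subcapt}, and re-applying the same rule in the extended context yields the goal. No new invariants are needed, and no subtle interaction with term normalisation arises, since $\fEmbed{t_f}$ depends only on the syntactic shape of $t_f$ and not on the ambient context.
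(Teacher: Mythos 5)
Your proposal is correct and follows essentially the same route as the paper, whose proof is simply ``by straightforward induction on the derivation; in each case conclude by the IH and the same rule'' --- you have just filled in the details (weakening of subcapturing/subtyping premises, permuting the freshly bound variables past $\Delta$ in the \textsc{ba-fun}/\textsc{ba-tfun} cases) that the paper leaves implicit. The only small caveat is that Lemma~\ref{lemma:permutation} is stated for algorithmic subtyping and typing rather than for adaptation subtyping, so strictly you would either extend it or instead apply the IH with the insertion point of $\Delta$ chosen before the new bindings; either fix is routine.
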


\begin{proof}
  By straightforward induction on the derivation.
  In each case we can conclude by the IH and the same rule.
\end{proof}

\begin{lemma}
  \label{lemma:box-adaptation-completeness-capt-sub}
  If
  $\subCaptDft{T}{U}$ then
  $\adpDft{x}{T}{x}{U}$.
\end{lemma}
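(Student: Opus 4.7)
The proof proceeds by induction on the derivation of $\subCaptDft{T}{U}$. The key idea is that whenever the auxiliary subtyping succeeds, the corresponding box adaptation derivation inserts only ``bureaucratic'' constructs (let-bindings for monadic normal form and eta-expansions), and these are exactly what term normalisation $\fEmbed{\cdot}$ is designed to eliminate.

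The base cases \ruleref{algc-refl} and \ruleref{algc-top} follow immediately from \ruleref{ba-refl} and \ruleref{ba-top}, which already return the input variable unchanged. For \ruleref{algc-tvar}, the induction hypothesis on the inner premise yields $\adpDft{x}{\tCap{C}{S}}{x}{\tCap{C^\prime}{S^\prime}}$, and \ruleref{ba-tvar} produces the required derivation with the same $x$ as its output.

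The interesting cases are \ruleref{algc-boxed}, \ruleref{algc-fun}, and \ruleref{algc-tfun}, where box adaptation introduces intermediate terms that must be shown to normalise back to $x$. For \ruleref{algc-boxed}, \ruleref{ba-boxed} constructs $t = \tLet{y}{\tUnbox{C_1}{x}}{\tLet{z}{t_y}{\tBox{z}}}$; by the IH we have $t_y = y$, so the inner let normalises by \eruleref{nl-rename} to $\tBox{y}$, and the outer let then matches \eruleref{nl-box} exactly and collapses to $x$. For \ruleref{algc-fun}, after applying the IH (with Lemma \ref{lemma:adp-weakening} for the extended environment in the second premise) we obtain $t_x = x$ and $t_z = z$, so the constructed term is $t_f = \tLambda{x}{U_2}{\tLet{x^\prime}{x}{\tLet{z}{x_f\,x^\prime}{z}}}$; the innermost let reduces by \eruleref{nl-deref}, the next by \eruleref{nl-rename}, and the lambda finally by \eruleref{nl-beta}, leaving $\fEmbed{t_f} = x_f$. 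The side condition $\subDft{\substIn{x_f}{C}{\cv{\fEmbed{t_f}}}}{C^\prime}$ then collapses to $\subDft{C}{C^\prime}$, which is provided by the \ruleref{algc-fun} premise. The case \ruleref{algc-tfun} is completely analogous, using \eruleref{nl-tbeta} in place of \eruleref{nl-beta}.

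The main obstacle is the bookkeeping in the function cases: one must check that the induction hypothesis can be applied in the extended environments introduced by \ruleref{ba-fun} and \ruleref{ba-tfun} (hence the appeal to weakening), that the branch of the case split on $t_x$ in \ruleref{ba-fun} picks $T'_1 = T_1$ so the IH's subtype premise is the right one, and that the normalisation cascade really fires in the order \eruleref{nl-deref}, \eruleref{nl-rename}, \eruleref{nl-beta}. None of these are conceptually hard, but they are the reason that $\fEmbed{\cdot}$ had to be engineered with exactly this set of reductions: without any one of them, the chain would stall partway and the adapted term would differ from $x$, breaking completeness.
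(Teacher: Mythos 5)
Your proposal is correct and follows essentially the same route as the paper's proof: induction on the $\subCaptDft{T}{U}$ derivation, immediate base cases, IH for \ruleref{algc-tvar}, and for the boxed/function/type-function cases an appeal to weakening (Lemma \ref{lemma:adp-weakening}) plus the observation that $\fEmbed{\cdot}$ collapses the constructed term back to the input variable. Your explicit tracing of which normalisation equations fire, and the note that the $t_x = x$ branch selects $T'_1 = T_1$, are details the paper leaves implicit but are consistent with it.
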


\begin{proof}
	Proof proceeds by induction on the derivation of the subtype judgement.

  \emph{Case \ruleref{algc-refl} and \ruleref{algc-top}.}
  These two cases follows directly from the corresponding rule in \ccAdp{}.

  \emph{Case \ruleref{algc-tvar}}.
  Conclude by using the IH and the corresponding rule in \ccAdp{}.

  \emph{Case \ruleref{algc-boxed}}.
  Then $T = \tCap{C_1}{\tBox{U_1}}$,
  $U = \tCap{C_1}{\tBox{U_2}}$,
  $\subDft{C_1}{C_2}$,
  and $\subCaptDft{U_1}{U_2}$.
  Let $t = \tLet{y}{\tUnbox{C_1}{x}}{\tLet{z}{y}{\tBox{z}}}$.
  By applying the IH,
  we can show that $\adpDft{y}{U_1}{y}{U_2}$.
  By the definition of $\fEmbed{t}$,
  we have $\fEmbed{t} = x$.
  We conclude by applying the \ruleref{ba-boxed} rule.

  \emph{Case \ruleref{algc-fun}}.
  Then $T = \tCap{C_1}{\forall(z : U_1)T_1}$,
  $U = \tCap{C_2}{\forall(z : U_2)T_2}$,
  $\subCaptDft{U_2}{U_1}$,
  $\subCapt{(\G, z : U_2)}{T_1}{T_2}$,
  and $\subDft{C_1}{C_2}$.
  Let $t = \lambda(z : U_2) \tLet{z^\prime}{z}{\tLet{y}{x\, z^\prime}{z}}$.
  By IH we have
  $\adp{\G}{z}{U_2}{z}{U_1}$
  and $\adp{(\G, z: U_2)}{y}{T_1}{y}{T_2}$.
  By Lemma \ref{lemma:adp-weakening},
  we can show that
  $\adp{\G, z: U_2, z': U_1}{y}{T_1}{y}{T_2}$.
  By the definition of $\fEmbed{t}$,
  we can show that
  $\fEmbed{t} = x$.
  Therefore $\fsubst x {C_1} \cv{\fEmbed{t}} = C_1$.
  We can conclude using the \ruleref{ba-fun} rule.

  \emph{Case \ruleref{algc-tfun}}.
  As above.
\end{proof}

\subadpcompleteness*

\begin{proof}
  Conclude by Lemma \ref{lemma:algo-sub-to-capt} and \ref{lemma:box-adaptation-completeness-capt-sub}.
\end{proof}

\newcommand{\exenv}{\G, \Delta}

\begin{lemma}  \label{lemma:adp-sub-soundness-1}
  If $\adpDft{x}{T}{t_x}{U}$ and $T, U$ are well-formed,
  then $\forall{\Delta}$, if $\typAlg{\G, \Delta}{x}{T}$ then $\typ{\exenv}{t_x}{U}$.
\end{lemma}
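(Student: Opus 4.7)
The plan is to proceed by induction on the derivation of $\adpDft{x}{T}{t_x}{U}$. The universal quantification over $\Delta$ is essential to the proof: in rules like \ruleref{ba-fun}, \ruleref{ba-boxed}, \ruleref{ba-box} and \ruleref{ba-unbox}, the inductive call is made under an extended environment, and threading $\Delta$ lets each IH application instantiate it appropriately. I will work in the declarative system \cc{} throughout (appealing to the equivalence with \ccAlgo{} established by Theorems \ref{thm:typ-algo-completeness} and \ref{thm:typ-algo-soundness}), so that \ruleref{sub} is freely available for widening capture sets and shape types.

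The base cases are immediate. For \ruleref{ba-refl} and \ruleref{ba-top}, $t_x = x$ and the conclusion follows by applying \ruleref{sub} with the capture-set premise $\subDft{C}{C'}$ together with reflexivity or \ruleref{top}. For \ruleref{ba-tvar}, the type variable's upper bound $X <: S \in \G$ is widened and the IH delivers the result. The \ruleref{ba-box} case binds $t_x$ obtained from the IH via \ruleref{let} and closes with \ruleref{box}; symmetrically, \ruleref{ba-unbox} first emits $\tUnbox{C}{x}$ via \ruleref{unbox}, binds it with \ruleref{let}, and invokes the IH under $\Delta, y : \tCap{C}{S}$.

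For \ruleref{ba-boxed} the target term is the unbox--adapt--rebox sandwich $\tLet{y}{\tUnbox{C_1}{x}}{\tLet{z}{t_y}{\tBox{z}}}$. I will peel off the outer let-bindings one at a time, using \ruleref{unbox} and \ruleref{box} at the appropriate places, and apply the IH to the inner adaptation premise under $\Delta$ extended by $y : \tCap{C_1}{S_1}$. After typing the raw term, Corollary \ref{lemma:compact} transfers the typing to $\fEmbed{t}$, possibly tightening the derived type to a subtype, which I recover via \ruleref{sub}.

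The main obstacle is \ruleref{ba-fun} (and analogously \ruleref{ba-tfun}), where the adapted term is the eta-expanded lambda $t_f = \tLambda{x}{U_2}{\tLet{x'}{t_x}{\tLet{z}{x_f\,x'}{t_z}}}$. I will proceed in four layers: first, the IH on the argument-adaptation premise under $\Delta, x : U_2$ yields $\typ{\G,\Delta, x : U_2}{t_x}{U_1}$; second, typing the inner application $x_f\,x'$ under $\Delta, x : U_2, x' : U_1$ uses \ruleref{app} and produces the result type $T_1'$ (the $T_1 \mapsto \fsubst{x}{x'}T_1$ substitution handling in the \ruleref{ba-fun} rule is exactly what keeps this step well-typed); third, the IH on the result-adaptation premise under $\Delta, x : U_2, x' : U_1$ gives $\typ{}{t_z}{T_2}$; fourth, three applications of \ruleref{let} followed by \ruleref{abs} yield a derivation whose outer capture set is $\cv{\fEmbed{t_f}}$ after applying Corollary \ref{lemma:compact}. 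The final capture-set subsumption $\subDft{\substIn{x_f}{C}{\cv{\fEmbed{t_f}}}}{C'}$ available in the \ruleref{ba-fun} premise is then used with \ruleref{sub} to widen to $C'$. The delicate point is keeping track of which $x$-for-$x'$ substitutions already happened inside the adapted term versus inside the type; verifying that these choices match the $T_1'$ case-split in the rule is where the bookkeeping must be done carefully. The case \ruleref{ba-tfun} is completely parallel, with \ruleref{tapp} and \ruleref{tabs} replacing \ruleref{app} and \ruleref{abs}.
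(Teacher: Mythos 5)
Your proposal follows essentially the same route as the paper: induction on the adaptation derivation with $\Delta$ universally quantified, working in the declarative system so \textsc{sub} is available, and discharging the normalisation $\fEmbed{\cdot}$ via Corollary \ref{lemma:compact} in the \textsc{ba-boxed}, \textsc{ba-fun} and \textsc{ba-tfun} cases. The only point you gloss over is \textsc{ba-tvar}, where the IH's precondition requires $x$ to type at the widened bound $\tCap{C}{S}$ while $\Delta$ binds it at $\tCap{C_0}{X}$; the paper resolves this by rebinding $x$ to $\tCap{C_0}{S}$ in a modified $\Delta'$, applying the IH there, and narrowing back — a small but necessary maneuver enabled by exactly the quantification over $\Delta$ you identify as essential.
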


\begin{proof}
  Proof by induction on the derivation of the adaptation subtyping.

  \emph{Case \ruleref{ba-refl} and \ruleref{ba-top}}.
  Conclude by applying the \ruleref{var} and the \ruleref{sub} rules.

  \emph{Case \ruleref{ba-tvar}}.
  Then $T = \tCap{C}{S}$,
  $U = \tCap{C^\prime}{S^\prime}$,
  $X <: \tCap{C}{S} \in \G$,
  and $\adpDft{x}{\tCap{C}{S}}{t_x}{\tCap{C^\prime}{S^\prime}}$.
  By invoking the IH, we can show that
  $\forall \Delta$ such that $\typAlg{\G, \Delta}{x}{\tCap{C}{S}}$,
  we have $\typ{\G, \Delta}{t_x}{\tCap{C^\prime}{S^\prime}}$.
  Now, we consider any $\Delta$ such that $\typAlg{\G}{x}{\tCap{C}{X}}$,
  implying that $\Delta = \Delta_1, x : \tCap{C_0}{X}, \Delta_2$
  and $C = \set{x}$.
  We first set $\Delta^\prime = \Delta_1, x : \tCap{C_0}{S}, \Delta_2$,
  and use IH to show that $\typ{\G, \Delta'}{t_x}{\tCap{C'}{S'}}$.
  Now we apply narrowing to show that $\typ{\G, \Delta}{t_x}{\tCap{C'}{S'}}$
  and thus conclude this case.

  \emph{Case \ruleref{ba-boxed}}.
  Then
  $T = \tCap{D}{\tBox{\tCap{C}{S}}}$,
  $U = \tCap{D'}{\tBox{\tCap{C'}{S'}}}$,
  $\adpDft{y}{\tCap{C}{S}}{t_y}{\tCap{C'}{S'}}$,
  and $t_x = \tLet{y}{\tUnbox{C}{x}}{\tLet{z}{t_y}{\tBox z}}$.
  Consider $\Delta$ such that $\typAlg{\G, \Delta}{x}{\tCap{D'}{\tBox{\tCap{C}{S}}}}$.
  We now invoke the IH to show that
  $\typ{\exenv, y: \tCap{C}{S}}{t_y}{\tCap{C'}{S'}}$.
  Therefore, we can derive that
  $\typ{\exenv}{t}{\tCap{D'}{\tCap{C'}{S'}}}$.
  Now we can conclude this case using Lemma \ref{lemma:compact} and the \ruleref{sub} rule.

  \emph{Case \ruleref{ba-fun}}.
  Then $T = \tCap{C}{\forall(z : U_1) T_1}$,
  $U = \tCap{C'}{\forall(z : U_2) T_2}$,
  $\adpDft{z}{U_2}{t_z}{U_1}$,
  $\adp{\G, z : U_1, z' : U_2}{y}{T_1'}{t_y}{T_2}$
  where $T'_1$ is defined as in the rule,
  $t_x = \lambda(z : U_2) \tLet{z'}{t_z}{\tLet{y}{x\,z'}{t_y}}$,
  and $\sub{\G}{\fsubst x C \cv{\fEmbed{t_x}}}{C'}$.
  Consider $\Delta$ such that $x : \tCap{C}{\forall(z : U_1) T_1} \in \Delta$.
  By invoking the IH we can first show that
  $\typ{\exenv, z: U_2}{t_z}{U_1}$.
  If $t_z$ is a variable,
  we can show that
  $\typ{\exenv, z : U_2, z' : U_1}{x\,z}{T_1}$.
  Otherwise, we can show that
  $\typ{\exenv, z : U_2, z' : U_1}{x\,z'}{\fsubst{z}{z'} T_1}$.
  Now we invoke the IH again to show that
  $\typ{\exenv, z : U_2, z' : U_1, y : T'_1}{t_y}{T_2}$
  in both cases.
  We can then derive that $\typ{\exenv}{t_x}{\cv{t_x}\,\tForall{x}{U_2}{T_2}}$.
  Since $\typAlg{\exenv}{x_f}{C\,\tForall{x}{U_1}{T_1}}$, we can show that $C = \set{x_f}$.
  Therefore $\substIn{x_f}{\set{x_f}}{\cv{t_f}} = \cv{t_f}$.
  If $t_z$ is a variable,
  we let $t'_x = \lambda(z: U_2) \tLet{y}{x\,z}{t_y}$,
  and we can show that $\fEmbed{t'_x} = \fEmbed{t_x}$.
  We can show that $\typ{\exenv}{t_x}{C'\,\tForall{x}{U_2}{T_2}}$.
  Otherwise, we can now derive that $\typ{\exenv}{t_x}{C^\prime\,\tForall{x}{U_2}{T_2}}$.
  We conclude by Lemma \ref{lemma:compact} and \textsc{Sub}.

  \emph{Case \ruleref{ba-tfun}}.
  This case can be proven in the same way as the \ruleref{ba-fun} case.

  \emph{Case \ruleref{ba-box}}.
  Then $T = \tCap{C}{S}$,
  $U = \tCap{D}{\tBox{\tCap{C'}{S'}}}$,
  $\adpDft{x}{\tCap{C}{S}}{s_x}{\tCap{C'}{S'}}$,
  and $t_x = \tLet{y}{s_x}{\tBox{y}}$.
  By invoking the IH we can show that
  $\typ{\exenv, x: \tCap CS}{s_x}{\tCap{C^\prime}{S^\prime}}$.
  By the well-formedness, we can show that $y \notin \fv{\tCap{C'}{S'}}$,
  and therefore we can conclude this case by the \ruleref{let} and the \ruleref{sub} rule.

  \emph{Case \ruleref{ba-unbox}}.
  Then $T = \Box\ \tCap{C}{S}$,
  $U = \tCap{C'}{S'}$,
  $\adpDft{y}{\tCap{C}{S}}{t_y}{\tCap{C'}{S'}}$,
  and $t_x = \tLet{y}{\tUnbox{C}{x}}{t_y}$.
  By the IH we can show that
  $\typ{\exenv, y: \tCap{C}{S}}{t_y}{\tCap{C^\prime}{S^\prime}}$.
  Again by the wellformedness we can show that
  $y \notin \tCap{C'}{S'}$,
  and thus conclude by \ruleref{let}.
\end{proof}

\subadpsoundness*

\begin{proof}
  By Lemma \ref{lemma:adp-sub-soundness-1}.
\end{proof}

\subsubsection{Metatheory of Typing with Box Inference}

\begin{lemma}  \label{lemma:ub-to-typ}
  If $\typUbDft{x}{t_x}{T}$ then $\typDft{t_x}{U}$.
\end{lemma}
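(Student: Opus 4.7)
The plan is to prove this by straightforward case analysis on the derivation of $\typUbDft{x}{t_x}{T}$, which can only be justified by one of the two rules \ruleref{var-return} or \ruleref{var-unbox}. (I read the target of the conclusion as $\typDft{t_x}{T}$; the $U$ in the statement appears to be a typo, since $U$ is otherwise unbound.) The key workhorse is Lemma \ref{lemma:up-to-typ}, already established, which bridges $\typUpDft{x}{T}$ and $\typDft{x}{T}$.

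In the \ruleref{var-return} case we have $t_x = x$ and $\typUpDft{x}{T}$. Applying Lemma \ref{lemma:up-to-typ} immediately gives $\typDft{x}{T}$, closing this case.

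In the \ruleref{var-unbox} case we have $t_x = \tUnbox{C}{x}$, $T = \tCap{C}{S}$, the premise $\typUpDft{x}{\tBox{\tCap{C}{S}}}$, and the side condition $C \subseteq \dom{\G}$. Lemma \ref{lemma:up-to-typ} yields $\typDft{x}{\tBox{\tCap{C}{S}}}$. To apply the declarative \ruleref{unbox} rule and conclude $\typDft{\tUnbox{C}{x}}{\tCap{C}{S}}$, I still need the subcapturing obligation of that rule, which here reduces to $\subDft{C}{C}$; this is discharged by reflexivity of subcapturing (instances of \ruleref{sc-elem}, \ruleref{sc-set}), together with the already-available well-formedness premise $C \subseteq \dom{\G}$.

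There is essentially no obstacle: the lemma is a one-step unfolding of the variable typing rules followed by an invocation of Lemma \ref{lemma:up-to-typ} and, in the unbox case, the \ruleref{unbox} rule. The only minor care needed is to check that the side conditions required by \ruleref{unbox} (well-formedness of $C$ and subcapturing to itself) are discharged, both of which are immediate.
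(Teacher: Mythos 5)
Your proposal is correct and follows exactly the paper's proof: case analysis on the two rules \textsc{var-return} and \textsc{var-unbox}, invoking Lemma \ref{lemma:up-to-typ} in both and closing the unbox case with the \textsc{unbox} rule. Your extra care about the reflexive subcapturing side condition (and the observation that the $U$ in the statement should read $T$) is a welcome tightening of details the paper leaves implicit.
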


\begin{proof}
  Proof proceeds by case analysis on the premise.
  The \ruleref{var-return} case follows immediately from Lemma \ref{lemma:up-to-typ}.
  In the \ruleref{var-unbox} case we can invoke Lemma \ref{lemma:up-to-typ} and the \ruleref{unbox} rule to conclude.
\end{proof}

\begin{lemma} \label{lemma:adp-sub-var}
  If $\adpDft{x}{U}{y}{T}$ then $x = y$.
\end{lemma}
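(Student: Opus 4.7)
}

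The plan is to induct on the derivation of $\adpDft{x}{U}{y}{T}$ and, in each case, examine whether the syntactic shape of the output term can be a variable at all, and if so, argue that this variable must coincide with the input $x$. The easy cases are \ruleref{ba-refl} and \ruleref{ba-top}, where the output is literally $x$, and \ruleref{ba-tvar}, where the output is the result of the recursive premise and the inductive hypothesis applies directly.

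The rules \ruleref{ba-box} and \ruleref{ba-unbox} produce raw let-bindings (no $\fEmbed{\cdot}$ is applied at the top level), so the output is syntactically a let-expression and never a variable; the conclusion is vacuous in these cases. For \ruleref{ba-boxed}, I would unfold the definition of $\fEmbed{\cdot}$ on the outer let $\tLet{y_0}{\tUnbox{C_1}{x}}{\tLet{z}{t_y}{\tBox{z}}}$ and show by case analysis on the three $\compactLet{\cdot}$ clauses that none of them applies to the outer binding (the bound value is an unbox, not a variable, and the body is not $\tBox{y_0}$ since $y_0$ is freshly bound and does not occur in $t_y$). Hence the outer let survives normalisation and the output is not a variable.

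The hard part will be \ruleref{ba-fun} and \ruleref{ba-tfun}. Consider \ruleref{ba-fun}: the output is $\fEmbed{t_f}$ where $t_f = \tLambda{z}{U_2}{\tLet{x'}{t_x}{\tLet{w}{x_f\,x'}{t_z}}}$. A $\compactLam$ can only collapse a lambda to a variable via \eruleref{nl-beta}, which requires the normalised body to have the exact shape $u\,z$ for some variable $u$. Tracing the inner $\fEmbed{\cdot}$ through \eruleref{nl-deref}, \eruleref{nl-rename}, \eruleref{nl-box}, I would show that the body normalises to a term of that shape only if (i) $\fEmbed{t_z}$ collapses to $z$ via \eruleref{nl-deref}, so the inner let becomes $x_f\,x'$, and (ii) $\fEmbed{t_x}$ is itself a variable, triggering \eruleref{nl-rename} and yielding $x_f\,(\fEmbed{t_x})$. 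By the inductive hypothesis applied to the premise $\adp{\G}{z}{U_2}{t_x}{U_1}$, the only way $t_x$ is a variable is $t_x = z$, so the body reduces to $x_f\,z$, and the whole lambda $\beta$-reduces to $x_f$, which is exactly the input. The case \ruleref{ba-tfun} is analogous, using \eruleref{nl-tbeta} instead of \eruleref{nl-beta}. Thus in every case where the output $y$ happens to be a variable, we conclude $y = x$.
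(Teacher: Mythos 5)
Your overall strategy --- induction on the derivation, asking in each case whether normalisation can collapse the output to a variable --- matches the paper's, and your handling of \ruleref{ba-refl}, \ruleref{ba-top}, \ruleref{ba-tvar}, \ruleref{ba-box}, \ruleref{ba-unbox}, \ruleref{ba-fun} and \ruleref{ba-tfun} is sound. The \ruleref{ba-fun}/\ruleref{ba-tfun} analysis in particular (both sub-adaptations must return variables, the IH pins them to the bound variables, and \eruleref{nl-beta} then collapses the lambda to $x_f$) is exactly the intended argument.

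The \ruleref{ba-boxed} case, however, contains a genuine error. You claim the outer let survives normalisation because ``the body is not $\tBox{y_0}$ since $y_0$ is freshly bound and does not occur in $t_y$.'' But the premise of \ruleref{ba-boxed} adapts precisely the freshly bound variable $y_0$ itself, so $t_y$ can perfectly well be $y_0$ --- for instance whenever the inner adaptation is derived by \ruleref{ba-refl}. In that situation the inner let $\tLet{z}{t_y}{\tBox{z}}$ normalises via \eruleref{nl-rename} to $\tBox{y_0}$, the outer term becomes $\tLet{y_0}{\tUnbox{C_1}{x}}{\tBox{y_0}}$, and \eruleref{nl-box} fires, yielding exactly $x$. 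So the output of \ruleref{ba-boxed} \emph{can} be a variable, your vacuity claim is false, and that sub-case is left unproven. The repair is the same pattern as your \ruleref{ba-fun} case (and is what the paper does): if $\fEmbed{t}$ is a variable then $t_y$ must be a variable, the IH forces $t_y = y_0$, and the normalisation chain above then produces precisely $x$, so $x = y$ holds. This sub-case is not a corner case one can ignore: the completeness argument (Lemma \ref{lemma:box-adaptation-completeness-capt-sub}, case \ruleref{algc-boxed}) relies on $\fEmbed{t}$ collapsing to $x$ in exactly this configuration.
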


\begin{proof}
  Proof by induction on the derivation of the judgement.

  \emph{Case \ruleref{ba-refl} and \ruleref{ba-top}}.
  Immediate.

  \emph{Case \ruleref{ba-tvar}}.
  By IH.

  \emph{Case \ruleref{ba-boxed}}.
  Then $U = \tCap{D}{\tBox{\tCap{C}{S}}}$,
  $T = \tCap{D'}{\tBox{\tCap{C'}{S'}}}$,
  $\adpDft{y'}{\tCap{C}{S}}{t_y}{\tCap{C'}{S'}}$,
  and $y = \fEmbed{t}$
  where $t = \tLet{y'}{\tUnbox{C}{x}}{\tLet{z}{t_y}{\tBox{z}}}$.
  Through the case analysis on $t$ and the definition of $\fEmbed{\cdot}$,
  we can shwo that $t_y = y''$ for some $y''$.
  By IH we can show that $y' = y$ and thus conclude.

  \emph{Case \ruleref{ba-fun}}.
  Then $U = \tCap{C_1}{\forall(z : U_1} T_1$,
  and $T = \tCap{C_2}{\forall(z: U_2) T_2}$,
  $\adp{\G}{z}{U_2}{t_z}{U_1}$,
  $\adp{\G, z : U_2, z' : U_1}{y'}{\fsubst z {z'} T_1}{t_y}{T_2}$,
  and $y = \fEmbed{t_f}$
  where $t_f = \lambda(z : U_2) \tLet{z'}{t_z}{\tLet{y'}{x\,z'}{t_y}}$.
  Through a case analysis on $t_f$ we can show that both $t_z$ and $t_y$ are variables,
  otherwise $\fEmbed{t_f}$ cannot be a variable.
  Now we invoke IH to show that $t_z = z$ and $t_y = y'$,
  which allow us to conclude.

  \emph{Case \ruleref{ba-tfun}}.
  As above.

  \emph{Case \ruleref{bi-box} and \ruleref{bi-unbox}}.
  We can see that $t$ being a variable is impossible.
\end{proof}

\begin{theorem}[Soundness]  \label{thm:typ-adp-soundness}
	If $\typAlgoDft{t}{t^\prime}{T}$, then $\typDft{t^\prime}{T}$.
\end{theorem}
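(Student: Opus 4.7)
The plan is to proceed by induction on the derivation of $\typAlgoDft{t}{t'}{T}$, showing that in every case the adapted term $t'$ is typeable in $\cc{}$ at the same type $T$.

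For most cases the argument is direct. In \ruleref{bi-var}, the output term equals the input variable, so the goal reduces to an invocation of the declarative \ruleref{var} rule. Cases \ruleref{bi-box} and \ruleref{bi-unbox} are also structural: the term is returned unchanged, and Lemma \ref{lemma:up-to-typ} together with the corresponding declarative rule concludes. For \ruleref{bi-abs} and \ruleref{bi-tabs}, the IH produces a declarative typing derivation for the adapted body, whence \ruleref{abs}/\ruleref{tabs} close the case; the capture set $\cv{t'}$ recorded by \ccAdp{} is exactly the one expected by the declarative rule. The \ruleref{bi-tapp} case uses Lemma \ref{lemma:ub-to-typ} to derive the declarative typing of the (possibly unboxed) function variable, then types the enclosing let-binding via \ruleref{let} wrapped around a \ruleref{tapp}, appealing to Theorem \ref{thm:soundness-of-algorithmic-subtyping} to discharge the algorithmic subtyping premise on $S'$. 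Case \ruleref{bi-let} combines the two IHs with \ruleref{let}, widening the body type through Lemma \ref{lemma:avoidance-sub} so that the bound variable is avoided.

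The main obstacle is the \ruleref{bi-app} case, because the output term is obtained via the normalisation operator $\fEmbed{\cdot}$ applied to a sequence of monadic let-bindings, and the output type is defined by a case split on whether the adapted argument $t_y$ is a variable. The plan for this case is: first, Lemma \ref{lemma:ub-to-typ} yields $\typDft{t_x}{C\,\tForall{z}{U}{T}}$ and Theorem \ref{thm:adp-sub-soundness} yields $\typDft{t_y}{U}$; next, type the unnormalised witness $s = \tLet{x'}{t_x}{\tLet{y'}{t_y}{x'\,y'}}$ directly in the declarative system, where the innermost application carries type $[z:=y']T$ by \ruleref{app} and the two surrounding \ruleref{let}s produce, via \ruleref{sub} and Lemma \ref{lemma:avoidance-sub}, the type $\avoidOp{y'}{\cv{U}}{[z:=y']T}$ for $s$; finally, transfer the typing from $s$ to the normalised term $t = \fEmbed{s}$.

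To match the $T'$ promised by the rule, distinguish the two branches of the case split. In the ``otherwise'' branch, Lemma \ref{lemma:compact} produces a subtype of $\avoidOp{y'}{\cv{U}}{[z:=y']T}$, and an application of \ruleref{sub} recovers exactly the claimed $T'$. In the ``variable'' branch, Lemma \ref{lemma:adp-sub-var} forces $t_y = y$, and the \eruleref{nl-rename} simplification used in the proof of Lemma \ref{lemma:compact-let} is traced explicitly: the inner let $\tLet{y'}{y}{x'\,y'}$ normalises to $x'\,y$ with the precise type $[y' := y][z:=y']T = [z:=y]T$, and an optional further \eruleref{nl-rename} when $t_x = x$ preserves this type; the residual case $t_x = \tUnbox{C}{x}$ is handled by one declarative \ruleref{let} around $x'\,y$, with $x'$ being fresh and therefore avoidable. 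In both sub-branches the type of the normalised term coincides with the $T'$ claimed by the rule, closing the induction.
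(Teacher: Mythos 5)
Your proposal is correct and follows essentially the same route as the paper's proof: induction on the derivation, with Lemma \ref{lemma:ub-to-typ}, Theorem \ref{thm:adp-sub-soundness}, Lemma \ref{lemma:adp-sub-var}, and the case split on whether $t_y$ is a variable doing the work in the \ruleref{bi-app} case. Your treatment is in fact somewhat more explicit than the paper's (tracing the \eruleref{nl-rename} steps and invoking Lemma \ref{lemma:compact} to transfer typing across normalisation), but the decomposition and key lemmas are identical.
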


\begin{proof}
	Proof proceeds by induction on the derivation of the premise.

  \emph{Case \ruleref{bi-var}}.
  Then $t = t' = x$ for some $x$.
  We conclude immediately using the \ruleref{var} rule.

  \emph{Case \ruleref{bi-abs} and \ruleref{bi-tabs}}.
  By IH and the corresponding rules.

  \emph{Case \ruleref{bi-app}}.
  Then $t = x\,y$,
  $\typUbDft{x}{t_x}{C\,\forall(z: U) T'}$,
  $\adpTypDft{y}{t_y}{U}$,
  and $t' = \fEmbed{\tLet{x'}{t_x}{\tLet{y'}{t_y}{x'\,y'}}}$.
  By Lemma \ref{lemma:ub-to-typ} we have $\typDft{t_x}{C\,\tForall{z}{T}{U}}$.
  By Lemma \ref{thm:adp-sub-soundness} we have $\typDft{t_y}{U}$.
  Then we perform a case analysis on $t_y$.
  \begin{itemize}
    \item \emph{When $t_y$ is a variable}.
      By Lemma \ref{lemma:adp-sub-var} we know that $t_y = y$.
      In this case $T^\prime = [z:=y]T$.
      Based on the definition of $\fEmbed{\cdot}$,
      we can show that $t' = \fEmbed{\tLet{x'}{t_x}{x'\,y}}$.
      Note that $x \notin \fv{T}$.
      We conclude by the \ruleref{let} rule.

    \item \emph{When $t_y$ is not a variable}.
      In this case $T^\prime = \avoidOp{y^\prime}{\cv{U}}{[z:=y^\prime]U}$.
      We conclude by the \ruleref{let} rule.
  \end{itemize}

  \emph{Case \ruleref{bi-tapp}}.
  Then $t = x[S']$,
  $\typUbDft{x}{t_x}{C\,\forall[X <: S] T'}$ where $T = \fsubst X {S'} {T'}$,
  and  $t' = \tLet{x'}{t_x}{x'[S']}$.
  By Lemma \ref{lemma:ub-to-typ} we have $\typDft{t_x}{C\,\tTForall{X}{S}{T}}$.
  Since $x^\prime$ is fresh, ${x^\prime} \notin \fv{S^\prime} \cup \fv{T}$.
  We can thus derive that $\typDft{t^\prime}{[X := S^\prime] T}$
  by \ruleref{let} and \ruleref{tapp}.

  \emph{Case \ruleref{bi-box}}.
  Then $t = t' = \tBox{x}$,
  $\typUpDft{x}{\tCap{C}{S}}$
  and $C \subseteq \dom{\G}$.
  By Lemma \ref{lemma:up-to-typ} we can show that
  $\typDft{x}{\tCap{C}{S}}$.
  Now we can conclude this case by the \ruleref{box}.

  \emph{Case \ruleref{bi-unbox}}.
  Then $t = t' = \tUnbox{C'}{x}$,
  $\typUpDft{x}{\tBox{\tCap{C}{S}}}$,
  $C \subseteq \dom{\G}$,
  and $\subDft{C}{C'}$.
  By Lemma \ref{lemma:up-to-typ} we can show that
  $\typDft{x}{\tBox{\tCap{C}{S}}}$.
  By \ruleref{sub}, \ruleref{boxed} and \ruleref{capt} we can show that
  $\typ{x}{\tBox{\tCap{C'}{S}}}$.
  This case can be concluded by the \ruleref{unbox} rule.

\end{proof}

\begin{theorem}[Completeness]  \label{thm:typ-adp-completeness}
  If $\typAlgDft{t}{T}$, then $\typAlgoDft{t}{t^\prime}{T}$.
\end{theorem}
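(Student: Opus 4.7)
The plan is to proceed by induction on the derivation of $\typAlgDft{t}{T}$, proving the slightly stronger statement that $\typAlgoDft{t}{t}{T}$ holds: any term already well-typed in \ccAlgo{} is accepted by box inference \emph{verbatim}, returning the input unchanged. This strengthening is essential because rules such as \adprn{abs} and \adprn{tabs} compute the closure's capture set from the adapted body as $\cv{t^\prime}/x$, so recovering \algorn{abs}'s conclusion exactly requires $\cv{t^\prime} = \cv{t}$, which is most cleanly obtained by $t^\prime = t$ syntactically.

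The base case \algorn{var} and the cases \algorn{box}, \algorn{unbox}, \algorn{abs}, \algorn{tabs}, and \algorn{let} go through routinely: in each we apply the induction hypothesis on any subderivations to recover the input term unchanged, then invoke the structurally identical \ccAdp{} rule. The avoidance computation in \algorn{let} is shared verbatim by \adprn{let}, so no additional reasoning is needed there.

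The main obstacle is the \algorn{app} case, with premises $\typUpDft{x}{\tCap{C}{\tForall{z}{U}{T_0}}}$, $\typAlgDft{y}{U^\prime}$, and $\subAlgoDft{U^\prime}{U}$, and conclusion $\typAlgDft{x\,y}{[z := y]T_0}$. We must produce a \adprn{app} derivation whose transformed term normalises back to $x\,y$ and whose result type is exactly $[z := y]T_0$. The crucial ingredient is Lemma~\ref{lemma:sub-algo-to-adp} (completeness of box adaptation) applied to $y$: it yields $\adpTypDft{y}{y}{U}$, showing that box adaptation leaves a subtype-compatible argument untouched. Because $t_y = y$ is a variable, the case split in \adprn{app} selects its first branch and assigns result type $[z := y]T_0$, matching \algorn{app}. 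It remains to verify that $\fEmbed{\tLet{x^\prime}{x}{\tLet{y^\prime}{y}{x^\prime\,y^\prime}}} = x\,y$, which follows from two successive applications of \eruleref{nl-rename}.

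The \algorn{tapp} case follows the same template without argument adaptation and with a single \eruleref{nl-rename} closing the normalisation. The most delicate technical point is discharging the hypothesis of Lemma~\ref{lemma:sub-algo-to-adp}, which requires $\typUpDft{y}{U^\prime}$ rather than merely $\typAlgDft{y}{U^\prime}$; this is handled by a short auxiliary argument using Lemma~\ref{lemma:tvar-upcast-precise} when the shape component of $U^\prime$ happens to be a type variable, preserving the needed subtyping premise through the widening step.
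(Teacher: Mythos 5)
Your proposal is correct and follows essentially the same route as the paper: induction on the \ccAlgo{} derivation, closing the routine cases by the IH and the structurally matching \ccAdp{} rule, and handling \algorn{app}/\algorn{tapp} by Lemma~\ref{lemma:sub-algo-to-adp} to get $\adpDft{y}{U^\prime}{y}{U}$, the variable branch of \adprn{app}, and \eruleref{nl-rename} to normalise the inserted let-bindings back to $x\,y$. Your explicit strengthening to $t^\prime = t$ and your care in discharging the $\typUpDft{y}{U^\prime}$ premise of Lemma~\ref{lemma:sub-algo-to-adp} are points the paper leaves implicit, but they do not change the argument.
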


\begin{proof}
  Proof proceeds by induction on the derivation of the premise.

  \emph{Case \ruleref{var}}.
  Conclude by applying the \ruleref{bi-var} rule.

  \emph{Case \ruleref{abs}, \ruleref{tabs}, \ruleref{box}, \ruleref{unbox} and \ruleref{let}}.
  We conclude by IH and the corresponding rule.

  \emph{Case \ruleref{app}}.
  We can show that $\typUbDft{x}{x}{C\,\tForall z T U}$ by \textsc{var-return} rule.
  By Lemma \ref{lemma:sub-algo-to-adp} we have $\adpDft{y}{T^\prime}{y}{T}$,
  which leads us to $\adpTypDft{y}{y}{T}$.
  Let $t = \tLet{x^\prime}{x}{\tLet{y^\prime}{y}{x^\prime\,y^\prime}}$.
  We can show that $t^\prime = x\,y$ and $[z := y]T$.
  Therefore we can derive $\typAlgoDft{x\,y}{x\,y}{[z := y]T}$.

  \emph{Case \ruleref{tapp}}.
  As above.
\end{proof}

\subsection{Proof of System \ccAdpt{}}
\label{sec:cc-adpt-proof}

\subsubsection{Properties of Type-Level Box Adaptation}

\begin{lemma}  \label{lemma:drop-thisx}
  $C[x] / \set{x} = C / \set{\thisx, x}$.
\end{lemma}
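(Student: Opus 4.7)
The plan is to prove this by straightforward set-theoretic unfolding, treating $\thisx$ and $x$ as distinct elements (since $\thisx$ is a dedicated placeholder symbol disjoint from the variable namespace).

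First I would expand $C[x]$ using its definition from the preceding paragraph, namely $C[x] = C \setminus \set{\thisx} \cup \set{x}$. Substituting into the left-hand side gives
\[
  C[x] / \set{x} \;=\; (C \setminus \set{\thisx} \cup \set{x}) \setminus \set{x}.
\]
Next I would apply the elementary identity $(A \cup \set{x}) \setminus \set{x} = A \setminus \set{x}$, valid for any set $A$, with $A = C \setminus \set{\thisx}$. This yields
\[
  (C \setminus \set{\thisx}) \setminus \set{x} \;=\; C \setminus \set{\thisx, x},
\]
where the final equality is just the standard fact that successive set differences combine into a difference by the union. That matches the right-hand side, completing the proof.

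There is really no obstacle here; the lemma is a bookkeeping identity. The only subtle point, which I would briefly note, is that the manipulation tacitly relies on $\thisx \neq x$, i.e.\ the hole marker $\thisx$ is a reserved symbol distinct from any program variable $x$. This is built into the design (the hole is introduced precisely because $x$ is unknown during adaptation subtyping derivations), so the assumption is justified by the surrounding convention rather than requiring a separate lemma.
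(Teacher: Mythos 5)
Your proof is correct and takes the same route as the paper, which simply says ``direct from the definition'': you unfold $C[x] = C \setminus \set{\thisx} \cup \set{x}$ and finish with elementary set identities. The remark that the argument implicitly uses $\thisx \neq x$ (the hole being a reserved symbol outside the variable namespace) is a reasonable observation that the paper leaves tacit.
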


\begin{proof}
	Direct from the definition.
\end{proof}

\begin{lemma} \label{lemma:sub-adpt-soundness-inner}
  If $\adptcDft{T}{U}{\cCat}{C}$,
  then for any variable $x$, exists $t_x$, such that
  (1) $\adpDft{x}{T}{U}{t_x}$,
  (2) $\fCat{t_x} = \cCat$,
  and (3) $\cv{t_x} = C[x]$.
\end{lemma}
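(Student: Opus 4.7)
The plan is to prove this by induction on the derivation of the type-level adaptation judgement $\adptcDft{T}{U}{\cCat}{C}$, constructing the witnessing term $t_x$ directly from the corresponding term-level rule, and then verifying the kind and captured-variable predictions. The overall strategy mirrors the structure of the two rule-sets: each type-level rule was designed to shadow its term-level counterpart, so the witness $t_x$ in each case is simply the term that the corresponding \ccAdp{} rule would produce, and the kind/capture-set accounting was tailored to match what $\cv{\cdot}$ and $\fCat{\cdot}$ compute on that term after normalisation.

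First I would dispatch the simple leaf cases. For \ruleref{t-ba-refl} and \ruleref{t-ba-top}, set $t_x := x$; the kind is $\cVar$ and $\cv{x} = \{x\} = \{\thisx\}[x]$. For \ruleref{t-ba-tvar}, the IH immediately provides the witness, since the widening does not change the term. For \ruleref{t-ba-box} and \ruleref{t-ba-unbox}, apply the IH to obtain an inner witness $t_0$, then build $t_x = \tLet{y}{t_0}{\tBox{y}}$ or $t_x = \tLet{y}{\tUnbox{C}{x}}{t_0}$ respectively; in both rules the resulting term is a let-binding so $\fCat{t_x} = \cTrm$ as predicted, and the captured set follows from Definition \ref{def:captured-variables} together with the case split on whether the inner kind $\cCat$ was $\cVar$/$\cVal$ (which makes $\cv{\tLet{y}{t_0}{\tBox{y}}} = \emptyset$) or $\cTrm$.

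The main technical obstacle will be the \ruleref{t-ba-boxed}, \ruleref{t-ba-fun} and \ruleref{t-ba-tfun} cases, because here the witness term is passed through $\fEmbed{\cdot}$, and the effect of normalisation on both the kind and the captured variables must be tracked precisely. For \ruleref{t-ba-boxed}, I would construct $t = \tLet{y}{\tUnbox{C_1}{x}}{\tLet{z}{t_y}{\tBox{z}}}$ from the IH witness $t_y$, and then case-split on $\cCat$: if $\cCat = \cVar$ then $t_y$ is a variable, \eruleref{nl-rename} followed by \eruleref{nl-box} collapses $\fEmbed{t}$ to $x$ (kind $\cVar$, captures $\{x\}$, matching $\{\thisx\}[x]$); if $\cCat \in \{\cVal, \cTrm\}$ no such collapse happens, so $\fEmbed{t}$ stays a let-binding with kind $\cTrm$, and a direct computation of $\cv{\cdot}$ (again using the $\cVal$ case that $\cv{v}$ is omitted from the enclosing let) gives the predicted set. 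The function cases proceed analogously: build $t_f$ as in \ruleref{ba-fun}/\ruleref{ba-tfun}, invoke IH on the argument- and result-adaptations to obtain $t_z$ and $t_y$ with the right kinds and capture sets, then case-split on whether both inner kinds are $\cVar$ (so \eruleref{nl-beta}/\eruleref{nl-tbeta} together with \eruleref{nl-deref}/\eruleref{nl-rename} collapse $\fEmbed{t_f}$ to $x$, making the kind $\cVar$) or not (in which case $\fEmbed{t_f}$ remains a $\lambda$, hence kind $\cVal$).

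Throughout, the bookkeeping relies on Lemma \ref{lemma:drop-thisx} to handle the interaction between the placeholder $\thisx$ and the binder-removals in the $\cv{\cdot}$ definition (e.g.\ when computing $\cv{t_f}$ one removes $x$ and $x'$ but the prediction in \ruleref{t-ba-fun} removes $\thisx, x, x'$). The hard part will be making the kind/capture-set computations line up cleanly with the definition of $\fEmbed{\cdot}$ in the boxed and function cases, because this is where the type-level system's case-splits on $\cCat$ exactly correspond to whether a normalisation rule fires. Once those three cases are handled, the remaining ones are bookkeeping, and the overall induction closes.
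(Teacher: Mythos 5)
Your proposal is correct and follows essentially the same route as the paper: induction on the type-level derivation, constructing the witness from the corresponding \ccAdp{} rule, and case-splitting on the kinds to track how $\fEmbed{\cdot}$ and $\cv{\cdot}$ behave, with Lemma \ref{lemma:drop-thisx} handling the $\thisx$ bookkeeping. The only detail you gloss over is that in the \ruleref{t-ba-boxed} (and function) collapse cases one must also know that a variable-kinded adaptation returns the \emph{same} variable (the paper invokes Lemma \ref{lemma:adp-sub-var} for this), which is needed for \eruleref{nl-box}/\eruleref{nl-beta} to actually fire.
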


\begin{proof}
  By induction on the derivation.

  \emph{Case \ruleref{t-ba-refl} and \ruleref{t-ba-top}}.
  Then $\cCat = \cVar$ and $C = \set{\thisx}$ in these cases.
  Since $\fCat{x} = \cVar$ and $\set{\thisx}[x] = \set{x}$,
  we can conclude immediately by applying the corresponding rules (\ruleref{ba-refl} and \ruleref{ba-top}).

  \emph{Case \ruleref{t-ba-tvar}}.
  By the IH.

  \emph{Case \ruleref{t-ba-boxed}}.
  Then $T = D_1\,\Box\,\tCap{C_1}{S_1}$,
  $U = D_2\,\Box\,\tCap{C_2}{S_2}$,
  and $\adptcDft{\tCap{C_1}{S_1}}{\tCap{C_2}{S_2}}{\cCat_0}{C_0}$.
  By the IH, we can show that $\exists t_y$ such that $\adpDft{y}{\tCap{C_1}{S_1}}{\tCap{C_2}{S_2}}{t_y}$,
  $\fCat{t_y} = \cCat$ and $\cv{t_y} = C[y]$.
  We define $t = \tLet{y}{\tUnbox{C}{x}}{\tLet{z}{t_y}{\tBox{z}}}$, and
  $t^\prime = \compact{t}$.
  We proceed by case analysis on $\cCat$.
  If $\cCat = \cVar$, in which case $t_y = z$ for some $z$.
  Then by Lemma \ref{lemma:adp-sub-var} we have $z = y$.
  In this case $t^\prime = x$.
  Since in this case $\cCat^\prime = X$, and $C^\prime = \set{\thisx}$,
  we conclude by the \ruleref{ba-boxed} rule.
  In other two cases of $\cCat$, we conclude by the definition of the normalization function and the \ruleref{ba-boxed} rule.

  \emph{Case \ruleref{t-ba-fun}}.
  Then $T = C_1\,\forall(x : U_1) T_1$,
  $U = C_2\,\forall(x: U_2) T_2$,
  $\adptcDft{U_2}{U_1}{\cCat_1}{C_1}$,
  and $\adptc{\G, x : U_2, x' : U_1}{T'_1}{T_2}{\cCat_2}{C_2}$.
  $T'_1$ are defined as in the rule,
  and $\cCat$ and $C$ is computed from $\cCat_1, \cCat_2$ and $C_1, C_2$.
  Assume that the input variable is $x_f$.
  By IH we have $\adpDft{x}{U_2}{t_x}{U_1}$ where $\fCat{t_x} = \cCat_1$, and $\cv{t_x} = C_1[x]$.
  Additionally, we have $\adp{\G, x: U_2, x' : U_1}{z}{T'_1}{t_z}{T_2}$ where $\fCat{t_z} = \cCat_2$ ,and $\cv{t_z} = C_2[z]$.
  Now let's define $t_f = \tLambda{x}{U_1}{\tLet{x^\prime}{t_x}{\tLet{z}{x_f\,x^\prime}{t_z}}}$.
  Also, we define $t_f^\prime = \fEmbed{t_f}$.
  By case analysis we can show that $\cv{t_f^\prime} = \cv{t_f} = \cv{t_x} \cup \cv{t_z} \cup \cv{x_f} / \set{x, x^\prime, z}$.
  Therefore, $[x_f \mapsto C] \cv{t_f^\prime} = \cv{t_x} \cup \cv{t_z} \cup C / \set{x, x^\prime, z, x_f}$.
  By Lemma \ref{lemma:drop-thisx},
  we can show that $C_f[x_f] = \cv{t'_f}$,
  where $C_f = C_1 \cup C_2 \setminus \set{x, x'} \cup \set{\thisx}$
  as defined in the rule.
  By case analysis on $\cCat_1$ and $\cCat_2$ we can show that $\cCat^\prime = \cv{t_f^\prime}$.
  Finally, we conclude this case by applying the \ruleref{ba-fun} rule.

  \emph{Case \ruleref{t-ba-tfun}}. As above.

  \emph{Case \ruleref{t-ba-box}}.
  Then $T = C_1\,C_2$,
  $U = D\,\Box\,C_2\,S_2$,
  $\adptcDft{C_1\,S_1}{C_2\,S_2}{\cCat_0}{C_0}$,
  and $\cCat = \cTrm$.
  By IH we have $\adpDft{x}{\tCap{C_1}{S_1}}{t_x}{\tCap{C_2}{S_2}}$,
  $\cCat = \fCat{t_x}$, and $C_0[x] = \cv{t_x}$.
  We conclude by case analysis on $\cCat$
  and using the \ruleref{ba-box} rule.

  \emph{Case \ruleref{t-ba-unbox}}.
  As above.
\end{proof}

\begin{theorem}[Soundness]  \label{thm:sub-adpt-soundness}
  If $\adptTypDft{x}{T}{\cCat}{C}$, then exists $t_x$, such that
  (1) $\adpTypDft{x}{t_x}{T}$,
  (2) $\fCat{t_x} = \cCat$,
  and (3) $C = \cv{t_x}$.
\end{theorem}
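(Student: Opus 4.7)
The plan is to prove this by a straightforward inversion of the only rule that concludes $\adptTypDft{x}{T}{\cCat}{C}$, namely \ruleref{t-adapt}, and then to transport the witness produced by the inner soundness lemma (Lemma \ref{lemma:sub-adpt-soundness-inner}) through the corresponding term-level rule \ruleref{adapt}. Since \ruleref{t-adapt} is syntax-directed on the form of the judgement, no induction at the top level is needed: all the real work has already been done in Lemma \ref{lemma:sub-adpt-soundness-inner}, which is proven by induction on the adaptation subtyping derivation.

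Concretely, I would first invert the premise $\adptTypDft{x}{T}{\cCat}{C}$ via \ruleref{t-adapt}. This yields some intermediate type $U$ with $\typAlgDft{x}{U}$, together with a type-level adaptation subtyping derivation $\adptcDft{U}{T}{\cCat}{C_0}$ and well-formedness $\wfTyp{\G}{T}$, where the reported capture set is $C = C_0[x]$. Next I would apply Lemma \ref{lemma:sub-adpt-soundness-inner} to $\adptcDft{U}{T}{\cCat}{C_0}$ with the specific input variable $x$; this produces a term $t_x$ satisfying $\adpDft{x}{U}{t_x}{T}$, $\fCat{t_x} = \cCat$, and $\cv{t_x} = C_0[x] = C$. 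Finally I would apply the term-level rule \ruleref{adapt} using $\typAlgDft{x}{U}$, the adaptation subtyping derivation just obtained, and the well-formedness $\wfTyp{\G}{T}$ to conclude $\adpTypDft{x}{t_x}{T}$. The three required properties are then immediate: (1) holds by the application of \ruleref{adapt}; (2) and (3) are exactly what Lemma \ref{lemma:sub-adpt-soundness-inner} reports for $t_x$.

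There is no real obstacle at the level of this theorem, since it is essentially a one-rule unfolding wrapper around the inner lemma; the substantive difficulty was already absorbed into Lemma \ref{lemma:sub-adpt-soundness-inner}, where the case analysis on $\cCat$ in the \ruleref{t-ba-boxed} case and the bookkeeping of captured variables under term normalisation in \ruleref{t-ba-fun}/\ruleref{t-ba-tfun} had to be matched precisely against the behaviour of $\fEmbed{\cdot}$. The only subtlety worth noting in the present proof is the consistent alignment between the hole-filling $C_0[x]$ exposed by \ruleref{t-adapt} and the captured-variable equality $\cv{t_x} = C_0[x]$ produced by the inner lemma, so that the final equality $C = \cv{t_x}$ lines up without any further manipulation.
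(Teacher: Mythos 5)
Your proposal is correct and matches the paper's approach exactly: the paper's proof of this theorem is simply ``By Lemma \ref{lemma:sub-adpt-soundness-inner}'', and your write-up just makes explicit the one-rule unfolding of \ruleref{t-adapt}, the application of the inner lemma with the concrete input variable $x$, and the reassembly via \ruleref{adapt} that this citation implicitly relies on.
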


\begin{proof}
  By Lemma \ref{lemma:sub-adpt-soundness-inner}.
\end{proof}

\begin{lemma}  \label{lemma:sub-adapt-completeness-inner}
	If $\adpDft{x}{T}{t_x}{U}$, then
  $\adptcDft{T}{U}{\fCat{t_x}}{C[x]}$ for some $C$.
\end{lemma}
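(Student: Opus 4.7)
The plan is to prove this by induction on the derivation of the term-level adaptation subtyping judgement $\adpDft{x}{T}{t_x}{U}$, mirroring the structure of the soundness proof (Lemma \ref{lemma:sub-adpt-soundness-inner}) but in the reverse direction. For each term-level rule, we identify the corresponding type-level rule and apply the inductive hypothesis to the sub-derivations, then verify that the captured-set witness $C$ it produces satisfies $C[x] = \cv{t_x}$ and that the rule's predicted kind matches $\fCat{t_x}$.

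The easy cases are \ruleref{ba-refl}, \ruleref{ba-top}, \ruleref{ba-tvar}, \ruleref{ba-box}, and \ruleref{ba-unbox}, which pair directly with \ruleref{t-ba-refl}, \ruleref{t-ba-top}, \ruleref{t-ba-tvar}, \ruleref{t-ba-box}, and \ruleref{t-ba-unbox}. In the reflexive and top cases, $t_x = x$ so $\fCat{t_x} = \cVar$ and we can take $C = \set{\thisx}$, which gives $C[x] = \set{x} = \cv{x}$. In the box and unbox cases, the produced $t_x$ is always a let-expression, so $\fCat{t_x} = \cTrm$, and a direct computation of $\cv{\cdot}$ on the let-binding matches the capture set that the type-level rule's case analysis yields from the IH.

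The main obstacles are the \ruleref{ba-boxed}, \ruleref{ba-fun}, and \ruleref{ba-tfun} cases, where the constructed term passes through $\fEmbed{\cdot}$ before being returned. The key step in each is a case split on the kind of the sub-adapted term supplied by the IH. For \ruleref{ba-boxed}, if the IH yields $\fCat{t_y} = \cVar$ then by Lemma \ref{lemma:adp-sub-var} we have $t_y = y$, so the nested lets collapse under \eruleref{nl-box} to give $\fEmbed{t} = x$ with kind $\cVar$ and captured set $\set{x}$, matching the first branch of \ruleref{t-ba-boxed}; the $\cVal$ and $\cTrm$ sub-cases leave the term structure intact and correspond to the remaining branches. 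For \ruleref{ba-fun}, the case analysis must simultaneously track both sub-kinds $\cCat_1, \cCat_2$: only when both are $\cVar$ does $\fEmbed{\cdot}$ collapse the eta-expansion via \eruleref{nl-beta} back to a variable, mirroring the definition of $\cCat^\prime$ in \ruleref{t-ba-fun}; moreover, when the argument adaptation is trivial ($\cCat_1 = \cVar$) the normalisation erases the rebinding of $x'$, which justifies the branching on $T'_1$ in the type-level rule.

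The most delicate bookkeeping will be verifying, in \ruleref{ba-fun} and \ruleref{ba-tfun}, that $\cv{\fEmbed{t_f}}$ agrees with $C_f[x_f]$ where $C_f = C_1 \cup C_2 \setminus \set{x,x'} \cup \set{\thisx}$. This follows from Lemma \ref{lemma:drop-thisx} applied to hide $x_f$ behind the hole placeholder, combined with the fact that $\cv{\cdot}$ of the outer lambda drops exactly the bound variables $x, x', z$ from the union of the inner captured sets produced by the two IH applications. Once this accounting is done for \ruleref{ba-fun}, the \ruleref{ba-tfun} case is symmetric and does not introduce new difficulties.
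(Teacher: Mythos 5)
Your proposal follows essentially the same route as the paper's proof: induction on the term-level adaptation derivation, pairing each rule with its type-level counterpart, invoking Lemma \ref{lemma:adp-sub-var} to collapse the $\cVar$-kind subcases under normalisation (\eruleref{nl-box}, \eruleref{nl-beta}), and doing the same capture-set bookkeeping in the \ruleref{ba-fun}/\ruleref{ba-tfun} cases. The only cosmetic difference is that you route the final accounting through Lemma \ref{lemma:drop-thisx}, which the paper reserves for the soundness direction and here computes directly; the argument is otherwise identical.
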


\begin{proof}
	By induction on the derivation of $\adpDft{x}{T}{t_x}{U}$.

  \emph{Case \ruleref{ba-refl} and \ruleref{ba-top}}.
  Conclude immediately by the \ruleref{t-ba-refl} or the \ruleref{t-ba-top} rules.

  \emph{Case \ruleref{ba-tvar}}.
  Conclude by IH, and the \ruleref{t-ba-tvar} rule.

  \emph{Case \ruleref{ba-boxed}}.
  By IH we have $\adptcDft{\tCap{C_1}{S_1}}{\tCap{C_2}{S_2}}{\cCat}{C}$,
  $\fCat{t_y} = \cCat$,
  and $\cv{t_y} = C[y]$.
  If $\fCat{t_y} = \cVar$,
  by Lemma \ref{lemma:adp-sub-var} we can show that $t_y = y$.
  Therefore $C = \{\thisx\}$, and we can conclude.
  The other two cases follows directly from the definitions.

  \emph{Case \ruleref{ba-fun}}.
  By IH we have $\adptcDft{U_2}{U_1}{\cCat_1}{C_1}$,
  and $\adptc{\G, x: U_2, x': U_1}{T_1'}{T_2}{\cCat_2}{C_2}$,
  where $T'_1$ is defined in the rule.
  Proceed by case analysis on $\cCat_1$ and $\cCat_2$.
  If both $\cCat_1$ and $\cCat_2$ are variables,
  we invoke Lemma \ref{lemma:adp-sub-var} and show that $t_x = x$, and $t_z = z$.
  Therefore $t_f^\prime = x_f$,
  and $\cv{t_f^\prime} = \set{x_f}$.
  Also, we can show that $C_1 = C_2 = \set{\thisx}$.
  Therefore, we have $C \cup C_1 \cup C_2 = C$, and conclude by \ruleref{t-ba-fun}.
  In other cases, based on the definition of $\fEmbed{t}$ we have
  $\cv{t_f^\prime} = \cv{t_x} \cup \set{x_f} \cup \cv{t_z} / \set{x, x^\prime, z}$.
  Therefore, $[x_f \mapsto C]\cv{t_f^\prime} = \cv{t_x}/\set{x^\prime} \cup C \cup \cv{t_z}/{z} / {x} = C \cup C_1 \cup C_2[x \mapsto C] / \set{\thisx}$.
  Also, $\cv{t_f^\prime} = C_1 \cup C_2 \cup \set{\thisx}$.
  We again conclude by the \ruleref{t-ba-fun} rule.

  \textsc{Case \ruleref{ba-tfun}}.
  As above.

  \textsc{Case \ruleref{ba-box}}.
  By IH we have $\adptcDft{\tCap{C}{S}}{\tCap{C^\prime}{S^\prime}}{\cCat}{C_0}$,
  $\cCat = \fCat{t_x}$,
  and $C_0[x] = \cv{t_x}$.
  Proceed by case analysis on $\cCat$.
  Each case can be concluded by definitions and the \ruleref{t-ba-box} rule.

  \textsc{Case \ruleref{ba-unbox}}.
  Conclude by IH and the \ruleref{t-ba-unbox} rule.
\end{proof}

\begin{theorem}[Completeness]  \label{thm:sub-adpt-completeness}
	If $\adpTypDft{x}{U}{t_x}$,
  then exists $C$, such that
  $\adptTypDft{x}{U}{\fCat{t_x}}{C[x]}$.
\end{theorem}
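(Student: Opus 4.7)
The plan is to reduce this theorem almost directly to the inner lemma (Lemma \ref{lemma:sub-adapt-completeness-inner}), which already does the heavy lifting on the adaptation subtyping judgements. The top-level judgements $\adpTypDft{x}{t_x}{U}$ and $\adptTypDft{x}{U}{\fCat{t_x}}{C[x]}$ are just thin wrappers around the adaptation subtyping judgements, so the theorem should follow by inversion of the premise and a single application of the type-level adaptation rule.

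Concretely, I would first invert the term-level judgement $\adpTypDft{x}{t_x}{U}$ using the \ruleref{adapt} rule. This gives some $T_0$ with $\typAlgDft{x}{T_0}$, together with $\adpDft{x}{T_0}{t_x}{U}$ and the well-formedness $\wfTyp{\G}{U}$. Next, I apply Lemma \ref{lemma:sub-adapt-completeness-inner} to the adaptation subtyping derivation $\adpDft{x}{T_0}{t_x}{U}$, obtaining some $C$ with $\adptcDft{T_0}{U}{\fCat{t_x}}{C[x]}$. Finally, feeding $\typAlgDft{x}{T_0}$, the derived adaptation subtyping, and the well-formedness premise into the type-level rule \ruleref{t-adapt} yields exactly $\adptTypDft{x}{U}{\fCat{t_x}}{C[x]}$, which is the desired conclusion.

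There is essentially no obstacle here: the difficulty of the equivalence between \ccAdp{} and \ccAdpt{} on the subtyping side is already absorbed by Lemma \ref{lemma:sub-adapt-completeness-inner}, which handles the induction over the adaptation subtyping derivation and the bookkeeping of the categories $\cCat$ and the captured-variable sets $C$. The only thing I need to be careful about is that the same $T_0$ and the same well-formedness witness can be reused in both the \ruleref{adapt} and \ruleref{t-adapt} invocations, which is immediate since the typing judgement $\typAlgDft{x}{T_0}$ and the well-formedness $\wfTyp{\G}{U}$ do not depend on whether we work with terms or with categories. Thus the theorem follows by a single-step proof once the inner lemma is in place.
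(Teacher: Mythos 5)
Your proof is correct and matches the paper's own argument, which likewise reduces the theorem to Lemma \ref{lemma:sub-adapt-completeness-inner}; you merely spell out the implicit inversion of \ruleref{adapt} and the final application of \ruleref{t-adapt}. No gaps.
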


\begin{proof}
  Conclude by invoking Lemma \ref{lemma:sub-adapt-completeness-inner}.
\end{proof}

\subsubsection{Properties of Typing with Type-Level Box Adaptation}

\begin{lemma}  \label{lemma:typ-ubt-soundness}
  If $\typUbtDft{x}{T}{C}$, $\typUbDft{x}{t_x}{T}$ and $\cv{t_x} = C$.
\end{lemma}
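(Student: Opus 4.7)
The plan is to prove this by straightforward case analysis on the derivation of $\typUbtDft{x}{T}{C}$, which has only two rules: \ruleref{t-var-return} and \ruleref{t-var-unbox}. In each case, we pick the corresponding term-level rule (\ruleref{var-return} or \ruleref{var-unbox}) to construct the witness term $t_x$, and then verify that the predicted capture set $C$ matches $\cv{t_x}$ computed via Definition~\ref{def:captured-variables}.

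First, in the \ruleref{t-var-return} case we have $\typUpDft{x}{T}$ and $C = \set{x}$. We instantiate $t_x := x$, apply \ruleref{var-return} to obtain $\typUbDft{x}{x}{T}$, and observe that $\cv{x} = \set{x} = C$ by definition. Second, in the \ruleref{t-var-unbox} case we have $T = \tCap{C_0}{S}$ with $\typUpDft{x}{\tBox{\tCap{C_0}{S}}}$, $C_0 \subseteq \dom{\G}$, and $C = C_0 \cup \set{x}$. We instantiate $t_x := \tUnbox{C_0}{x}$, apply \ruleref{var-unbox} to obtain $\typUbDft{x}{\tUnbox{C_0}{x}}{\tCap{C_0}{S}}$, and check that $\cv{\tUnbox{C_0}{x}} = \set{x} \cup C_0 = C$, again directly from Definition~\ref{def:captured-variables}.

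Since the \ruleref{t-var-return} and \ruleref{t-var-unbox} rules are designed in one-to-one correspondence with \ruleref{var-return} and \ruleref{var-unbox}, and the capture set formulas in the type-level rules are engineered to exactly match $\cv{\cdot}$ applied to the witness term chosen from the term-level rule, there is no real obstacle here: the lemma is essentially a bookkeeping check. The only point requiring care is ensuring the statement is read as an existential, i.e.\ that there \emph{exists} $t_x$ such that $\typUbDft{x}{t_x}{T}$ and $\cv{t_x} = C$, with the witnesses $t_x$ provided as above. This lemma will subsequently serve as the base case for proving soundness of the type-level typing judgment $\typAdptDft{t}{T}{C}$ with respect to the term-level judgment $\typAlgoDft{t}{t'}{T}$, where it is applied at each \adptrn{app} and \adptrn{tapp} use of variable typing.
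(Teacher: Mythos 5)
Your proof is correct and takes the same approach as the paper, which simply states that the result follows by case analysis with each case concluding immediately; your version spells out the two cases (\ruleref{t-var-return}/\ruleref{t-var-unbox} versus \ruleref{var-return}/\ruleref{var-unbox}) and verifies the capture-set bookkeeping explicitly. Your reading of the statement as an existential over $t_x$ is also the intended one.
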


\begin{proof}
  We prove by case analysis and in all cases we conclude immediately.
\end{proof}

\begin{lemma}  \label{lemma:adp-typ-cat}
  If $\adpTypDft{t}{t^\prime}{T}$, $\fCat{t} = \fCat{t^\prime}$.
\end{lemma}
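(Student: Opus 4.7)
The plan is to proceed by induction on the derivation of the typing judgement, with a case split on the last rule applied. Of the typing rules in \ccAdp{}, five of them---\adprn{bi-var}, \adprn{bi-abs}, \adprn{bi-tabs}, \adprn{bi-box}, \adprn{bi-unbox}---are immediate: the rule's conclusion exhibits $t^\prime$ with exactly the same outermost constructor (variable, $\lambda$, $\Lambda$, box, or unbox) as $t$, so $\fCat{t} = \fCat{t^\prime}$ is read off syntactically without using any induction hypothesis. The \adprn{bi-let} case is similarly immediate, since $t^\prime$ is again a let-expression, matching $\fCat{t} = \cTrm$.

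The only delicate cases are \adprn{bi-app} and \adprn{bi-tapp}. For \adprn{bi-app} we have $t = x\,y$, so $\fCat{t} = \cTrm$, while $t^\prime = \fEmbed{\tLet{x^\prime}{t_x}{\tLet{y^\prime}{t_y}{x^\prime\,y^\prime}}}$. I would show $\fCat{t^\prime} = \cTrm$ by unfolding the definition of $\fEmbed{\cdot}$ and $\compactLet{\cdot}$. The outer form is a let-expression; of the three $\compactLet{\cdot}$ cases, \eruleref{nl-box} cannot apply (the binding $\fEmbed{t_x}$ is never of the form $\tUnbox{C}{\cdot}$ paired with a $\tBox{\cdot}$ body), and \eruleref{nl-deref} cannot apply (the body is never a bare occurrence of $x^\prime$). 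The rules \eruleref{nl-beta} and \eruleref{nl-tbeta} operate only on $\compactLam{\cdot}$, not on lets, so they are inapplicable. The only rule that can fire is \eruleref{nl-rename}, which substitutes one variable for another and so preserves the let- or application-shape of its body. The inner $\compactLet{\tLet{y^\prime}{\fEmbed{t_y}}{x^\prime\,y^\prime}}$ reduces to either $x^\prime\,y$ (an application) or remains $\tLet{y^\prime}{\fEmbed{t_y}}{x^\prime\,y^\prime}$ (a let), and each subsequent outer reduction again leaves either an application or a let. In every branch $t^\prime$ is a term, establishing $\fCat{t^\prime} = \cTrm$. The \adprn{bi-tapp} case is completely analogous, with $x[S^\prime]$ in place of $x\,y$ and only one let-binding to consider.

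The main obstacle is just the careful bookkeeping of the subcases inside the application rule: there are four combinations to inspect depending on whether $t_x$ and $t_y$ are variables (triggering \eruleref{nl-rename}) or not (leaving the outer let intact). Once one confirms that \eruleref{nl-box}, \eruleref{nl-deref}, \eruleref{nl-beta}, and \eruleref{nl-tbeta} cannot fire on the shapes produced by the adaptation, the remaining analysis is entirely syntactic.
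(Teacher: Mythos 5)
Your proof is correct and follows essentially the same route as the paper's: a case analysis on the last typing rule of \ccAdp{}, reading $\fCat{t^\prime}$ off the syntactic shape of the output term in each rule's conclusion. The paper's own proof is a one-line ``all cases are immediate''; your check that in \ruleref{bi-app} the normalisation $\fEmbed{\cdot}$ can only rewrite the generated let-expression into another let or an application (never into a variable or a value, since \eruleref{nl-box} and \eruleref{nl-deref} cannot fire on a body of the form $x^\prime\,y^\prime$) is precisely the detail that one-liner elides, and it is the only place where the conclusion is not literally immediate.
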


\begin{proof}
  We prove by case analysis on the premise.
  For all cases, we can conclude immediately.
\end{proof}

\begin{theorem}[Soundness]  \label{thm:typ-adpt-soundness}
  If $\typAdptDft{t}{T}{C}$ then $\typAlgoDft{t}{t^\prime}{T}$ and $C = \cv{t^\prime}$.
\end{theorem}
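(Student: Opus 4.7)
The plan is to proceed by induction on the derivation of $\typAdptDft{t}{T}{C}$, building up the adapted term $t^\prime$ case by case and simultaneously verifying the capture-set equality $C = \cv{t^\prime}$. The three main tools that I expect to reach for are: Theorem \ref{thm:sub-adpt-soundness} (to turn a type-level box-adaptation judgement into a term-level one together with the promised kind and captured set), Lemma \ref{lemma:typ-ubt-soundness} (to turn $\typUbtDft{x}{T}{C}$ into a term-level $\typUbDft{x}{t_x}{T}$ with matching captured set), and Lemma \ref{lemma:adp-typ-cat} (to match up the syntactic kind of the adapted argument with the $\cCat$ that the rules for \ruleref{t-bi-app} and \ruleref{t-bi-let} case-split on).

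The easy cases are \ruleref{t-bi-var}, \ruleref{t-bi-box}, \ruleref{t-bi-unbox}, and \ruleref{t-bi-tapp}. For each of these the type-level rule simply mirrors its term-level counterpart, the adapted term is just the original one (wrapped in a single let in the \ruleref{t-bi-tapp} case), and the predicted capture set can be read off directly from the definition of $\cv{\cdot}$. For \ruleref{t-bi-abs} and \ruleref{t-bi-tabs} the IH gives an adapted body $t^\prime$ with $\cv{t^\prime} = C$, and the corresponding term-level rule produces $\tLambda{x}{U}{t^\prime}$ (respectively $\tTLambda{X}{S}{t^\prime}$), whose capture set the definition of $\cv{\cdot}$ makes equal to $C/x$ (respectively $C$), which is exactly what \ccAdpt{} predicted.

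The substantive case is \ruleref{t-bi-app}. Here I would first invoke Lemma \ref{lemma:typ-ubt-soundness} on the first premise to obtain $t_x$ with $\cv{t_x} = C_1$, and then Theorem \ref{thm:sub-adpt-soundness} on the second premise to obtain $t_y$ with $\fCat{t_y} = \cCat$ and $\cv{t_y} = C_2$. These feed directly into \ruleref{bi-app} to produce the term $t = \fEmbed{\tLet{x^\prime}{t_x}{\tLet{y^\prime}{t_y}{x^\prime\,y^\prime}}}$ and the result type chosen by case analysis on $\cCat$. The delicate point is that the \ccAdpt{} rule case-splits on whether $\cCat = \cVar$, while the \ccAdp{} rule case-splits on whether $t_y$ is a variable; Lemma \ref{lemma:adp-typ-cat} bridges the two, and this is also where I must double-check that the normalisation $\fEmbed{\cdot}$ does not inflate the capture set beyond $C_1 \cup C_2$ (it only contracts or preserves it, by direct inspection of $\compactLet{\cdot}$ and $\compactLam{\cdot}$). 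The resulting result type then matches the $T^\prime$ of the \ccAdpt{} rule on the nose.

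Finally, for \ruleref{t-bi-let} the IHs give $s^\prime$ and $t^\prime$ with the right captured sets, and \ruleref{bi-let} assembles $\tLet{x}{s^\prime}{t^\prime}$ of type $\avoidOp{x}{\cv{U}}{T}$. The capture set of this term, by the definition of $\cv{\cdot}$, splits into exactly the two cases that the \ccAdpt{} rule uses: the ``$s$ is a value and $x \notin C_2$'' branch corresponds to $\cv{t^\prime}$, while the general branch corresponds to $\cv{s^\prime} \cup \cv{t^\prime}/\set{x}$. I expect this case, together with \ruleref{t-bi-app}, to be the principal obstacle, since correctness rests on aligning the syntactic criteria (value vs.\ non-value, variable vs.\ non-variable) used by the two systems with the branch conditions of the rules; the rest is routine bookkeeping.
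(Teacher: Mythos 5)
Your proposal is correct and follows essentially the same route as the paper: induction on the \ccAdpt{} derivation, discharging the application case via Lemma \ref{lemma:typ-ubt-soundness} and Theorem \ref{thm:sub-adpt-soundness}, and the let case via Lemma \ref{lemma:adp-typ-cat}, with the same alignment of $\cCat = \cVar$ against the variable/value side-conditions of the term-level rules. The only tiny slip is that in the \ruleref{t-bi-app} case the bridge between $\cCat$ and ``$t_y$ is a variable'' is already supplied by clause (2) of Theorem \ref{thm:sub-adpt-soundness} ($\fCat{t_y} = \cCat$) rather than by Lemma \ref{lemma:adp-typ-cat}, which the paper reserves for the let case.
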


\begin{proof}
  Proof proceeds by induction on the derivation.

  \emph{Case \ruleref{t-bi-var}, \ruleref{t-bi-box} and \ruleref{t-bi-unbox}}.
  We conclude immediately by the corresponding rule.

  \emph{Case \ruleref{t-bi-abs}}.
  We conclude by IH and \ruleref{bi-abs} rule.

  \emph{Case \ruleref{t-bi-tabs}}.
  We conclude by IH and \ruleref{bi-tabs} rule.

  \emph{Case \ruleref{t-bi-app}}.
  We first invoke Lemma \ref{lemma:typ-ubt-soundness} and show that
  $\typUbDft{x}{t_x}{C_x\,\tForall{z}{U}{T}}$ and $\cv{t_x} = C_1$.
  We then invoke Lemma \ref{thm:sub-adpt-soundness} and show that
  $\adpTypDft{y}{t_y}{U}$ and $\cv{t_y} = C_2$.
  Proceed by case analysis on the value of $\cCat$.
  If $\cCat = \cVar$, we can conclude immediately by applying \textsc{App}.
  Otherwise, we can show that $\avoidOp{z}{\cv{U}}{T} = \avoidOp{y'}{\cv{U}}{\substIn{z}{y^\prime}{T}}$
  and conclude by \ruleref{bi-app} rule.

  \emph{Case \ruleref{t-bi-tapp}}.
  We can prove this case similarly to the \ruleref{bi-app} rule.

  \emph{Case \ruleref{t-bi-let}}.
  By IH we can show that
  $\typAlgoDft{s}{s^\prime}{U}$
  and $\typAlgo{\G, x: U}{u}{u^\prime}{T}$.
  Proceed by case analysis on the category of $s$.
  If $s$ is a value and $x \notin \cv{t}$, by Lemma \ref{lemma:adp-typ-cat} we can show that
  $s^\prime$ is also a value.
  We can show that $\cv{t} = \cv{u^\prime}$.
  We can then conclude by \textsc{Let} rule.
  In other cases, we also invoke Lemma \ref{lemma:adp-typ-cat} and show that $\cv{t} = \cv{u^\prime} \cup \cv{s^\prime} / \set{x}$ and conclude by \ruleref{bi-let} rule.

\end{proof}

\begin{lemma} \label{lemma:typ-ubt-completeness}
  If $\typUbDft{x}{t_x}{T}$, $\typUbtDft{x}{T}{C}$ and $\cv{t_x} = C$.
\end{lemma}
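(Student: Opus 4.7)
The plan is to establish this completeness result by a direct case analysis on the derivation of $\typUbDft{x}{t_x}{T}$, in a manner symmetric to the proof of the companion soundness lemma (Lemma \ref{lemma:typ-ubt-soundness}). Since variable typing with unboxing has exactly two rules, \ruleref{var-return} and \ruleref{var-unbox}, and each pairs up with a corresponding rule in the type-level system (\ruleref{t-var-return} and \ruleref{t-var-unbox}, respectively), the argument will have exactly two cases, neither of which requires recursion on subderivations.

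In the \ruleref{var-return} case, the premise gives $t_x = x$ together with $\typUpDft{x}{T}$. Applying \ruleref{t-var-return} immediately yields $\typUbtDft{x}{T}{\set{x}}$, and by the definition of $\cv{\cdot}$ we have $\cv{x} = \set{x}$, so we pick $C = \set{x}$. In the \ruleref{var-unbox} case, the premise gives $t_x = \tUnbox{C_0}{x}$, $T = \tCap{C_0}{S}$, $\typUpDft{x}{\tBox{\tCap{C_0}{S}}}$, and $C_0 \subseteq \dom{\G}$. Applying \ruleref{t-var-unbox} yields $\typUbtDft{x}{T}{C_0 \cup \set{x}}$, which coincides with $\cv{\tUnbox{C_0}{x}} = \set{x} \cup C_0$ by the definition of captured variables, so we take $C = C_0 \cup \set{x}$.

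I do not expect any real obstacle here: the two systems of variable typing were designed so that the rules correspond pointwise, and the capture set predicted by each type-level rule is precisely engineered to match $\cv{\cdot}$ of the adapted term produced by its term-level counterpart. The proof is therefore essentially a bookkeeping exercise, identical in structure to the proof of Lemma \ref{lemma:typ-ubt-soundness}, which itself was dispatched in a single line of case analysis.
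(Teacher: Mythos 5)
Your proposal is correct and matches the paper's proof, which is exactly the same two-rule case analysis (\ruleref{var-return} paired with \ruleref{t-var-return}, and \ruleref{var-unbox} paired with \ruleref{t-var-unbox}), dispatched immediately; your version merely spells out the bookkeeping the paper leaves implicit.
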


\begin{proof}
  Proceed by case analysis on the premise.
  All cases can be concluded immediately.
\end{proof}

\begin{theorem}[Completeness]  \label{thm:typ-adpt-completeness}
  If $\typAlgoDft{t}{t^\prime}{T}$ then $\typAdptDft{t}{T}{C}$ and $C = \cv{t^\prime}$.
\end{theorem}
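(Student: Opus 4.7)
The plan is to mirror the structure of the soundness proof (Theorem~\ref{thm:typ-adpt-soundness}) by induction on the derivation of $\typAlgoDft{t}{t^\prime}{T}$. For each rule in \ccAdp{}, we invoke the inductive hypothesis on the sub-derivations and apply the corresponding rule in \ccAdpt{}, using Lemma~\ref{lemma:typ-ubt-completeness} to translate the variable-typing-with-unboxing judgement and Theorem~\ref{thm:sub-adpt-completeness} to translate box adaptation. The extra bookkeeping in this direction is the capture set $C$, which we must verify agrees with $\cv{t^\prime}$ in every case.

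For the leaf cases \ruleref{bi-var}, \ruleref{bi-box} and \ruleref{bi-unbox}, the term is unchanged (i.e. $t^\prime = t$) and the capture sets prescribed by the type-level rules coincide with $\cv{t}$ by definition of $\cv{\cdot}$, so each case follows directly from the corresponding rule. The \ruleref{bi-abs} and \ruleref{bi-tabs} cases follow from the IH together with the observation that the closure's capture set computed in \ccAdpt{} is exactly $\cv{t^\prime}$ of the transformed lambda (modulo the expected removal of the bound variable in \ruleref{bi-abs}).

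The two interesting cases are \ruleref{bi-app} and \ruleref{bi-let}. In the \ruleref{bi-app} case, $t = x\,y$ and $t^\prime = \fEmbed{\tLet{x^\prime}{t_x}{\tLet{y^\prime}{t_y}{x^\prime\,y^\prime}}}$. We apply Lemma~\ref{lemma:typ-ubt-completeness} to obtain a set $C_1$ for the type-level variable lookup and Theorem~\ref{thm:sub-adpt-completeness} to obtain a kind $\cCat$ and capture set $C_2$ for box adaptation, with $\cv{t_y} = C_2$. We then split on whether $t_y$ is a variable (equivalently, by Lemma~\ref{lemma:adp-sub-var}, whether $\cCat = \cVar$) to match the two branches of \ruleref{t-bi-app}, unfolding $\fEmbed{\cdot}$ in each to verify that the predicted $C_1 \cup C_2$ equals $\cv{t^\prime}$. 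The \ruleref{bi-tapp} case is analogous but simpler, since only the function position is adapted. For \ruleref{bi-let}, we apply the IH twice to obtain capture sets for $s^\prime$ and $u^\prime$, then perform the case split demanded by \ruleref{t-bi-let}: if $s$ is a value and $x \notin \cv{u^\prime}$, Lemma~\ref{lemma:adp-typ-cat} guarantees $s^\prime$ is a value too, and the definition of $\cv{\cdot}$ on let-bindings yields $\cv{t^\prime} = \cv{u^\prime}$; otherwise $\cv{t^\prime} = \cv{s^\prime} \cup \cv{u^\prime}/\set{x}$, matching the other branch.

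The main obstacle is reconciling term normalisation with capture-set computation in the \ruleref{bi-app} case: since $t^\prime$ is defined only up to $\fEmbed{\cdot}$, we must argue that the simplifications \textsc{nl-rename} and \textsc{nl-deref} triggered when $t_x$ or $t_y$ happens to be a variable do not perturb the captured variables in a way that would break the equation $\cv{t^\prime} = C_1 \cup C_2$. Carefully lining up the variable-case of the type-level rule with the variable-case of $\fEmbed{\cdot}$ is where most of the care in this proof goes; once that alignment is in place, the remaining cases are direct applications of IH and the matching \ccAdpt{} rule.
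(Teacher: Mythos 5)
Your proposal matches the paper's proof essentially step for step: induction on the \ccAdp{} derivation, with Lemma~\ref{lemma:typ-ubt-completeness} and Theorem~\ref{thm:sub-adpt-completeness} handling the \ruleref{bi-app} case, the same case split on whether the adapted argument is a variable (and the avoidance equality in the non-variable branch), and the same value/non-value split via Lemma~\ref{lemma:adp-typ-cat} for \ruleref{bi-let}. No substantive differences to report.
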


\begin{proof}
  Proof proceeds by induction on the derivation of the premise.

  \emph{Case \ruleref{bi-var}, \ruleref{bi-box} and \ruleref{bi-unbox}}.
  Conclude by the corresponding rule.

  \emph{Case \ruleref{bi-abs}, \ruleref{bi-tabs}}.
  We prove by IH and the corresponding rule.

  \emph{Case \ruleref{bi-app}}.
  We first invoke Lemma \ref{lemma:typ-ubt-completeness}
  and show that $\typUbtDft{x}{T}{C_1}$ and $C_1 = \cv{t_x}$.
  Then, we invoke Lemma \ref{thm:sub-adpt-completeness}
  and show that $\adptTypDft{y}{U}{\cCat}{C_2}$ and $C_2 = \cv{t_y}$.
  We can first show that $\cv{t} = \cv{t_x} \cup \cv{t_y} = C_1 \cup C_2$.
  If $t_y$ is a variable, we conclude immediately by the \ruleref{t-bi-app} rule.
  Otherwise, we can show that $\avoidOp{y^\prime}{\cv{U}}{\substIn{z}{y^\prime}{T}} = \avoidOp{z}{\cv{U}}{T}$.
  We then conclude by \ruleref{t-bi-app} rule.

  \emph{Case \ruleref{bi-tapp}}.
  We prove this case similarly to the \ruleref{bi-app} case.

  \emph{Case \ruleref{bi-let}}.
  By IH we have $\adptTypDft{s}{U}{C_1}$ and $C_1 = \cv{s^\prime}$;
  $\adptTypDft{u}{T}{C_2}$ and $C_2 = \cv{u^\prime}$.
  We can apply \ruleref{t-bi-let} and prove that $\typAdptDft{\tLet{x}{s}{u}}{T^\prime}{C^\prime}$.
  If $s$ is a value and $x \notin \cv{t^\prime}$, we have $\cv{t} = \cv{u^\prime} = C^\prime$ and conclude.
  Otherwise we have $\cv{t} = \cv{u^\prime} \cup \cv{s^\prime} / \set{x} = C^\prime$ and conclude.

\end{proof}

\subsection{Termination of Box Inference}
\label{sec:termination-proof}

\begin{wide-rules}

\textbf{Algorithmic typing \quad $\typAlgDft{t}{T}$}

\begin{multicols}{2}

\infrule[\ruledef{fs-var}]
  {\G(x) = S}
  {\typAlgDft{x}{S}}

\infrule[\ruledef{fs-abs}]
{\typAlg{\extendG{x}{U}}{t}{T}}
{\typAlgDft{\tLambda{x}{U}{t}}{{\tForall{x}{U}{T}}}}

\infrule[\ruledef{fs-tabs}]
{\typAlg{\extendGT{X}{S}}{t}{T}}
{\typAlgDft{\tTLambda{X}{S}{t}}{{\tTForall{X}{S}{T}}}}

\infrule[\ruledef{fs-app}]
{\typUpDft{x}{{\tForall{z}{T}{U}}} \\
  \typAlgDft{y}{T^\prime}\\
  {\subAlgoDft{T^\prime}{T}}}
{\typAlgDft{x\ y}{{U}}}

\infrule[\ruledef{fs-tapp}]
{\typUpDft{x}{{\tTForall{X}{S}{T}}} \\
  {\subAlgoDft{S^\prime}{S}}}
{\typAlgDft{x\ S^\prime}{\tSubst{X}{S^\prime}{U}}}

\end{multicols}

\infrule[\ruledef{fs-let}]
{\typAlgDft{s}{S} \\ \typAlg{\extendG{x}{S}}{t}{R}}
{\typAlgDft{\tLet{x}{s}{t}}{R}}

\textbf{Algorithmic subtyping \quad $\subAlgoDft{T}{U}$}

\begin{multicols}{2}

\infax[\ruledef{fs-refl}]
{\subAlgoDft{{X}}{{X}}}

\infrule[\ruledef{fs-tvar}]
{X <: S \in \G \\ {\subAlgoDft{S}{S^\prime}}}
{\subAlgoDft{X}{S^\prime}}

\infax[\ruledef{fs-top}]
{\subAlgoDft{S}{\top}}

\infrule[\ruledef{fs-fun}]
{\subAlgoDft{U_2}{U_1} \\
  \subAlgo{\extendG{x}{U_2}}{T_1}{T_2}}
{\subAlgoDft{\tForall{x}{U_1}{T_1}}{\tForall{x}{U_2}{T_2}}}

\infrule[\ruledef{fs-tfun}]
{\subAlgoDft{S_2}{S_1} \\
  \subAlgo{\extendGT{X}{S_2}}{T_1}{T_2}}
{\subAlgoDft{\tTForall{X}{S_1}{T_1}}{\tTForall{X}{S_2}{T_2}}}

\end{multicols}

\caption{Algorithmic Type for System F$_{<:}$}
  \label{fig:fsub-typing}

\end{wide-rules}

We first define the procedures for typechecking terms in both System $F_{<:}$ and System \ccAdp{}.
We begin by presenting the typing rules of System F$_{<:}$.
Figure \ref{fig:fsub-typing} shows the algorithmic typing rules of System F$_{<:}$.
The typechecking procedures $\typecheck{\G}{\cdot}$ and $\subtype{\G}{\cdot}{\cdot}$
can be straightforwardly defined based on the rules.
It picks the applicable rule by pattern-matching on the terms and the types.
The the term or the subtype relation is rejected if none of the rules matches.
In a similar vein,
we define $\typecheckAdp{\G}{\cdot}$ and $\subtypeAdp{\G}{\cdot}{\cdot}$ for System \ccAdp{}.
The subcapture procedure $\subcapture{\G}{C_1}{C_2}$ can also be straightforwardly defined
based on the subcapturing rules.
The basic idea is that we first inspect the size of $C_1$,
checking the subcapture $\subcapture{\G}{x}{C_2}$ for each element $x$ in $C_1$ if $C_1$ has more than one element,
and return true in the trivial case when $C_1$ is empty.
When $C_1 = \set{x}$ for some $x$,
we first attempt the \ruleref{sc-elem} rule which requires $x$ to be an element of $C_2$
and fallback to try the \ruleref{sc-var} rule by looking up the capture set $C_x$ of $x$ in the environment
and recursively invoking $\subcapture{\G}{C_x}{C_2}$.

Then, we define the simple-formed judgment.
Types with multiple levels of boxes
(i.e. $C_1\, \Hsquare \, C_2\, \Hsquare\, \cdots\, \Hsquare\, C_n \top$)
are impractical in the practice,
as a boxed type is meant to be pure
and it is illogical to attach a capture set to a boxed type and box it again.
However, this is allowed in the formulation of \cc{}.
Therefore, in the development of our theory,
we would like to ignore these nested boxed types
for the ease of reasoning and presentation.
To achieve this,
we introduce the concept of \emph{simple-formed} types, which are defined as follows,
and we assume that all types are simple-formed.
\begin{definition}[Simple-Formed Types]
	A type $T$ is simple-formed, if it is in one of the form:
  \begin{itemize}
  \item
    A shape type $S$, and its components are simple-formed too;

  \item
    $\tCap{C}{S}$, and $S$ is not a boxed type and it is simple-formed.
  \end{itemize}
\end{definition}


\begin{lemma}  \label{lemma:subcapture-termination}
	Given a well-formed context $\G$,
  and capture sets $C_1, C_2$ well-formed in $\G$,
  $\subcapture{\G}{C_1}{C_2}$ terminates.
\end{lemma}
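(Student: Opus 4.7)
The plan is to define a well-founded termination measure on the inputs and show that every recursive call strictly decreases it. Because the second argument $C_2$ is only ever inspected (by the \ruleref{sc-elem} branch) and is never itself expanded in a sub-call, the measure can depend only on $\G$ and $C_1$.

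First I would assign to each variable $x$ bound in $\G$ a natural-number \emph{rank} equal to the position of its binding in $\G$, and treat the root capability $\ast$ as having rank $0$, since it has no binding in $\G$ and can only be discharged by \ruleref{sc-elem}. The well-formedness of $\G$ is what makes this useful: for every binding $x : \tCap{C_x}{S}$ in $\G$, the set $C_x$ must already be well-formed in the prefix of $\G$ up to (but not including) $x$, so every element of $C_x$ has rank strictly less than $\mathrm{rank}(x)$. This monotonicity along the lookup chain is the crucial fact and is a direct consequence of the well-formedness convention stated at the start of the appendix.

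Next I would take the termination measure $\mu(\G, C_1)$ to be the multiset $\{\!\{\mathrm{rank}(x) \mid x \in C_1\}\!\}$, compared under the standard multiset extension of the order on naturals, which is well-founded. It then suffices to check that each branch of $\subcapture{\G}{C_1}{C_2}$ either terminates outright or recurses on arguments with strictly smaller measure. The base case $C_1 = \emptyset$ terminates immediately. When $|C_1| > 1$, the procedure reduces to the singleton subproblems $\subcapture{\G}{\{x\}}{C_2}$ for $x \in C_1$, each with measure $\{\!\{\mathrm{rank}(x)\}\!\}$, which is strictly below $\mu(\G, C_1)$ in the multiset order. When $C_1 = \{x\}$, the \ruleref{sc-elem} attempt performs only a membership test in $C_2$, and the fallback \ruleref{sc-var} attempt looks up $x : \tCap{C_x}{S}$ and recurses on $\subcapture{\G}{C_x}{C_2}$; by the monotonicity property, every element of $C_x$ has rank strictly less than $\mathrm{rank}(x)$, so the resulting multiset is strictly smaller.

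The main obstacle is mostly book-keeping: setting up the measure so that $\ast$ and multi-element capture sets are handled uniformly, and verifying that the well-formedness hypothesis on $\G$ really does guarantee strict decrease of ranks along the lookup chain (it does, but this relies on the implicit convention that bindings reference only earlier bindings). Once these are in place, termination follows by well-founded induction on $\mu$, without any need to induct simultaneously on the environment, since $\G$ is preserved across all recursive calls.
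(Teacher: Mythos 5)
Your proof is correct and takes essentially the same approach as the paper: the paper assigns each binding a depth equal to its position in $\G$, proves exactly your monotonicity fact (every reference in a binding's capture set has strictly smaller depth, by well-formedness), and then argues termination by a well-founded measure on $C_1$. The only cosmetic difference is the choice of order --- the paper uses the lexicographic pair $(\max_{x \in C_1}\bdepth{\G}{x},\, |C_1|)$ where you use the multiset extension of the rank order; both are well-founded and both decrease on every recursive call.
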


To show termination, we first assign \emph{depths} to bindings.
The function $\bdepth{\G}{x}$ is defined as the following:
\begin{itemize}
\item
  If $\G = \Delta, x : T$,
  $\bdepth{\G}{x} = | \Delta |$.

\item
  Otherwise, if $\G = \Delta, y : T$,
  $\bdepth{\G}{x} = \bdepth{\Delta}{x}$.
\end{itemize}
Then, we prove the following auxillary, which states that
in a well-formed environment $\G$,
the depth of the captured references of a binding always decreases.

\begin{lemma}  \label{lemma:subcapt-depth-decr}
  For any well-formed environment $\G$,
  if $\G(x) = \tCap{C}{S}$,
  for any references $y \in C$
  we have $\bdepth{\G}{y} < \bdepth{\G}{x}$.
\end{lemma}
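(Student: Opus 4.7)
The plan is to prove the lemma by a direct decomposition of the well-formed environment at the binding of $x$, and then exploit the fact that well-formedness constrains every capture set in a binding to refer only to variables bound strictly earlier in the environment.

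First, I would split the context as $\G = \G_1, x : \tCap{C}{S}, \G_2$, which is legitimate since $\G(x) = \tCap{C}{S}$ ensures such a binding exists in $\G$. From the definition of $\bdepth{\cdot}{\cdot}$ we immediately get $\bdepth{\G}{x} = |\G_1|$. Because $\G$ is well-formed, the type $\tCap{C}{S}$ must itself be well-formed in the prefix $\G_1$ preceding the binding of $x$; in particular, the well-formedness condition $C \subseteq \dom{\G_1} \cup \set{*}$ holds. Assuming (as the surrounding development implicitly does when applying $\bdepth{\cdot}{\cdot}$ to elements of capture sets) that $y$ is a program variable rather than the root capability $*$, we conclude $y \in \dom{\G_1}$.

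Next, I would further decompose $\G_1 = \Delta_1, y : T_y, \Delta_2$ for some $T_y$ and some (possibly empty) $\Delta_1, \Delta_2$. This decomposition exists precisely because $y \in \dom{\G_1}$. By the definition of $\bdepth{\cdot}{\cdot}$, walking past the suffix bindings in $\G_2$ and $y : T_y, \Delta_2$ does not change the depth of $y$, so $\bdepth{\G}{y} = |\Delta_1|$. Since $|\Delta_1| < |\Delta_1| + 1 + |\Delta_2| = |\G_1|$, we obtain $\bdepth{\G}{y} < \bdepth{\G}{x}$, as required.

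The only mildly delicate point is the treatment of the root capability: if $y = *$, then $\bdepth{\G}{y}$ is not defined by the definition given. I expect this to be handled either by extending $\bdepth{\G}{*}$ to $-1$ (so that the strict inequality still holds) or by restricting the statement to program variables, since $*$ is looked up differently by the subcapturing procedure and does not trigger recursive calls. Modulo this convention, the argument is essentially bookkeeping over the shape of well-formed environments, with no real inductive obstacle; the lemma's usefulness lies downstream, as the well-founded measure it exposes is exactly what makes the recursion in $\subcapture{\G}{C_1}{C_2}$ terminate in the \textsc{sc-var} case.
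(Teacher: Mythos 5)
Your proof is correct, and it rests on exactly the same key fact as the paper's: well-formedness forces every variable in the capture set of a binding to be bound strictly earlier in the environment, hence at strictly smaller depth. The packaging differs, though. The paper proves the lemma by induction on the size of $\G$, peeling off the rightmost binding and distinguishing whether it is the binding of $x$ (in which case all of $C$ lies in the prefix and the depth comparison is immediate) or not (in which case it appeals to the IH on the shorter environment). You instead give a direct, induction-free argument: decompose $\G = \G_1, x : \tCap{C}{S}, \G_2$, read off $\bdepth{\G}{x} = |\G_1|$, use well-formedness to place $y \in \dom{\G_1}$, and then decompose $\G_1$ once more to compute $\bdepth{\G}{y}$ as the length of an even shorter prefix. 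Your version is arguably cleaner --- it makes explicit that the lemma is pure bookkeeping over positions in the context rather than anything requiring an inductive invariant --- and it also flags the $y = *$ corner case, which the paper's proof silently ignores even though $\bdepth{\G}{*}$ is undefined by the given definition; your suggested conventions (restrict to program variables, or set the depth of $*$ below all bindings) are both consistent with how the subcapturing procedure actually treats $*$. The only implicit assumption you share with the paper is that ``well-formed environment'' means each binding's type is well-formed in the prefix preceding it, which is how the paper's own proof uses the hypothesis.
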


\begin{proof}
  By induction on the size of $\G$.
  If $\G$ is empty, it is trivial.
  Otherwise, $\G = \Delta, x : \tCap{C}{S}$.
  For any $y$ in the environment $\G$,
  if $x \neq y$,
  we have $\bdepth{\G}{y} = \bdepth{\Delta}{y}$.
  Similarly, based on the well-formedness of the environment,
  we know that $x \notin \cv{\G(y)}$.
  We can therefore conclude by IH.

  Otherwise, if $x = y$, we know that $\forall z \in C$, $z \in \dom{\Delta}$.
  Since $\bdepth{\G}{z} \leq | \Delta |$, while $\bdepth{\G}{x} = | \Delta | + 1$,
  we conclude immediately.
\end{proof}

Now we are ready for the proof of Lemma \ref{lemma:subcapture-termination}.
\begin{proof}
  We show that the algorithm $\subcapture{\G}{C_1}{C_2}$
  terminates in the lexical order of
  $(\max_{x \in C_1}\bdepth{\G}{x}, | C_1 |)$.
  We proceed by case analysis.

  If $C_1 = \set{x}$, and $x \in C_2$, the algorithm terminates immediately.
  Otherwise, if $C_1 = \set{x}$ but $x \notin C_2$,
  we invoke $\subcapture{\G}{C_1^\prime}{C_2}$
  where $C_1^\prime = \cv{\G(x)}$.
  By Lemma \ref{lemma:subcapt-depth-decr}
  we have $\max{x \in C_1^\prime}{\bdepth{\G}{x}} < \bdepth{\G}{x}$,
  which decreases in lexical order defined above.
  Finally, if $C_1 = \set{x_1, \cdots, x_n}$
  we will invoke
  $\subcapture{\G}{x_1}{C_2} \cdots \subcapture{\G}{x_n}{C_2}$.
  In each case we have $\max{z \in x_i}{\bdepth{\G}{x_i}} \leq \max{z \in C_1}{\bdepth{\G}{z}}$,
  and also $|\set{x_i}| < |C_1|$ which decreases in the lexical order too.
\end{proof}

\subadpterm*

\begin{proof}
	Proof by induction on the level of recursion
  when executing $\subtype{\eraseCC{\G}}{\eraseCC{T}}{\eraseCC{U}}$.
  In each case we analyze the boxed status of $T$ and $U$.

  If both $T$ and $U$ are capturing types, i.e.
  $T = \tCap{C_1}{S_1}$ and $U = \tCap{C_2}{S_2}$,
  we know that $\eraseCC{T} = \eraseCC{S_1}$
  and $\eraseCC{U} = \eraseCC{S_2}$.
  Also, neither $S_1$ nor $S_2$ are boxed types.
  In all cases we conclude by IH and Lemma \ref{lemma:subcapture-termination}.

  In another case, $T$ is boxed, while $U$ is not.
  $T = \tBox{\tCap{C_1}{S_1}}$ and $U = \tCap{C_2}{S_2}$.
  In this case, we apply the \adprn{unbox} rule first.
  This either fails, in the case that $C_1 \subseteq \dom{\G}$ does not hold,
  which terminates immediately.
  Otherwise it recurses, which can be concluded by IH.

  If $U$ is boxed, while $T$ is not,
  we prove this case similarly to the previous one,
  but the \adprn{box} rule is invoked.

  In the last case, where both $T$ and $U$ are boxed.
  $T = \tBox{\tCap{C_1}{S_1}}$ and $U = \tBox{\tCap{C_2}{S_2}}$.
  In this case, \adprn{boxed} is invoked first,
  after which we conclude with IH.
\end{proof}

\typadpterm*

\begin{proof}
	We proceed by induction on the level of recursion in $\typecheck{\eraseCC{\G}}{\eraseCC{t}}$.
  Each cases can be concluded by IH.
  In the \algorn{let} case,
  it is straightforward to see that $\avoidOp{x}{C}{T}$ always terminates,
  since what it does is traversing the types.
  In the \algorn{app} case we apply Lemma \ref{lemma:sub-adp-termination} together with the IHs to conclude.
\end{proof}

\end{document}